\newcommand{\ket}[1]{|#1\rangle}
\newcommand{\bra}[1]{\langle#1|}
\newcommand{\braket}[2]{\langle#1|#2\rangle}
\let\oldnl\nl
\newcommand{\nonl}{\renewcommand{\nl}{\let\nl\oldnl}}
\newcommand{\ccol}[2]{\color{#1}#2\color{black}}
\crefname{section}{Section}{Section} 
\crefname{subsection}{Section}{Section}
\newtheoremstyle{mystyle}%
{3pt}
{3pt}
{\upshape}
{}
{\bfseries}
{.}
{.5em}
{}
\theoremstyle{theorem}
\crefname{thm}{Theorem}{Theorems}
\Crefname{thm}{Theorem}{Theorems}
\newtheorem{thm}{Theorem}
\newtheorem{lemma}[thm]{Lemma}
\newtheorem{corll}[thm]{Corollary}
\newtheorem{conj}[thm]{Conjecture}
\crefname{corll}{Corollary}{Corollaries}
\theoremstyle{remark}
\theoremstyle{definition}
\newtheorem{defn}{Definition}
\theoremstyle{mystyle}
\NewDocumentEnvironment{todo}{o}
 {\IfNoValueTF{#1}
   {\todoaux\addcontentsline{toc}{subsection}{\protect\numberline{\thesubsection}\ccol{red}{To-Do item}}}
   {\todoaux[#1]\addcontentsline{toc}{subsection}{\protect\numberline{\thesubsection}{\ccol{red}{To-Do item}} (#1)}}%
   \ignorespaces}
 {\endtodoaux}
\newcolumntype{\expand}{}
\long\@namedef{NC@rewrite@\string\expand}{\expandafter\NC@find}
  \def\problem@arg{#1}%
  \def\problem@framed{framed}%
  \def\problem@lined{lined}%
  \def\problem@doublelined{doublelined}%
    \def\problem@hline{}%
      \def\problem@hline{\hline\hline}%
      \def\problem@hline{\hline}%
    \def\problem@tablelayout{|>{\bfseries}lX|c}%
    \def\problem@title{\multicolumn{2}{|l|}{%
        \raisebox{-\fboxsep}{\textsc{#2}}%
      }}%
    \def\problem@tablelayout{>{\bfseries}lXc}%
    \def\problem@title{\multicolumn{2}{l}{%
        \raisebox{-\fboxsep}{\textsc{#2}}%
      }}%
\newcounter{parentnumber}
\renewcommand{\exp}{\mathsf{exp}}
\renewcommand{\log}{{\llog}}
\newclass{\PQP}{PQP}
\newclass{\UQMA}{UQMA}
\newclass{\UMA}{UMA}
\newclass{\UQCMA}{UQCMA}
\newclass{\UNP}{UNP}
\newclass{\GQMA}{GQMA}
\newclass{\GQCMA}{GQCMA}
\newclass{\PGQMA}{PGQMA}
\newclass{\EGQMA}{EGQMA}
\newclass{\PGQCMA}{PGQCMA}
\newclass{\EGQCMA}{EGQCMA}
\newclass{\pQMA}{PreciseQMA}
\newclass{\pBQP}{PreciseBQP}
\newclass{\pQCMA}{PreciseQCMA}
\newclass{\postBQP}{postBQP}
\newclass{\postQMA}{postQMA}
\newclass{\postQCMA}{postQCMA}
\newclass{\pUQMA}{PreciseUQMA}
\newclass{\pGQMA}{PreciseGQMA}
\newclass{\pPGQMA}{PrecisePGQMA}
\newclass{\pPGQCMA}{PrecisePGQCMA}
\newclass{\pEGQMA}{PreciseEGQMA}
\newclass{\pEGQCMA}{PreciseEGQCMA}
\newclass{\pAPGQMA}{PreciseAPGQMA}
\newclass{\pAEGQMA}{PreciseAEGQMA}
\newclass{\sP}{\#P}
\begin{document}
\title{The importance of the spectral gap in estimating ground-state energies}
\author{Abhinav Deshpande}\email{abhinavd@caltech.edu} \thanks{Current affiliation: Institute for Quantum Information and Matter, California Institute of Technology, Pasadena, California 91125, USA}
\affiliation{Joint Center for Quantum Information and Computer Science and Joint Quantum Institute, NIST/University of Maryland, College Park, MD 20742, USA}
\author{Alexey V.\ Gorshkov}\email{gorshkov@umd.edu}
\affiliation{Joint Center for Quantum Information and Computer Science and Joint Quantum Institute, NIST/University of Maryland, College Park, MD 20742, USA}
\author{Bill Fefferman} \email{wjf@uchicago.edu}
\affiliation{Department of Computer Science, University of Chicago, Chicago, Illinois 60637, USA}

\begin{abstract}
The field of quantum Hamiltonian complexity lies at the intersection of quantum many-body physics and computational complexity theory, with deep implications to both fields.
The main object of study is the \textsc{LocalHamiltonian} problem, which is concerned with estimating the ground-state energy of a local Hamiltonian and is complete for the class $\QMA$, a quantum generalization of the class $\NP$.
A major challenge in the field is to understand the complexity of the \textsc{LocalHamiltonian} problem in more physically natural parameter regimes.
One crucial parameter in understanding the ground space of any Hamiltonian in many-body physics is the spectral gap, which is the difference between the smallest two eigenvalues.
Despite its importance in quantum many-body physics, the role played by the spectral gap in the complexity of the \textsc{LocalHamiltonian} is less well-understood.
In this work, we make progress on this question by considering the precise regime, in which one estimates the ground-state energy to within inverse exponential precision.
Computing ground-state energies precisely is a task that is important for quantum chemistry and quantum many-body physics.

In the setting of inverse-exponential precision, there is a surprising result \cite{Fefferman2018} that the complexity of \textsc{LocalHamiltonian} is magnified from $\QMA$ to $\PSPACE$, the class of problems solvable in polynomial space (but possibly exponential time).
We clarify the reason behind this boost in complexity.
Specifically, we show that the full complexity of the high precision case only comes about when the spectral gap is exponentially small.
As a consequence of the proof techniques developed to show our results, we uncover important implications for the representability and circuit complexity of ground states of local Hamiltonians, the theory of uniqueness of quantum witnesses, and techniques for the amplification of quantum witnesses in the presence of postselection.
\end{abstract}
\maketitle

\section{Introduction} \label{sec_conceptual}

Several exotic phenomena in our world occur only at very low temperatures, most notably, those occurring in condensed matter such as superconductivity, superfluidity and the fractional and integer quantum Hall effects.
Beyond these examples in condensed-matter physics, the low-energy physics of systems of several interacting particles is of interest in several fields such as particle physics, atomic, molecular, and optical physics, chemistry, and quantum computing.
Accordingly, finding effective descriptions of ground states of many-body Hamiltonians is a very natural and important task in physics.

Given the prevalence and importance of this task, a natural question is that of the computational difficulty of solving this task in naturally occurring situations.
Questions such as the hardness of solving a computational task belong to the domain of computational complexity theory.
A good proxy for the difficulty of obtaining ground-state descriptions is the difficulty of solving a weaker problem, namely that of computing ground-state energies of many-body Hamiltonians.
This question is studied in the domain known as ``Hamiltonian complexity'' (see e.g.\ Ref.~\cite{Gharibian2015}), an area of research at the intersection of quantum many-body physics and computational complexity theory. 

This area of research originated from Kitaev's result that the \textsc{LocalHamiltonian} problem, which is the problem of computing the ground-state energy of a local Hamiltonian, is \QMA-complete \cite{Kitaev2002} (we refer a reader unfamiliar with complexity-theoretic language to \cref{sec_prelims}).
The complexity class $\QMA$ is the quantum generalization of $\NP$.
Kitaev's result may be viewed as an analogue of the seminal Cook-Levin theorem \cite{Cook1971,Trakhtenbrot1984} in computer science, generalized to the setting of quantum constraint satisfaction problems.
Despite the tremendous amount of progress in understanding the power of local Hamiltonians, many important questions remain, such as whether the task remains hard under less-demanding notions of approximation error \cite{Aharonov2002,Aharonov2009a,Aharonov2013} and whether there exist short classical descriptions of ground states of local Hamiltonians (see e.g., Refs.~\cite{Aharonov2002,aaronsonkuperberg,feffermankimmel}), among others.

One important question about \textsc{LocalHamiltonian} is the role played by the spectral gap.
The spectral gap is a traditionally important quantity in the context of ground-state properties of any physical system and is defined as the difference between the smallest two eigenvalues of the Hamiltonian.
Many important families of Hamiltonians in physics have the ``gap property'', meaning that the spectral gap in the limit of large system size $n \to \infty$ is lower-bounded by a constant.
Important conjectures in physics are concerned with the existence of the gap property for certain Hamiltonians \cite{Haldane1983,Witten2002}, a problem that is known to be undecidable in general \cite{Cubitt2015}.
Furthermore, the existence of a spectral gap implies various tractability results for the ground states of Hamiltonians.
For instance, in one dimension, the gap property significantly restricts the entanglement structure of ground states through the area law of entanglement, implying efficient classical representations of the same \cite{Hastings2007}, and further, classically efficient algorithms to compute the ground-state energy \cite{Landau2015,Arad2017}.
It is not known whether these properties hold for higher dimensions.

Despite the physical importance of the spectral gap, its role in the 
context of the \textsc{LocalHamiltonian} problem itself is much less clear.
In particular, it is not known whether \textsc{LocalHamiltonian} is $\QMA$-complete in the presence of nontrivial lower bounds on the spectral gaps, even when the lower bound is $\Omega(1/\poly(n))$ \cite{Aharonov2008,Jain2012}.
Meanwhile, if the spectral gap is promised to be lower bounded by a constant, there are no-go results \cite{Gonzalez-Guillen2018,Crosson2018a} that rule out any \QMA-hardness proof that proceeds by generalizing the clock construction technique.
This technique underlies all known \QMA-completeness results, in analogy with the theory of \NP-completeness, where the Cook-Levin theorem plays a foundational role.
Therefore, Hamiltonians with any nontrivial lower bounds on the spectral gap can be less complex than the general case.

In this work, we take an initial step towards answering the question of the role played by the spectral gap in the \textsc{LocalHamiltonian} problem.
To do so, we study $\QMA$ in the precise setting, i.e.\ the class $\pQMA$.
In the precise setting, the completeness (the minimum probability of accepting a correct statement) and soundness (the maximum probability of accepting an incorrect statement) of the protocol are separated by a quantity called the promise gap that scales inverse-exponentially in the size of the input.
For the \textsc{LocalHamiltonian} problem, this translates to computing the ground-state energy to within inverse-exponential precision in the system size.

Computing ground-state energies to inverse-exponential precision is not an artificial task.
This task corresponds to computing polynomially many digits of the answer, which is very desirable in some cases \cite{Dalzell2019}.
Algorithms whose runtimes scale as $\polylog(1/\epsilon)$ for additive error $\epsilon$ can compute quantities to inverse-exponential precision in polynomial time, and such algorithms have been found for Hamiltonian simulation and linear systems \cite{Berry2014,Babbush2016,Childs2017}.
There are also situations where precise knowledge of the ground-state energy of a Hamiltonian is essential.
For example, in quantum chemistry, chemical reactivity rates depend on the Born-Oppenheimer potential-energy surface for the nuclei.
Each point on this surface is an electronic ground-state energy for a particular arrangement of the nuclei.
Small uncertainties in the ground-state energy can exponentially influence the calculated rate $k$ via Arrhenius's law $k \propto \exp[-\beta \Delta E]$, where $\Delta E$ is an energy barrier and $\beta$ the inverse temperature (see, e.g., Ref.~\cite{AguileraSammaritano2020}).
Another example is in condensed-matter physics, where algorithms such as the density matrix renormalization group (DMRG) routinely compute several digits of the ground-state energy (see, e.g., Ref.~\cite{Hubig2018}).
Precise knowledge of the ground-state energy can enable one to identify the locations of quantum phase transitions by identifying non-analyticities \cite{Sachdev2011}.
Interestingly, the class of Hamiltonians for which the energy can be precisely measured correspond to Hamiltonians that can be fast-forwarded \cite{Atia2017}.

Fefferman and Lin \cite{Fefferman2018} studied the complexity of the class $\pQMA$, and showed the mysterious result that it equals \PSPACE.
This is surprising since $\QMA \subseteq \PP$ \cite{Kitaev2000,Vyalyi2003,Marriott2005} (also see \cref{fig_classes} for reference), and an alternative characterization of the class $\PP$ is $\pBQP$, the precise analogue of $\BQP$ (the class of problems efficiently solvable on quantum computers).
Since $\pBQP$ can handle inverse-exponentially small promise gaps and contains $\QMA$, one might have expected that adding the modifier $\mathsf{Precise-}$ to $\QMA$ would not have changed the power of the class by much.

We provide an explanation for this seemingly unexpected boost in complexity from $\QMA$, which is a subset of $\PP$, to $\pQMA$, which equals $\PSPACE$\footnote{The class $\PSPACE$ contains $\PP$, and is believed to be unequal to and much larger than $\PP$: see \cref{fig_classes}.}.
Specifically, we find that in order for the precise version of \textsc{LocalHamiltonian}, i.e.\  \textsc{PreciseLocalHamiltonian}, to be $\PSPACE$-hard, the spectral gap of the Hamiltonian must necessarily shrink superpolynomially with the size of the system $n$ (measured by the number of qudits in the system).
We give strong evidence that if the spectral gap shrinks no faster than a polynomial in the system size, i.e.\ if the spectral gap is bounded by $\Omega(1/\poly)$, the complexity of the problem is strictly less powerful.
In particular, we show that this problem characterizes the complexity class $\PP$, which is a subset of $\PSPACE$ and is widely believed to be distinct from $\PSPACE$.
If the problem were \PSPACE-hard, the so-called counting hierarchy, defined as $\CH = \PP \cup \PP^\PP \cup \ldots$ \cite{Allender1993}, would collapse, which is considered an unlikely possibility.
Our results therefore bring out the importance of the spectral gap, a quantity not well understood so far in Hamiltonian complexity.

Another main result of ours concerns the existence of polynomial-size quantum circuits to prepare ground states of local Hamiltonians.
This is an important question that has implications in circuit-complexity of ground states of natural Hamiltonians and is directly related to whether natural Hamiltonians can be efficiently cooled down to zero temperature.
In complexity-theoretic language, the question may be phrased in terms of the power of classical versus quantum witnesses in Merlin-Arthur proof systems, or more formally, the so-called $\QMA$ vs.\ $\QCMA$ question.
The (in)equivalence of these classes is an important open question in quantum complexity theory and many-body physics, which has remained unsettled despite recent progress in the oracle setting (see e.g., Refs.~\cite{aaronsonkuperberg,feffermankimmel}).
The precise version of $\QCMA$, or $\pQCMA$, is known to be equal to $\NP^\PP$ (see e.g., Refs.~\cite{Gharibian2018,Morimae2017a}), indicating a separation between $\pQCMA$ and $\pQMA \ (=\PSPACE)$ unless the counting hierarchy collapses.
Interestingly, we show strong \emph{equivalence} results for the $\pQMA$ vs.\ $\pQCMA$ question in the presence of spectral gaps.

Our results and proof techniques we develop here also have consequences for other areas of quantum computation, complexity theory and many-body physics.
Our second main result mentioned earlier roughly says that in the precise regime, the promise of an inverse-polynomial lower bound on the spectral gap is equivalent to the promise that there exists a polynomial-size circuit to prepare the ground state.
This leads to an interesting conjecture we make in \cref{sec_discussion}, which could have a bearing on the performance of near-term quantum algorithms for quantum chemistry and on the circuit complexity of various low-energy states, which is an important question in many-body physics and gravitational and high-energy physics \cite{Jefferson2017,White2020}.
We obtain some additional evidence for the conjecture in \cref{sec_implications} by showing that some implications of the conjecture are correct.
Furthermore, our results can shed light on an attempt to give a quantum-inspired reproof \cite{Aharonov2017a,Green2019} of the celebrated $\IP=\PSPACE$ result \cite{Shamir1992} via interactive protocols for the class $\pQMA$.
Our results also allow us to rule out sufficiently strong error-reduction techniques for the class $\postQMA$.

This paper is structured as follows.
In the rest of \cref{sec_conceptual} we give an introduction to the basic notions of complexity theory used in this work (which an experienced reader may skip), state and refer to the main results, give a high-level overview of the proof techniques and their implications, and discuss the relation of our results to other work in the literature.
In \cref{sec_defs}, we give the definitions of some other complexity classes and define some new classes that appear in this work.
We also define natural problems complete for these classes.
We then formally state the results pertaining to the class $\PP$ in \cref{sec_pp} and $\PSPACE$ in \cref{sec_pspace}.
We also consider the complexity of related classes in \cref{sec_implications}, after which the Appendices have detailed proofs of our claims.

\subsection{Preliminaries} \label{sec_prelims}
Here, we give a very brief introduction to the complexity-theoretic definitions and terminology in this work.
The reader is referred to a textbook (e.g.\ Refs.~\cite{Arora2009,Sipser2012}) for a more pedagogical exposition.
We are generally concerned with decision problems, where the answer is either ``YES'' or ``NO''.
These problems can be cast as follows: given an instance $x$, the task is to decide if it belongs to the class of YES instances ($x \in A_\mathrm{yes}$), or to the class of NO instances ($x \in A_\mathrm{no}$).
In principle, there can be problems where certain instances (for example, ill-defined ones) belong neither to $A_\mathrm{yes}$ or $A_\mathrm{no}$.
In such cases, we either allow an algorithm to answer arbitrarily, or we supplant the problem with a promise that such instances never occur.
These are called \emph{promise problems}.

In complexity theory, one is typically interested in the resources taken to solve various classes of decision problems.
Further, one is interested in how the resource cost scales with the size of the problem to be solved, which is quantified in terms of the length of the input, often denoted $n$.
In this work, we use the notation $\poly(n)$ to denote any function that can be upper bounded by $O(n^c)$ for some constant $c=\Theta(1)$.
We also denote $\exp(n)$ to be any function $2^{\poly(n)}$.
We will omit the dependence on $n$, which in our work is taken to be the number of qudits.
The results in this work are applicable generally to qudits of any dimension $d \geq 2$, but we will often work with qubits in our proofs.

We first define the class $\BQP$ (Bounded-error Quantum Polynomial time), which is the class of problems solvable in polynomial time (in $n$) on a quantum computer with bounded error.
The error here is measured via the parameters $c$ (minimum probability of saying ``YES'' if the answer is YES) and $s$ (maximum probability of saying ``YES'' if the answer is NO).
More formally,
\begin{defn}[{$\BQP[c,s]$}]
$\BQP[c,s]$ is the class of promise problems $A=(A_\mathrm{yes},A_\mathrm{no})$ such that for every instance $x$, there is a uniformly generated circuit $U_x$ of size $\poly(n)$ acting on the state $\ket{0^{\otimes m}}$ for $m=\poly(n)$, with the property that upon measuring the first bit at the output, $o$, also called the decision qubit, we have\\
\begin{tabularx}{\linewidth}{l X}
If $x \in A_\mathrm{yes}$: & $ \Pr(o=1) \geq c$\\
If $x \in A_\mathrm{no}$: & $ \Pr(o=1) \leq s$.
\end{tabularx}
\end{defn}
In the above, we imagine that a quantum computer applies a circuit $U_x$ that acts on a standard initial state, measures the first bit at the output, and says YES (``accepts'') or NO (``rejects''), depending on whether the bit is measured to be in state $\ket{1}$ or $\ket{0}$.
The choice of the bit to measure at the output is arbitrary.
The term \emph{uniformly generated circuit} means that given an instance $x$ there is a polynomial-time classical algorithm to generate a description of the circuit $U_x$ to be applied.
\begin{defn}
 $\BQP = \cup_{c-s \geq 1/\poly} \BQP[c,s]$.
\end{defn}
The class $\BQP$ is the quantum generalization of class $\BPP$ (Bounded-Error Probabilistic Polynomial time), the class of problems solvable in polynomial time by a randomized classical computer.

We now come to the class $\QMA$ (Quantum Merlin Arthur), which is a quantum generalization of $\NP$.
We imagine two parties, Merlin (the prover) and Arthur (the verifier).
The prover would like to convince the verifier that a certain problem instance $x$ is a YES instance.
The prover, who is computationally unbounded, can supply any state $\ket{\psi}$ on $w=\poly(n)$ qubits to the verifier as a ``proof'' or ``witness''.
The verifier can apply any circuit of their choice acting on some $m$ qubits they possess and the witness state, and accept/reject based on the outcome of a decision bit.
$\QMA$ is the class of problems such that a YES answer can be reliably verified in this way and in case the answer is NO, no matter what state is sent by the (possibly cheating) prover, the verifier rejects with high probability.
Just like with $\BQP$, $\QMA$ is defined with respect to parameters $c$ and $s$, which are called \emph{completeness} and \emph{soundness}, respectively.

\begin{defn}[{$\QMA[c,s]$}]
$\QMA[c,s]$ is the class of problems $A=(A_\mathrm{yes},A_\mathrm{no})$ with the property that, for every instance $x$, there exists a uniformly generated circuit $U_x$ with the following properties:
$U_x$ is of size $\poly(n)$ and acts on an input state $\ket{0}^{\otimes m}$, together with a proof (or witness) state $\ket{\Psi}$ of size $w$ supplied by an arbitrarily powerful prover.
Both $m$ and $w$ are bounded by polynomials in $n$.
Upon measuring the decision qubit $o$ of the output register, the verifier accepts if $o=1$, and rejects otherwise.
We say $A=(A_\mathrm{yes},A_\mathrm{no})$ is a $\QMA[c,s]$ problem iff \\
\begin{tabularx}{\linewidth}{l X}
If $x \in A_\mathrm{yes}$: & $\exists \ \ket{\Psi}$ such that $\Pr(o = 1) \geq c$ \\
If $x \in A_\mathrm{no}$: & $\forall \ \ket{\Psi}$, $\Pr(o = 1) \leq s$.
\end{tabularx}
\end{defn}
\noindent $\QMA$ is defined as $\cup_{c-s \geq 1/\poly} \QMA[c,s]$.

To characterize the complexity of a problem, we give ``upper'' and ``lower'' bounds on the complexity of the problem.
Upper bounds are statements of the form ``\textsc{X} $\in \mathsf{Y}$'', which means that the problem \textsc{X} can be solved with access to a solver for the complexity class $\mathsf{Y}$.
For example, Shor \cite{Shor1997} proved that \textsc{Factoring} $\in \BQP$, which means that quantum computers can factor integers in polynomial time (since quantum computers may be viewed as ``solvers for the class $\BQP$'').
Lower bounds are statements of the form ``\textsc{X} is $\mathsf{Y}$-hard''.
This means that problem \textsc{X} is as hard as any problem in $\mathsf{Y}$.
Such statements are often shown via \emph{reductions}.
One assumes the existence of an \emph{oracle}, a black box that can solve any instance of the problem \textsc{X} in one timestep.
A reduction is a mapping from the complexity class $\mathsf{Y}$ to the problem \textsc{X} with the property that any problem in $\mathsf{Y}$ can be solved by querying the oracle for \textsc{X}.
If such a reduction exists, it implies that the problem \textsc{X} is at least as hard as any problem in the class $\mathsf{Y}$.
If a problem \textsc{X} is both in the class $\mathsf{Y}$ and is $\mathsf{Y}$-hard, then it means that the upper and lower bounds to the problem match.
This means that the problem \textsc{X} is the hardest in the class it belongs to, namely $\mathsf{Y}$.
In this case, we say ``\textsc{X} is $\mathsf{Y}$-complete.'' or ``\textsc{X} is complete for $\mathsf{Y}$.''.
We also denote by $\mathsf{Y}^{\mathsf{Z}}$ the class of problems solvable by a $\mathsf{Y}$ machine with access to an oracle for any problem in $\mathsf{Z}$.

\begin{figure}
\includegraphics[width=\linewidth]{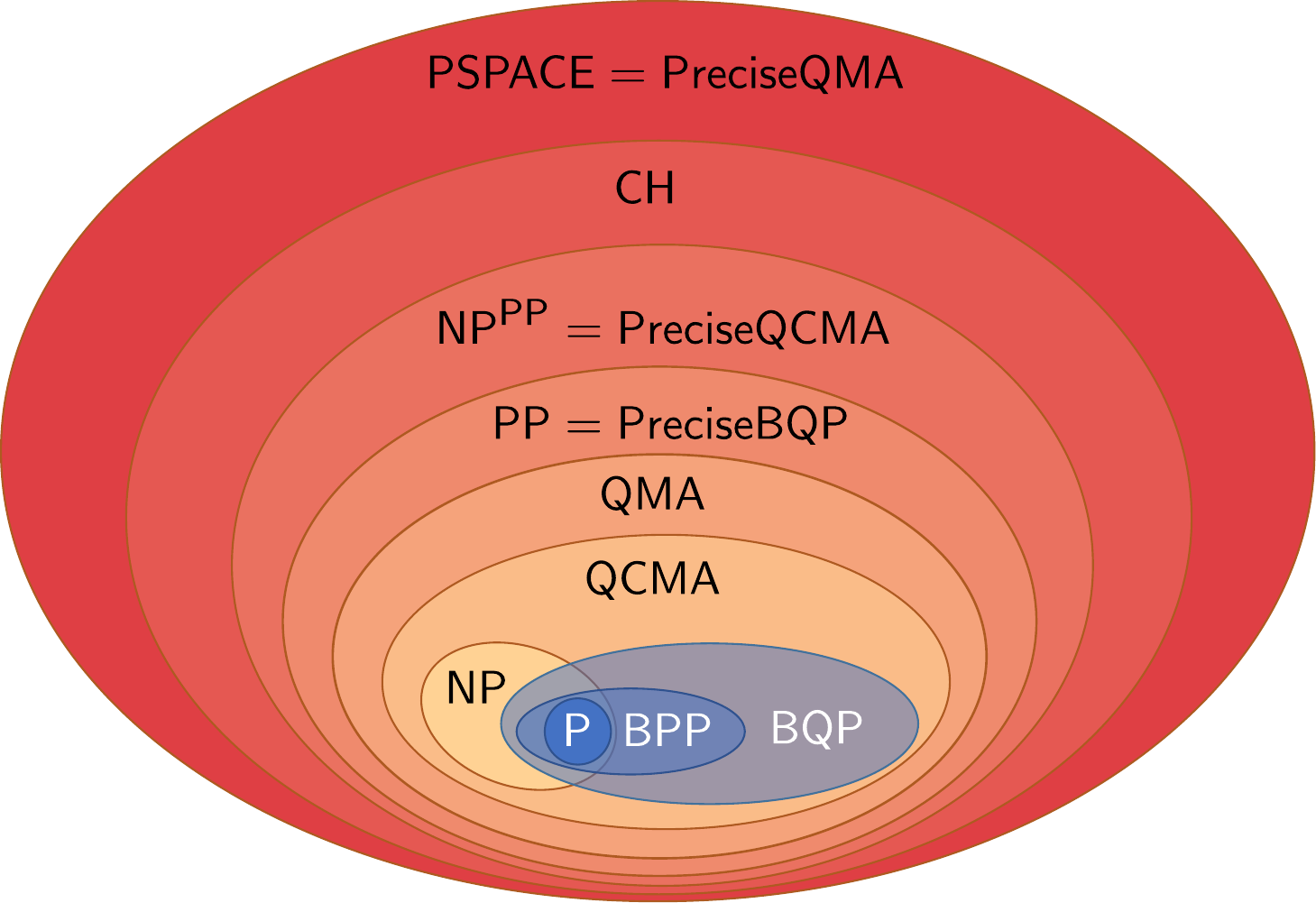}
\caption{\raggedright Major complexity classes featuring in this work.
The classes $\PSPACE$, $\NP^\PP$, and $\PP$ can be defined purely in terms of quantum computation, and are equal to $\pQMA$, $\pQCMA$, and $\pBQP$, respectively.
All inclusions except $\P \subseteq \BPP$ are believed to be strict.} \label{fig_classes}
\end{figure}

Lastly, we depict the known inclusions between complexity classes in \cref{fig_classes}.
We also describe here the classes not mentioned so far.
$\P$ is the class of problems efficiently solvable on classical computers, while $\NP$ is the class of problems for which a YES answer may be verified efficiently, via a protocol involving a classical prover and classical verifier.
The class $\QCMA$ is analogous to $\QMA$, except that the prover sends a classical witness instead of a quantum one.
As for $\PP$, it suffices to know that it equals $\pBQP$, a precise version of $\BQP$.
The class $\NP^\PP$ is a subset of $\PP^\PP$, since $\NP\subseteq \PP$.
These classes belong to the counting hierarchy ($\CH$), which is defined as $\CH = \PP \cup \PP^\PP \cup \ldots$ \cite{Allender1993}.
All of these classes are in $\PSPACE$, the class of problems solvable on classical computers that use polynomial space (but which are free to use exponential time).

\subsection{Results}\label{sec_results}

We describe a general problem we study here, called \textsc{($\delta,\Delta$)-LocalHamiltonian}.
Informally, it is the problem of estimating the ground-state energy of a given $k$-local Hamiltonian acting on $n$ qudits to additive error at most $\delta$, when promised that the spectral gap is at least $\Delta$ (see precise definitions in \cref{sec_defs}).
In the absence of any bound on the spectral gap (i.e.\ $\Delta=0$), the problem \textsc{$(1/\poly(n),0)$-LocalHamiltonian} is, by definition, the same as \textsc{$k$-LocalHamiltonian}, which is complete for $\QMA$ for $k\geq 2$ \cite{Kitaev2002,Kempe2003,Kempe2006}.
Meanwhile, \textsc{$(1/\exp(n),0)$-LocalHamiltonian} is, by definition, \textsc{Precise-$k$-LocalHamiltonian} \cite{Fefferman2018}, which is complete for $\pQMA$.
We henceforth suppress the dependence on the number of qudits $n$ in the notation $\exp$ and $\poly$ for the rest of the paper.

To our knowledge, Aharonov et al.\ \cite{Aharonov2008} were the first to study the \textsc{$k$-LocalHamiltonian} problem in the presence of a spectral gap.
Specifically, they considered \textsc{$(1/\poly,1/\poly)$-LocalHamiltonian} and showed it to be complete for the class $\PGQMA$ (Polynomially Gapped \QMA).
The definition of $\PGQMA$, which is given in \cref{sec_defs}, depends on a notion of a spectral gap for \emph{proof systems}, distinct from that for Hamiltonians.
For complexity classes associated with proof systems such as $\QMA$, $\QCMA$ and the variants we study in this work, the spectral gap corresponds to the gap in the highest and second-highest accept probabilities of the optimal witness and the next-optimal orthogonal witness.
A priori, the two notions of a spectral gap have no relation with each other.
We show that the two notions are equivalent for various cases ($\delta$ and $\Delta$ each behaving as $1/\poly$ or $1/\exp$), by showing that \textsc{$(\delta,\Delta)$-LocalHamiltonian} is complete for the appropriate spectral-gapped $\QMA$ class.

To understand the relation between the gapped $\QMA$ classes and the regular versions without a spectral gap, we focus on the precise regime, so that $\delta = 1/\exp$ henceforth for the rest of this section.
By specifying the spectral gap to be $\Omega(1/\poly)$, we get the problem \textsc{$(1/\exp,1/\poly)$-LocalHamiltonian}.
We show in \cref{thm_gappedham_in_ppgqma} that this problem is in a class we call $\pPGQMA$ (Precise Polynomially Gapped \QMA), which is the precise analogue of $\PGQMA$.
We also show (\cref{thm_ppgqma_inpp}) that $\pPGQMA \subseteq \PP$, implying that $\pPGQMA$ is likely different from $\pQMA$, which equals $\PSPACE$.
Specifically, assuming that $\PP\neq \PSPACE$, there is a separation between $\pPGQMA$ and $\pQMA$.
The $\PP$ upper bound on $\pPGQMA$ is optimal: we show that \textsc{$(1/\exp,1/\poly)$-LocalHamiltonian} is $\PP$-hard (\cref{thm_gappedham_pphard}).
Thus, we tightly characterize the complexity of the class by showing $\pPGQMA=\PP$ and prove that \textsc{$(1/\exp,1/\poly)$-LocalHamiltonian} is its associated complete problem.

The results in the previous paragraph show that the $\PSPACE$-hardness result of Ref.~\cite{Fefferman2018} relies on the fact that the spectral gaps of the associated Hamiltonians can decay rapidly with the system size.
This raises the question of the maximum scaling of the spectral gap required in order to retain $\PSPACE$-hardness.
This is an important question since if the $\PSPACE$-hardness results only apply when there is no promise whatsoever on the spectral gap, it would indicate that $\PSPACE$-hardness of \textsc{Precise-$k$-LocalHamiltonian} is artificial.
We rule out this possibility by showing that if the spectral gap is bounded below by $1/\exp$, i.e.\ if we consider the problem \textsc{$(1/\exp,1/\exp)$-LocalHamiltonian}, the problem remains $\PSPACE$-hard.
Specifically, we show in \cref{thm_pegqma_complete} that this problem is complete for a class called $\pEGQMA$ (Precise Exponentially Gapped \QMA).
Next, we show that $\pEGQMA$ equals $\PSPACE$ (\cref{thm_pegqma_pspace}), implying that instances with $\Omega(1/\exp)$ spectral gaps are no less complex than the general case.

Lastly, we consider the analogues of these classes when the witness is classical, which gives us the classes $\QCMA$ (Quantum Classical Merlin Arthur), $\pQCMA$, $\pPGQCMA$ (Precise Polynomially Gapped \QCMA) and $\pEGQCMA$ (Precise Exponentially Gapped \QCMA).
The complete problems for these classes are the appropriate versions of the \textsc{LocalHamiltonian} problem under the additional promise that there is an efficient classical description of a circuit to prepare a low-energy state, as we show in \cref{thm_qcma_complete,thm_pqcma_complete,thm_pgqcma_complete,thm_pegqcma_complete}.
We define this problem in \cref{sec_completeprobs} and denote it \textsc{$(\delta,\Delta)$-GS-Description-LocalHamiltonian}, which is the problem of computing the ground-state energy to additive error $\delta$, given the promise that there exists a polynomial-size circuit to prepare a low-energy state and promised that the spectral gap of the Hamiltonian is at least $\Delta$.
As stated in \cref{corll_ppgqcma_ppgqma}, we show that $\pPGQCMA$ has the same complexity as $\pPGQMA$, implying that in the precise setting, once there is a $\Omega(1/\poly)$ promise on the spectral gap, a further promise that there exists an efficient circuit to prepare a low-energy state is redundant.
We comment more on this result in \cref{sec_discussion}.

\begin{table*}[t]
\begin{tabular}{@{}l@{\hspace{10pt}}l@{\hspace{8pt}}l@{\hspace{10pt}}l@{\hspace{8pt}}l@{}} \toprule
\multirow{2}{*}[0pt]{\parbox{4em}{Spectral gap ($\Delta$)}} & \multicolumn{2}{c}{\textsc{$(\delta,\Delta)$-GS-Description-LocalHamiltonian}} & \multicolumn{2}{c}{\textsc{$(\delta,\Delta)$-LocalHamiltonian}} \\ \cmidrule(lr){2-3} \cmidrule(lr){4-5}
& $\delta=1/\poly$ & $\delta=1/\exp$ & $\delta=1/\poly$ & $\delta=1/\exp$ \\ \midrule
$1/\poly$ & $\PGQCMA$ \{\ref{thm_pgqcma_complete}\} ($=_R\QCMA$ \{\ref{lem_egqcma}\}) & $\pPGQCMA $ \{\ref{thm_ppgqcma_complete}\} ($=\PP$ \{\ref{thm_ppgqcma_pp}\}) & $\PGQMA$ & $\pPGQMA$ \{\ref{thm_ppgqma_complete}\} ($=\PP$ \{\ref{thm_ppgqma_pp}\})
\\ $1/\exp$ & $\EGQCMA$ ($=_R\QCMA$ \{\ref{lem_egqcma}\})  & $\pEGQCMA$ \{\ref{thm_pegqcma_complete}\} ($=\NP^\PP$ \{\ref{lem_pegqcma}\}) & $\EGQMA (?)$ & $\pEGQMA$ \{\ref{thm_pegqma_complete}\} ($=\PSPACE$ \{\ref{thm_pegqma_pspace}\})
\\ 0 & $\QCMA$ \{\ref{thm_qcma_complete}\} & $\pQCMA$ \{\ref{thm_pqcma_complete}\} ($=\NP^\PP$)  & \QMA & $\pQMA$ ($=\PSPACE$)
\\ \bottomrule
\end{tabular}
\caption{\raggedright Complexity of variants of the \textsc{LocalHamiltonian} problem as a function of the parameters $\delta$, the promise gap, and $\Delta$, the spectral gap.
The problem is complete for the class mentioned in each cell.
For reference, we mention in curly brackets the theorem number corresponding to the results proved in this work.
The question mark corresponding to the entry $\EGQMA$ indicates that the result is a conjecture and the notation $=_R$ denotes equivalence under randomized reductions (defined in \cref{sec_gqcma}).
}
\label{tab_complexitylh}
\end{table*}

In \cref{tab_complexitylh}, we give an overview of the parameter dependence of the complexity of two main problems studied in this work, namely \textsc{$(\delta,\Delta)$-LocalHamiltonian} and \textsc{$(\delta,\Delta)$-GS-Description-LocalHamiltonian}.
The problems are completely characterized by the appropriately gapped versions of $\QMA$ or $\QCMA$, or their precise variants.
The complexity class in any cell in the table is a subset of all the classes below it in the same column, since these classes correspond to weaker promises on the spectral gap.
Similarly, the complexity class associated with \textsc{$(\delta,\Delta)$-GS-Description-LocalHamiltonian} is a subset of that associated with \textsc{$(\delta,\Delta)$-LocalHamiltonian}, because the former problem is associated with an extra promise.
While we have given evidence that $\pPGQMA \neq \pQMA$, it is unknown whether the same holds for the question $\PGQMA \stackrel{?}{=} \QMA$.
Similarly, while we have proved $\pEGQMA = \pQMA$, it would be interesting to see if a similar result holds for $\EGQMA$.

\subsection{Techniques} \label{sec_techs}
Here, we give an overview of the primary techniques used in proving our results.

\textbf{Imaginary-time evolution and the power method.}---
To show the containment $\pPGQMA \subseteq \PP$, we use a technique called the ``power method'' \cite{Mises1929}.
The broad idea behind the algorithm is that if a matrix $A$ is promised to have a spectral gap between the largest two eigenvalues, the behavior of $A^d$ for large $d$ is dominated by the largest eigenvalue.
We give a $\PP$ algorithm to compute $\Tr(A^d)$ for an exponentially large matrix $A$ and $d = \poly(n)$ for a wide class of matrices $A$.
This wide class includes sparse matrices and matrices representing local observables as special cases.
The $\PP$ algorithm uses the Feynman sum-over-paths idea \cite{Adleman1997} to express the trace as a sum over $2^\poly$ many terms, each of which is a product over quantities of the form $\bra{x}R\ket{y}$ for some matrix $R$ whose entries are efficiently computable.
A $\PP$ algorithm can decide whether the sum over $2^\poly$ many terms, each term computable in polynomial time, is above or below a threshold.

The power method is closely related to another technique called the ``cooling algorithm'', inspired by a brief discussion by Schuch et al.\ \cite{Schuch2007a}.
The idea is that letting a system evolve in imaginary time can produce an unnormalized state close to the ground state.
Imaginary-time evolution is a linear, albeit nonunitary, operation and produces an unnormalized state $\rho'$ in general.
Schuch et al.~relied on a quantum characterization of $\PP$, namely $\postBQP$.
The class $\postBQP$ \cite{Aaronson2005} is the class of problems solvable in polynomial time on a quantum computer with access to the resource of postselection, which is the ability to condition on exponentially unlikely events.
Aaronson \cite{Aaronson2005} showed that any linear operation, even nonunitary ones, may be simulated in $\postBQP$.
Schuch et al.'s algorithm \cite{Schuch2007a} proposes to decompose the imaginary-time evolution operation $\exp[-\beta H]$ into a series of local operations $\exp[-\beta H_i]$ using Trotterization, and implementing each local operation using the resource of postselection.
Unfortunately, the state-of-the-art error bounds for Trotterization of imaginary-time evolution \cite{Childs2019a} give, at best, a multiplicative error that is exponential in $n$ (see also Refs.~\cite{Haah2018,Tran2019}), and hence this technique does not work in the precise regime.
We prove a more general statement about precise computation of ground-state local observables for Hamiltonians with a spectral gap using exact imaginary time evolution as opposed to a Trotterized version.
Specifically, we give a $\P^\PP$ algorithm that provably works not just for $1/\poly$ precision, but also $1/\exp$ precision in computing local observables in addition to the Hamiltonian.
Our technique is closely related to the power method, since the core of the algorithm is to compute expectation values of powers of the Hamiltonian.

\textbf{Small-penalty clock construction.}--- Our second major technical contribution is a modification of the clock construction that we call the small-penalty clock construction.
One of the ways this technique is useful is as follows.
As mentioned earlier and as will be described in detail in \cref{sec_defs}, it is possible to consider spectral-gapped versions of both the \textsc{LocalHamiltonian} problem and the class $\QMA$ and their variants.
We have already discussed the (natural) notion of a spectral gap for Hamiltonians.
For $\QMA$ and related classes, the spectral gap is related to the difference in accept probabilities between the optimal and next-optimal witness.
Our technique allows us to bridge the notion of spectral gap in both cases by constructing spectral-gap-preserving reductions.
In other words, the small-penalty clock construction allows us to prove that the Hamiltonians resulting from the construction inherit a spectral gap related to the gap in accept probabilities in the circuit, for several variants of $\QMA$.
This ability is used in the proofs of \cref{thm_pegqma_complete,thm_qcma_complete,thm_pqcma_complete,thm_pgqcma_complete,thm_pegqcma_complete}.
An interesting feature of the modified clock construction is that it also allows us to show that, when there is a classical witness (i.e.\ a $\QCMA$ computation), the resulting Hamiltonian has a classical description for a state with energy close to the ground-state energy.
Another related application of the small-penalty clock construction is that it also allows us to show complexity lower bounds like in \cref{thm_gappedham_pphard,thm_pqcma_gap_pphard}.
In these cases, we directly reduce from $\PP$ to the appropriate gapped version of the \textsc{LocalHamiltonian} problem instead of a reduction from the corresponding $-\QMA$ class.

We now spell out what enables the small-penalty clock construction to show the above results.
As mentioned before, the clock construction and its variants encompass all current proofs of hardness for $\QMA$ and related classes.
Typically, this consists of mapping a circuit to a Hamiltonian $H = H_\mathrm{input} + H_\mathrm{prop} + H_\mathrm{clock} + H_\mathrm{output}$.
Roughly speaking, each term locally enforces that the computation is a valid step of a $\QMA$ protocol by adding energy penalties to undesirable states.
The ``witness register'', where a quantum prover may input any quantum state, is left unpenalized and the Hamiltonian therefore has no terms acting on the witness register.
The role of $H_\mathrm{output}$ is to ensure that witnesses and computations that lead to a \emph{low} accept probability at the output get a \emph{high} energy penalty.
In the absence of the penalty term at the output, the ground-state space of the Hamiltonian is well-known and is given by the subspace of the so-called ``history states'', each with the same energy.
The output penalty term $H_\mathrm{output}$ is what breaks the degeneracy and helps create a promise gap, and we will henceforth refer to this as simply the penalty term without qualification.

However, the addition of the penalty term makes the eigenstates of the Hamiltonian difficult to analyze, since the magnitude of the penalty can be large, i.e. $\Omega(1)$ in strength.
In this work, we often choose the output penalty terms to have small strength.
This might seem like a strange choice to make since one is typically interested in making the promise gap as large as possible.
However, since we are dealing with instances where the promise gap is already exponentially small, our choice is not too costly.
The advantage this gives us is that the ground-state energy tracks the effect of the output penalty more faithfully.
More concretely, the smallness of the penalty term allows us to use tools like the Schrieffer-Wolff transformation \cite{Schrieffer1966,Bravyi2011}, which can be viewed as a rigorous formulation of degenerate perturbation theory.
We review the Schrieffer-Wolff transformation in \cref{sec_schwolff}.

\textbf{Spectral gap in adjacency matrix.}---
For the proof of \cref{thm_pegqma_pspace}, we show a reduction\footnote{This reduction is inspired by unpublished work by one of us and Cedric Lin \cite{Fefferman2016c}, and we supplement it with a technique to create spectral gaps.} from a natural \PSPACE-complete graph problem to an instance of a problem known as \textsc{$(1/\exp,1/\exp)$-SparseHamiltonian}\footnote{We actually show a reduction to the complement of the problem (where YES and NO instances are reversed), but this turns out not to matter because $\PSPACE$ is closed under complement.}.
This problem is a generalization of \textsc{$(1/\exp,1/\exp)$-$k$-LocalHamiltonian}, allowing for the Hamiltonian to be any sparse Hamiltonian with a spectral gap $\geq 1/\exp$.
Sparse Hamiltonians are Hermitian matrices that can be exponentially large, with at most $\poly(n)$ nonzero entries per row in some basis and an efficient algorithm for computing any entry of the matrix.
They are a generalization of local Hamiltonians.

The \PSPACE-complete graph problem may be described as \textsc{SuccinctGraphReachability}, which is to decide if there is a path from one vertex to another in a succinctly described graph of exponential size (also see Ref.~\cite{Atia2017}).
We show that one can always construct a \PSPACE-bounded Turing machine such that the resulting Hamiltonian after the reduction always has a spectral gap that is at least $1/\exp(n)$.
We do this through an explicit analysis of the eigenvalues of the Hamiltonian, which are related to the lengths of cycles and paths of the graph constructed from the Turing machine.
Next, we give a $\pEGQMA$ upper bound to \textsc{$(1/\exp,1/\exp)$-SparseHamiltonian}, i.e.\ the problem in the presence of a spectral gap, establishing that $\PSPACE \subseteq \pEGQMA$.
\subsection{Discussion} \label{sec_discussion}

Our first main result was that the addition of even an inverse-polynomially small spectral gap takes the complexity of precisely estimating the ground-state energy of a local Hamiltonian from $\pQMA=\PSPACE$ to $\pPGQMA=\PP$.
Note that this result also implies a difference between the case of no spectral gap and a constant spectral gap.
Therefore, we have given a \emph{provable setting} where the difference in complexity between two problems is attributable entirely to the spectral gap.

Our second main result concerned a modification of the same problem of
precisely estimating the ground-state energy of a local Hamiltonian promised to have an inverse-polynomial spectral gap.
When additionally promised that there exists a classical description of a circuit to prepare a state whose energy is exponentially close to the ground-state energy, our results show that the complexity of the problem does not get weaker.
Specifically, we show that the class $\pPGQCMA$ is equivalent to $\pPGQMA$.

The above equivalence result is in sharp contrast with the belief $\pQCMA\neq \pQMA$ in the non-spectral-gapped case.
This inequality follows from the conjecture that $\NP^\PP \neq\PSPACE$, which, if false, would lead to a collapse of the counting hierarchy.
The inequality $\pQCMA\neq \pQMA$ rules out the possibility of there being polynomial-size circuits to prepare ground states of local Hamiltonians to exponential precision, since otherwise the prover could simply supply a description of such a circuit.
Our equivalence result that $\pPGQMA = \pPGQCMA$ is consistent with the following intriguing conjecture about the circuit-complexity of ground states of low-energy Hamiltonians, although it does not imply the conjecture.

\begin{conj}
Consider any Hamiltonian $H$ on $n$ qubits with ground-state energy $E_1$ and a $1/\poly$ spectral gap.
Then there exists a low-energy state $\ket{\psi}$ satisfying $\bra{\psi}H\ket{\psi} \leq E_1 + 2^{-\poly(n)}$ that can be prepared by an efficient quantum circuit, namely a state of the form $\ket{\psi} = U \ket{0}^{m}$, where $m$ and the size of $U$ are both polynomials in $n$.\label{prop_gqcma}
\end{conj}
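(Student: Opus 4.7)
The plan is to attempt to construct the preparing circuit $U$ by applying a polynomial filter of $H$ to a simple initial state. Concretely, I would use the quantum singular value transformation (QSVT) framework combined with amplitude amplification. First, block-encode $H$ (after a shift and rescaling so its spectrum lies in $[-1,1]$); for a $k$-local Hamiltonian with $\poly(n)$ terms this is standard and uses $\poly(n)$ gates via a Linear Combination of Unitaries construction. Next, build a polynomial $p(x)$ of degree $D$ that is at least $1-\eta$ on a small interval around $E_1$ and at most $\eta$ on $[E_1+\Delta/3,\,1]$, where $\eta = 2^{-\poly(n)}$. Standard error-function-based constructions give such a $p$ with $D = O(\Delta^{-1}\log(1/\eta))$, which is $\poly(n)$ under the $\Delta \geq 1/\poly$ promise.

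Applying QSVT realizes the block-encoded operator $p(H)$, and a coherent amplitude-amplification round on the block-encoding ancilla yields a normalized state proportional to $p(H)\ket{\phi_0}$ for whatever initial state $\ket{\phi_0}$ we chose. Decomposing $\ket{\phi_0} = \sum_j c_j \ket{\psi_j}$ in the eigenbasis of $H$, a direct calculation shows that the output state has energy at most $E_1 + O(\eta) \|H\| = E_1 + 2^{-\poly(n)}$, provided that $|c_1| \geq 1/\poly$. If such a ``warm-start'' $\ket{\phi_0}$ is efficiently preparable, the overall circuit $U$ has polynomial size, proving the conjecture.

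The main obstacle — and the reason the statement is advanced as a conjecture rather than a theorem — is precisely this warm-start requirement. For an arbitrary local Hamiltonian specified only as a bag of Pauli terms with a $1/\poly$ spectral-gap promise, nothing in the input tells us where in Hilbert space the ground state lies, and the worst-case overlap of any fixed efficiently preparable state with $\ket{\psi_1}$ could be exponentially small. Producing an initial state with $\Omega(1/\poly)$ ground-state overlap appears essentially equivalent to the conjecture itself. A natural alternative via adiabatic evolution from a trivial Hamiltonian would require certifying a spectral gap along the \emph{entire} interpolating path, which is not implied by the final Hamiltonian alone being gapped. Another natural route — extracting a witness-preparing circuit from the equivalence $\pPGQMA = \pPGQCMA$ proved elsewhere in this paper — fails structurally, since that equivalence concerns decision complexity and does not, a priori, deliver a classical description of a good warm-start circuit together with the verifier's acceptance decision. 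I would therefore expect any proof to require either a new general-purpose gapped-path construction for arbitrary $1/\poly$-gapped local Hamiltonians, or a strengthening of the $\pPGQMA = \pPGQCMA$ equivalence from a decision-class statement into a constructive witness-preparing statement; neither appears to follow from the techniques of the present work.
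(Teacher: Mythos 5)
This statement is a \emph{conjecture} in the paper, not a theorem, and you have correctly recognized it as such rather than offering a purported proof. Your analysis of why it resists proof tracks the paper's own discussion closely: the authors explicitly remark that their equivalence result $\pPGQMA = \pPGQCMA$ does not imply \cref{prop_gqcma} ``because the reduction does not say anything about how the $\pPGQCMA$ verification protocol looks like,'' and they emphasize that the circuits in the conjecture ``may be hard to find'' and could be nonuniform. Your identification of the warm-start overlap $|c_1|\geq 1/\poly$ as the essential obstruction is correct and clarifying: once such an efficiently preparable warm-start exists, the QSVT/eigenstate-filtering machinery (with degree $O(\Delta^{-1}\log(1/\eta)) = \poly(n)$) does the rest, so the conjecture is effectively equivalent to asserting that every $1/\poly$-gapped local Hamiltonian admits a poly-size (possibly nonuniform) circuit preparing a state with $\Omega(1/\poly)$ ground-state overlap. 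Your further remarks — that adiabatic evolution from a trivial Hamiltonian would require a gap promise along the whole interpolating path, which is not part of the hypotheses, and that the $\pPGQMA = \pPGQCMA$ decision-complexity equivalence is not a witness-synthesis statement — are both sound and match what the paper gestures at. There is no gap to flag: you declined to claim a proof of an open conjecture, and your reduction of the difficulty to warm-start existence is a faithful and somewhat sharper restatement of the obstruction than the paper spells out.
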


Note that \cref{prop_gqcma} implies the following results: i) $\pPGQMA=\pPGQCMA$, ii) $\PGQMA=\PGQCMA$, and iii) $\PGQCMA = \QCMA$.
We have proved i) and given strong evidence for iii) in \cref{lem_egqcma} by showing that  $\PGQCMA =_R \QCMA$.
These results do not imply \cref{prop_gqcma} because the reductions do not imply anything about the classical witnesses.
We also note that the quantum circuits referred to in \cref{prop_gqcma} may be hard to find---the conjecture is only concerned with the existence of such circuits, and not with whether these circuits can be obtained by an efficient algorithm.
In complexity-theoretic language, these circuits may be nonuniform.
This is why \cref{prop_gqcma} is not in contradiction with Ref.~\cite{Schuch2008b}, which argues that finding efficient matrix-product-state representations of Hamiltonians with a $\Omega(1/\poly)$ spectral gap can be hard.

If \cref{prop_gqcma} were true, it would also explain the observed success of quantum algorithms such as the variational quantum eigensolver (VQE) \cite{Peruzzo2014,Jones2019}, which seek to solve a much simpler problem of preparing low-energy states of translation-invariant many-body Hamiltonians with energy $1/\poly$-close to the ground-state energy.
A large class of translation-invariant Hamiltonians have a spectral gap that is either a constant, $\Theta(1)$ (gapped phases), or vanishing in the system size as $\Theta(1/n^{1/D})$ (gapless phases described by conformal field theories in $D$-dimensions).
Therefore, \cref{prop_gqcma} applies to both these cases and would imply the existence of polynomial-size circuits to prepare states with high overlap with the ground state.
Such circuits are generally found in the VQE algorithm if one optimizes over sufficiently many parameters.
This behavior is in line with other instances where a lower bound on the spectral gap implies tractability of the ground state in various senses \cite{Hastings2007,Aharonov2010a,Molnar2015,Arad2017}.

Coming to the case of exponentially small spectral gaps, we have shown that $\pEGQMA=\pQMA$.
This implies that $\pEGQMA\neq \pEGQCMA$ unless the counting hierarchy collapses.
Therefore, we have given a class of local Hamiltonians (in the proof of \cref{lem_ham_preciseegqma}) with exponentially small spectral gaps, whose ground states have exponentially large circuit complexity.
This is a result of independent interest, and it might be interesting to study the whether these Hamiltonians can be classified as quantum spin glasses, which are believed to be hard to cool down to zero temperature \cite{Knysh2016}.

In another intriguing line of work, Aharonov and Green \cite{Aharonov2017a} and Green, Kindler, and Liu \cite{Green2019} have given interactive protocols for precise quantum complexity classes with a computationally bounded prover $\mathcal{P}$ and a computationally bounded verifier $\mathcal{V}$, denoted $\IP[\mathcal{P},\mathcal{V}]$.
A goal of this line of work is to give a quantum-inspired proof of the result $\IP=\PSPACE$ \cite{Shamir1992} by giving an interactive protocol for $\pQMA$ \cite{Green2019} (which equals $\PSPACE$) with a $\BPP$ verifier.
This has been successful so far with $\pBQP$ and $\pQCMA$ (which equals $\NP^\PP$) but not yet with $\pQMA$.
From the result of Ref.~\cite{Aharonov2017a} and our result that $\pPGQMA = \PP$, there is an $\IP[\pBQP,\BPP]$ protocol for $\pPGQMA$.
Our results indicate that the spectral gap might play an important role in extending such an interactive protocol to $\PSPACE$.
Namely, such an extension would need to be able to work with inverse-exponentially small spectral gaps.

In addition, the class $\postQMA$ \cite{Usher2017,Morimae2017a} is the class where there is a quantum prover and a $\postBQP$ verifier, where one may condition (postselect) on exponentially unlikely outcomes.
This class has been shown to be equal to $\pQMA$ \cite{Morimae2017a}, so an alternative approach mentioned by Green et al.~\cite{Green2019} to reprove the result $\IP=\PSPACE$ is to exhibit an $\IP[\postQMA,\BPP]$ protocol for $\postQMA$.
To complete such a proof, it would suffice to prove a witness-preserving amplification technique like in $\QMA$ \cite{Marriott2005,Nagaj2009} that additionally handles postselection.
Witness-preserving amplification is a technique for improving the promise gap of an interactive protocol by modifying the verifier's strategy while keeping the witness fixed.
We show in \cref{lem_postqma_amp} that, assuming $\PP \neq \PSPACE$, the soundness of a $\postQMA$ protocol cannot be reduced beyond a particular point without requiring the witness to grow larger or requiring the postselection success probability to shrink.
Therefore, we obtain evidence that a witness-preserving amplification technique for $\postQMA$ should differ significantly from the technique of Marriott and Watrous \cite{Marriott2005}, since in the latter, repeating the verifier's circuit suffices to get any soundness parameter $s \leq 2^{-\poly}$.

So far, we have considered the spectral-gap promise to be applicable to both YES and NO instances of the problems defined.
We can also define asymmetric problems where only the YES instances are promised to have a spectral gap.
The motivation for considering such asymmetric promises is that they are related to complexity classes where the accepting witness is promised to be unique, such as the class $\UQMA$ \cite{Aharonov2008}.
The problems with asymmetric promises can only be harder than their symmetric analogues, since the promise is weaker.
We show that for both $\Omega(1/\poly)$ and $\Omega(1/\exp)$ spectral gaps in the precise setting, there is no difference between symmetric and asymmetric promises on the spectral gaps.
Specifically, we show in \cref{thm_asym_symgaps} that the classes with asymmetric promises are of the same complexity as those with symmetric promises.

We remark here that the promise of a spectral gap above a unique ground state is distinct from assuming that we have a $\UQMA$ instance.
The reason is that for \textsc{LocalHamiltonian}, the presence of  a spectral gap does not imply that there is a unique accepting witness, it only implies a unique ground state.
In case the ground-state subspace is polynomially degenerate, the $\PP$ algorithm continues to work to produce estimates of the ground-state energy.

Lastly, we add that results shown in the precise regime do not always imply analogous results in the non-precise regime.
For example, our work gives evidence that $\pPGQCMA \neq \pQCMA$, but in the non-precise regime we can show $\PGQCMA=_R \QCMA$.
In this respect, inequivalence results in the high-precision regime resemble oracle separation results in complexity theory, which is a mature area of research with several important results \cite{Aaronson2002,Aaronson2010,Raz2018}.
While oracle separations do not constitute strong evidence for the inequivalence of two complexity classes, they are useful in ruling out proof techniques that work relative to oracles, or ``relativize''.
Similarly, inequivalence results in the precise regime can rule out proof techniques from extending to the precise regime.
For example, a purported proof that $\QCMA =\QMA$ must not work in the precise regime, otherwise we would obtain $\pQCMA=\pQMA$, or $\PSPACE=\PP$, which is believed to be unlikely.

\subsection{Related work} \label{sec_others}

The study of Hamiltonian complexity \cite{Kempe2003,Kempe2006,Bravyi2008,Oliveira2008,Bravyi2008a,Gottesman2009,Aharonov2009,Cubitt2013,Gharibian2015} has given rise to many techniques and important results applicable in quantum many-body physics, such as \cite{Aharonov2007,Schuch2007,Schuch2008b,Cubitt2015,Cubitt2018,Bausch2018b,Piddock2018,Kohler2019,Kohler2020}.
The clock construction has also been analyzed in detail recently \cite{Bausch2018a,Caha2018,Watson2019}.

The study of exponentially small promise gaps in the context of quantum classes can be traced to Watrous \cite{Watrous2009}, who defined $\mathsf{PQP}$ and showed its equivalence with $\postBQP$, which equals $\PP$ \cite{Aaronson2005}.
In the precise setting, one can sometimes give far stronger evidence for the (in)equivalence of complexity classes than in the analogous bounded error setting, as is the case for precise versions of the questions of $\QCMA$ vs.\ $\QMA$ \cite{Fefferman2018} and $\QMA(2)$ vs.\ $\QMA$ \cite{Brandao2008,Blier2009,Pereszlenyi2012,Fefferman2018}.
There has been work on quantum interactive proof systems with exponentially small promise gaps, such as in the context of $\QMA(2)$ \cite{Pereszlenyi2012}, or with even smaller gaps, such as in Refs.~\cite{Ito2012,Ji2017,Fitzsimons2019}.
Fefferman and Lin \cite{Fefferman2016c,Fefferman2018} studied the precise regime of $\QMA$, showing it to equal $\PSPACE$, leading to other works concerning precise classes \cite{Fefferman2016b,Morimae2017a}.
Gharibian et al.\ \cite{Gharibian2018} considered quantum generalizations of the polynomial hierarchy, where precise classes and spectral gaps are relevant to the definitions and proof techniques.

Aharonov et al.\ \cite{Aharonov2008} were the first to consider the complexity of the \textsc{LocalHamiltonian} problem in the presence of spectral gaps, motivated by the question of uniqueness \cite{Valiant1986} for randomized and quantum classes.
They showed the equivalence of $\UQCMA$ and $\QCMA$, and that of $\UQMA$ and $\PGQMA$, using similar techniques as Valiant and Vazirani \cite{Valiant1986} in their proof of equivalence of $\UNP$ and $\NP$.
Jain et al.\ \cite{Jain2012} defined the class $\mathsf{FewQMA}$ and showed that it is contained in $\P^{\UQMA}$, giving a technique to reduce the dimension of accepting witnesses.

More recently, Gonz\'alez-Guill\'en and Cubitt \cite{Gonzalez-Guillen2018} studied the spectral gap of a large class of Hamiltonians that encode history states in their ground state and showed that the spectral gap is upper bounded by $O(1/\poly)$.
A similar result was obtained by Crosson and Bowen \cite{Crosson2018a} using different techniques.
These works are mainly concerned with the existence of a $\Theta(1)$ spectral gap, whereas our results distinguish between $1/\poly$ and $1/\exp$ spectral gaps.

Finally, Ambainis \cite{Ambainis2014} studied the problem of estimating spectral gaps and local observables and gave a $\P^{\QMA{[\log]}}$ upper bound for these problems, while also giving $\P^{\QMA{[\log]}}$-hardness results (also see Ref.\ \cite{Gharibian2019b}).
The class $\P^{\QMA{[\log]}}$ is the class of problems solvable in polynomial time by making logarithmically many (adaptive) queries to a $\QMA$ oracle. 
Gharibian and Yirka \cite{Gharibian2019b} showed that $\P^{\QMA{[\log]}} \subseteq \PP$ and extended previous hardness results to more natural Hamiltonians.
Gharibian, Piddock, and Yirka \cite{Gharibian2019} also gave a very natural complete problem for the class $\P^{\QMA{[\log]}}$ in the context of computing local observables in ground states.
Novo et al.~\cite{Novo2019} have recently studied the closely-related problem of sampling from the distribution obtained by making energy measurements and obtain various interesting hardness results, under different notions of error.

\section{Definitions and complete problems} \label{sec_defs}
We have seen the definition of $\BQP$ in terms of the class $\BQP[c,s]$ with general parameters $c$ and $s$.
The \textsf{Precise-} version of $\BQP$ can be defined similarly.
\begin{defn}
 $\pBQP = \cup_{c-s \geq 1/\exp} \BQP[c,s]$.
\end{defn}
\noindent This class is known to be equal to $\PP$ (see, e.g., Ref.~\cite{Gharibian2018}).

We now give an equivalent definition of $\QMA$ in terms of the eigenvalues of an operator called the \emph{accept operator}.
We will then define a very general class called Gapped \QMA, $\GQMA[c,s,g_1,g_2]$, which has several parameters.
By specifying these parameters, we can define the major complexity classes in this work.
The complexity classes corresponding to classical witnesses ($\QCMA$ and its derivatives) are defined analogously.

The alternative definition of $\QMA$ is in terms of the ``accept operator'' $Q(U_x) = \bra{0}^{\otimes m} U_x^\dag\Pi_\mathrm{out}U_x \ket{0}^{\otimes m}$ on the witness register, where 
$\Pi_\mathrm{out}$ is the projector on to the accept state ($\ket{1}_o$).
For any state $\ket{\Psi}$ provided as a witness, the quantity $\bra{\Psi} Q_x \ket{\Psi}$ is the accept probability of the circuit.
We will henceforth suppress the dependence of $Q$ on the unitary $U_x$ and the instance $x$.
The eigenvalues of $Q$, $\lambda_1(Q) \geq \lambda_2(Q) \geq \ldots$ are important quantities to consider since the accept probability of any input proof state is a convex combination of these eigenvalues.
The alternative definition of $\QMA$ in terms of the operator $Q$ is as follows:
\begin{defn}[Alternative definition of $\QMA{[c,s]}$]
$A=(A_\mathrm{yes},A_\mathrm{no})$ is a $\QMA[c,s]$ problem iff for every instance $x$ there exists a uniformly generated circuit $U_x$ of size $\poly(n)$ acting on $m+w = \poly(n)$ qubits, with the property that \\
\begin{tabularx}{\linewidth}{l X}
If $x \in A_\mathrm{yes}$: & $\lambda_1(Q) \geq c$ \\
If $x \in A_\mathrm{no}$: & $\lambda_1(Q) \leq s$,
\end{tabularx}
where $Q = Q(U_x)$ is as above.
\end{defn}

Note that we are typically interested in the behavior of the maximum accept probability, which equals the largest eigenvalue of $Q$.
We are also interested in the lowest eigenvalue of a Hamiltonian $H$ for the \textsc{LocalHamiltonian} problem and its variants.
Therefore, we order eigenvalues in nonincreasing order for accept operators and in nondecreasing order for Hamiltonians.
For the same reason, we define the spectral gap differently for accept operators and Hamiltonians.
For a Hamiltonian, we define the spectral gap to be the difference in the smallest two eigenvalues $E_2-E_1$.
For accept operators, the spectral gap is the difference between the \emph{highest two eigenvalues} $\lambda_1(Q)-\lambda_2(Q)$.
This is equal to the difference in the accept probabilities of the optimal witness and the next-optimal witness orthogonal to it.
It will usually be clear from context which spectral gap we are referring to.

Now let us define the class $\GQMA[c,s,g_1,g_2]$.
It corresponds to a promise on the operator $Q$ having a spectral gap of at least $g_1$ in the YES case, and at least $g_2$ in the NO case:

\begin{defn}[Gapped \QMA]
$\GQMA[c,s,g_1,g_2]$ is the class of promise problems $A=(A_\mathrm{yes},A_\mathrm{no})$ such that for every instance $x$, there exists a polynomial size verifier circuit $U_x$ acting on $\poly(n)$ qubits and its associated accept operator $Q$ such that\\
\begin{tabularx}{\linewidth}{l X}
If $x \in A_\mathrm{yes}$: & $\lambda_1(Q) \geq c$ and $\lambda_1(Q) - \lambda_2(Q) \geq g_1$ \\
If $x \in A_\mathrm{no}$: & $\lambda_1(Q) \leq s$ and $\lambda_1(Q) - \lambda_2(Q) \geq g_2$.
\end{tabularx}
\end{defn}
\noindent This definition is a generalization of the class $\PGQMA$ (Polynomially Gapped \QMA) defined by Aharonov et al.\ in Ref.~\cite{Aharonov2008}:
\begin{defn}
$\PGQMA $$=$$ \cup_{c-s, g_1, g_2 \geq 1/\poly} \GQMA[c,s,g_1,g_2]$.
\end{defn}
To see the relation of this class with $\QMA$, notice that by setting $g_1 = g_2 =0$, the promise on spectral gaps becomes vacuous, since $\lambda_1(Q) \geq \lambda_2(Q)$ by definition.
Therefore, we get the equality $\GQMA[c,s,0,0] = \QMA[c,s]$.
We also define
\begin{defn}[Exponentially Gapped \QMA]
 $\EGQMA = \cup_{\substack{c-s \geq 1/\poly\\ g_1, g_2 \geq 1/\exp}} \GQMA[c,s,g_1,g_2]$.
\end{defn}

We now come to precise versions of these classes, where the completeness--soundness gap $c-s$ can be exponentially small, giving us more powerful classes.
The first of these is $\pQMA$, which was defined in Ref.~\cite{Fefferman2018} and shown to be equal to $\PSPACE$.
\begin{defn}
$\pQMA =$ $ \cup_{c-s \geq 1/\exp} \QMA[c,s]$.
\end{defn}

This definition should be compared to the precise version of $\GQMA$, which comes in two varieties: the spectral gaps can either be polynomially small ($\pPGQMA$) or exponentially small ($\pEGQMA$).
\begin{defn}[$\pPGQMA$]
$\pPGQMA$, short for Precise Polynomially Gapped \QMA, is the class with exponentially small promise gaps and polynomially small spectral gaps: \\$\pPGQMA = \cup_{\substack{c-s \geq 1/\exp \\ g_1, g_2  \geq 1/\poly}} \GQMA[c,s,g_1,g_2]$.
\end{defn}
\begin{defn}[$\pEGQMA$]
$\pEGQMA$, short for Precise Exponentially Gapped \QMA, has both the promise gap and spectral gap exponentially small:\\
$\pEGQMA = \cup_{\substack{c-s \geq 1/\exp \\ g_1, g_2  \geq 1/\exp}} \GQMA[c,s,g_1,g_2]$.
\end{defn}

We now come to complexity classes in which the prover sends a classical witness but the verifier remains quantum.
The classicality of the witness can be enforced by measuring the qubits sent by the prover in the computational basis and interpreting qubits in the computational basis as classical bits.
If the verifier is only allowed to make measurements at the end, we use the standard protocol for deferring measurements: we apply a ``copy operation'' $U_c$ that has CNOTs from the qubits in the witness register to an ancilla register in the state $\ket{0}^w$.
We leave the qubits in the witness state unmeasured.
This modified circuit has the property that it preserves the accept probabilities of input witness states that are in the computational basis.
Further, the eigenstates of the modified accept operator acting on the register can be taken to be computational basis states.
This allows us to define $\QCMA$ and its derivatives in terms of the accept operator and also allows us to consider a gapped version of $\QCMA$:

\begin{defn}[{$\GQCMA[c,s,g_1,g_2]$}]
$A=(A_\mathrm{yes},A_\mathrm{no})$ is a $\GQCMA[c,s]$ problem iff for every instance $x$ there exists a uniformly generated circuit $U_x$ of size $\poly(n)$ acting on $m+w = \poly(n)$ qubits, with the property that \\
\begin{tabularx}{\linewidth}{l X}
If $x \in A_\mathrm{yes}$: & $\lambda_1(Q) \geq c$ and $\lambda_1(Q)-\lambda_2(Q) \geq g_1$ \\
If $x \in A_\mathrm{no}$: & $\lambda_1(Q) \leq s$, and $\lambda_1(Q)-\lambda_2(Q) \geq g_2$,
\end{tabularx}
where $Q = Q(U_xU_c)$ is the accept operator of the modified circuit with the copy operation $U_c$ described above.
\end{defn}
\begin{defn}
The derived classes of $\GQCMA$ are given by
\begin{itemize} \raggedright
 \item $\QCMA[c,s] = \GQCMA[c,s,0,0]$. \\
 \item $\QCMA = \cup_{c-s > 1/\poly}\QCMA[c,s]$. \\
\item $\pQCMA = \cup_{c-s > 1/\exp}\QCMA[c,s]$. \\
\item Polynomially Gapped \QCMA: $\PGQCMA = \cup_{\substack{c-s> 1/\poly \\ g_1,g_2 > 1/\poly}} \GQCMA[c,s,g_1,g_2]$. \\
\item Precise Polynomially Gapped \QCMA: $\pPGQCMA = \cup_{\substack{c-s > 1/\exp \\ g_1,g_2 > 1/\poly}}\GQCMA[c,s,g_1,g_2]$. \\
\item Exponentially Gapped \QCMA: $\EGQCMA = \cup_{\substack{c-s> 1/\poly \\ g_1,g_2 > 1/\exp}} \GQCMA[c,s,g_1,g_2]$. \\
\item Precise Exponentially Gapped \QCMA: $\pEGQCMA = \cup_{\substack{c-s > 1/\exp \\ g_1,g_2 > 1/\exp}}\GQCMA[c,s,g_1,g_2]$.
\end{itemize}
\end{defn}

\subsection{Complete problems} \label{sec_completeprobs}
We now come to the definitions of problems that are complete for these classes.
The classic problem complete for the class $\QMA$ is the \textsc{LocalHamiltonian} problem \cite{Kitaev2002,Kempe2003,Kempe2006}.
We define a $k$-local observable to be a Hermitian operator $A$ that can be written as a sum over operators $A_i$ supported on $k$ qudits at most: $A = \sum_i^{\poly(n)} A_i$.
We assume that each term has bounded operator norm $\norm{A_i} \leq \poly(n)$.
The task in the \textsc{LocalHamiltonian} problem is to estimate the ground-state energy of a local Hamiltonian.
The decision version of the problem is as follows:
\begin{problem}{$k$-LocalHamiltonian[$a,b$]}
Input & A description of a $k$-local Hamiltonian $H = \sum_i h_i$ on $n$ qubits with $h_i \succeq 0$, two numbers $a$ and $b$ with $b>a$.\\
Output & YES if the ground-state energy $E_1 \leq a$, \\
& NO if $E_1 \geq b$, promised that one of them is the case.
\end{problem}
Henceforth we omit the phrase ``promised that one of them is the case'' because we will be exclusively considering promise problems unless otherwise specified.
Kitaev \cite{Kitaev2002} showed that \textsc{5-LocalHamiltonian[$a,b$]} with $b-a = \Omega(1/\poly)$ is $\QMA$-complete, which was improved to $k=3$ and then $k=2$ in Refs.~\cite{Kempe2003,Kempe2006}.
The parameter $\delta := b-a$, the promise gap, is a measure of the accuracy to which the solution is desired.
We define the problem in terms of $\delta$ only, as follows:
\begin{defn}
\textsc{$\delta$-$k$-LocalHamiltonian} $:= \cup_{b-a \geq \delta} $ \textsc{$k$-LocalHamiltonian[$a,b$]}.
\end{defn}

We now come to the gapped and precise versions of the problem, which turn out to be complete for their respective $-\QMA$ variants.
We also suppress the notation $k$ in the name of the problem, though there is formally a dependence on $k$.
In this work, our hardness results hold for $k\geq 3$ and it may be possible to improve our results to hold for $k=2$.
\begin{problem}{LocalHamiltonian[$a,b,g_1,g_2$]}
Input & Description of a $k$-local Hamiltonian $H = \sum_i h_i$ with $h_i \succeq 0$, numbers $a$, $b$, $g_1$, and $g_2$ with $b>a$. \\
Output & YES if the ground-state energy $E_1 \leq a$ and any state orthogonal to the ground state has energy $\geq E_1 + g_1$, \\
& NO if $E_1 \geq b$ and any state orthogonal to the ground state has energy $\geq E_1 + g_2$.
\end{problem}
In both the YES and NO cases above, we see that the Hamiltonian has a unique ground state and a spectral gap of at least $g_1$ in the YES case and $g_2$ in the NO case.
The above problem with promise gap $\delta = b-a$ and spectral gap $\Delta = \min[g_1, g_2]$ is defined to be:
\begin{defn}
\textsc{$(\delta,\Delta)$-LocalHamiltonian}  $:=$ $ \cup_{\substack{b-a \geq \delta \\ g_1,g_2 > \Delta}}$ \textsc{LocalHamiltonian[$a,b,g_1,g_2$]}.
\end{defn}
\noindent In the non-precise regime, the problem \textsc{$(1/\poly,1/\poly)$-LocalHamiltonian} was shown to be complete for $\PGQMA$ for $k\geq 2$ \cite{Aharonov2008}.

We now focus on the precise regime, i.e.\ $\delta = \Omega(1/\exp)$.
From the results of Ref.~\cite{Fefferman2018}, we know that \textsc{$(1/\exp,0)$-LocalHamiltonian} is $\pQMA$-complete for $k\geq 3$.
We show that:
\begin{thm}\label{thm_ppgqma_complete}
\textsc{$(1/\exp,1/\poly)$-LocalHamiltonian} is $\pPGQMA$-complete.
\end{thm}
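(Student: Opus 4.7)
The plan is to prove $\pPGQMA$-completeness in two directions: containment of $(1/\exp,1/\poly)$-\textsc{LocalHamiltonian} in $\pPGQMA$, and $\pPGQMA$-hardness via a spectral-gap-preserving clock construction. For containment, I would use a standard Kitaev-style verifier: the prover sends a candidate ground state $\ket\psi$, and the verifier picks a random term $h_i$ weighted by $\|h_i\|$, uses a Hadamard test to estimate $\bra\psi h_i \ket\psi$, and accepts with probability proportional to $1 - \bra\psi h_i \ket\psi/N$, where $N = \sum_i \|h_i\|=\poly(n)$. The resulting accept operator is $Q = I - H/N$, whose eigenvalues are in reversed correspondence with those of $H$. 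A YES instance (with $E_1 \leq a$) gives $\lambda_1(Q) \geq 1 - a/N$, a NO instance gives $\lambda_1(Q) \leq 1 - b/N$, for a completeness--soundness gap of $(b-a)/N = 1/\exp$; and in both cases the Hamiltonian's spectral gap translates to $\lambda_1(Q) - \lambda_2(Q) = (E_2 - E_1)/N \geq 1/\poly$. This is precisely the $\pPGQMA$ promise.

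For hardness, I would apply the small-penalty clock construction described in \cref{sec_techs} to an arbitrary $\pPGQMA$ verifier $V_x$ with $c - s \geq 1/\exp$ and accept-operator spectral gap $\lambda_1(Q)-\lambda_2(Q) \geq 1/\poly$. This yields a Hamiltonian
\begin{equation*}
H = H_{\text{in}} + H_{\text{prop}} + H_{\text{clock}} + \epsilon\, H_{\text{out}},
\end{equation*}
with $\epsilon = 1/\poly$ a small prefactor on the output penalty. Setting $H_0 := H_{\text{in}} + H_{\text{prop}} + H_{\text{clock}}$, the kernel of $H_0$ is the history subspace, spanned by history states $\ket{\eta_\psi}$ isomorphic to the witness register, and $H_0$ has an $\Omega(1)$ gap $\Delta_0$ above its kernel. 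A direct computation gives $P H_{\text{out}} P = (I - Q)/(T+1)$ on the history subspace (under the isomorphism to the witness space), so the first-order Schrieffer--Wolff effective Hamiltonian is $\epsilon (I - Q)/(T+1)$, whose spectrum inherits both a $1/\exp$ YES/NO separation of its smallest eigenvalue and a $1/\poly$ internal gap between its smallest two eigenvalues.

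The main obstacle is controlling the higher-order Schrieffer--Wolff corrections. These are Hermitian operators on the history subspace with norm at most $O(\epsilon^2/\Delta_0) = O(1/\poly^2)$, which is comfortably below the $1/\poly$ spectral gap (so the internal gap survives) but naively much larger than the $1/\exp$ YES/NO separation we must detect. The key refinement is to track not the raw operator norm of each correction but its variation across the history subspace: I would invoke the rigorous Schrieffer--Wolff bounds of Bravyi--DiVincenzo--Loss~\cite{Bravyi2011} (to be reviewed in \cref{sec_schwolff}) to show that the higher-order terms either act as scalar shifts on the history subspace --- preserving both gaps --- or contribute variations that can be absorbed into a redefinition of an effective accept operator whose top two eigenvalues still inherit the required $1/\exp$ and $1/\poly$ separations from $Q$. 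A final rescaling of $H$ by a factor of $T/\epsilon = \poly(n)$ yields a Hamiltonian with $\poly(n)$-bounded local norms, a $1/\exp$ promise gap, and a $1/\poly$ spectral gap, giving the required reduction and combining with the containment direction to prove $\pPGQMA$-completeness.
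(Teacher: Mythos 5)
Your containment direction is essentially fine: you build a verifier whose accept operator has eigenvalues in monotone correspondence with the energies of $H$, so that both the promise gap and the spectral gap are preserved. The paper does this with one-bit phase estimation (\cref{lem_gappedham_in_gappedqma}), which gives $Q \approx \tfrac12(\mathds{1}+\cos Ht)$ rather than $Q = \mathds{1}-H/N$, but the effect is the same.

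The hardness direction has a real gap, and you have in fact put your finger on exactly the obstruction but then tried to wave it away. In the small-penalty clock construction applied to a general $\GQMA$ verifier, $H_0$ has a \emph{degenerate} ground space (the history states, one per witness), so the spectral gap of $H = H_0 + \epsilon H_\mathrm{out}$ is created entirely by the perturbation and scales as $\Theta(\epsilon\, g/(T+1))$. Thus to get a $1/\poly$ spectral gap you are forced to take $\epsilon = \Omega(1/\poly)$, exactly as you propose. But then the Schrieffer--Wolff truncation error $O(\epsilon^2/\Delta_0)$ is itself $\Omega(1/\poly)$, completely swamping the $\Theta(\epsilon(c-s)/(T+1)) = 1/\exp$ promise gap. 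Your proposed repair --- that the higher-order Schrieffer--Wolff corrections act as scalars on the history subspace, or have variation $o(1/\exp)$ --- is asserted without argument, and there is no reason to expect it: the second-order term couples the history states to the excited states of $H_0$ through $H_\mathrm{out}$, and its restriction to the history subspace depends nontrivially on $Q$ and on the excited spectrum, with variation generically of order $\epsilon^2/\Delta_0$. Conversely, taking $\epsilon = 1/\exp$ controls the error but kills the spectral gap down to $1/\exp$, landing you in the $\pEGQMA$ regime rather than $\pPGQMA$.

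The paper sidesteps this tension entirely and never gives a direct reduction from $\pPGQMA$. Instead it (i) shows \textsc{$(1/\exp,1/\poly)$-LocalHamiltonian} $\in \pPGQMA$ (\cref{thm_gappedham_in_ppgqma}), (ii) shows $\pPGQMA \subseteq \PP$ by the power method (\cref{thm_ppgqma_inpp}), and (iii) shows the problem is $\PP$-hard by reducing from $\pBQP$ (\cref{thm_gappedham_pphard}). The crucial difference in step (iii) is that a $\pBQP$ circuit has no unpenalized witness register, so the unperturbed clock Hamiltonian $H_0$ already has a \emph{unique} ground state with a $\Omega(1/T^3)$ spectral gap; one can then afford $\epsilon = 1/\exp$, which preserves both the pre-existing $1/\poly$ spectral gap and the $1/\exp$ promise gap. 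Combining (i)--(iii) gives $\pPGQMA = \PP$ and, since the problem is $\PP$-hard and sits in $\pPGQMA \subseteq \PP$, $\pPGQMA$-completeness. You need to replace your hardness argument with this indirect route (or find a genuinely new technique that beats the Schrieffer--Wolff error bound, which you have not done).
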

\begin{thm}\label{thm_pegqma_complete}
\textsc{$(1/\exp,1/\exp)$-LocalHamiltonian} is $\pEGQMA$-complete.
\end{thm}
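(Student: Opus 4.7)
The theorem has two directions. For containment, $\textsc{$(1/\exp,1/\exp)$-LocalHamiltonian} \in \pEGQMA$, I would use the standard single-term verification protocol: the prover sends a candidate ground state $\ket{\psi}$, the verifier samples a random local term $h_i$ and performs a Hadamard test whose accept probability is an affine function of $\bra{\psi}h_i\ket{\psi}$. The resulting accept operator takes the form $Q = \alpha I - \beta H$ for explicit $\alpha,\beta = \Theta(1/\poly)$, so the eigenvalues of $Q$ and $H$ are in affine bijection. Consequently, the $1/\exp$ promise gap of $H$ becomes a $1/\exp$ separation of $\lambda_1(Q)$ from the soundness threshold, and the $1/\exp$ Hamiltonian spectral gap becomes a $1/\exp$ gap between $\lambda_1(Q)$ and $\lambda_2(Q)$, establishing membership in $\pEGQMA$.

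The hardness direction is the main content. The plan is to apply a small-penalty Feynman--Kitaev clock construction and analyze the resulting spectrum via the Schrieffer--Wolff transformation (reviewed in \cref{sec_schwolff}). Given a $\pEGQMA$ verifier $V_x = U_T \cdots U_1$ whose accept operator $Q$ satisfies $\lambda_1(Q) \geq c$ (YES) or $\lambda_1(Q) \leq s$ (NO) with $c - s \geq 1/\exp$ and $\lambda_1(Q) - \lambda_2(Q) \geq 1/\exp$, construct
\[
H \;=\; H_{\mathrm{in}} + H_{\mathrm{prop}} + H_{\mathrm{clock}} + \eta\, H_{\mathrm{out}},
\]
where $H_0 := H_{\mathrm{in}} + H_{\mathrm{prop}} + H_{\mathrm{clock}}$ is the usual penalty-free clock Hamiltonian (whose ground space is the $2^w$-dimensional subspace of history states encoding arbitrary witnesses, separated from the rest of the spectrum by a $\gamma = \Omega(1/\poly)$ promotion gap), and $\eta$ is an exponentially small parameter to be chosen. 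A direct calculation shows that the restriction of $H_{\mathrm{out}}$ to the history subspace is the operator $\frac{1}{T+1}(I - Q)$ acting on the encoded witness register.

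The crux is the Schrieffer--Wolff analysis of the effective Hamiltonian on the ground space of $H_0$:
\[
H_{\mathrm{eff}} \;=\; \frac{\eta}{T+1}(I - Q) + R,
\]
where $\|R\| = O(\eta^2/\gamma)$ comes from bounding the second-order resolvent contribution ($H_{\mathrm{out}}$ maps a history state into a state whose non-ground-space component has norm $O(1/\sqrt{T+1})$, and $H_0^{-1}$ is bounded by $1/\gamma$ on that complement). Choosing $\eta$ small enough (e.g., $\eta = \Theta(\gamma/(T+1)\cdot 2^{-p(n)})$ for a suitable polynomial $p$) ensures that the first-order splitting $\frac{\eta}{T+1}(\lambda_1(Q)-\lambda_2(Q)) = \Omega(\eta/\exp)$ strictly dominates $\|R\|$. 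The ground-state energy of $H$ is then $\frac{\eta}{T+1}(1 - \lambda_1(Q)) \pm O(\eta^2/\gamma)$, separating YES and NO cases by $\Omega(1/\exp)$, and the spectral gap is $\frac{\eta}{T+1}(\lambda_1(Q) - \lambda_2(Q)) \pm O(\eta^2/\gamma) = \Omega(1/\exp)$, producing a valid \textsc{$(1/\exp,1/\exp)$-LocalHamiltonian} instance whose decision matches the $\pEGQMA$ protocol.

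The main obstacle is the quantitative Schrieffer--Wolff bookkeeping: since both the signal (the first-order gap $\Omega(\eta/\exp)$) and the natural error scale ($\eta^2/\gamma$) are exponentially small, $\eta$ must be chosen delicately so that the first-order splitting strictly exceeds the accumulated perturbative error, yet $\eta$ remains expressible with polynomially many bits and the full Schrieffer--Wolff series converges geometrically in $\eta/\gamma$. Controlling the off-diagonal matrix elements of $H_{\mathrm{out}}$ between the history subspace and its complement, and verifying convergence beyond second order, is the key technical step; this is exactly the place where the small-penalty construction pays off, since the effective Hamiltonian then faithfully tracks both the top eigenvalue and the spectral gap of $Q$ rather than merging them into a single energy scale.
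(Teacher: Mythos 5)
Your hardness direction is essentially identical to the paper's proof of \cref{lem_ham_preciseegqma}: the same small-penalty clock Hamiltonian, the same Schrieffer--Wolff decomposition of the effective Hamiltonian on the history subspace as $\frac{\eta}{T+1}(I-Q) + R$ with $\|R\| = O(\eta^2/\gamma)$, and the same choice of $\eta$ (the paper's $\epsilon$) scaled to be dominated by both the promise gap and the spectral gap of $Q$. Your closing remark about controlling higher-order SW terms is addressed in the paper by invoking absolute convergence of the series for $\|H_1\| \le \Delta/16$, so the first-order truncation error $O(\eta^2/\gamma)$ is the full story.

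The containment direction is a genuinely different route. The paper (\cref{lem_gappedham_in_gappedqma}) uses Fefferman--Lin one-qubit phase estimation of controlled-$e^{-iHt}$, giving an accept probability $\frac{1+\cos(E_i t)}{2}$. Because $\cos$ has vanishing slope near $E t \approx 0$, this converts a Hamiltonian spectral gap $g$ into an accept-operator gap $\Omega(g^2/\poly)$ --- a quadratic degradation. Your affine protocol $Q = \alpha I - \beta H$ would instead transfer gaps linearly as $\Omega(g/\poly)$, which is quantitatively tighter. The trade-off is scope: the paper needs the same containment lemma for \textsc{$(1/\exp,1/\exp)$-SparseHamiltonian} (\cref{lem_sparseHam_in_pEGQMA}), where there is no sum-of-local-terms decomposition to sample from and phase estimation via sparse-Hamiltonian simulation is the only option. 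Since your target statement concerns local Hamiltonians only, this is not a gap for \emph{this} theorem, but it does explain the paper's choice.

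One point you should not wave away: the claim that the accept operator is \emph{exactly} $\alpha I - \beta H$ assumes the verifier can implement the diagonalizing unitaries (or the POVM dilation) of each $2^k\times2^k$ term $h_i$ exactly, but these may have irrational entries. In the precise regime the protocol error cannot be absorbed into a $1/\poly$ slack; you need to compile the measurement to operator-norm error $2^{-q(n)}$ (Solovay--Kitaev, $\polylog(1/\epsilon)$ gates) with $q$ chosen larger than the polynomial controlling the spectral gap and promise gap, and then argue via Weyl's inequality that the eigenvalue gaps of the approximate $Q$ survive. The paper does exactly this error-accounting for its own phase-estimation circuit; your sketch needs the same sentence.
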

\noindent By virtue of these theorems, we can talk about the complexity of the classes $\pPGQMA$ and $\pEGQMA$ interchangeably with their complete problems.
The proofs of these theorems are given in \cref{sec_gapreds,sec_ingqma}.
The hardness results rely on the small-penalty clock construction, where the size of the penalty term is either $\Theta(1/\poly)$ or $\Theta(1/\exp)$.
The upper bounds are shown in \cref{lem_gappedham_in_precisepgqma,lem_gappedham_in_preciseegqma} and rely on a modification of the standard phase-estimation protocol used to show \textsc{$k$-LocalHamiltonian} is in \QMA.
Specifically, we consider the modified protocol of Ref.~\cite{Fefferman2018} used for \textsc{Precise-$k$-LocalHamiltonian} and observe that the spectral gaps in the energies translate to separations in the accept probabilities.

Finally, we turn to complete problems for $\QCMA$ and its derivatives.
The first problem, \textsc{GS-Description-LocalHamiltonian}, concerns finding the ground-state energy of a $k$-local Hamiltonian when there is a polynomial-size circuit to prepare a state close to the ground state (which constitutes a classical description of the ground state).

\begin{problem}{GS-Description-LocalHamiltonian{$[a,b,g_1,g_2]$}}
 Input & Description of a $k$-local Hamiltonian $H = \sum_i h_i$, numbers $a$, $b \geq a + \delta$, polynomials $T(n), m(n)$, together with the promise that there exists a circuit $V$ of size $T$ such that $V \ket{0^m} = \ket{\psi}$ satisfies $\bra{\psi}H\ket{\psi} \leq E_1 + {\delta^3}/{f(n)^2}$ for some polynomial $f(n) \geq \norm{H}$.\\
 Output & YES if the ground-state energy of $H$ satisfies $E_1 \leq a$ and the spectral gap of $H$ is at least $g_1$, \\
 & NO if $E_1 \geq b$ and the spectral gap of $H$ is at least $g_2$.
\end{problem}
\begin{defn}
\textsc{$(\delta,\Delta)$-GS-Description-LocalHamiltonian} $:= \cup_{\substack{b-a \geq \delta, g_1,g_2 \geq \Delta}}$ \textsc{GS-Description-LocalHamiltonian[$a,b,g_1,g_2$]}
\end{defn}

As in the case of \textsc{$(\delta,\Delta)$-LocalHamiltonian}, if we take $\Delta=0$, we get a version without any promise on the spectral gap.
This is a close relative of the following problem proved to be \QCMA-complete for $\delta=\Omega(1/\poly)$ \cite{Wocjan2003}.
\begin{problem}{$\delta$-LowComplexity-LowEnergyStates}
Input & Description of a $k$-local Hamiltonian $H = \sum_i h_i$, numbers $a$, $b$ and polynomials $T(n)$, $m(n)$, with $b \geq a + \delta$. \\
Output & YES if there exists a circuit of size $ \leq T(n)$ that acts on $\ket{0^m}$ to prepare a state $\ket{\psi}$ with energy $\bra{\psi}H\ket{\psi} \leq a$, \\
& NO if any state $\ket{\psi}$ obtained by applying a circuit of size $T(n)$ on $\ket{0^m}$ has energy $\bra{\psi}H\ket{\psi} \geq b$.
\end{problem}
This latter problem has a weaker promise than \textsc{$(\delta,0)$-GS-Description-LocalHamiltonian}. 
This is because a NO instance of \textsc{$\delta$-LowComplexity-LowEnergyStates} is automatically a NO instance of \textsc{$(\delta,0)$-GS-Description-LocalHamiltonian}, since any state necessarily has energy $\geq b$.
Meanwhile, a NO instance of \textsc{$(\delta,0)$-GS-Description-LocalHamiltonian} need not be a NO instance of \textsc{$\delta$-LowComplexity-LowEnergyStates}, since for the latter there is no guarantee of a circuit to prepare a state with energy close to the ground-state energy.

Despite having a stronger promise on \textsc{$(\delta,0)$-GS-Description-LocalHamiltonian} (which only makes the problem less complex), our small-penalty clock construction allows us to prove the same hardness result for both $\delta=1/\poly$ and $\delta=1/\exp$:

\begin{thm}\label{thm_qcma_complete}
 \textsc{$(1/\poly,0)$-GS-Description-LocalHamiltonian} is $\QCMA$-complete.
\end{thm}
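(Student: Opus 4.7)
The plan has the usual two directions. For membership in $\QCMA$, the classical witness encodes the description of the promised circuit $V$ (together with the ancilla count $m$). The verifier prepares $\ket{\psi} = V\ket{0^m}$ and estimates $\bra{\psi}H\ket{\psi}$ to precision $\delta/4$ by $\QMA$-style random sampling of the local terms $h_i$ weighted by their norms. On a YES instance the promise gives $\bra{\psi}H\ket{\psi} \leq a + \delta^3/f(n)^2 < a + \delta/2$, whereas on a NO instance every state has energy at least $E_1 \geq b = a + \delta$; polynomially many Hoeffding samples distinguish the two cases since $\|H\|$ and $1/\delta$ are both $\poly(n)$.

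For $\QCMA$-hardness, the plan is to reduce from a generic $\QCMA[c,s]$ instance via a Kitaev clock construction equipped with the copy operation from the small-penalty clock construction of the Techniques section. First I would amplify the protocol to perfect completeness $c=1$. The copy operation duplicates the classical witness into an auxiliary ``copy register'' $R$ during the first propagation step, and no subsequent gate touches $R$; augmenting $H_\mathrm{input}$ with a penalty that forces $R = \ket{0}$ at clock $t=0$, every local term of $H = H_\mathrm{clock} + H_\mathrm{input} + H_\mathrm{prop} + \eta H_\mathrm{output}$ preserves the classical label of $R$ post-copy. Hence, on the valid-initial-state subspace, $H$ is block-diagonal in the witness label $w$. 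Within each $w$-block it restricts to a $(T{+}1)$-dimensional tridiagonal ``Kitaev-chain'' Hamiltonian on the clock register with a single boundary impurity of strength $\eta(1-p_w)$; its ground state is $\ket{\Omega_w} = \sum_{t=0}^T c_t^{(w)}\ket{\eta_w^t}$ with $\ket{\eta_w^t} := \ket{t}_\mathrm{clock} U_t\cdots U_1 \ket{w}\ket{0^m}$ and $\{c_t^{(w)}\}$ the normalized ground-state eigenvector of the block. The global ground state of $H$ is $\ket{\Omega_{w^*}}$ for the witness $w^*$ that maximizes $p_w$.

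The polynomial-size preparation circuit $V$ asserted by the promise then follows. The coefficients $\{c_t^{(w^*)}\}$, being the ground eigenvector of a $\poly(n)$-sized tridiagonal matrix, can be computed classically (non-uniformly) to any desired precision; $V$ prepares $\ket{w^*}$ classically on the witness register, builds the weighted clock superposition $\sum_t c_t^{(w^*)}\ket{t}_\mathrm{clock}$ via standard state-preparation, and applies the chain of controlled-$U_t$ propagation gates on the computational register. For a YES instance, $p_{w^*}=1$ removes the boundary impurity and $\ket{\Omega_{w^*}}=\ket{\eta_{w^*}}$ is an exact zero-energy eigenstate, so the promise holds with equality; for a NO instance the same construction yields $\bra{\Omega_{w^*}}H\ket{\Omega_{w^*}} = E_1$ up to the $2^{-\poly(n)}$ truncation error of the classical diagonalization, which is comfortably below $\delta^3/f(n)^2$ for $f=\Theta(\|H\|)=\poly(n)$. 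The main obstacle is the term-by-term verification that every piece of $H$ genuinely decomposes block-diagonally along $R$---the sole purpose of the copy register---together with the parameter bookkeeping needed to ensure that the standard Kitaev promise gap $\delta=\Omega(1/T^3)$ survives the amplification and matches the $1/\poly$ gap advertised by the problem.
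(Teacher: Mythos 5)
Your membership direction is essentially correct, by a slightly different route than the paper's: you estimate $\bra{\psi}H\ket{\psi}$ by sampling local terms rather than by one-bit phase estimation; both work at $1/\poly$ precision.

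The hardness direction has a genuine gap, centered on the claim that the ``Kitaev-chain'' ground state $\ket{\Omega_{w}} = \sum_t c^{(w)}_t \ket{\eta_w^t}$ has energy within $\delta^3/f(n)^2$ of the true ground-state energy $E_1$. The subspace $\mathcal{L}_w = \mathrm{span}\{\ket{\eta_w^t}\}_{t=0}^T$ is \emph{not} invariant under $H$: $H_\mathrm{input}$, $H_\mathrm{clock}$, and $H_\mathrm{prop}$ do preserve $\mathcal{L}_w$, but the output penalty maps $\ket{\eta_w^T}$ to $\eta\,\ketbra{0}_o \otimes \ketbra{T}_\mathrm{clock}\,\ket{\eta_w^T}$, which is a proper component of $\ket{\eta_w^T}$ rather than a multiple and escapes $\mathcal{L}_w$ whenever $0<p_w<1$. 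So $\ket{\Omega_w}$ is only the ground state of the compression $P H P$, and $\bra{\Omega_w}H\ket{\Omega_w}=\lambda_\mathrm{min}(PHP)$ is merely an \emph{upper} bound on $E_1$; nothing in your argument controls the discrepancy $\lambda_\mathrm{min}(PHP)-E_1$ at the stringent $\delta^3/f(n)^2$ scale the GS-Description promise requires. A secondary issue: the $R$-block structure is not exact either, since the reverse copy step $-U_c^\dag\otimes\ketbra{0}{1}_\mathrm{clock}$ maps a $t=1$ state with $R=\ket{w}$ and witness register $\ket{\chi}\neq\ket{w}$ to a $t=0$ state with $R=\ket{w\oplus\chi}\neq\ket{0}$, crossing $R$-sectors; it holds only perturbatively within the low-energy space of the $R$-penalty. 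The paper sidesteps both issues with the small-penalty clock construction: choosing $\epsilon=\Theta((c-s)/T^4)$ renders $\epsilon H_\mathrm{output}$ a genuine perturbation of $H_0$, whose spectral gap above the history-state ground space is $\Omega(1/T^3)$, so the Schrieffer-Wolff transformation controls the leakage out of the history-state subspace; then the plain \emph{uniform} history state $\ket{y_{k_h}}$ for the best computational-basis witness --- no reweighted clock amplitudes, no tridiagonal diagonalization, and no amplification to perfect completeness --- has energy within $O(\epsilon^2 T^3)\leq\delta^3/f(n)^2$ of $E_1$. If you pass to the same small-penalty regime, the coefficients $\{c^{(w)}_t\}$ become uniform to leading order and your construction reduces to the paper's; as written, the step ``$\bra{\Omega_{w^*}}H\ket{\Omega_{w^*}}=E_1$ up to $2^{-\poly(n)}$'' is unsupported, and the cited $2^{-\poly(n)}$ error (attributed to diagonalization round-off) misidentifies the dominant error source.
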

\begin{thm}\label{thm_pqcma_complete}
 \textsc{$(1/\exp,0)$-GS-Description-LocalHamiltonian} is $\pQCMA$-complete.
\end{thm}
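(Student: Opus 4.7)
The plan has two directions: showing \textsc{$(1/\exp,0)$-GS-Description-LocalHamiltonian} lies in $\pQCMA$ and showing it is $\pQCMA$-hard. For the containment, the classical witness is simply the description of the promised circuit $V$ of size $T(n)$. The verifier applies $V$ to $\ket{0^m}$, producing a state $\ket{\psi}$, and then runs the standard Kitaev random-term protocol for estimating $\bra{\psi}H\ket{\psi}$: sample a term $h_i$ with probability proportional to $\norm{h_i}$, measure the corresponding observable, and accept according to a suitable affine function of the outcome. The overall accept probability takes the form $p = \alpha - \beta\bra{\psi}H\ket{\psi}$ for efficiently computable $\alpha$ and $\beta = \Omega(1/\poly)$. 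In YES instances the input promise guarantees $\bra{\psi}H\ket{\psi} \leq a + \delta^3/f(n)^2 \leq a + \delta/2$, while in NO instances every state satisfies $\bra{\psi}H\ket{\psi} \geq E_1 \geq b$, yielding a completeness--soundness gap of at least $\beta\delta/2 = \Omega(1/\exp)$, which suffices for $\pQCMA$.

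For the hardness direction, I would reduce any $\pQCMA$ verification circuit $U_x$ with $c - s \geq 2^{-\poly}$ to an instance of the problem via the small-penalty clock construction described in \cref{sec_techs}, producing $H = H_\mathrm{input} + H_\mathrm{prop} + H_\mathrm{clock} + \epsilon\,H_\mathrm{out}$ with tunable penalty weight $\epsilon = \Theta(1/\exp)$. Using the computational-basis formulation of $\pQCMA$ via the copy unitary $U_c$ from \cref{sec_defs}, the eigenstates of the accept operator may be taken to be classical strings $\ket{w}$, and the unperturbed ($\epsilon=0$) ground space is spanned by ``classical-witness history states'' $\ket{\eta_w} \propto \sum_t \ket{t}\bigl(U_t \cdots U_1\bigr)\ket{w}\ket{0^m}$. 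Crucially, each $\ket{\eta_w}$ is prepared by an explicit polynomial-size circuit whose classical description consists of the string $w$ together with the verifier's circuit. A Schrieffer--Wolff / degenerate perturbation analysis (\cref{sec_schwolff}) in the small parameter $\epsilon$ then yields $E_1 \approx \epsilon(1 - p^\ast) + O(\epsilon^2)$, where $p^\ast$ is the maximum classical-witness accept probability, and shows that the true ground state is $O(\epsilon)$-close to $\ket{\eta_{w^\ast}}$ for the optimal witness $w^\ast$. Choosing thresholds $a, b$ separated by $\Omega(\epsilon/\exp)$ then distinguishes YES from NO instances, and the history-state preparation circuit for $w^\ast$ supplies the classical low-complexity circuit $V$ required by the \textsc{GS-Description-LocalHamiltonian} promise.

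The hard part will be the perturbative bookkeeping in the hardness direction: the penalty weight $\epsilon$, the Hamiltonian normalization $f(n) \geq \norm{H}$, and the promise gap $\delta$ must be chosen jointly so that both the completeness--soundness gap exceeds $1/\exp$ and $\bra{\eta_{w^\ast}}H\ket{\eta_{w^\ast}} - E_1$ stays below $\delta^3/f(n)^2$. The second-order Schrieffer--Wolff correction scales like $\epsilon^2$ divided by the unperturbed spectral gap of $H_\mathrm{input} + H_\mathrm{prop} + H_\mathrm{clock}$, which is $\Omega(1/\poly)$ by standard analyses of clock Hamiltonians; tuning $\epsilon$ sufficiently small relative to this scale controls the correction and closes the argument. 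Since the same perturbative framework is the workhorse of the closely related \cref{thm_ppgqma_complete,thm_pegqma_complete,thm_qcma_complete}, a single unified analysis should simultaneously settle \cref{thm_pqcma_complete}.
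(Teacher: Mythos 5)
Your proposal is correct and follows the paper's broad two-part architecture (efficient verification via the supplied circuit description, plus hardness via the small-penalty clock construction), but the containment direction uses a genuinely different protocol than the paper does. You estimate $\bra{\psi}H\ket{\psi}$ with the Kitaev random-local-term test, getting accept probability affine in the energy, so that the $\Omega(1/\exp)$ energy separation translates directly into an $\Omega(1/\exp)$ completeness--soundness gap. The paper instead runs one-bit phase estimation with controlled-$e^{-iHt}$ (\cref{lem_in_qcma_general,lem_phaseestacceptprob,lem_lpbound}), so the accept probability is a \emph{convex} combination $\sum_j p_j \cos^2(E_j t/2)$ rather than an affine function of the mean energy, which requires the convexity argument in \cref{lem_lpbound}. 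For the present theorem (no spectral-gap promise), your random-term approach is cleaner and avoids that lemma entirely. The paper's choice pays off later: phase estimation is the same subroutine used for sparse Hamiltonians (\cref{lem_sparseHam_in_pEGQMA}) and, crucially, in the gapped variants (\cref{thm_pgqcma_complete,thm_pegqcma_complete,thm_ppgqcma_complete}), it lets one show the verifier's accept operator inherits the Hamiltonian's spectral gap through the Taylor-expansion argument of \cref{lem_gappedham_in_gappedqma}; the random-term test has no analogous gap-preservation property, so the paper's unified treatment requires phase estimation. Your hardness argument matches the paper's (\cref{lem_pqcma_hard}) essentially line for line, with the $\Theta(1/\exp)$ penalty weight, the classical-basis history states $\ket{\eta_w}$ serving as the low-energy circuit description, and the Schrieffer--Wolff bound controlling the $O(\epsilon^2 T^3)$ second-order correction. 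One small overclaim: you assert the true ground state is $O(\epsilon)$-close to $\ket{\eta_{w^\ast}}$, but the promise in \textsc{GS-Description-LocalHamiltonian} only requires $\ket{\eta_{w^\ast}}$ to be exponentially close in \emph{energy}, which is what the Schrieffer--Wolff estimate actually delivers directly and is all the reduction needs.
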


For the latter theorem in the precise regime, we use the small-penalty clock construction with an exponentially small energy penalty.
Lastly, when we add the promise of spectral gaps, we have the following results:

\begin{thm}\label{thm_pgqcma_complete} 
 \textsc{$(1/\poly,1/\poly)$-GS-Description-LocalHamiltonian} is $\PGQCMA$-complete.
\end{thm}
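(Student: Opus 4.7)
The plan is to prove \cref{thm_pgqcma_complete} in two halves: containment in $\PGQCMA$ and $\PGQCMA$-hardness.

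For containment, I would design a $\PGQCMA$ verifier that expects from the prover a classical description of the circuit $V$ promised in the problem instance, together with extra bits encoding a pairwise-independent isolating hash on the witness string. The verifier applies $V$ to $\ket{0^m}$ to obtain $\ket{\psi}$, runs phase estimation on $e^{iH}$ with precision about $\delta/3$, and accepts iff the measured energy lies below $(a+b)/2$ and the hash-isolation condition holds. In YES instances, the promised circuit produces a state with energy at most $E_1 + \delta^3/f^2$, and combined with the $\Omega(1/\poly)$ Hamiltonian spectral gap this forces overlap $1 - 1/\poly$ with the ground eigenstate, so the phase-estimation branch accepts with probability close to $1$. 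The isolation step, in the spirit of the Valiant--Vazirani argument that Aharonov et al.~\cite{Aharonov2008} used to identify $\UQMA$ with $\PGQMA$, ensures that the optimal classical witness is unique up to $1/\poly$ in accept probability, producing the required $g_1 \geq 1/\poly$ accept-operator gap. In NO instances every classical witness produces a state of energy $\geq b$, so $\lambda_1(Q) \leq s$, and the same hash supplies $g_2$.

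For $\PGQCMA$-hardness, I would use the small-penalty clock construction outlined in \cref{sec_techs}. Given any $\PGQCMA$ protocol with verifier $U_x$, classical witness of length $\ell$, completeness $c$, soundness $s$, and accept-operator gaps $g_1,g_2 \geq 1/\poly$, I construct $H = H_\mathrm{in} + H_\mathrm{prop} + H_\mathrm{clock} + \alpha H_\mathrm{out}$ with the output-penalty strength set to $\alpha = \Theta(1/\poly)$. Choosing $\alpha$ small lets me apply the Schrieffer--Wolff transformation to obtain an effective Hamiltonian on the history-state subspace whose spectrum tracks $-\alpha Q$ up to an overall constant shift and higher-order corrections, so that the $1/\poly$ gap of $Q$ becomes a $1/\poly$ spectral gap of $H$, while the $c - s$ promise-gap becomes an inverse-polynomial separation of ground-state energies. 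Because the witness is classical, the optimal history state is prepared by an explicit polynomial-size circuit whose description is computable in polynomial time from the optimal classical witness $w^*$ and the gate list of $U_x$; this circuit furnishes the classical ground-state description required by the \textsc{GS-Description-LocalHamiltonian} promise.

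The main obstacle will be a careful quantitative analysis of the Schrieffer--Wolff expansion to simultaneously preserve both the $\Omega(1/\poly)$ accept-operator gap and the $\Omega(1/\poly)$ completeness--soundness separation once they are rescaled by $\alpha$. This requires a delicate choice balancing $\alpha$ against the amplified promise gap of the underlying $\PGQCMA$ verifier: taking $\alpha$ too small washes out the ground-energy separation, while taking it too large breaks the perturbative regime needed for the Schrieffer--Wolff analysis. A secondary obstacle in the containment direction is ensuring that the isolation step does not degrade the completeness--soundness gap below $1/\poly$, which can be handled by first amplifying the underlying $\PGQCMA$ protocol and then applying a pairwise-independent hash of appropriate output length.
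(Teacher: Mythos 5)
Your hardness direction matches the paper's. You use the small-penalty clock construction with output penalty $\alpha=\Theta\!\bigl((c-s)/\poly\bigr)$, a first-order Schrieffer--Wolff analysis to transfer the accept operator's spectrum (scaled by $\alpha/(T+1)$) into the history-state subspace, and the observation that the history state built from the optimal classical witness is prepared by an explicit polynomial-size circuit and sits within $\delta^3/f(n)^2$ of the ground energy, furnishing the promised description. You also correctly identify the balance requirement on $\alpha$: the paper packages exactly this trade-off in \cref{lem_pqcma_hard}, where the output needs $\alpha^2 T^3$ small compared to both the promise gap and $\frac{\alpha}{T+1}\min[g_1,g_2]$, and where the clean conclusion requires $c-s = o(\min[g_1,g_2])$.

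The containment direction, however, does not work as written, and it is here that your plan diverges from the paper. A prover-supplied hash cannot enforce uniqueness. The accept-operator spectral gap $g_1'$ is a property of the operator $Q$ over the \emph{entire} witness register; an honest prover selecting a good $(h,y^*)$ pair does nothing to prevent other strings $(h',y')$ --- each encoding a different hash and circuit --- from also attaining the top eigenvalue, and in the NO case an adversarial prover would simply supply a degenerate hash. The Valiant--Vazirani-style isolation you cite from Ref.~\cite{Aharonov2008} is a \emph{randomized reduction} (the hash is drawn externally and the argument is repeated to boost the $1/\poly$ isolation probability), which establishes $\PGQCMA =_R \UQCMA$; it is not a tool for placing a fixed promise problem inside a single $\GQCMA[c,s,g_1,g_2]$ class. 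The paper's containment (\cref{sec_pqcmadetails}) uses no isolation. It derives the accept-operator gap directly from the Hamiltonian's spectral gap $\Delta = \min[g_1,g_2]$: the promised state $\ket{\psi}$, having energy within $\delta^3/f(n)^2$ of $E_1$, has overlap $1-O\!\bigl(\delta^3/(f^2\Delta)\bigr)$ with $\ket{E_1}$; any state orthogonal to it has overlap at most $O\!\bigl(\delta^3/(f^2\Delta)\bigr)$ with $\ket{E_1}$, and one-qubit phase estimation then separates the two accept probabilities by $\Omega(\Delta^2/f(n))$. A modification to the trivial ``$\Tr(H)/2^n\leq b$'' branch is also required --- accepting there only on the all-zeros witness --- since otherwise that branch accepts every string unconditionally and destroys the accept-operator gap. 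You should discard the hash and recover the accept-operator spectral gap from the Hamiltonian spectral gap.
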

\begin{thm}\label{thm_pegqcma_complete}
\textsc{$(1/\exp,1/\exp)$-GS-Description-LocalHamiltonian} is $\pEGQCMA$-complete.
\end{thm}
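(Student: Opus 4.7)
For the containment $\textsc{$(1/\exp,1/\exp)$-GS-Description-LocalHamiltonian}\in\pEGQCMA$, I would adapt the phase-estimation-based $\pQCMA$ verifier underlying \cref{thm_pqcma_complete}: on classical witness $w$, interpret $w$ as a description of a circuit $V_w$ of size $T(n)$, prepare $\ket{\psi_w}=V_w\ket{0^m}$, and run high-precision phase estimation of $H$ on $\ket{\psi_w}$ using enough ancilla bits to resolve the eigenvalues of $H$ well below the promised spectral gap $\Delta=1/\exp$. The verifier then accepts with a probability equal to an affine decreasing function of the estimated energy minus a small deterministic lexicographic correction $\mu\cdot(\mathrm{int}(w)/2^{|w|})$ whose role I explain below. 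Since the accept operator $Q$ is diagonal in the classical-witness basis (after the standard copy operation $U_c$), its eigenvalues are exactly the resulting accept probabilities $a(w)$. The \textsc{GS-Description} promise gives a $w^\star$ for which $\ket{\psi_{w^\star}}$ has energy within $1/\exp$ of the unique ground-state energy $E_1$, so $a(w^\star)$ is within $1/\exp$ of its maximum; the $1/\exp$ separation between YES and NO instances is then directly inherited by the completeness--soundness gap. For the spectral gap of $Q$, the only subtle case is near-degeneracies among $w\neq w'$ whose circuits prepare (nearly) the same ground state, which is precisely what the lexicographic perturbation is for: choosing $\mu$ small enough not to spoil completeness or soundness but with $\mu/2^{|w|}$ still $\Omega(1/\exp)$ yields $\lambda_1(Q)-\lambda_2(Q)\geq\mu/2^{|w|}=1/\exp$, which is feasible because $|w|$ is polynomial and the promise-gap exponent in $\mu$ can be taken as large as we like.

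For the hardness direction, I would reduce from an arbitrary $\pEGQCMA$ language via the small-penalty clock construction applied to the verifier circuit, with output-penalty strength $\varepsilon=1/2^e$ for a large-enough polynomial exponent $e$, together with the standard copy operation on the witness register to enforce classical witnesses, exactly as in the $\pQCMA$ reduction behind \cref{thm_pqcma_complete}. The Schrieffer--Wolff transformation (\cref{sec_schwolff}) then gives that the low-lying spectrum of the resulting $k$-local $H$ equals, up to errors of order $\varepsilon^2$, the spectrum of $\varepsilon(I-Q)$ on the classical-witness register. Taking $e$ larger than the polynomial exponents that control the $\Omega(1/\exp)$ completeness--soundness gap and the $\Omega(1/\exp)$ spectral gap of $Q$ makes the first-order effect dominate the Schrieffer--Wolff error, so $H$ inherits a $1/\exp$ gap between its YES and NO ground-state energies and a $1/\exp$ spectral gap between its two lowest eigenvalues. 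The \textsc{GS-Description} promise on the output Hamiltonian is met by the history-state circuit attached to the optimal classical witness $w^\star$: concretely, the circuit that prepares the clock superposition via Hadamards and then applies the verifier's gates $U_1,\ldots,U_T$ controlled on the clock to $\ket{0^m}\ket{w^\star}$ produces a state of energy within $O(\varepsilon^2)$ of $E_1(H)$. This circuit is of polynomial size, its description is entirely determined by the classical string $w^\star$, and the preparation works identically in YES and NO instances, so the promised $\delta^3/f(n)^2=1/\exp$ energy bound holds.

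The principal obstacle I anticipate is the joint control, in the hardness reduction, of the three inverse-exponential quantities involved: the output penalty $\varepsilon$, the induced promise gap $\sim\varepsilon(c-s)$, and the induced spectral gap $\sim\varepsilon(\lambda_1(Q)-\lambda_2(Q))$, all of which must dominate the $O(\varepsilon^2)$ Schrieffer--Wolff error while $\varepsilon$ itself remains $1/\exp$. This forces a careful choice of polynomial exponents but mirrors the bookkeeping already carried out in the proof of \cref{thm_pegqma_complete}, with the only $\QCMA$-specific change being the copy operation, so I expect the argument to go through essentially unchanged. A secondary, more routine subtlety is the containment-side tie-breaker: the scale $\mu$ must lie in a window bounded above by the $\Omega(1/\exp)$ completeness--soundness gap and below by $2^{|w|}$ times the target spectral gap, a window that is nonempty because both endpoints are $1/\exp$ with exponents we control.
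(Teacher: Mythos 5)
Your hardness direction essentially reproduces the paper's: the small-penalty clock construction with exponentially small output penalty $\varepsilon$, the Schrieffer--Wolff analysis that makes the low-lying spectrum track $\tfrac{\varepsilon}{T+1}(\mathds{1}-Q)$ up to $O(\varepsilon^2 T^3)$ error, and the observation that the history state of the optimal classical witness $w^\star$ (preparable by a uniform $\poly$-size circuit controlled on the clock) furnishes the required GS-description in both YES and NO instances. This is exactly the content of \cref{lem_pqcma_hard} and \cref{corll_gsdescription_gqcma}, and your bookkeeping of the three inverse-exponential scales mirrors the paper's.

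Your containment direction, however, has two genuine problems. First, phase estimation ``using enough ancilla bits to resolve the eigenvalues of $H$ well below $\Delta = 1/\exp$'' requires the controlled evolution $e^{-iH 2^j t}$ with $2^j$ exponentially large, and hence an exponentially long verifier circuit; this is not a $\pEGQCMA$ verifier. The paper instead uses one-bit phase estimation (\cref{lem_gappedham_in_gappedqma} and the $\GQCMA$ containment lemma in \cref{sec_pqcmadetails}): a single ancilla, $t = \Theta(1/\norm{H})$, and accept probability $\cos^2(E_jt/2)$, so the circuit is $\poly$-size and the $\Omega(1/\exp)$ separations in energy translate into $\Omega(1/\exp)$ separations in accept probability via a Taylor expansion. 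Your ``affine decreasing function of the estimated energy'' should be replaced by this cosine, and the spectral-gap analysis then follows from the convexity/Taylor argument in \cref{lem_phaseestacceptprob,lem_lpbound}.

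Second, the claim $\lambda_1(Q)-\lambda_2(Q) \geq \mu/2^{|w|}$ does not follow from the lexicographic correction alone: two witnesses whose unperturbed accept probabilities differ by exactly an integer multiple of $\mu/2^{|w|}$ end up exactly tied after the perturbation. The paper's own version of this trick (\cref{lem_pegqcma}) avoids this by first fixing a gate set under which every unperturbed accept probability is an integer multiple of $1/2^{l(n)}$, and then taking the perturbation step size $\mu/2^{|w|} \ll 1/2^{l(n)}$, so no collision is possible; you would need to make that discretization explicit. It is worth noting that the paper's actual containment proof does not use a lexicographic tie-breaker at all: it derives the spectral gap of $Q$ directly from the promised Hamiltonian spectral gap $\Delta$, by showing that the optimal witness's prepared state overlaps $\ket{E_1}$ with weight $\geq 1 - y$ while any state orthogonal to it has weight $\leq y$, yielding $\min[g_1',g_2'] = \Omega(\Delta^2/f(n))$. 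That said, the lexicographic route you sketch, once patched with the one-bit circuit and the discretized gate set, has the virtue of sidestepping any worry about distinct witness strings $w \neq w'$ that describe circuits preparing nearly identical states, so it is a defensible alternative even though it departs from the paper.
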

\begin{thm}\label{thm_ppgqcma_complete}
 \textsc{$(1/\exp,1/\poly)$-GS-Description-LocalHamiltonian} is $\pPGQCMA$-complete.
\end{thm}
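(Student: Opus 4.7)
The plan is to establish the theorem in two parts: containment of \textsc{$(1/\exp,1/\poly)$-GS-Description-LocalHamiltonian} in $\pPGQCMA$, and $\pPGQCMA$-hardness of the problem. Both directions parallel the template already in use for the sibling theorems \cref{thm_pqcma_complete,thm_pgqcma_complete,thm_pegqcma_complete}: containment is deduced from prior complexity-class equalities, while hardness comes from the small-penalty clock construction applied to a $\pPGQCMA$ verifier.

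For containment, observe that \textsc{$(1/\exp,1/\poly)$-GS-Description-LocalHamiltonian} is merely a restriction of \textsc{$(1/\exp,1/\poly)$-LocalHamiltonian} by the extra promise that an efficient circuit preparing a near-ground state exists. Since the latter problem is $\pPGQMA$-complete by \cref{thm_ppgqma_complete}, our problem lies in $\pPGQMA$. Combining this with $\pPGQMA = \PP$ (\cref{thm_ppgqma_pp}) and the containment $\PP \subseteq \pPGQCMA$ built into the proof of \cref{thm_ppgqcma_pp}, the problem lies in $\pPGQCMA$.

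For $\pPGQCMA$-hardness, the plan is to reduce an arbitrary $\pPGQCMA$ problem to an instance of \textsc{$(1/\exp,1/\poly)$-GS-Description-LocalHamiltonian} via the small-penalty clock construction. Given a $\pPGQCMA$ verifier $U_x$ with classical witness register and accept operator $Q$ satisfying $\lambda_1(Q) - \lambda_2(Q) \geq 1/\poly$ and an $\Omega(1/\exp)$ completeness--soundness gap, build a local Hamiltonian $H = H_{\mathrm{in}} + H_{\mathrm{prop}} + H_{\mathrm{clock}} + \epsilon H_{\mathrm{out}}$ with penalty strength $\epsilon = \Theta(1/\poly)$. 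The unperturbed ground space consists of one Feynman--Kitaev history state $\ket{\eta_w}$ per classical witness $w$, all degenerate. A Schrieffer--Wolff analysis (see \cref{sec_schwolff}) shows that, to leading order in $\epsilon$, the effective Hamiltonian on this subspace is proportional to $\epsilon(I-Q)/(T+1)$, where $T$ is the clock length, with higher-order corrections of size $O(\epsilon^2)$. Consequently, $H$ has a spectral gap of $\Omega(1/\poly)$, and the ground-state energies of YES and NO instances differ by $\Omega(1/\exp)$. Crucially, because the optimal witness $w^*$ is a classical string, the ground state $\ket{\eta_{w^*}}$ admits a polynomial-size preparation circuit: hard-code $w^*$, prepare the uniform clock superposition, and execute the standard history-state construction. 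This provides the required classical description of a low-energy state, so the reduction produces a valid instance of the target problem.

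The main obstacle is the quantitative control of the Schrieffer--Wolff expansion when the penalty $\epsilon = 1/\poly$ is itself small: one must ensure that the second-order corrections do not close the $\Omega(1/\poly)$ Hamiltonian gap or wash out the $\Omega(1/\exp)$ YES-versus-NO energy separation. This reduces to bounding the resolvent of the unperturbed Hamiltonian on its excited subspace and tracking how corrections depend on $\epsilon$, $T$, and the gaps of $Q$ -- the same family of estimates carried out for \cref{thm_pgqcma_complete}, but sharp enough to survive the inverse-exponential promise gap. Once this bookkeeping is complete, the remainder of the argument proceeds as outlined.
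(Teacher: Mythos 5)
Both directions of your proposal have genuine gaps.

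For containment, the argument is circular. You deduce membership in $\pPGQCMA$ from the inclusion $\PP \subseteq \pPGQCMA$, citing it as ``built into the proof of \cref{thm_ppgqcma_pp}.'' But that theorem establishes $\PP \subseteq \pPGQCMA$ precisely by combining \cref{lem_in_ppgqcma} --- the containment you are supposed to prove --- with the $\PP$-hardness of the same problem; there is no independent route to $\PP \subseteq \pPGQCMA$ offered elsewhere in the paper. The actual containment proof (\cref{sec_pqcmadetails}) must be done from scratch: the prover sends the promised circuit description, the verifier prepares the corresponding state and runs one-bit phase estimation, and the $\Omega(1/\poly)$ Hamiltonian spectral gap is transferred to an $\Omega(1/\poly)$ gap in accept probabilities via convexity and Taylor estimates.

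For hardness, the direct reduction from a $\pPGQCMA$ verifier with penalty $\epsilon = \Theta(1/\poly)$ fails, and this is not a matter of bookkeeping. The second-order Schrieffer--Wolff correction is $O(\epsilon^2/\Delta_0)$ with $\Delta_0 = \Omega(1/T^3)$ the gap of the unperturbed clock Hamiltonian; at $\epsilon = \Theta(1/\poly)$ this correction is $\Theta(1/\poly)$, which swamps the first-order YES/NO energy separation of size $\epsilon(c-s)/(T+1) = \Theta(1/\exp)$. This is exactly why \cref{lem_pqcma_hard} can only yield a spectral gap $\Delta = O((c-s)\min[g_1,g_2]/\poly)$, i.e.\ $1/\exp$ for a $\pPGQCMA$ instance, not the required $1/\poly$. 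The paper consequently does not reduce directly from $\pPGQCMA$: it reduces from $\pBQP = \PP$ (\cref{thm_gappedham_pphard_apx}), where the absence of a witness register makes the unperturbed ground state unique with an $\Omega(1/T^3)$ gap already present, so that $\epsilon = \Theta(1/\exp)$ suffices and preserves the polynomial spectral gap. $\pPGQCMA$-hardness then follows because $\pPGQCMA \subseteq \pPGQMA = \PP$ (\cref{corll_ppgqcma_cook}).
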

The upper bounds in \cref{thm_qcma_complete,thm_pqcma_complete,thm_pgqcma_complete,thm_pegqcma_complete,thm_ppgqcma_complete} follow from a precise version of phase estimation, together with the promise that there is a classical description of a circuit to prepare a low-energy state.
The lower bounds either follow directly through a small-penalty clock construction or through a reduction from a class that contains the relevant class.
\section{Problems characterized by $\PP$} \label{sec_pp}
In this section, we discuss the complexity of the classes $\pPGQMA$ and $\pPGQCMA$, both of which turn out to equal $\PP$.

\begin{thm}\label{thm_ppgqma_pp}
$\pPGQMA = \PP$.
\end{thm}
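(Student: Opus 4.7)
The plan is to prove the two inclusions $\pPGQMA \subseteq \PP$ and $\PP \subseteq \pPGQMA$ separately. The first is exactly \cref{thm_ppgqma_inpp}, proved via the power-method / imaginary-time-evolution technique sketched in \cref{sec_techs}. For the second, I would invoke the $\pPGQMA$-completeness of \textsc{$(1/\exp,1/\poly)$-LocalHamiltonian} (\cref{thm_ppgqma_complete}) together with its $\PP$-hardness (\cref{thm_gappedham_pphard}), which immediately yields $\PP \subseteq \pPGQMA$.

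For the upper-bound direction, the strategy is to work directly with the accept operator $Q$ of the $\pPGQMA$ verifier. Because the spectral gap $\lambda_1(Q) - \lambda_2(Q)$ is at least $1/\poly$, raising $Q$ to a polynomial power $d$ concentrates its spectrum on the top eigenspace, so that $\Tr(Q^d)$ becomes a faithful probe of $\lambda_1(Q)$ and discriminates $\lambda_1(Q) \geq c$ from $\lambda_1(Q) \leq s$ at the $1/\exp$ promise-gap scale. Expanding $\Tr(Q^d)$ by Feynman sum-over-paths writes it as a sum of $2^{\poly(n)}$ efficiently computable scalar terms, which is precisely the type of exponentially large weighted sum that a $\PP$ machine can compare against a threshold; some preliminary amplification or rescaling may be needed so that $\lambda_1(Q)^d$ dominates $2^w \lambda_2(Q)^d$ at polynomial $d$.

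For the lower-bound direction, the key ingredient is the $\PP$-hardness reduction of \cref{thm_gappedham_pphard}. Given a $\pBQP = \PP$ instance, one applies Kitaev's clock construction but scales the output-penalty term $H_\mathrm{output}$ by a polynomially small weight $\eta = \Theta(1/\poly)$. Without this penalty the history-state ground space of $H_\mathrm{input}+H_\mathrm{prop}+H_\mathrm{clock}$ is degenerate; a Schrieffer-Wolff / degenerate-perturbation analysis with $\eta H_\mathrm{output}$ as the perturbation shows that the ground-state energy shifts at first order by a quantity linear in the circuit's accept probability, producing the required $1/\exp$ promise gap, while the next eigenvalue sits at distance $\Theta(\eta) = 1/\poly$ above it, delivering the required spectral gap.

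The main obstacle is calibrating the penalty strength $\eta$ in the hardness direction: it must be small enough that the Schrieffer-Wolff expansion converges and its first-order term faithfully tracks the accept probability to inverse-exponential precision, yet large enough that the induced spectral gap is genuinely $\Omega(1/\poly)$. The upper-bound direction is more mechanical once the power-method idea is fixed, but it also demands care in bounding the contributions of $\lambda_2(Q),\ldots,\lambda_{2^w}(Q)$ to $\Tr(Q^d)$ and in confirming that a $\PP$ machine can meaningfully compare two quantities that may themselves each be exponentially small.
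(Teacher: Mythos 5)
Your overall decomposition — prove $\pPGQMA \subseteq \PP$ via the power method, and $\PP \subseteq \pPGQMA$ by exhibiting a $\PP$-hard problem inside $\pPGQMA$ via a small-penalty clock construction — matches the paper. The upper-bound half is essentially the paper's Lemma~\ref{thm_ppgqma_inpp}: take $q = \poly(n)$, expand $\Tr(Q^q)$ by sum-over-paths, and let a $\PP$ machine compare the exponentially small but exponentially separated sums; your caveats about bounding the contributions of $\lambda_2,\ldots,\lambda_{2^w}$ and about $\PP$'s ability to resolve $2^{-\poly}$-scale differences are exactly the details the paper handles.

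The lower-bound half, however, contains a concrete error that would break the argument. You propose to scale $H_\mathrm{output}$ by $\eta = \Theta(1/\poly)$ and claim that the history-state ground space of $H_\mathrm{input}+H_\mathrm{prop}+H_\mathrm{clock}$ for a $\pBQP$ circuit is degenerate, so that the penalty lifts the degeneracy and produces the $\Omega(1/\poly)$ spectral gap. Both claims are wrong. Because a $\pBQP$ verifier has no witness register, the input is fixed to $\ket{0^m}$ and the history state is \emph{unique}: the unperturbed $H_0$ is exactly the nondegenerate Aharonov~et~al.\ Hamiltonian with spectral gap $\Omega(1/T^3)$ already built in; the spectral gap is not created by the penalty at all. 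Moreover, with $\eta = \Theta(1/\poly)$ the Schrieffer--Wolff truncation error is $O(\eta^2/\Delta) = O(\eta^2 T^3) = \Theta(1/\poly)$, which swamps the first-order energy splitting $\Theta(\eta (c-s)/T) = \Theta(1/\exp)$ between the YES and NO cases — you lose the promise gap entirely, precisely the tension you flag in your last paragraph but resolve in the wrong direction. The correct resolution is the opposite of what you write: take $\eta = \Theta((c-s)/\poly) = \Theta(1/\exp)$ so that the truncation error $O(\eta^2 T^3)$ is subdominant to the first-order splitting, giving a $\Omega(1/\exp)$ promise gap; the $\Omega(1/\poly)$ spectral gap then comes for free because $H_0$ already has it and an exponentially small perturbation cannot close it. The degenerate-perturbation picture you describe is the one used for $\pEGQMA$-hardness (where the witness register is genuinely unpenalized), and there the spectral gap that survives is only $\Omega(1/\exp)$, which is why that reduction cannot land in $\pPGQMA$.
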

\begin{thm}\label{thm_ppgqcma_pp}
$\pPGQCMA =\PP$.
\end{thm}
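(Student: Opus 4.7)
The plan is to prove $\pPGQCMA = \PP$ by two containments, leveraging the already-established $\pPGQMA = \PP$ (Theorem \ref{thm_ppgqma_pp}) together with the characterization $\PP = \pBQP$.

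For the upper bound $\pPGQCMA \subseteq \PP$, I would invoke the trivial inclusion $\pPGQCMA \subseteq \pPGQMA$. Any $\pPGQCMA$ verification can be repackaged as a $\pPGQMA$ verification by using the circuit $U_x U_c$ (the verifier composed with the copy operation) as the $\pPGQMA$ verifier. The accept operator $Q = Q(U_x U_c)$ is identical in both cases, so the completeness, soundness, and gap conditions $\lambda_1(Q) \geq c$, $\lambda_1(Q) \leq s$, $\lambda_1(Q) - \lambda_2(Q) \geq g_1, g_2$ transfer verbatim. Combined with Theorem \ref{thm_ppgqma_pp}, this yields $\pPGQCMA \subseteq \PP$.

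For the lower bound $\PP \subseteq \pPGQCMA$, I would use $\PP = \pBQP$: given $L \in \PP$, there exists a $\pBQP$ circuit $C$ accepting with probability $\geq c$ on yes-instances and $\leq s$ on no-instances, with $c - s \geq 1/\exp$. I construct a $\pPGQCMA$ verifier taking a single-bit classical witness $b$ that flips a fair coin and proceeds as follows: on heads, accept iff $b = 0$; on tails, run $C$ and accept iff $C$ accepts. After applying the copy operation, the accept operator $Q$ is diagonal in the witness basis with entries $q_0 := \tfrac{1}{2} + \tfrac{p}{2}$ (for $b=0$) and $q_1 := \tfrac{p}{2}$ (for $b=1$), where $p$ is the acceptance probability of $C$. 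In every instance we have $\lambda_1(Q) = q_0$ and $\lambda_2(Q) = q_1$, so the spectral gap is exactly $\tfrac{1}{2}$, comfortably $\geq 1/\poly$. On yes-instances $\lambda_1(Q) \geq \tfrac{1}{2} + \tfrac{c}{2}$; on no-instances $\lambda_1(Q) \leq \tfrac{1}{2} + \tfrac{s}{2}$; the completeness--soundness gap is $(c-s)/2 \geq 1/\exp$. Thus $L \in \pPGQCMA$.

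There is no serious obstacle: the content of the theorem is inherited from $\pPGQMA = \PP$, and the only task is to design a trivial $\pPGQCMA$ protocol whose accept operator automatically has an $\Omega(1)$ spectral gap in both yes and no cases. The one mild subtlety I would flag is that a naive reduction that merely uses $b$ to perturb $C$'s output would yield a spectral gap proportional to $p$, which could be as small as $1/\exp$ on no-instances; the additive $\tfrac{1}{2}$ coin-flip baseline is what decouples the gap from $p$ and secures the $1/\poly$ lower bound required by the definition of $\pPGQCMA$.
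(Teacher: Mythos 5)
Your proposal is correct, and it splits the same way the paper does for the upper bound: $\pPGQCMA \subseteq \pPGQMA = \PP$ follows immediately from the definitions and \cref{thm_ppgqma_pp}, since a $\GQCMA$ verifier is a special case of a $\GQMA$ verifier with the same accept operator.

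Your lower bound, however, takes a genuinely different and shorter route than the paper. The paper proves $\PP \subseteq \pPGQCMA$ indirectly: it shows $\PP$-hardness of \textsc{$(1/\exp,1/\poly)$-GS-Description-LocalHamiltonian} via the small-penalty clock construction applied to a $\pBQP$ circuit (\cref{thm_pqcma_gap_pphard}), shows membership of that problem in $\pPGQCMA$ via one-bit phase estimation (\cref{lem_in_ppgqcma}), and chains the two. You instead build a direct $\pPGQCMA$ protocol for any $\pBQP$ language: a one-bit classical witness $b$, a fair-coin baseline that on heads accepts iff $b=0$ and on tails runs the $\pBQP$ circuit. After the copy operation the accept operator is diagonal with entries $\tfrac{1}{2}+\tfrac{p}{2}$ and $\tfrac{p}{2}$, so the spectral gap is exactly $\tfrac{1}{2}$ in both YES and NO cases while the completeness--soundness gap $(c-s)/2$ remains $\Omega(1/\exp)$. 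Your observation that a naive one-bit perturbation of $C$'s output would have a $p$-dependent gap (potentially $1/\exp$ in the NO case) and that the additive $\tfrac{1}{2}$ baseline decouples the gap from $p$ is exactly the right subtlety. The trade-off between the two routes is that the paper's heavier machinery simultaneously establishes the $\pPGQCMA$-completeness of the natural Hamiltonian problem (\cref{thm_ppgqcma_complete}), which is a headline result of the work; your argument is the cleaner, more elementary choice if the sole goal is the class equality $\pPGQCMA = \PP$.
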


We describe here the overall strategy for proving these results.
First, we adapt the one-bit phase estimation circuit in Ref.~\cite{Fefferman2018} to show that it is possible to compute ground-state energies of sparse Hamiltonians with a spectral gap in the corresponding $\GQMA$ class.
In particular, we have
\begin{lemma} \label{thm_gappedham_in_ppgqma}
\textsc{$(1/\exp,1/\poly)$-LocalHamiltonian} $\in \pPGQMA$.
\end{lemma}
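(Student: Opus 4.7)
The plan is to give a $\pPGQMA$ verifier protocol for \textsc{$(1/\exp,1/\poly)$-LocalHamiltonian}. Following the strategy sketched in the excerpt---adapting the modified precise phase estimation of Ref.~\cite{Fefferman2018}---my goal is to engineer the verification circuit so that its accept operator $Q$ on the witness register commutes with $H$ and is a strictly monotonically decreasing function of $H$; then the $1/\poly$ spectral gap of $H$ will automatically produce a $1/\poly$ spectral gap of $Q$, while a $1/\exp$ promise gap of $H$ will produce a $1/\exp$ completeness--soundness separation.

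A clean realization is to take $Q = I - H/Z$, where $Z := \sum_j \|h_j\|_{\mathrm{op}}$ is polynomially bounded since $H$ is $k$-local with $\poly(n)$ terms of polynomially bounded operator norm. The verifier samples an index $j$ with probability $\|h_j\|_{\mathrm{op}}/Z$, then performs the two-outcome POVM $\{h_j/\|h_j\|_{\mathrm{op}},\,I - h_j/\|h_j\|_{\mathrm{op}}\}$ on the constantly many qudits on which $h_j$ acts, accepting on the second outcome. Since each $h_j$ lives in a constant-dimensional space, this POVM is efficiently implementable via a conditional rotation on an ancilla after diagonalizing $h_j$. The overall accept probability on any witness $\ket{\psi}$ is $\langle\psi|(I - H/Z)|\psi\rangle$, so the accept operator is indeed $Q = I - H/Z$, sharing an eigenbasis with $H$. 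Equivalently, one can directly run the precise phase estimation of Ref.~\cite{Fefferman2018} and compose with a classically computed linear acceptance function.

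Once $Q$ has this form the analysis is routine: the ordered eigenvalues are $\lambda_i(Q) = 1 - E_i/Z$. In a YES instance, $\lambda_1(Q) \geq 1 - a/Z =: c$ and $\lambda_1(Q) - \lambda_2(Q) = (E_2 - E_1)/Z \geq g_1/Z \geq 1/\poly$, using $g_1 \geq 1/\poly$ and $Z \leq \poly(n)$. The NO case is symmetric, yielding $\lambda_1(Q) \leq 1 - b/Z =: s$ with the same gap bound, and $c - s = (b-a)/Z \geq 1/\exp$ since $b - a \geq 1/\exp$ and $Z$ is polynomial. The only point requiring care is verifying that the two-outcome POVM can be carried out by a polynomial-size quantum circuit, which is immediate from the locality of each $h_j$; no genuine obstacle should arise beyond standard block-encoding bookkeeping.
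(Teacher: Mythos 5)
Your construction is correct, and it takes a genuinely different route from the paper's. The paper adapts the one-bit phase-estimation circuit of Ref.~\cite{Fefferman2018} (see \cref{lem_gappedham_in_gappedqma}), in which the accept probability on an eigenstate $\ket{E_i}$ is $\frac{1+\cos(E_i t)}{2}$; because this is nonlinear in $E_i$, a Taylor-expansion argument is needed to show that a spectral gap $\Delta$ of $H$ translates to an $\Omega(\Delta^2/\poly)$ gap in the accept operator, and likewise $c-s = \Omega((b-a)^2/\poly)$. Your verifier is essentially Kitaev's original $\QMA$-membership protocol for \textsc{LocalHamiltonian}---sample a local term $h_j$ with probability proportional to $\norm{h_j}$ and accept with probability $1-\bra{\psi}h_j\ket{\psi}/\norm{h_j}$---which makes the accept operator exactly $Q = I - H/Z$, commuting with $H$, so the translation of gaps is linear: $g' = g/Z$ and $c-s = (b-a)/Z$. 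Either a linear or a quadratic loss preserves the $1/\poly$ and $1/\exp$ scalings, so both suffice for this lemma, but your route avoids the trigonometric bookkeeping and makes the ``spectral gap of $H$ $\Leftrightarrow$ spectral gap of $Q$'' correspondence transparent. What your method does not yield is the paper's more general \cref{lem_gappedham_in_gappedqma} for \textsc{SparseHamiltonian} (needed for \cref{lem_sparseHam_in_pEGQMA}): a generic sparse Hamiltonian has no decomposition into polynomially many positive local terms, so local-term sampling is unavailable and one must measure energy through $e^{-iHt}$. To fully polish your argument, make the gate-precision accounting explicit: the conditional rotations dilating $\{h_j/\norm{h_j},\, I - h_j/\norm{h_j}\}$ have in general irrational angles and can only be synthesized to error $\eta$ with $\polylog(1/\eta)$ elementary gates, so the realized accept operator $\tilde{Q}$ satisfies $\norm{\tilde{Q} - Q} = O(\poly(n)\,\eta)$; choosing $\eta = 2^{-\poly}$ small compared to $(b-a)/Z$ and $\min[g_1,g_2]/Z$ makes this harmless. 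Finally, the parenthetical remark about composing phase estimation with a ``classically computed linear acceptance function'' does not actually undo the $\cos$ nonlinearity and should be dropped; your primary sampling construction is what carries the proof.
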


Next, we use the ``power method'' \cite{Mises1929} to give a $\PP$ algorithm for any problem in $\pPGQMA$.
\begin{lemma}[One half of \cref{thm_ppgqma_pp}] \label{thm_ppgqma_inpp}
 $\pPGQMA \subseteq \PP$.
\end{lemma}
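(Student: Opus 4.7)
My plan is to run the classical ``power method'' directly on the accept operator $Q = \bra{0^m}U_x^\dag \Pi_\mathrm{out} U_x\ket{0^m}$ of the verifier circuit, using the inverse-polynomial spectral gap to make polynomially many iterations suffice. Let $w = \poly(n)$ be the witness length, $N = 2^w$, and $\lambda_1 \geq \lambda_2 \geq \cdots \geq 0$ the eigenvalues of $Q$. By the $\pPGQMA$ promise, YES instances have $\lambda_1 \geq c$ with $\lambda_1 - \lambda_2 \geq g \geq 1/\poly$, NO instances have $\lambda_1 \leq s$, and $c - s \geq 2^{-q(n)}$ for some polynomial $q$. I will set $\tau := (c+s)/2$ and choose the integer $d = 1 + \lceil (w+q+2)\ln 2/g\rceil = \poly(n)$, so that $N(1-g)^{d-1} \leq (c-s)/4$, and study the sign of
\[
\mathcal{T} \;:=\; \Tr(Q^d) - \tau\, \Tr(Q^{d-1}) \;=\; \sum_{i} \lambda_i^{d-1}(\lambda_i - \tau).
\]

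The first step is to show that $\mathcal{T}$ has opposite signs in the two cases. In a YES instance the $i=1$ term alone contributes at least $\lambda_1^{d-1}(c-s)/2$, while the magnitude of the remainder is bounded by $N(\lambda_1-g)^{d-1} \leq N(1-g)^{d-1}\lambda_1^{d-1} \leq \lambda_1^{d-1}(c-s)/4$, using $\lambda_i \in [0,\lambda_1-g]$ for $i \geq 2$, $\lambda_1 \leq 1$, and the choice of $d$; hence $\mathcal{T} \geq \lambda_1^{d-1}(c-s)/4 > 0$. In a NO instance every $\lambda_i \leq s < \tau$, so each summand $\lambda_i^{d-1}(\lambda_i - \tau)$ is non-positive and $\mathcal{T} \leq 0$. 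The decision rule ``accept iff $\mathcal{T} > 0$'' is therefore correct.

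The remaining step is to place this sign decision in $\PP$. Fixing a universal real gate set such as Hadamard and Toffoli, each single-gate entry lies in $\{0,\pm 1, \pm 1/\sqrt{2}\}$, so every amplitude $\bra{z}U_x\ket{y, 0^m}$ can be expanded as a Feynman sum over $2^{\poly(n)}$ computational paths, each contributing an integer multiple of $2^{-L/2}$ for $L = \poly(n)$ (the Hadamard count) computable in polynomial time. Substituting this expansion into $\Tr(Q^d)$ and $\Tr(Q^{d-1})$, summing over the $d$ intermediate witness labels appearing in the cyclic traces, and clearing a common polynomial-bit denominator, $\mathcal{T}$ becomes the difference of two $\sP$ counts, i.e.\ a $\mathsf{GapP}$ function of the instance. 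Since $\PP$ is precisely the class of languages decidable by the sign of a $\mathsf{GapP}$ function, this yields $\pPGQMA \subseteq \PP$.

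The main technical obstacle is this final bookkeeping: one has to combine the $d$ factors of $Q$ (each of which contracts an ancilla register $\ket{0^m}$ and an output-postselection projector $\Pi_\mathrm{out}$) together with the rational threshold $\tau$ into a single nondeterministic polynomial-time machine whose accept-minus-reject count equals $\mathcal{T}$ up to a known positive scaling. This is routine once the gate set and the encoding of $\tau$ are fixed, but it is where the complexity-theoretic work happens; the substantive input is the power-method estimate above, which works precisely because the inverse-polynomial spectral-gap promise suppresses all subdominant eigenvalues in polynomially many iterations.
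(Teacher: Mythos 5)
Your proposal is correct and follows essentially the same strategy as the paper: use the inverse-polynomial spectral gap to make the power method converge in polynomially many iterations, then express the resulting trace as a Feynman sum over exponentially many efficiently computable paths, which a $\PP$ machine can threshold. The only cosmetic variation is that you test the sign of $\Tr(Q^d)-\tau\,\Tr(Q^{d-1})$ rather than comparing $\Tr(Q^q)$ against an explicit threshold, which packages the decision directly as the sign of a $\mathsf{GapP}$ function but does not change the substance of the argument.
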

\begin{proof}
Suppose we have a $\GQMA[c,s,g_1,g_2]$ instance.
Then we should give a $\PP$ algorithm to precisely compute the maximum eigenvalue $\lambda_1$ of the accept operator $Q$ associated with the instance, under the promise that the spectral gap of $Q$ is bounded below by an inverse polynomial.
In particular, the spectral gap of the accept operator, given by $\lambda_1 - \lambda_2$, is at least $\min[g_1,g_2] =: \Delta$.
Consider the power method to compute the maximum eigenvalue and eigenvector of a positive semidefinite operator $Q$.
This method relies on the observation that upon taking positive powers of the operator $Q$ and estimating its trace, the quantity is dominated by the maximum eigenvalue of $Q$.
In the following, we suppress the dependence of $\lambda_i$ on $Q$:
\begin{align}
\Tr(Q^q) &= \sum_i \lambda_i^q
\\ &= \lambda_1^q\left( 1 + \left(\frac{\lambda_2}{\lambda_1}\right)^q + \ldots \right)
\\ &\leq \lambda_1^q + \lambda_1^q (2^w -1)\left(1 - \frac{\Delta}{\lambda_1}\right)^q,
\end{align}
where $w$ is the size of the witness register.
On the other hand, we have $\Tr(Q^q) \geq \lambda_1^q$.
Therefore, in the YES case, we have
\begin{align}
\Tr(Q^q) \geq c^q,
\end{align}
while in the NO case,
\begin{align}
\Tr(Q^q) &\leq s^q + s^q(2^w -1)\left(1 - \frac{\Delta}{\lambda_1}\right)^q. \label{eq_upperboundeigenvalue}
\end{align}
By the promise of the spectral gap, we must have $\lambda_1 \geq \Delta$, since otherwise the second largest eigenvalue of $Q$ would be $\lambda_2 < 0$.
The difference in the two cases is
\begin{align}
&c^q - s^q - s^q(2^w -1)\left(1 - \frac{\Delta}{\lambda_1}\right)^q
\\& = c^q - s^q - s^q(2^w -1)\exp\left[q\log \left(1 - \frac{\Delta}{\lambda_1}\right)\right]
\\& \geq c^q - s^q - s^q 2^w \exp\left[- \frac{q\Delta}{\lambda_1}\right]
\\ & = s^q\left(\left( 1 + \frac{c-s}{s} \right)^q -1\right)- s^q2^w\exp\left[- \frac{q\Delta}{\lambda_1}\right]
\\& \geq s^q \left(\frac{q(c-s)}{s} - 2^w \exp\left[- \frac{q\Delta}{\lambda_1}\right]\right)
\\& \geq s^q \left({c-s} - 2^w \exp\left[- \frac{q\Delta}{\lambda_1}\right]\right),
\end{align}
since $q \geq 1$ and $s\leq 1$.
If we pick $q = \left\lceil\frac{\lambda_1}{\Delta} \log\left( \frac{2^{w+1}}{c-s} \right)\right\rceil = O(\poly)$, we can ensure that the term $2^w \exp\left[- \frac{q\Delta}{\lambda_1}\right]$ is at most $(c-s)/2$.
Thus the difference in $\Tr(Q^q)$ between the YES and NO cases is at least
\begin{align}
 s^q \frac{c-s}{2} = \Omega(2^{-\poly}).
\end{align}
In the last line above we assumed that $s^q \geq \frac{c^q}{2^{w+1}} = \Omega(2^{-\poly})$ for some polynomial.
In case this assumption is not true, we would nevertheless still have a difference of at least $c^q - 2^w s^q > c^q/2 \geq \Omega(2^{-\poly})$ in between the YES and NO cases when measuring $\Tr(Q^q)$.

This observation suggests that a $\PP$ algorithm can decide between the YES and NO cases by computing $\Tr(Q^q)$ for some large enough polynomial $q$.
This is possible because a $\PP$ algorithm can compute a sum of $2^\poly$ terms, where every term is efficiently computable in polynomial time.
We prove this in \cref{sec_ppdetails} (\cref{lem_cooling_othermatrices}).
\end{proof}

The above result implies that, since \textsc{$(1/\exp,1/\poly)$-LocalHamiltonian} is in $\pPGQMA$, a $\PP$ algorithm can precisely compute ground-state energies of local Hamiltonians with a $\Omega(1/\poly)$ spectral gap.
A similar technique can also be used to show a slightly more general result:
\begin{lemma} \label{lem_gslo_ppp}
Given a local Hamiltonian $H$ and a local observable $A$, along with a promise that $\norm{A} = O(\poly)$ and the spectral gap of $H$ is lower-bounded by $\Omega(1/\poly)$, a $\P^\PP$ algorithm can decide if the ground-state local observable $\expval{A}{E_1}$ is either $\leq a$ or $\geq b$, for $b-a = \Omega(2^{-\poly})$, where $\ket{E_1}$ is the ground state of $H$.
\end{lemma}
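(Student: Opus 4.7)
The plan is to lift the power-method argument of \cref{thm_ppgqma_inpp}, which estimated the top eigenvalue of an operator, to estimating an observable's expectation in that top eigenvector. Pick a polynomially bounded $c > \norm{H}$ and set $P := cI - H$; then $P \succeq 0$ has eigenvalues $\mu_i = c - E_i$ satisfying $\mu_1 > \mu_2 \geq \cdots$ with $\mu_1 - \mu_2 = E_2 - E_1 \geq 1/\poly$, and the spectral-gap promise forces $\ket{E_1}$ to be the unique top eigenvector of $P$. The identity
\begin{align}
\frac{\Tr(A P^{2q})}{\Tr(P^{2q})} = \bra{E_1} A \ket{E_1} + \frac{\sum_{i \geq 2} \mu_i^{2q}\left(\bra{E_i}A\ket{E_i} - \bra{E_1}A\ket{E_1}\right)}{\sum_i \mu_i^{2q}}
\end{align}
exhibits an error term of magnitude at most $2\norm{A}\cdot 2^n (\mu_2/\mu_1)^{2q}$, since the denominator is at least $\mu_1^{2q}$. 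Because $\log(\mu_1/\mu_2) \geq \Delta/\mu_1 \geq 1/\poly$ and $\norm{A} = O(\poly)$, taking $q = \poly(n)$ large enough drives this error below $(b-a)/4 = 2^{-\poly}$.

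It then suffices to decide the sign of $\Tr((A - \tfrac{a+b}{2} I) P^{2q})$, since $\Tr(P^{2q}) > 0$. Feynman's sum-over-paths expansion yields
\begin{align}
\Tr\left(\left(A - \tfrac{a+b}{2} I\right) P^{2q}\right) = \sum_{x, y, z_1, \ldots, z_{2q-1}} \bra{x}\left(A - \tfrac{a+b}{2}I\right)\ket{y} \bra{y}P\ket{z_1}\cdots\bra{z_{2q-1}}P\ket{x},
\end{align}
a signed sum of $2^{\poly(n)}$ rational terms each computable in polynomial time, since $A$ and $P$ are local operators with efficiently computable matrix elements. A single $\PP$ oracle query decides the sign of such a sum by the same reasoning as in \cref{thm_ppgqma_inpp}, formalized as \cref{lem_cooling_othermatrices}; the outer polynomial-time machine then reports the YES/NO answer, well within a $\P^\PP$ budget and leaving room to binary-search $\bra{E_1} A \ket{E_1}$ to finer precision if desired.

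The main obstacle is the joint error and bit-complexity bookkeeping: one must verify that the matrix elements $\bra{x}P\ket{y}$ admit $\poly(n)$-bit rational representations, so that the sum of $2^{\poly(n)}$ path contributions still fits in the regime where $\PP$ can decide the sign exactly (after clearing common denominators to integer arithmetic). A second, conceptual point is that the factors $\norm{A} = O(\poly)$ and $2^n$ in the error bound are absorbed by taking $q$ a slightly larger polynomial, which is possible \emph{precisely because} the spectral gap is $\Omega(1/\poly)$; if the gap were only $\Omega(1/\exp)$, then $q$ would need to be exponential and this approach would break down, consistent with the jump in complexity from $\pPGQMA = \PP$ to $\pEGQMA = \PSPACE$. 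Finally, if the ground space were degenerate the quantity $\expval{A}{E_1}$ would be ill-defined, but the $\Omega(1/\poly)$ spectral-gap promise rules this out automatically.
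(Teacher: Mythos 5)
Your proposal is correct and takes a route genuinely different from the paper's. The paper uses the ``cooling algorithm'': Taylor-expand $e^{-2\beta H}$ to polynomial order, use \cref{lem_thermal} to argue $\rho_\beta$ is exponentially close to $\ketbra{E_1}$ for $\beta = \Theta(n/\Delta)$, and then binary-search the partition function $\mathcal{N} = \Tr e^{-2\beta H}$ (via \cref{lem_pppalg} with $A=\mathds{1}$) before comparing $\Tr[e^{-2\beta H}A]$ against $\mathcal{N}$-dependent thresholds. The binary search requires polynomially many adaptive $\PP$ queries, hence the $\P^\PP$ bound. You instead run the power method on $P = c\mathds{1}-H$ and — this is the key departure — fold the normalization into the observable by testing the sign of $\Tr\bigl((A - \tfrac{a+b}{2}\mathds{1})P^{2q}\bigr)$. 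Since $\Tr(P^{2q}) \ge \mu_1^{2q} \ge 1$ (choosing $c \ge \norm{H}+1$) and the relative error term is driven to $(b-a)/4$ by polynomial $q$, the sign test has gap $\Omega(2^{-\poly})$, decidable by a single non-adaptive query to \cref{lem_cooling_othermatrices}. This actually places the problem in $\PP$, not merely $\P^{\PP}$, strengthening the stated lemma; the paper's acknowledgments credit precisely this power-method observation (to Cedric Lin) without carrying it through in the body. Two small remarks: you should write $(b-a)/4 = \Omega(2^{-\poly})$ rather than $= 2^{-\poly}$, and your worry about bit-complexity of $\bra{x}P\ket{y}$ is handled inside \cref{lem_cooling_othermatrices} via its precision parameter, so it is not an additional obstacle.
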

\noindent This lemma is proved in \cref{sec_ppdetails}.
Note that both of these results include the case $\Delta = \Theta(1)$, the important case of constant spectral gaps.

We complete the characterization of the power of $\pPGQMA$ with the following result.

\begin{lemma}\label{thm_gappedham_pphard} \raggedright
\textsc{$(1/\exp,1/\poly)$-LocalHamiltonian} is $\PP$-hard.
\end{lemma}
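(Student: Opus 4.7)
The plan is to reduce from the equivalent characterization $\PP = \pBQP$ via a small-penalty version of the standard Kitaev--Feynman clock construction. Given a $\pBQP$ instance with a circuit $U_x = U_T \cdots U_1$ acting on $m = \poly(n)$ qubits with acceptance probability $p_\mathrm{acc} \geq c$ in the YES case and $p_\mathrm{acc} \leq s$ in the NO case, where $c-s \geq 2^{-\poly}$, I would build a $k$-local Hamiltonian
\begin{align}
H \;=\; H_\mathrm{in} + H_\mathrm{clock} + H_\mathrm{prop} + \eta\, H_\mathrm{out},
\end{align}
where the first three terms are the usual clock-construction pieces penalizing invalid clock states, wrong initialization of the computation register, and deviations from the circuit evolution. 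Since we are reducing from $\pBQP$ rather than $\QMA$, there is no witness register: the initialization is fully specified, so the ``propagation + clock + input'' Hamiltonian $H_0 := H_\mathrm{in} + H_\mathrm{clock} + H_\mathrm{prop}$ has a \emph{unique} zero-energy ground state equal to the history state $\ket{\eta_\mathrm{hist}} = \frac{1}{\sqrt{T+1}}\sum_{t=0}^T U_t \cdots U_1 \ket{0}^m \otimes \ket{t}_\mathrm{clock}$, and Kitaev's analysis gives a spectral gap $\Delta_0 = \Omega(1/T^3) = \Omega(1/\poly)$.

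The key choice is to take the output penalty coefficient $\eta$ \emph{exponentially small}, say $\eta = 2^{-q(n)}$ for a sufficiently large polynomial $q(n)$, while keeping $H_\mathrm{out} = \Pi_\mathrm{reject}\otimes\ket{T}\bra{T}_\mathrm{clock}$ of unit norm. With this choice, I would apply the Schrieffer--Wolff transformation (reviewed in the paper's~\cref{sec_schwolff}) treating $\eta H_\mathrm{out}$ as a perturbation of $H_0$. To first order, the ground-state energy shifts by
\begin{align}
E_1 \;=\; \eta\, \bra{\eta_\mathrm{hist}} H_\mathrm{out} \ket{\eta_\mathrm{hist}} + O(\eta^2/\Delta_0)
\;=\; \frac{\eta(1-p_\mathrm{acc})}{T+1} + O(\eta^2/\Delta_0),
\end{align}
while the spectral gap of $H$ is at least $\Delta_0 - O(\eta) = \Omega(1/\poly)$ because the perturbation is exponentially smaller than the unperturbed gap. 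Choosing the thresholds $a = \frac{\eta(1-c)}{T+1} + \tfrac{1}{2}\,\frac{\eta(c-s)}{T+1}$ and $b = \frac{\eta(1-s)}{T+1} - \tfrac{1}{2}\,\frac{\eta(c-s)}{T+1}$ gives a promise gap $b-a = \Theta(\eta(c-s)/T) = 2^{-\poly}$, matching the \textsc{$(1/\exp,1/\poly)$-LocalHamiltonian} requirements.

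The main obstacle is quantitative control of the Schrieffer--Wolff error terms: I need the $O(\eta^2/\Delta_0)$ remainder to be strictly smaller than the YES/NO separation $\Theta(\eta(c-s)/T)$, i.e.\ I need $\eta/\Delta_0 \ll (c-s)/T$. Because $(c-s) = 2^{-\poly}$ is already exponentially small, this forces $\eta$ to be chosen as an even more tightly tuned inverse-exponential; one has to verify that the higher-order Schrieffer--Wolff expansion converges, that $\eta\,\lVert H_\mathrm{out}\rVert$ stays well below $\Delta_0$, and that the resulting perturbed ground space remains one-dimensional so that the spectral-gap promise $\Delta \geq 1/\poly$ is preserved. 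A convenient way to package this would be to invoke the first-order effective Hamiltonian bound from the clock-construction perturbation lemmas used in prior hardness proofs (e.g., the bounds underlying \cite{Kempe2006}) rather than re-deriving the expansion from scratch. Once these bounds are in place, locality of $H$ (each term acts on $O(1)$ qudits) and polynomial-time constructibility of $H$, $a$, $b$ from $U_x$ follow immediately from the clock construction, completing the many-to-one reduction $\pBQP \le_m \textsc{$(1/\exp,1/\poly)$-LocalHamiltonian}$ and hence the $\PP$-hardness.
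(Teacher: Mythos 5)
Your proposal follows essentially the same route as the paper's proof: reduce from $\pBQP = \PP$ via a Kitaev--Feynman clock construction with no witness register (so the unperturbed ground state is the unique history state with a $\Omega(1/T^3)$ gap, as shown by Aharonov et al.\ \cite{Aharonov2007}), take the output penalty exponentially small, control the ground-state energy to first order in the penalty via Schrieffer--Wolff, and observe that the full Hamiltonian inherits the $\Omega(1/\poly)$ spectral gap because the perturbation norm is exponentially smaller than the unperturbed gap. The paper's concrete choice is $\epsilon = (c-s)/(nT^4)$, which is the same kind of tuning you describe.

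One small arithmetic slip: with your thresholds $a = \frac{\eta(1-c)}{T+1} + \tfrac{1}{2}\,\frac{\eta(c-s)}{T+1}$ and $b = \frac{\eta(1-s)}{T+1} - \tfrac{1}{2}\,\frac{\eta(c-s)}{T+1}$ one has $b - a = \frac{\eta(c-s)}{T+1} - \frac{\eta(c-s)}{T+1} = 0$, not $\Theta(\eta(c-s)/T)$. Replacing the coefficient $\tfrac{1}{2}$ by, say, $\tfrac{1}{4}$ (or simply setting $a = \frac{\eta(1-c)}{T+1} + O(\eta^2/\Delta_0)$ and $b = \frac{\eta(1-s)}{T+1} - O(\eta^2/\Delta_0)$) fixes this while keeping the remainder term strictly smaller than $b-a$, which is the condition you already correctly identify as the key quantitative requirement.
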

For this proof, we use the small-penalty clock construction, albeit one for the class $\pBQP$ as opposed to the class $\pQMA$.
In this aspect, it resembles the clock construction of Aharonov et al.\ \cite{Aharonov2007}, where it was used to show $\BQP$-universality of the model of adiabatic quantum computing.
We use the technique of applying $\Theta(1/\poly)$ small penalties at the output so as to preserve the lower bound of $\Omega(1/\poly)$ on the spectral gap shown in Ref.~\cite{Aharonov2007}.
In sum, \cref{thm_gappedham_in_ppgqma,thm_ppgqma_inpp,thm_gappedham_pphard} together imply \cref{thm_ppgqma_complete,thm_ppgqma_pp}.

We now come to the class $\pQCMA$ and its complete problem, \textsc{$(1/\exp,0)$-GS-Description-LocalHamiltonian}, where we are promised that there is an efficient circuit to prepare a low-energy state.
We know that $\pQCMA = \NP^\PP$ \cite{Morimae2017a}, which lies in the second level of the counting hierarchy.
Since $\pPGQMA$ is characterized by $\PP$, the promise of having a spectral gap is only slightly stronger than the promise of an efficient circuit to prepare the ground state.

Consider now the gapped version of the problem, \textsc{$(1/\exp,1/\poly)$-GS-Description-LocalHamiltonian}, where there is a $1/\poly$ spectral gap in addition to the promise of an efficient circuit to prepare the ground state.
This characterizes the class $\pPGQCMA$, for which the proof technique is similar to $\pPGQMA$.

We first show that the gapped version of \textsc{GS-Description-LocalHamiltonian} is in the corresponding $\GQCMA$ class, and in particular,
\begin{lemma}\label{lem_in_ppgqcma}
\textsc{$(1/\exp,1/\poly)$-GS-Description-LocalHamiltonian} $\in \pPGQCMA$.
\end{lemma}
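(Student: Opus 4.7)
The plan is to prove this containment in two steps: first, that \textsc{$(1/\exp,1/\poly)$-GS-Description-LocalHamiltonian} lies in $\PP = \pBQP$; second, that $\pBQP \subseteq \pPGQCMA$. For the first step, we observe that \textsc{$(1/\exp,1/\poly)$-GS-Description-LocalHamiltonian} carries a strictly stronger promise than \textsc{$(1/\exp,1/\poly)$-LocalHamiltonian}: every valid instance of the former is a valid instance of the latter. By \cref{thm_gappedham_in_ppgqma,thm_ppgqma_inpp} the latter lies in $\pPGQMA \subseteq \PP$, and the underlying power-method based $\PP$ algorithm takes only the Hamiltonian $H$ as input and does not exploit any classical ground-state circuit. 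Hence it also decides \textsc{$(1/\exp,1/\poly)$-GS-Description-LocalHamiltonian}.

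For the second step, given a $\pBQP[c_0,s_0]$ verifier $U$ for the instance with $c_0-s_0 \geq 2^{-\poly}$, we construct a $\pPGQCMA$ verifier as follows. Prepare an internal coin ancilla $r$ in $(\ket{0}+\ket{1})/\sqrt{2}$ and then perform two coherent branches: controlled on $r=0$, run $U$ ignoring the witness register and copy $U$'s decision qubit to the output; controlled on $r=1$, use CNOTs from the copied witness bits (post-$U_c$) into the output register to compute whether $w=0^{\mathrm{len}}$. Because $U_c$ classicalises the witness register and the $r=1$ branch touches it only through computational-basis controls, the accept operator $Q$ remains diagonal on the witness register, with entries $p(w) = \tfrac{1}{2}\, p_U + \tfrac{1}{2}\, [w=0^{\mathrm{len}}]$, where $p_U$ denotes the $\pBQP$ accept probability on the given instance. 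Thus $\lambda_1(Q)-\lambda_2(Q) = \tfrac{1}{2}$ in both the YES and NO cases, giving $g_1,g_2 \geq 1/\poly$, while $\lambda_1(Q)$ separates YES from NO by $\tfrac{1}{2}(c_0-s_0) \geq 2^{-\poly}$, yielding valid $\pPGQCMA$ parameters.

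The main subtlety to verify, and the only place this argument can really go wrong, is that the two-branch construction truly implements a legitimate $\GQCMA$ verifier whose accept operator stays diagonal on the witness register. We handle this by performing the ``$w=0^{\mathrm{len}}$'' check on the \emph{copy} ancilla (the image of $U_c$) rather than on the original witness qubits, so no operation on the witness register itself entangles different computational-basis states. Chaining the two steps then yields \textsc{$(1/\exp,1/\poly)$-GS-Description-LocalHamiltonian} $\in \PP \subseteq \pPGQCMA$.
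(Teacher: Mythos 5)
Your proof is correct, but it takes a genuinely different route from the paper's. The paper's proof (given in \cref{sec_pqcmadetails}) is direct and constructive: the verifier receives the promised classical circuit description $V$ as witness, uses it to prepare a low-energy state $\ket{\psi}=V\ket{0^m}$, and runs one-bit phase estimation on $\ket{\psi}$; a small tweak in the initial accept-or-continue step (requiring the witness register to be $0^w$ on the trivial-accept branch) forces the accept operator to inherit a spectral gap of order $\Delta^2/\norm{H}$ from the spectral gap $\Delta$ of the Hamiltonian. You instead \emph{discard} the ground-state-description promise entirely, fall back on the containment \textsc{$(1/\exp,1/\poly)$-LocalHamiltonian} $\in \pPGQMA \subseteq \PP$ (\cref{thm_gappedham_in_ppgqma,thm_ppgqma_inpp}), and then embed $\PP=\pBQP$ into $\pPGQCMA$ via a coin-flip: conditioned on an internal coin, the verifier either runs the $\pBQP$ circuit (witness-independent branch) or checks whether the copied witness is $0^w$. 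Since the post-$U_c$ circuit never touches the original witness register, $Q$ is diagonal in the computational basis with $p(w)=\tfrac12 p_U + \tfrac12[w=0^w]$, giving a constant spectral gap of $\tfrac12$ and a promise gap $\tfrac12(c_0-s_0)$. Both arguments are valid. Your route is more modular and in fact proves $\PP\subseteq\pPGQCMA$ outright as a byproduct, which subsumes the separate $\PP$-hardness result (\cref{thm_pqcma_gap_pphard}) the paper invokes later to conclude $\pPGQCMA=\PP$; the paper's route, by contrast, actually exhibits a natural $\GQCMA$ verifier that makes genuine use of the classical ground-state description, which is more instructive and also yields the other entries of \cref{tab_complexitylh} uniformly from the same lemma. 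One cosmetic point: your ``CNOTs from the copied witness bits'' should really be a multi-controlled equality test for $w=0^{\mathrm{len}}$, but that is clearly the intent and does not affect correctness.
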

$\PP$-hardness of the problem follows by the same argument as the proof of \cref{thm_gappedham_pphard}:
\begin{lemma}\label{thm_pqcma_gap_pphard}
 \textsc{$(1/\exp,1/\poly)$-GS-Description-LocalHamiltonian} is $\PP$-hard.
\end{lemma}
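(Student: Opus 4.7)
The plan is to reuse the reduction from the proof of \cref{thm_gappedham_pphard} and observe that the small-penalty clock construction automatically furnishes a classical description of a near-ground state.

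First, I would run the same reduction from $\pBQP = \PP$ to a local Hamiltonian instance used in \cref{thm_gappedham_pphard}, adapting the Aharonov et al.\ clock construction for adiabatic quantum computing. This produces a local Hamiltonian of the form $H = H_0 + \epsilon\, H_\mathrm{out}$ on $\poly(n)$ qudits, where $H_0 = H_\mathrm{prop} + H_\mathrm{clock} + H_\mathrm{init}$ is frustration-free with a unique ground state equal to the standard history state
\[
\ket{\mathrm{hist}} = \frac{1}{\sqrt{T+1}} \sum_{t=0}^{T} \ket{t}_\mathrm{clock} \otimes U_t \cdots U_1 \ket{0^m},
\]
spectral gap $\Delta_0 = \Omega(1/\poly)$ inherited from the Aharonov et al.\ analysis. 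The penalty strength will be chosen at $\epsilon = \Theta(1/\exp)$, matching the setup for $\pPGQMA$.

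Next, I would verify that adding the exponentially small output penalty $\epsilon H_\mathrm{out}$ (with $\|H_\mathrm{out}\| \leq 1$) preserves the spectral gap at $\Delta_0 - O(\epsilon) = \Omega(1/\poly)$, while inducing a splitting between the YES and NO ground-state energies of order $\Theta(\epsilon / (T+1)) = \Theta(1/\exp)$, driven by the reject probability of the underlying $\pBQP$ circuit on $\ket{\mathrm{hist}}$. These two facts together already match the $(1/\exp,1/\poly)$ promise of the target problem, exactly as in \cref{thm_gappedham_pphard}.

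The additional step for this lemma is to produce the polynomial-size circuit $V$ promised by \textsc{GS-Description-LocalHamiltonian}. Since the $\pBQP$ circuit $U_T \cdots U_1$ is part of the classical input, there is a uniformly generated polynomial-size circuit that prepares $\ket{\mathrm{hist}}$ from $\ket{0^{m'}}$ — for instance by preparing a uniform superposition over the clock register and applying each $U_i$ controlled on the clock pointer being $\geq i$. First-order perturbation theory about $H_0$ then gives
\[
\bra{\mathrm{hist}} H \ket{\mathrm{hist}} - E_1(H) = O\!\left(\epsilon^2 / \Delta_0\right) = O(1/\exp),
\]
and by choosing the polynomial $f(n) \geq \|H\|$ appropriately together with $\epsilon$ slightly smaller than the promise gap $\delta = 1/\exp$, the right-hand side is at most $\delta^3 / f(n)^2$, so $\ket{\mathrm{hist}}$ satisfies the low-energy promise on $V \ket{0^m}$.

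The main obstacle I anticipate is a careful simultaneous calibration of $\epsilon$ against $\Delta_0$, $\delta$, and $f(n)$: the penalty must be small enough that perturbation theory closes the gap between $\bra{\mathrm{hist}} H \ket{\mathrm{hist}}$ and $E_1$ below the exponentially small slack $\delta^3/f(n)^2$, yet large enough that the YES/NO energy splitting itself exceeds $\delta$. This is a bookkeeping exercise in the Schrieffer--Wolff expansion around the history-state subspace, analogous to the one invoked earlier in the paper, but it has to be executed with care because the allowed slack is already inverse-exponential.
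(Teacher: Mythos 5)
Your proposal matches the paper's joint proof of \cref{thm_gappedham_pphard} and \cref{thm_pqcma_gap_pphard} (presented as \cref{thm_gappedham_pphard_apx}): reduce from $\pBQP$ via the witness-free Aharonov et al.\ clock construction, attach a small output penalty $\epsilon$, invoke the Schrieffer--Wolff transformation to show the efficiently preparable history state $\ket{0_h}$ has energy within $O(\epsilon^2/\Delta_0)$ of $E_1$, and calibrate $\epsilon$ so that the promise gap is $\Omega(1/\exp)$ while the spectral gap stays $\Omega(1/\poly)$. The paper carries out the calibration with the explicit choice $\epsilon = \Theta\bigl((c-s)/(nT^4)\bigr)$, and the ground-state-description promise is verified exactly as you sketch, via the argument given in \cref{lem_pqcma_hard}.
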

\noindent We give a unified proof of \cref{thm_gappedham_pphard,thm_pqcma_gap_pphard} in \cref{sec_gapreds}.
Since $\pPGQCMA \subseteq \pPGQMA = \PP$, this implies:
\begin{corll}\label{corll_ppgqcma_ppgqma}
 $\pPGQMA = \pPGQCMA = \PP$.
\end{corll}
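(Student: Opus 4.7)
The plan is to prove both inclusions $\pPGQCMA \subseteq \PP$ and $\PP \subseteq \pPGQCMA$; combining these with Theorem \ref{thm_ppgqma_pp} ($\pPGQMA = \PP$) then collapses the entire chain and yields the corollary.

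For the upper bound, I would first establish $\pPGQCMA \subseteq \pPGQMA$ directly from the definitions. The key observation is that a classical witness in a $\GQCMA$ protocol can be viewed as a computational-basis quantum witness, and by construction of the modified accept operator $Q(U_x U_c)$ via the copy operation $U_c$ (as defined in Section \ref{sec_defs}), its eigenbasis may be taken to consist of computational-basis states. Consequently, the largest two eigenvalues $\lambda_1(Q)$ and $\lambda_2(Q)$ — and hence the completeness, soundness, and spectral-gap promises on the accept operator — carry over unchanged when we relax the witness to be an arbitrary quantum state. This gives $\pPGQCMA \subseteq \pPGQMA$, and combining with $\pPGQMA = \PP$ from Theorem \ref{thm_ppgqma_pp} yields $\pPGQCMA \subseteq \PP$.

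For the matching lower bound, I would appeal to the two preceding lemmas: Lemma \ref{lem_in_ppgqcma}, which shows \textsc{$(1/\exp,1/\poly)$-GS-Description-LocalHamiltonian} $\in \pPGQCMA$, and Lemma \ref{thm_pqcma_gap_pphard}, which shows that the same problem is $\PP$-hard. Since a $\PP$-hard problem lies inside $\pPGQCMA$, every problem in $\PP$ reduces to one in $\pPGQCMA$, giving $\PP \subseteq \pPGQCMA$. Chaining the two inclusions gives $\PP \subseteq \pPGQCMA \subseteq \pPGQMA = \PP$, forcing equality throughout and establishing $\pPGQMA = \pPGQCMA = \PP$.

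The substantive work does not lie in the corollary itself but in the preceding machinery it rests on: the power-method-based $\PP$ upper bound on $\pPGQMA$ (Theorem \ref{thm_ppgqma_pp}, proved via Lemma \ref{thm_ppgqma_inpp}), the phase-estimation-based containment of the gapped \textsc{GS-Description-LocalHamiltonian} problem in $\pPGQCMA$ (Lemma \ref{lem_in_ppgqcma}), and the small-penalty clock construction underlying the $\PP$-hardness of Lemma \ref{thm_pqcma_gap_pphard}. Once these are in place, the corollary is just the observation that the $\pPGQCMA$ lower bound inherited from $\PP$-hardness of its complete problem matches the $\pPGQMA = \PP$ upper bound; no genuine obstacle remains at this step.
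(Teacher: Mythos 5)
Your proposal matches the paper's argument: it uses Lemma~\ref{lem_in_ppgqcma} and Lemma~\ref{thm_pqcma_gap_pphard} to sandwich $\PP \subseteq \pPGQCMA \subseteq \pPGQMA = \PP$, which is exactly the chain the paper invokes. The only difference is that you explicitly spell out why $\pPGQCMA \subseteq \pPGQMA$ (relaxing a classical witness to a quantum one via the copy-operation construction), a step the paper leaves implicit; this is a harmless elaboration, not a divergence.
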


\section{Problems characterized by \PSPACE} \label{sec_pspace}
In this section, we discuss the complexity of the class $\pEGQMA$, which turns out to equal $\PSPACE$.
This result indicates that the complexity of the Local Hamiltonian problem does not jump immediately in the presence of a tiny, nonzero spectral gap.
This means that there is a notion of robustness of the complexity of the problem with respect to the spectral gap.

\begin{thm}\raggedright
$\pEGQMA = \pQMA$\ $(=\PSPACE)$. \label{thm_pegqma_pspace}
\end{thm}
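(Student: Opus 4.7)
The easy inclusion is $\pEGQMA \subseteq \pQMA$: every $\pEGQMA$ instance is a $\pQMA$ instance with the same verifier, since the exponential spectral-gap promise is merely an additional restriction. Since $\pQMA = \PSPACE$ by Fefferman--Lin, the whole content of the theorem lies in showing $\PSPACE \subseteq \pEGQMA$. My plan is to do this by reducing a natural $\PSPACE$-complete problem to an instance of \textsc{$(1/\exp,1/\exp)$-SparseHamiltonian}, then giving a $\pEGQMA$ verifier for that problem. Since local Hamiltonians are a special case of sparse Hamiltonians, together with the already-stated $\pEGQMA$-completeness of \textsc{$(1/\exp,1/\exp)$-LocalHamiltonian} (\cref{thm_pegqma_complete}), this chain closes the inclusion.

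The $\PSPACE$-complete source problem I would use is \textsc{SuccinctGraphReachability}: given a succinctly described (exponentially large) directed graph on vertex set $\{0,1\}^{n}$ via a polynomial-size circuit computing its adjacency, and two distinguished vertices $u,v$, decide if there is a path from $u$ to $v$. Equivalently, think of the graph as the configuration graph of a $\PSPACE$-bounded Turing machine $M$ on input $x$, whose vertices are configurations of $M$ and whose edges are one-step transitions; $u$ is the start configuration and $v$ is the unique accepting configuration (one can pad $M$ to make acceptance unique and configurations deterministic with an in-/out-degree of one, so the graph decomposes into simple paths and simple cycles). From such a graph $G$ I would build the Hermitian adjacency matrix $A_G$ and define the sparse Hamiltonian
\begin{equation}
H \;=\; cI - A_G \;+\; \Pi,
\end{equation}
where $\Pi$ is a tiny (inverse-exponential) penalty term that breaks the symmetry between the $u$-containing component and all the others so that the ground state lives on the $u$-to-$v$ path iff $v$ is reachable. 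Each entry of $H$ is computable in $\poly(n)$ time from the succinct description of $G$, so $H$ is sparse in the required sense.

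The main technical step, and the one I expect to be the hard part, is engineering the construction so that the resulting Hamiltonian is guaranteed to have a spectral gap at least $2^{-\poly(n)}$ in both the YES and NO cases. Because the configuration graph decomposes into paths and cycles, the eigenvalues of the adjacency matrix of each connected component are of the form $2\cos(\pi k/(L+1))$ (for a path of length $L$) or $2\cos(2\pi k / L)$ (for a cycle of length $L$), with $L \leq 2^{\poly(n)}$. Thus all gaps between distinct eigenvalues within a component are at least $\Omega(1/L^{2}) = \Omega(2^{-\poly(n)})$, and I would pad $M$ so that distinct components have distinct lengths (or use the $\Pi$ perturbation above, analyzed via the Schrieffer--Wolff transformation reviewed in \cref{sec_schwolff}, to split any accidental degeneracies between components by at least inverse-exponential amounts). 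This gives the required $\Omega(1/\exp)$ spectral gap uniformly across YES/NO instances, while an appropriate choice of $c$ and of the penalty strength places the ground-state energy on opposite sides of an inverse-exponential promise gap in the two cases.

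The final piece is showing \textsc{$(1/\exp,1/\exp)$-SparseHamiltonian} $\in \pEGQMA$. Here I would take the standard phase-estimation verifier for \textsc{LocalHamiltonian}, replace local time-evolution by a sparse-Hamiltonian simulation subroutine with inverse-exponential precision, and run it to sufficient accuracy to distinguish the promised YES and NO energies. The spectral gap of $H$ translates directly into a gap in accept probabilities between the optimal and next-optimal orthogonal witnesses: the optimal witness is (close to) the ground state $\ket{E_1}$, and any orthogonal witness has support only on eigenstates whose energies differ from $E_1$ by at least the spectral gap, which phase estimation converts into an $\Omega(1/\exp)$ separation of accept probabilities. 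This supplies both the completeness/soundness separation and the required $g_{1},g_{2} \geq 1/\exp$ spectral-gap conditions on the accept operator, placing the problem in $\pEGQMA$ and completing the proof.
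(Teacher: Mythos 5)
Your high-level plan — reduce $\PSPACE$ to \textsc{$(1/\exp,1/\exp)$-SparseHamiltonian} via the configuration graph of a reversible $\PSPACE$ machine and then give a phase-estimation verifier — is exactly the paper's strategy, and your $\pEGQMA$-membership argument for \textsc{SparseHamiltonian} (one-qubit phase estimation with sparse-Hamiltonian simulation to $1/\exp$ accuracy, with the spectral gap of $H$ translating to a gap in accept probabilities) is essentially the same as \cref{lem_sparseHam_in_pEGQMA}. The $\PSPACE$-hardness side, however, has a real gap.

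The issue is that the Hamiltonian $H = cI - A_G + \Pi$ with $A_G$ the symmetrized adjacency matrix does not visibly encode reachability in the ground-state energy. The components of the configuration graph are simple paths and simple cycles, and the ground state of $cI - A_G$ sits on whichever component maximizes $\lambda_{\max}$; for cycles this is exactly $2$, for a path of length $L$ it is $2\cos(\pi/(L+1)) < 2$, and none of these numbers depends on whether $u$ and $v$ are in the \emph{same} component. A tiny perturbation $\Pi$ cannot change which component hosts the global ground state unless the unperturbed spectrum is already degenerate to within $\|\Pi\|$ across exactly the components you want to compare, and you never specify $\Pi$ or argue that this happens. Likewise, "pad so distinct components have distinct lengths" removes degeneracies between components but does not create a YES/NO promise gap, since the spectrum still knows nothing about whether $v$ is reachable from $u$. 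The paper's construction solves this with a different matrix and a different mechanism: it works with ${A_x'}^{\dagger}A_x'$, where $A_x'$ is the (directed, non-symmetric) adjacency matrix with self-loops added on all vertices except $s_x$ and $t_x$, and a length-$t(n)$ chain appended from $t_x$ back to $s_x$. Because of this, the $s_x$-containing component is a path in the NO case (ending at reject) and a cycle in the YES case, and for the self-loop-modified tridiagonal matrix ${A'}^{\dagger}A'$ restricted to that component the smallest eigenvalue is exactly $0$ for the path (there is an explicit zero eigenvector) and $\Omega(1/\ell^2)$ for the cycle; the appended chain guarantees this component is the unique longest one, so it hosts the global ground state and the gap above it is $\Omega(\ell_{\max}^{-4})$. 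Your $2\cos(\pi k/(L+1))$ calculation is for the plain adjacency matrix of an unmodified path, which is not the object whose spectrum decides YES versus NO here, and the reachability-to-energy connection is the missing key idea. If you want to pursue a $cI - A_G + \Pi$ variant, you would need to specify $\Pi$ concretely and prove, probably via a perturbative argument as delicate as the paper's ${A'}^{\dagger}A'$ analysis, that (a) the $u$-to-$v$ component is made the unique global minimizer, (b) its energy differs between YES and NO by $\Omega(1/\exp)$, and (c) the gap above the ground state remains $\Omega(1/\exp)$; none of (a)--(c) follows from the sketch as written.
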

\begin{proof}
The containment $\pEGQMA \subseteq \pQMA$ follows trivially since any $\pEGQMA$ instance is automatically a $\pQMA$ instance.
We show the other direction, $\pEGQMA \supseteq \pQMA$, in two steps.
Our proof relies on the complexity of the following problem:
\begin{problem}{SparseHamiltonian[$a,b,g_1,g_2$]}
Input & A succinct description of a Hermitian matrix of size $2^{\poly(n)} \times 2^{\poly(n)}$, with at most $d=\poly(n)$ many entries in each row and two numbers $a$ and $b$, with $b>a$.
The magnitude of each entry is bounded by $k= \poly(n)$.\\
Output & YES if the smallest eigenvalue $E_1 \leq a$ and the spectral gap of the matrix is at least $g_1$,\\
& NO if $E_1 \geq b$, and the spectral gap of the matrix is at least $g_2$.
\end{problem}
We define \textsc{$(\delta,\Delta)$-SparseHamiltonian} to be $\cup_{\substack{b-a \geq \delta\\ g_1,g_2 \geq \Delta}}$\textsc{SparseHamiltonian[$a,b,g_1,g_2$]} and consider the problem with parameters $\delta,\Delta =\Omega(1/\exp)$.
First, in \cref{lem_sparseHamPspaceHard}, we prove that \textsc{$(1/\exp,1/\exp)$-SparseHamiltonian} is \PSPACE-hard, or equivalently $\pQMA$-hard.
Next, we show in \cref{lem_sparseHam_in_pEGQMA} that \textsc{$(1/\exp,1/\exp)$-SparseHamiltonian} may be solved in $\pEGQMA$.
The theorem then follows.
\end{proof}

\begin{lemma}\label{lem_sparseHamPspaceHard}
 \textsc{$(1/\exp,1/\exp)$-SparseHamiltonian} is \PSPACE-hard.
\end{lemma}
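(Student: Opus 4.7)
The plan is to reduce from \textsc{SuccinctGraphReachability}: given a succinctly described directed graph on $N = 2^{\poly(n)}$ vertices (presented by a $\PSPACE$-bounded Turing machine) and two vertices $s$ and $t$, decide whether $t$ is reachable from $s$. This problem is $\PSPACE$-complete, and if necessary I may work with its complement, since $\PSPACE$ is closed under complement. By Bennett's reversible simulation, I may assume the Turing machine $M$ is reversible, so its configuration graph $G$ has in- and out-degree at most one at every vertex, and the undirected version of $G$ decomposes into simple paths and simple cycles of length at most $N$.

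I associate to $M$ the sparse Hamiltonian $H = -A_G$, where $A_G$ is the symmetric $\{0,1\}$-valued adjacency matrix of $G$. Because $G$ is succinctly described, $H$ has row-degree at most two and each entry is polynomial-time computable from $M$, so it is a valid sparse Hamiltonian. The spectrum of $A_G$ decomposes over connected components: on a path of length $\ell$ the eigenvalues of the adjacency matrix are $2\cos(\pi k / (\ell+1))$ for $k = 1, \dots, \ell$, while on a cycle of even length $\ell$ they are $2\cos(2\pi k/\ell)$ for $k = 0, \dots, \ell-1$. Two immediate consequences follow: an even-length cycle contributes ground-state energy exactly $-2$ to $H$, whereas a path of length $\ell \le N$ contributes $\ge -2 + \pi^{2}/(\ell+1)^{2} = -2 + \Omega(1/\exp)$. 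Moreover, on any single component of length at most $N$, the gap between the two smallest eigenvalues is $\Omega(1/N^{2}) = \Omega(1/\exp)$.

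To convert reachability into a ground-state energy gap, I engineer $M$ so that the answer is encoded in the component containing $s$. Concretely, I extend $M$ with a ``feedback'' transition $t \to s$ together with a polynomial-size clock register that pads the intended trajectory to a prescribed even length $L \in [N/2, N]$. In the YES case the component of $s$ becomes a simple cycle of length $L$, contributing ground-state energy exactly $-2$; in the NO case it becomes a simple path of length at most $L$, contributing at least $-2 + \Omega(1/\exp)$. Choosing thresholds halfway between these values yields an instance of \textsc{$(1/\exp,1/\exp)$-SparseHamiltonian}, whose global spectral gap is $\Omega(1/\exp)$ by the per-component analysis above.

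The main obstacle is verifying the \emph{global} spectral structure. I must ensure that none of the exponentially many connected components of $G$ corresponding to configurations unrelated to the intended computation accidentally forms an even-length cycle, since any such cycle would pin the global ground-state energy to $-2$ and destroy the YES/NO separation. Equivalently, the reversible clock and padding must absorb every stray configuration into a controlled-length path, or at worst into an odd cycle whose smallest eigenvalue lies safely above $-2 + \Omega(1/\exp)$. Designing $M$ so that its reversible orbits are this clean over all of the $2^{\poly(n)}$ configurations, and carrying out the explicit cycle/path eigenvalue estimates to certify the $\Omega(1/\exp)$ global gap, is the heart of the argument; once this combinatorial bookkeeping is in place, the reduction itself is immediate and the $\PSPACE$-hardness follows.
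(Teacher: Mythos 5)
Your high-level route is the same as the paper's: reduce from a $\PSPACE$-complete reachability question via the configuration graph of a reversible Turing machine, add a feedback edge from accept to start so that YES instances close a cycle while NO instances leave a path, and read off the answer from the ground-state energy of a sparse Hermitian matrix derived from the graph. The paper, like you, also reduces to the complement and invokes closure of $\PSPACE$ under complement. But the technical implementation has two genuine problems.

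First, there is a concrete error in the cycle eigenvalue analysis. For $H = -A_G$ exactly as you wrote it, the ground energy of $H$ restricted to \emph{any} cycle component is $-\lambda_{\max}(A) = -2$ regardless of parity, because a cycle graph is $2$-regular and hence has $\lambda_{\max}(A) = 2$ with the all-ones eigenvector. The even/odd distinction you invoke governs the \emph{bottom} of the spectrum of $A$ (bipartiteness gives $\lambda_{\min}(A) = -2$ iff the cycle is even), i.e., the \emph{top} of the spectrum of $-A$, which is irrelevant to the ground energy of $H = -A$. Consequently your fallback of pushing stray configurations into ``odd cycles whose smallest eigenvalue lies safely above $-2$'' is simply false: odd cycles also pin the ground energy of $-A$ to $-2$. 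If you instead meant $H = +A_G$ (which would make your three quoted numbers internally consistent), the parity distinction is real, but now you still must \emph{exclude} stray even cycles, and a clock register cannot do that — it factors through every component of the configuration graph identically, so it cannot distinguish garbage trajectories from the intended one, and a modulus that makes the intended cycle even makes many stray cycle lengths even as well.

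Second, and more fundamentally, you explicitly defer the control of the global spectrum (``the heart of the argument'') without supplying a mechanism. A reversible TM's configuration graph has exponentially many components unrelated to the $s$-to-$t$ trajectory, and nothing in your reduction prevents some of them from being cycles (destroying the NO-case promise gap) or from tying the length of the intended path (destroying the spectral-gap/uniqueness promise in either case). The paper's construction resolves exactly this. It does not use $\pm A$ at all; it uses the positive semidefinite matrix $A'^\dag A'$ so that $0$ is the natural floor. It adds self-loops to every configuration vertex \emph{except} the start, the accept, and a fresh chain of $t(n)$ padding vertices inserted between them, which has the double effect of (i) pushing the spectra of the garbage subgraphs (which then carry self-loops everywhere) strictly away from zero, and (ii) guaranteeing, via the $t(n)$-vertex elongation where $t(n)$ exceeds the total number of TM configurations, that the component containing $s$ is \emph{strictly} the longest subgraph, so its ground state is the unique global minimum and the promised spectral gap can be certified by an explicit characteristic-polynomial computation. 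These three ingredients — the PSD matrix $A'^\dag A'$, the selective self-loops, and the length elongation — are precisely what is missing from your proposal, and without them the reduction does not produce a valid promise instance.
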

The reduction is from any problem in $\PSPACE$ to an instance of co-\textsc{$(1/\exp,1/\exp)$-Gapped-\\SparseHamiltonian}, which is the complement of the problem, in the sense that the YES and NO instances are reversed.
Since $\PSPACE$ is closed under complement, this still gives the desired hardness result.
The broad idea is to represent a $\PSPACE$ computation as an exponentially large, but sparse, graph.
The smallest eigenvalue of the adjacency matrix of this graph encodes information about whether the computation accepts or rejects.
\begin{proof}
We use a proof technique adapted from an unpublished manuscript by Fefferman and Lin \cite{Fefferman2016c}.
First, we use the fact that $\PSPACE$ with reversible operations in every step still equals $\PSPACE$: $\mathsf{revPSPACE} = \PSPACE$ \cite{Bennett1989}.
Indeed, it is known that $\SPACE[\mathfrak{s}(n)] = \mathsf{revSPACE}[\mathfrak{s}(n)]$ \cite{Lange2000} with an overhead in time that is exponential in the space, $\mathfrak{s}(n)$.
Let $t(n)$ be this upper bound on the running time of the Turing machine, so that we can restrict our attention to the class $\mathsf{revSPACE}[\mathfrak{s}(n)] \cap \TIME[t(n)] = \SPACE[\mathfrak{s}(n)]$.
Any computation on a reversible Turing machine may be viewed as traversing a directed \emph{configuration graph}, where each vertex of the graph is determined by the state of the head and the list of symbols on the input and work tapes (\cref{fig_pspacegraph_oyes,fig_pspacegraph_ono}).
When such a Turing machine is restricted to use space polynomial in the input length $n$, the number of vertices in the graph is upper bounded by an exponential, $2^{\poly(n)}$.
Consider the adjacency matrix of the graph, $A_x$.
The description of this exponentially large matrix is succinct because it only requires specifying the input $x$ and the rules of the Turing machine.

We modify the configuration graph $G_x \to G_x'$ so that the smallest eigenvalue of the matrix ${A_x^\dag} ' A_x'$ is 0 in the NO case and bounded away by an exponentially small amount in the YES case.
We do this modification in a way that ensures the matrix has a spectral gap lower bound of at least $\Omega(1/\exp)$.
This is done as follows.
First, we modify the configuration graph of the Turing machine by adding self-loops to all vertices except for the start and accept configurations $s_x$ and $t_x$.
We then add a sequence of vertices $\{1, 2, \ldots t(n)\}$ from the accept configuration $t_x$, with the directed edges $t_x \to 1 \to 2 \to \ldots \to t(n) \to s_x$, as shown in \cref{fig_pspacegraph_myes,fig_pspacegraph_mno}.
The adjacency matrix of this modified directed graph $G_x'$ is $A_x'$, and we are interested in the eigenvalues and spectral gap of ${A_x^\dag}' A_x'$, which is Hermitian and sparse, and also has a succinct representation.

\begin{figure*} \centering
\begin{subfigure}{0.45\textwidth}
 \includegraphics[width=\textwidth]{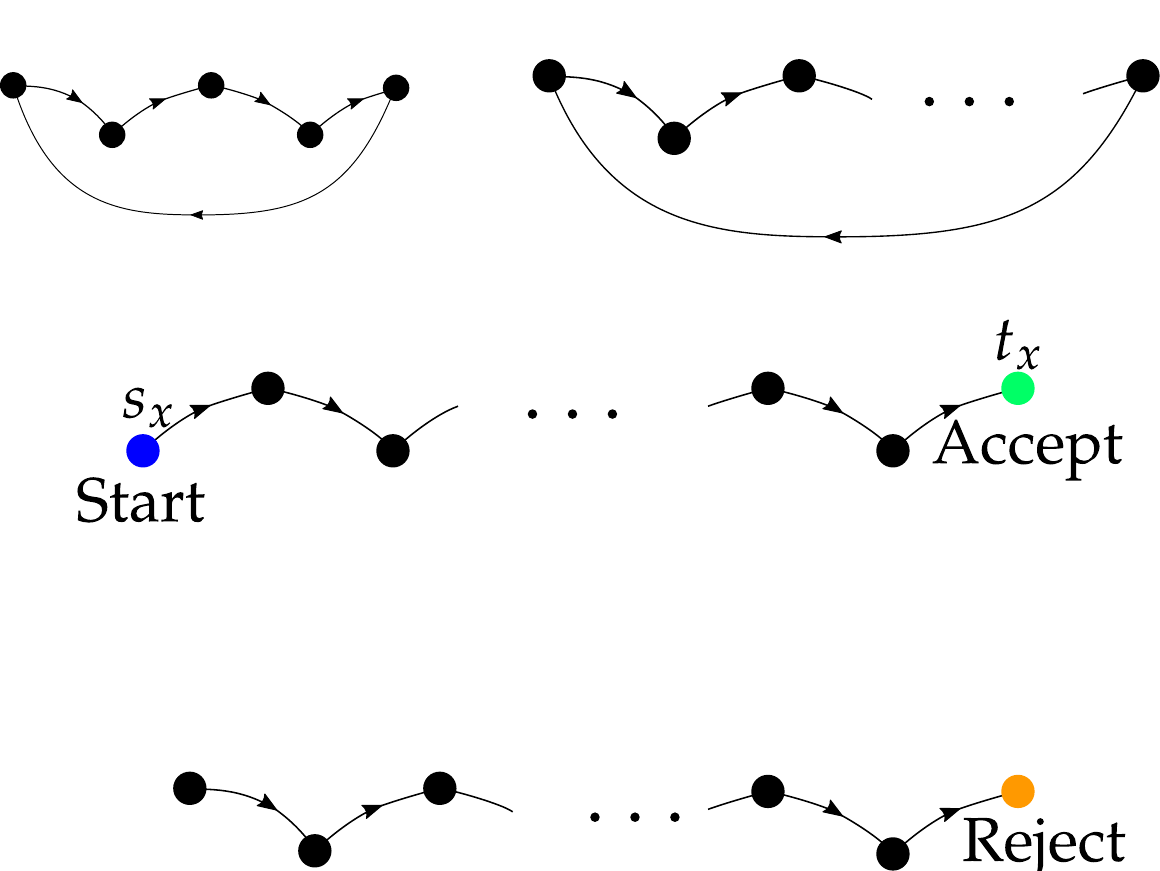}
 \caption{YES case, original graph}\label{fig_pspacegraph_oyes}
\end{subfigure}
\hfill
\begin{subfigure}{0.45\textwidth}
 \includegraphics[width=\textwidth]{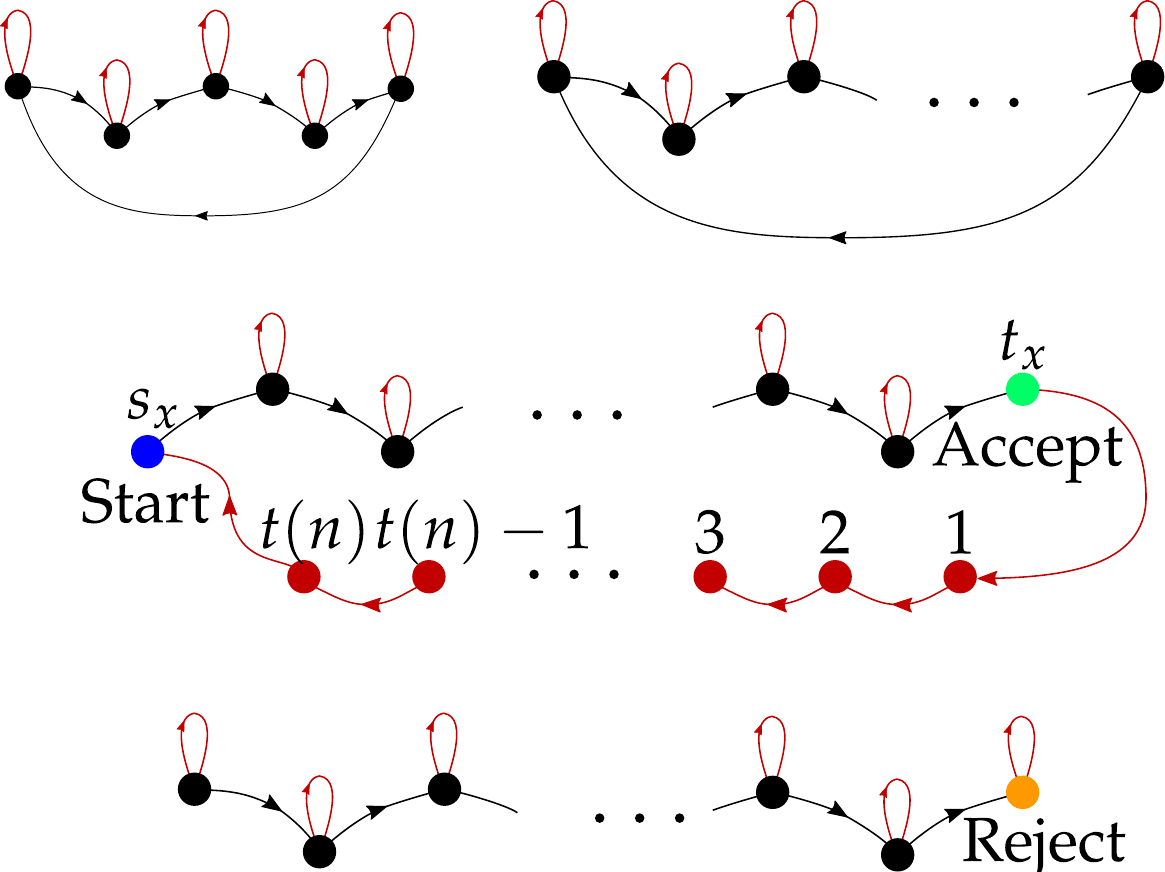}
 \caption{YES case, modified graph}\label{fig_pspacegraph_myes}
\end{subfigure}
\begin{subfigure}{0.45\textwidth}
 \includegraphics[width=\textwidth]{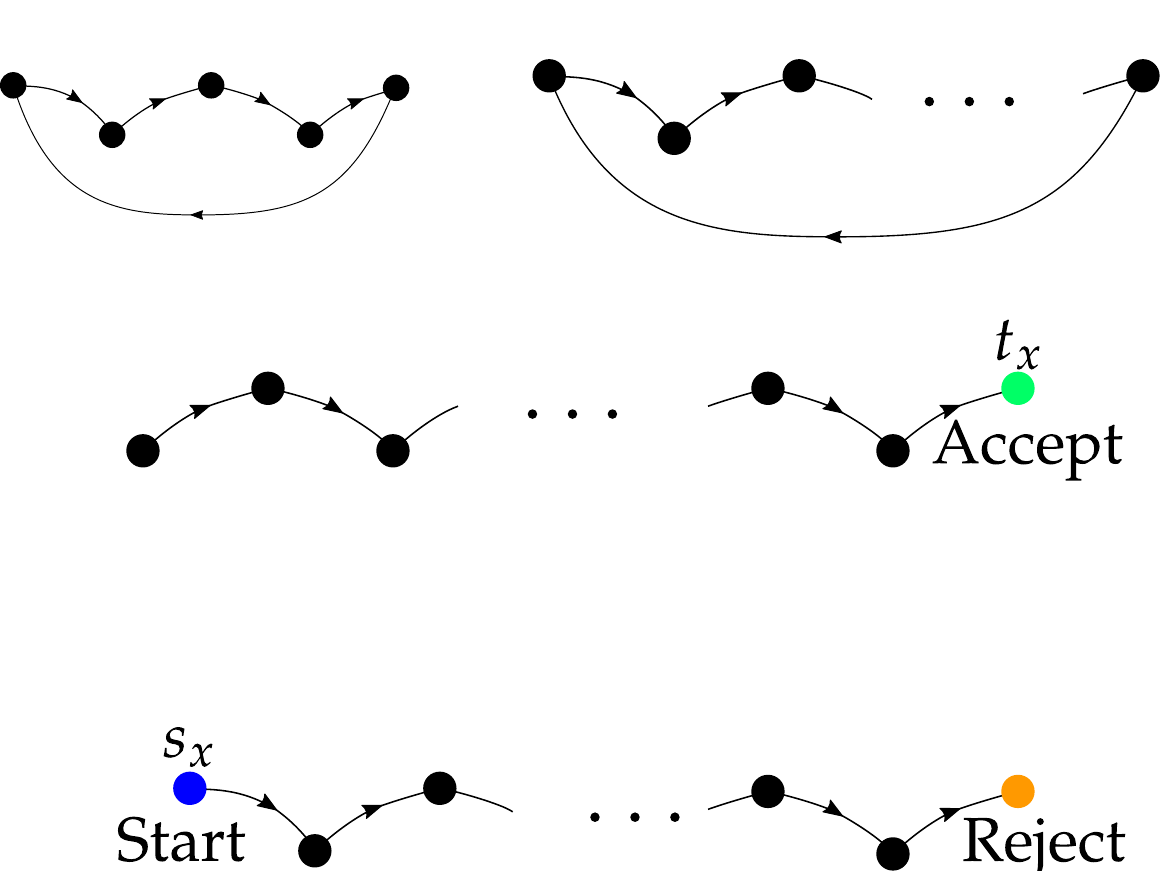}
 \caption{NO case, original graph}\label{fig_pspacegraph_ono}
\end{subfigure}
\hfill
\begin{subfigure}{0.45\textwidth}
 \includegraphics[width=\textwidth]{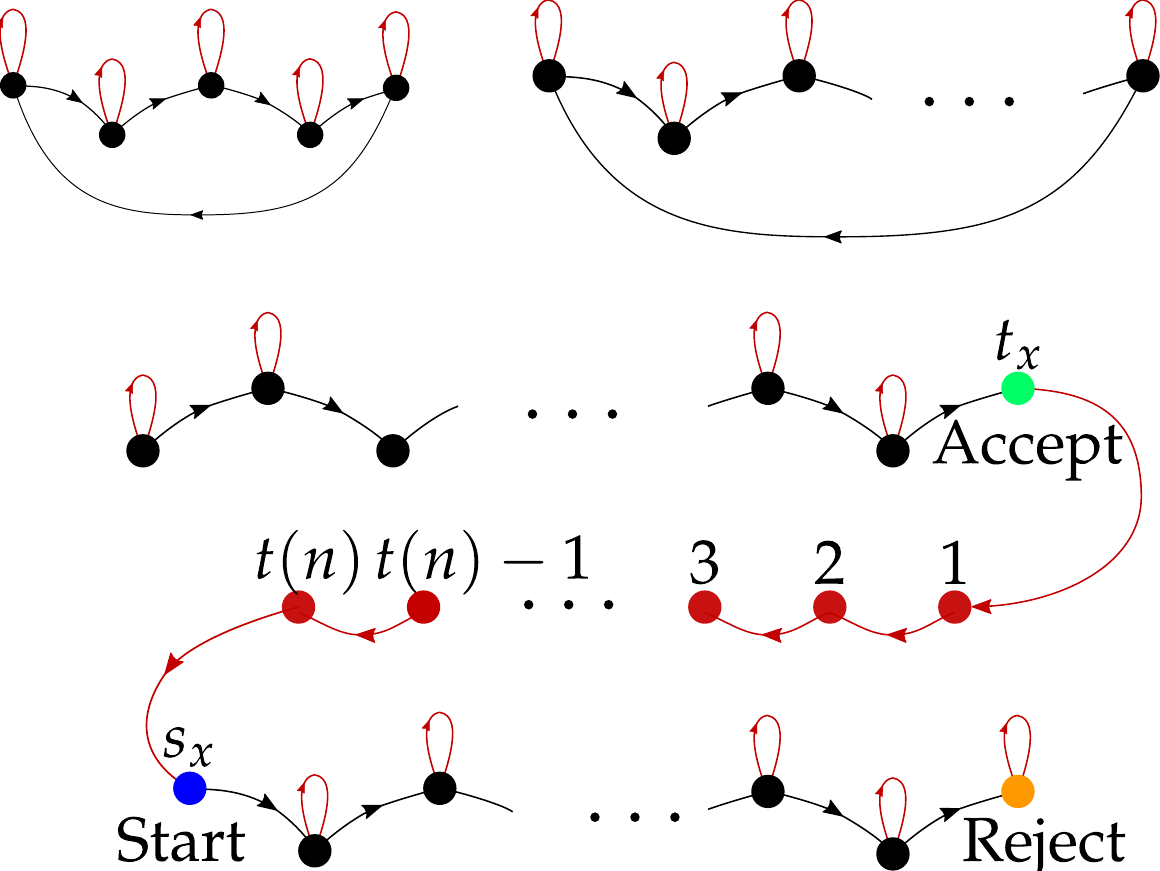}
 \caption{NO case, modified graph}\label{fig_pspacegraph_mno}
\end{subfigure}
 \caption{\raggedright Schematic of the original and modified graphs for both YES and NO cases.
 The original graph in both YES and NO cases consists of vertices with in-degree and out-degree at most 1, due to the fact that the Turing machine is reversible.
 The start vertex $s_x$ is marked in blue, the accept vertex $t_x$ in green, and the reject vertex in orange.
 The modified graphs have self-loops on all vertices except the start and the accept vertices.
 They have additional vertices $1,2,\ldots t(n)$ without self-loops.
 All modifications are in maroon.} \label{fig_pspacegraph}
\end{figure*}

We now analyze this construction.
The proof relies on an explicit computation of the eigenvalues for the various subgraphs of the modified configuration graph.
In the NO case, the graph $G_x'$ has a path of vertices ending in the reject state (\cref{fig_pspacegraph_mno}).
This path contains the starting configuration $s_x$.
Let $\ell$ be the graph distance between $s_x$ and the reject state.
Since we have added the edges $t_x \to 1 \to \ldots t(n) \to s_x$, these vertices and the vertices leading to the accept state are also part of the path (the Turing machine does not explore these vertices in practice).
All vertices in this path except for $t_x$, $s_x$, and $i: i\in [t(n)]$ have self-loops on them.
As we show in \cref{lem_expsmallgapNOcase}, there is a zero eigenvalue in the NO case, with a spectral gap above the zero eigenvalue.
The spectral gap is lower bounded by $\Omega(1/\ell_{\max}^2) = \Omega(2^{-\poly})$, where $\ell_{\max}$ is the number of vertices in the longest subgraph.

In the YES case, the subgraph containing the starting vertex is a cycle, with self-loops on all vertices except for $t_x$, $s_x$, and the intermediate vertices $i$. 
In each case, the eigenvalues for any subgraph are given by $2-2\cos \frac{(2k-1)\pi}{2 \ell + 1} = 4 \sin^2 \left( \frac{(2k-1)\pi}{4 \ell + 2}\right)$, $k\in [\ell]$ \cite{Fefferman2016c}, where $\ell$ is the number of vertices in the subgraph.
The smallest eigenvalue is therefore given by the longest subgraph and this eigenvalue is nondegenerate if no two subgraphs have the same number of vertices.
This is why we have added the sequence of edges $t_x \to 1 \to \ldots t(n)$.
The role played by these vertices is to elongate the length of the subgraph containing the start and accept configurations by $t(n)$.
This ensures that no other subgraph has a length equal to the longest subgraph (since $t(n)$ is the upper bound on the total number of vertices in the graph before elongation).
Therefore, the smallest two eigenvalues are given by $4 \sin^2 \left( \frac{(2k-1)\pi}{4 \ell + 2}\right)$, which are separated by $\Theta(t(n)^{-2}) = \Theta(2^{-\poly})$.

To summarize, in the YES case we have $E_1 \geq 2^{-\poly}$ and $E_2 - E_1 \geq 2^{-\poly}$.
In the NO case, we have $E_1 = 0$ and $E_2 \geq 2^{-\poly}$.
Therefore, we have a promise gap of $2^{-\poly}$ and spectral gap $2^{-\poly}$ in both the YES and NO instances.
Furthermore, the matrix ${A_x^\dag}'A_x'$ has entries of magnitude at most 2, and is 3-sparse because of the bounded degree of the configuration graph.
Since the minimum eigenvalue is small in the NO case and large in the YES case, we have a reduction to co-\textsc{$(1/\exp,1/\exp)$-SparseHamiltonian}.
Due to the fact that $\PSPACE$ is closed under complement, we get $\PSPACE$-hardness of \textsc{$(1/\exp,1/\exp)$-SparseHamiltonian}.
\end{proof}

\begin{lemma}\label{lem_sparseHam_in_pEGQMA}
\textsc{$(1/\exp,1/\exp)$-SparseHamiltonian} $\in \pEGQMA$.
\end{lemma}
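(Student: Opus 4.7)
The plan is to adapt the precise one-bit phase-estimation protocol of Fefferman and Lin \cite{Fefferman2018}, originally used to show \textsc{Precise-$k$-LocalHamiltonian} $\in \pQMA$, from local to sparse Hamiltonians, while tracking how the spectral gap of $H$ induces a spectral gap of the accept operator $Q$. Concretely, the verifier will receive a polynomial-size witness $\ket{\psi}$ (intended to be the ground state of $H$), prepare an ancilla qubit in $\ket{+}$, apply a controlled approximation $\tilde U$ of $U = e^{-iHt}$ for a suitably chosen $t = \Theta(1/\poly)$, measure the ancilla in the Hadamard basis, and accept if the outcome is $\ket{0}$. Standard efficient sparse-Hamiltonian simulation produces $\tilde U$ with $\norm{\tilde U - U} \leq \epsilon$ at circuit cost $\poly(n, t, \log(1/\epsilon))$; since $t = \poly$, $H$ has at most $\poly$ nonzero entries per row, and $\norm{H} \leq \poly$, the protocol remains polynomial-size even when we insist on $\epsilon = 2^{-cn}$ for a sufficiently large constant $c$.

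For an eigenstate $\ket{E_i}$ of $H$, the accept probability of this circuit equals (up to $O(\epsilon)$) $f(E_i) := \tfrac{1}{2} + \tfrac{1}{2}\cos(E_i t)$. Choosing $t$ so that $E_i t$ lies in a strictly monotonically decreasing, roughly linear segment of $\cos$ for every eigenvalue of $H$ (possible because $\norm{H} \leq \poly$), the ideal accept operator $Q$ is diagonal in the $H$-eigenbasis with eigenvalues $f(E_1) > f(E_2) > \cdots$ in reverse order from the $E_k$'s. Moreover, in this regime $|f(E_j) - f(E_i)| \geq \Omega(t)\cdot|E_j - E_i|$, so an inverse-exponential separation in $E_1$ between YES and NO instances translates to an inverse-exponential separation in $\lambda_1(Q)$, and an inverse-exponential gap $E_2 - E_1$ translates to an inverse-exponential spectral gap $\lambda_1(Q) - \lambda_2(Q)$ in both cases --- exactly the promises required for $\pEGQMA$.

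The main technical obstacle to be handled is controlling how the simulation error propagates into the spectrum of the realized accept operator $\tilde Q$. Since $\norm{\tilde Q - Q} = O(\epsilon)$, Weyl's inequality guarantees that every eigenvalue of $\tilde Q$ lies within $O(\epsilon)$ of the corresponding $f(E_k)$. Taking $c$ large enough that $\epsilon$ is negligible compared to the $\Omega(1/\exp)$ promise and spectral gaps above closes this point. The only remaining verification is that the verifier's circuit is uniformly generated from the succinct description of $H$, which is immediate because that same succinct description already underwrites the uniformity of sparse-Hamiltonian simulation. Combining this with \cref{lem_sparseHamPspaceHard} then completes the proof of \cref{thm_pegqma_pspace}.
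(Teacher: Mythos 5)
Your proposal follows essentially the same route as the paper: one-bit phase estimation on the prover's witness, implemented to inverse-exponential precision via sparse-Hamiltonian simulation (cost $\poly(n,\log(1/\epsilon))$), with the spectral gap of $H$ tracked through to a spectral gap of the accept operator, and a perturbative argument to absorb the $O(\epsilon)$ simulation error. The skeleton and all key ingredients match.

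There is, however, a quantitative gap in the step ``Choosing $t$ so that $E_i t$ lies in a strictly monotonically decreasing, roughly linear segment of $\cos$\ldots so $|f(E_j)-f(E_i)|\geq\Omega(t)\cdot|E_j-E_i|$.'' This linear rate is not achievable in general. Since $\norm{H}$ can be as large as $dk=\poly(n)$, the only way to keep every $E_it$ in a range where $\cos$ is monotone is to take $t=O(1/\poly)$, which forces $E_1 t$ close to $0$ whenever $E_1$ itself is small (and in a YES instance $E_1\leq a$ can be arbitrarily close to $0$). Near $0$, $\cos$ has vanishing first derivative, so the map $E\mapsto\tfrac12(1+\cos(Et))$ is locally \emph{quadratic}, not linear: a gap $E_2-E_1$ translates into an accept-probability gap of order $t^2(E_2-E_1)^2$, not $t(E_2-E_1)$. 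One cannot fix this by shifting $H$ to push $E_1$ away from $0$ either: to move all eigenvalues into a fixed region where $\sin(E_it)=\Omega(1)$ you would need $t\,E_{\min}=\Omega(1)$ and $t\,E_{\max}\leq\pi/2$ simultaneously, which fails when $E_{\max}-E_{\min}=\poly(n)$. The paper's \cref{lem_gappedham_in_gappedqma} establishes exactly the quadratic bound
\begin{align}
\cos(E_1 t)-\cos(E_2 t)\ \geq\ \Omega\!\left(t^2\,(E_2-E_1)^2\right),
\end{align}
via a careful Taylor expansion that keeps the nonnegative first-order term but does not rely on it being bounded away from $0$. Your conclusion survives because $(1/\exp)^2$ is still $\Omega(1/\exp)$, and the same holds for the promise gap $(b-a)^2/\poly$, but the intermediate claim should be corrected: replace the linear rate with the quadratic one and adjust the choice of $\epsilon$ accordingly (e.g.\ $\epsilon=o(t^2\Delta^2)$, as the paper does). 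With that fix, and noting the standard observation that any $H$ with entries bounded by $k$ and sparsity $d$ satisfies $\norm{H}\leq dk$ (Gershgorin), the argument is complete.
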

The proof of this is mostly the same as the proof of containment of \textsc{($1/\exp,1/\exp$)-LocalHamiltonian} in $\pEGQMA$ and is also given in \cref{sec_ingqma}.
The only difference is that we have a sparse Hamiltonian instead of a local Hamiltonian.
This distinction turns out not to matter, however, because of quantum algorithms for Hamiltonian evolution that work well with sparse Hamiltonians \cite{Berry2014}.

\section{Other related classes} \label{sec_implications}
In this section, we discuss implications of our proof techniques for other complexity classes.
The first concerns a technique for amplifying the promise gap in $\QMA$ and related classes, called in-place amplification, due to Marriott and Watrous \cite{Marriott2005}.
The second is about the complexity of related classes when the spectral gap promise only applies to one kind of instance (YES instances, for example).
We also complete a discussion of the results in \cref{tab_complexitylh} by characterizing the complexity classes $\PGQCMA$, $\EGQCMA$, and $\pEGQCMA$.

\subsection{Amplification for \postQMA}
We first define the class $\postQMA$:
\begin{defn}[$\postQMA$]
$\postQMA[c,s]$ is the class of promise problems $A=(A_\mathrm{yes},A_\mathrm{no})$ that can be decided in the following way:
Apply a uniformly generated quantum circuit $U$ of size $\poly(n)$ on a state $\ket{x}$ encoding the input, together with a proof state of size $w(n)$ supplied by an arbitrarily powerful prover.
Postselect the first $l=\poly(n)$ qubits at the output onto the $\ket{0}^l$ state, and measure the first qubit of the remaining register at the output, called the decision qubit ($o$).
The postselection probability is $\Omega(2^{-f(n)})$ for a polynomial $f(n)$. \\
\begin{tabular}{l l}
 If $x \in A_\mathrm{yes}$: & $\exists \ \ket{\psi}$ such that $\Pr(o = 1) \geq c$ \\
 If $x \in A_\mathrm{no}$: & $\forall \ \ket{\psi}$, $\Pr(o = 1) \leq s$.
\end{tabular}
\end{defn}
Morimae and Nishimura \cite{Morimae2017a} defined this class and showed that $\postQMA := \postQMA[\frac{1}{3},\frac{2}{3}] = \pQMA = \PSPACE$.
This result is similar to the result $\postBQP=\pBQP (=\PP)$.
They raised the question of whether one can do a Marriott-Watrous type in-place amplification for this class, which, for instance, means boosting the parameters $c$ and $s$ to be $c = 1-2^{-\poly}$, $s = 2^{-\poly}$ without changing the size of the witness.
If one is allowed to change the witness size, one can simply ask for polynomially many copies of the witness and run the verification in parallel to get the required parameters.
The benefit of in-place amplification is that it allows for good completeness and soundness parameters without blowing up the witness size, which turns out to be useful in the proof of $\QMA \subseteq \PP$.
In-place amplification for $\postQMA$ would also be useful to show $\IP=\PSPACE$ \cite{Aharonov2017a,Green2019}.
Here we give a negative result for a sufficiently strong in-place amplification for $\postQMA$.

\begin{lemma}[Upper bound for in-place amplified $\postQMA$] \label{lem_postqma_amp}
If $f(n) = O(w(n))$, then $\postQMA[1 - 2^{-t(n)},2^{-u(n)}] \subseteq \PP$ for $u(n) > w(n) + 1$ and for any polynomial $t(n)>1$.
\end{lemma}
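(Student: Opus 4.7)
The plan is to prove $\postQMA[1-2^{-t},2^{-u}] \subseteq \PP$ by adapting the Marriott--Watrous trace argument that establishes $\QMA \subseteq \PP$, combined with the identity $\PP=\postBQP$ which permits postselection on events of inverse-exponential probability. To the verifier circuit $U$ I would associate two PSD operators on the $w$-qubit witness register: the acceptance-and-postselect operator $A := \bra{0^m}U^\dagger \Pi_{\mathrm{post}}\Pi_{\mathrm{acc}} U\ket{0^m}$ and the postselect operator $B := \bra{0^m}U^\dagger \Pi_{\mathrm{post}} U\ket{0^m}$, with $0 \preceq A \preceq B \preceq I$ and with each matrix entry polynomial-time computable from $U$. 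The maximum postselected acceptance probability over witnesses equals $\lambda_1(G)$ where $G := B^{-1/2}AB^{-1/2}$, and the promise translates to $\lambda_1(G) \geq 1-2^{-t}$ in YES instances versus $\lambda_1(G) \leq 2^{-u}$ in NO instances.

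Next I would note that because $G$ is PSD on a $2^w$-dimensional space, its trace already separates the two cases under the hypothesis $u(n) > w(n)+1$: in YES, $\Tr(G) \geq \lambda_1(G) \geq 1-2^{-t}$, whereas in NO, $\Tr(G) \leq 2^w \lambda_1(G) \leq 2^{w-u} < 1/2$. The decision problem therefore reduces to deciding whether $\Tr(B^{-1}A) > 1/2$ or $\Tr(B^{-1}A) < 1/2$.

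Then I would exhibit the $\postBQP$ algorithm. Prepare the witness in the maximally mixed state $I/2^w$, apply the verifier $U$, and postselect on the postselection register being $\ket{0}^l$; the unconditional postselection probability is $\Tr(B)/2^w \geq 2^{-(f+w)} = 2^{-\poly(n)}$, which lies within the regime $\postBQP$ can handle precisely because of the hypothesis $f(n) = O(w(n))$. Combining this maximally-mixed-input simulation with an iterated postselection that effectively implements the $B^{-1/2}$ renormalization on the witness register yields a conditional acceptance probability whose YES/NO separation tracks that of $\Tr(B^{-1}A)$. Equivalently, the trace $\Tr(B^{-1}A)$ can be expressed as a Feynman sum-over-paths quantity involving the entries of the verifier's gates, and $\PP = \postBQP$ can decide the threshold $\geq 1/2$ versus $<1/2$ by the standard sum-over-paths argument used for $\QMA \subseteq \PP$.

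\textbf{Main obstacle.} The hardest step is computing -- or faithfully approximating -- $\Tr(B^{-1}A)$ in polynomial time, since $B^{-1}$ has no directly polynomial-time-computable entries and its Neumann expansion $B^{-1} = \sum_{k\geq 0}(I-B)^k$ requires exponentially many terms when the smallest eigenvalue of $B$ is of order $2^{-f}$ with $f = \Theta(w)$. The resolution must leverage the postselection primitive of $\postBQP$: postselecting on the postselection register effectively reweights the witness register by a factor proportional to $B$, and iterating the postselection inverts this weighting, so that the resulting conditional expectation is an affine function of $\Tr(B^{-1}A)$. The hypothesis $f = O(w)$ is essential here, since it guarantees the combined postselection probability across the iterations remains $\Omega(2^{-\poly})$, keeping the whole simulation inside $\postBQP = \PP$; any weakening of this hypothesis would push the simulation out of $\postBQP$ and into a strictly larger class, consistent with $\postQMA$ itself equaling $\PSPACE$.
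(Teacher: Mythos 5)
Your trace observation on $G=B^{-1/2}AB^{-1/2}$ is correct as far as it goes: $\Tr(G)\geq\lambda_1(G)\geq 1-2^{-t}$ in the YES case, while $\Tr(G)\leq 2^w\lambda_1(G)\leq 2^{w-u}<1/2$ in the NO case when $u>w+1$. But the step you flag as the ``main obstacle'' is a genuine gap that the proposal does not close, and I do not believe it can be closed along the lines you sketch. The entries of $B^{-1/2}$ are not polynomial-time computable from the circuit description, so $\Tr(B^{-1}A)$ has no Feynman sum-over-paths representation of the kind that underlies $\QMA\subseteq\PP$; the Neumann series is useless when $B$ has eigenvalues of order $2^{-f}$; and the ``iterated postselection that effectively implements the $B^{-1/2}$ renormalization'' is not a construction that exists. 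Postselection conjugates a state by \emph{subnormalized} Kraus operators (weights $\preceq \mathds{1}$), which \emph{compounds} the $B$-type attenuation you are trying to undo -- there is no physically realizable channel that applies the inverse of a POVM element, and scaling $B^{-1/2}$ down to make it subnormalized still requires knowing its entries. Your own proposed $\postBQP$ run (maximally mixed input followed by postselection on $\ket{0}^l$) natively computes $\Tr(A)/\Tr(B)$, and nothing in the proposal converts that into a handle on $\Tr(B^{-1}A)$.

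The paper's proof never forms $G$, and that is exactly why it succeeds. It feeds the maximally mixed state $\mathds{1}/2^w$ into the \emph{unmodified} postselected verifier, so the quantity that $\postBQP=\PP$ produces for free is the raw conditional acceptance probability $\Tr(A)/\Tr(B)$. The only extra observation needed is that the accepting witness has overlap $2^{-w}$ with the maximally mixed state, so this probability is at least $2^{-w}(1-2^{-t})$ in the YES case, while soundness gives at most $2^{-u}$ in the NO case for any input state. These separate precisely when $2^{-w}(1-2^{-t})>2^{-u}$, i.e.\ when $u>w+1$ and $t>1$. The detour through $B^{-1/2}$ to manufacture an $\Omega(1)$ trace gap is what creates the intractability; the $2^{-w}$-scale gap in the raw conditional probability is already well within $\PP$'s reach, and that is the content of the lemma.
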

\begin{proof}
Consider a $\postQMA[1 - 2^{-t(n)},2^{-u(n)}]$ language.
Replace the witness state in the amplified protocol by a maximally mixed state $\frac{\mathds{1}}{2^w}$.
Now, since the overlap of any witness state with the maximally mixed state is $2^{-w}$, we have that the postselection success probability is at least $\Omega(2^{-f(n)-w(n)})$.
Further, in the YES case, the probability of accepting the string $x$ (conditioned on success) is
\begin{align}
 \Pr(o =1) \geq 2^{-w(n)} \times (1-2^{-t(n)}).
\end{align}
In the NO case, we have that no matter what state is in the witness register, the accept probability is 
\begin{align}
 \Pr(o =1) \leq 2^{-u(n)}.
\end{align}
In $\pBQP$ = \PP, we can distinguish between these two cases if $2^{-w} -2^{-t-w} > 2^{-u}$, i.e.\ if $1 - 2^{-t} > 2^{w-u}$, for which it suffices to have $u(n) > w(n)+1$ and $t>1$.
\end{proof}

This result implies that the completeness-soundness gap for $\postQMA$ cannot be boosted beyond a point without incurring a blowup in the size of the witness or by reducing the success probability of postselection.

\subsection{Asymmetric promises on spectral gap and uniqueness}
Motivated by a possible connection to the study of unique witnesses for quantum complexity classes, we consider the complexity class $\GQMA[c,s,g_1,0]$.
Here, there is no promise on the spectral gap for NO instances.
In the YES case, we have $\lambda_1(Q) \geq c$ and $\lambda_2 \leq \lambda_1-g_1 \leq 1-g_1$.
If we choose the spectral gap $g_1$ to be larger than $1-s$, we see that $\lambda_2 \leq s$, ensuring that in the YES case, there is exactly one accepting witness\footnote{In the sense that any witness orthogonal to the accepting witness rejects with probability at least $1-s$.}.
The existence of one accepting witness is exactly the promise that defines the class $\UQMA$:
\begin{defn}[Unique $\QMA$ \cite{Aharonov2008}]
$\UQMA[c,s]$ is the class of promise problems $A=(A_\mathrm{yes},A_\mathrm{no})$ such that for every instance $x$, there exists a polynomial-size verifier circuit $U_x$ acting on $m=\poly(n)$ qubits and an input quantum proof on $w=\poly(n)$ qubits and the associated accept operator $Q$ has properties\\
\begin{tabularx}{\linewidth}{l X}
If $x \in A_\mathrm{yes}$: & $\lambda_1(Q) \geq c$ and $\lambda_2(Q) \leq s$ \\
If $x \in A_\mathrm{no}$: & $\lambda_1(Q) \leq s$.
\end{tabularx}
\end{defn}
\begin{defn}
$\UQMA := \cup_{c-s \geq 1/\poly} \UQMA[c,s]$.
\end{defn}

The earlier statement can be rephrased as ``an instance of $\GQMA[c,s,1-s,0]$ is a $\UQMA[c,s]$ instance''.
In the reverse direction, we can see that a $\UQMA[c,s]$ instance necessarily has a spectral gap $\lambda_1 - \lambda_2 \geq c-s$, and therefore is an instance of $\GQMA[c,s,c-s,0]$.
This hints at, but does not prove, an equivalence between the promise of uniqueness and that of an asymmetric spectral gap of $\Omega(1/\poly)$.
Aharonov et al.\ \cite{Aharonov2008} proved a stronger result by showing that the class $\UQMA$ is equivalent to the class $\PGQMA$ under randomized reductions (defined below), where $\PGQMA$ is the class with spectral gaps for \emph{both} the YES and the NO cases.

In the precise regime, we show the following results for the asymmetric variants of $\pPGQMA$ and $\pEGQMA$.
\begin{thm}\label{thm_asym_symgaps}\raggedright
$\pPGQMA = \cup_{\substack{c-s\geq 1/\exp,\\ g_1 \geq 1/\poly}} \GQMA[c,s,g_1,0]$.\\
$\pEGQMA = \cup_{\substack{c-s\geq 1/\exp,\\ g_1 \geq 1/\exp}} \GQMA[c,s,g_1,0]$.
\end{thm}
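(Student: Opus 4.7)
My plan is to prove both equalities by sandwiching each asymmetric class between a tightly characterized ``lower'' and ``upper'' complexity class, with the $\pPGQMA$ half being where essentially all the work happens. In each identity, the containment of the symmetric class in the asymmetric one is immediate: dropping the NO-case gap promise only weakens the verifier's constraints, i.e., $\GQMA[c,s,g_1,g_2] \subseteq \GQMA[c,s,g_1,0]$ for any $g_2 > 0$, and taking unions gives the inclusion.

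For the $\pEGQMA$ half, the reverse inclusion is also immediate: dropping the YES gap promise further weakens the verifier's constraints, so $\cup_{c-s,\,g_1 \geq 1/\exp} \GQMA[c,s,g_1,0] \subseteq \pQMA$, and by \cref{thm_pegqma_pspace} we have $\pQMA = \pEGQMA$.

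The substantive content is the $\pPGQMA$ half: we must give a $\PP$ algorithm for an instance with only a YES-case gap $g_1 \geq 1/\poly$ and completeness-soundness gap $c-s \geq 1/\exp$ (using $\pPGQMA = \PP$ from \cref{thm_ppgqma_pp}). The power-method argument in the proof of \cref{thm_ppgqma_inpp} does not directly apply because its NO-case bound on $\Tr(Q^q)$ uses the NO-case spectral gap, which is absent here. My plan is to replace $\Tr(Q^q)$ by the Rayleigh-type quotient
\begin{equation}
r_q := \Tr(Q^q)/\Tr(Q^{q-1}),
\end{equation}
and to decide whether $r_q > (c+s)/2$, equivalently whether $\Tr(Q^q) - \tfrac{c+s}{2}\Tr(Q^{q-1}) > 0$. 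Since $r_q$ is a weighted arithmetic mean of the eigenvalues $\{\lambda_i\}$ of $Q$ with nonnegative weights $\lambda_i^{q-1}$, we have $r_q \leq \lambda_1$ \emph{unconditionally}, which in the NO case immediately gives $r_q \leq s$ using no NO-gap promise. In the YES case, writing $r_q = \lambda_1(1 + \sum_{i\geq 2}(\lambda_i/\lambda_1)^q)/(1 + \sum_{i\geq 2}(\lambda_i/\lambda_1)^{q-1})$ and bounding $\lambda_i/\lambda_1 \leq 1 - g_1/\lambda_1 \leq 1 - g_1$ for $i \geq 2$, for $q = O((w + \log(1/(c-s)))/g_1) = \poly(n)$ I can make the denominator correction at most $(c-s)/4$, so that $r_q \geq \lambda_1 - (c-s)/4 \geq (3c+s)/4$. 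This yields a separation of $\Omega(c-s) = \Omega(1/\exp)$ between the YES and NO values of $r_q$.

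The main obstacle is verifying that the decision ``is $\Tr(Q^q) > \alpha \Tr(Q^{q-1})$?'' actually lies in $\PP$. I would handle this via the Feynman sum-over-paths expansion used in the proof of \cref{lem_cooling_othermatrices}: each entry of $Q$ is a sum of $2^{\poly(n)}$ efficiently-computable complex amplitudes, so both $\Tr(Q^q)$ and $\Tr(Q^{q-1})$ are themselves sums of $2^{\poly(n)}$ such terms. After clearing denominators, the decision reduces to determining the sign of a sum of $2^{\poly(n)}$ signed polynomial-time-computable real numbers, which is exactly a $\PP$ computation.
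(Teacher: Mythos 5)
Your route for the $\pPGQMA$ equality is genuinely different from the paper's. The paper first runs Ambainis's $\P^{\QMA{[\log]}}\subseteq\PP$ subroutine to test whether the NO-case spectral gap of $Q$ (which is not promised) is at least $g_1/2$, and only then invokes the power method on $\Tr(Q^q)$; instances failing the gap test are rejected, which is sound because YES instances are promised to pass it. Your trace ratio $r_q=\Tr(Q^q)/\Tr(Q^{q-1})$ sidesteps the gap test entirely: since $r_q$ is a weighted mean of the eigenvalues of the positive semidefinite operator $Q$ with nonnegative weights, $r_q\leq\lambda_1\leq s$ holds in the NO case with no gap assumption at all, while the YES-case gap pushes $r_q$ to within $(c-s)/4$ of $\lambda_1\geq c$. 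This is a cleaner, more self-contained argument that avoids any reliance on Ambainis's procedure or on $\P^{\QMA{[\log]}}\subseteq\PP$. Your treatment of the trivial inclusion and of the $\pEGQMA$ half matches the paper's.

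There is one real gap in your final $\PP$ step. Determining the sign of $\Tr(Q^q)-\tfrac{c+s}{2}\Tr(Q^{q-1})$ is not automatically a $\PP$ computation: in the path-sum argument of \cref{lem_cooling_othermatrices}, each amplitude is computed only to a chosen polynomial number of bits of precision, so a $\PP$ machine can separate the two cases only if the quantity in question is promised to be $\geq 2^{-\poly}$ in one case and $\leq 0$ in the other. You established a $2^{-\poly}$ separation for $r_q$ itself, but what the $\PP$ machine actually evaluates is $\Tr(Q^{q-1})\bigl(r_q-\tfrac{c+s}{2}\bigr)$, and the prefactor $\Tr(Q^{q-1})$ could a priori be vanishingly small. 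The fix is easy but must be stated: in a YES instance, $\lambda_1-\lambda_2\geq g_1$ and $\lambda_2\geq 0$ force $\lambda_1\geq g_1\geq 1/\poly$, so $\Tr(Q^{q-1})\geq\lambda_1^{q-1}\geq g_1^{q-1}=2^{-O(\poly)}$ for your polynomially bounded $q$; combined with $r_q-\tfrac{c+s}{2}\geq(c-s)/4$ this yields the required $2^{-\poly}$ magnitude bound. One then applies \cref{lem_cooling_othermatrices} to the single path-sum $\Tr(Q^q)-\tfrac{c+s}{2}\Tr(Q^{q-1})$ with thresholds $a'=0$ and $b'=\Omega\bigl(g_1^{q-1}(c-s)\bigr)$, and the proof closes.
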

The proofs are given in \cref{sec_asymcomplexity} and hinge on the problem of computing ground-state energies when there is a spectral gap only for the YES case, i.e.\ \textsc{LocalHamiltonian[$a,b,g_1,0$]}.
Since the problem with an asymmetric gap can only be more complex than the symmetric case, the nontrivial part of this lemma is to show that this problem has the same $\PP$ upper bound as the symmetric case.
This is not straightforward since the power method we described before does not necessarily work for the NO case, since there is no spectral gap.
We work around this by making use of Ambainis's technique \cite{Ambainis2014} of identifying spectral gaps, which is possible in $\PP$ \cite{Gharibian2019b}.

\subsection{Complexity of $\PGQCMA$, $\EGQCMA$, and $\pEGQCMA$} \label{sec_gqcma}
In this subsection we show that the classes $\PGQCMA$ and $\EGQCMA$ are both equivalent to $\QCMA$ under randomized reductions, which we now define.

We say a problem $A$ is random reducible to problem $X$ if every instance $a$ of $A$ can be mapped to a random set of polynomially instances $x_i$ of $X$, such that
\begin{tabularx}{\linewidth}{l X}
If $a \in A_\mathrm{yes}$: & $\Pr_i(x_i \in X_\mathrm{yes}) \geq 1/\poly$ \\
If $a \in A_\mathrm{no}$: & $\Pr_i(x_i \in X_\mathrm{yes}) = 0$.
\end{tabularx}
A class $\mathsf{Y}$ is random reducible to another class $\mathsf{Z}$ if every problem in $\mathsf{Y}$ is random reducible to some problem in $\mathsf{Z}$ (and vice versa), and is denoted $=_R$.

To show $\PGQCMA=_R \QCMA$ and $\EGQCMA =_R \QCMA$, we make use of the class $\UQCMA$ (Unique $\QCMA$), which has been defined in Ref.~\cite{Aharonov2008}, and was shown to be equal to $\QCMA$ under randomized reductions.

\begin{defn}[{$\UQCMA[c,s]$} \cite{Aharonov2008}]
$\UQMA[c,s]$ is the class of promise problems $A=(A_\mathrm{yes},A_\mathrm{no})$ such that for every instance $x$, there exists a polynomial-size verifier circuit $U_x$ acting on $m=\poly(n)$ qubits and an input classical proof on $w=\poly(n)$ qubits, whose associated accept operator $Q$ has properties\\
\begin{tabularx}{\linewidth}{l X}
If $x \in A_\mathrm{yes}$: & $\lambda_1(Q) \geq c$ and $\lambda_2(Q) \leq s$ \\
If $x \in A_\mathrm{no}$: & $\lambda_1(Q) \leq s$.
\end{tabularx}
\end{defn}
\begin{defn}
$\UQCMA := \cup_{c-s \geq 1/\poly} \UQCMA[c,s]$.
\end{defn}

Aharonov et al.~\cite{Aharonov2008} showed that $\UQCMA =_R \QCMA$ using generalizations of techniques in Ref.~\cite{Valiant1986} to complexity classes with randomness.
In order to show $\PGQCMA=_R \QCMA$ and $\EGQCMA =_R \QCMA$, we show
\begin{lemma}
$\PGQCMA =_R \UQCMA$. \label{lem_uqcma}
\end{lemma}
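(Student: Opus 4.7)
The plan is to prove $\PGQCMA =_R \UQCMA$ by establishing both directions of randomized reduction separately.

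For the direction $\PGQCMA \leq_R \UQCMA$, there is no new work needed. Since a $\PGQCMA$ instance is automatically a $\QCMA$ instance (dropping the spectral-gap promise only weakens the promise), we have $\PGQCMA \subseteq \QCMA$. Composing this with the previously-cited randomized reduction $\QCMA \leq_R \UQCMA$ of Aharonov et al.~\cite{Aharonov2008}, which applies Valiant--Vazirani style hashing to the classical witness register, yields $\PGQCMA \leq_R \QCMA \leq_R \UQCMA$.

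For the reverse direction $\UQCMA \leq_R \PGQCMA$, I would aim for a direct (in fact deterministic) reduction. The key observation is that the $\PGQCMA$ promise is almost for free from $\UQCMA$ in the YES case: from $\lambda_1 \geq c$ and $\lambda_2 \leq s$ one automatically gets $\lambda_1 - \lambda_2 \geq c-s \geq 1/\poly$. The only missing ingredient is a spectral gap in the NO case, where $\UQCMA$ promises nothing about $\lambda_2$. I would manufacture this gap by first amplifying the original $\UQCMA$ protocol via parallel repetition and majority voting on the classical witness so that $c \geq 1 - 1/\poly$ and $s \leq 1/\poly$, and then hard-coding a distinguished witness $z^\star = 0^w$ and having the new verifier, with some small probability $\epsilon = \Theta(1/\poly)$, accept iff the supplied classical witness equals $z^\star$ and otherwise run the amplified verifier. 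The accept probabilities then become $q_z = (1-\epsilon) p_z + \epsilon\,[z = z^\star]$, and one verifies directly that the resulting classical accept operator satisfies the $\PGQCMA$ promise in both the YES and NO cases.

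The main obstacle is calibrating $\epsilon$. It must simultaneously be large enough to dominate every NO-case accept probability (so that $z^\star$ becomes the unique top eigenvector with an $\Omega(1/\poly)$ gap below it) and small enough that in the YES case the added bonus does not lift $q_{z^\star}$ past the true accepting witness $z_1$. Pre-amplifying $\UQCMA$ to sub-polynomial $s$ opens a wide enough window of valid $\epsilon$ that a choice such as $\epsilon = 2/n$ works: in the NO case $q_{z^\star} \geq \epsilon$ while every other $q_z \leq (1-\epsilon)s$; and in both subcases of the YES case ($z^\star = z_1$ or $z^\star \neq z_1$) the witness $z_1$ remains on top with a constant gap above $q_{z^\star}$ and everything else. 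Combining the two reductions then yields $\PGQCMA =_R \UQCMA$.
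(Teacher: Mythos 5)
Your proof is correct but diverges from the paper's on both directions. For the direction $\PGQCMA \leq_R \UQCMA$, you pass through the trivial inclusion $\PGQCMA \subseteq \QCMA$ and compose with Aharonov et al.'s $\QCMA \leq_R \UQCMA$; the paper instead gives a direct randomized reduction that never re-invokes the Valiant--Vazirani hashing: since the $\PGQCMA$ promise already gives $\lambda_1 \geq c$ and $\lambda_1 - \lambda_2 \geq g_1$, the only unknown is where $\lambda_2$ sits relative to $s$, so the paper slices $[c,1]$ into $O(1/g_1) = \poly(n)$ windows of width $g_1/2$, sets $c_j = c + (j+1)g_1/2$ and $s_j = c + j g_1/2$, and queries $\UQCMA[c_j,s_j]$ for a uniformly random $j$; at least one $j$ straddles $\lambda_2$ in the YES case, and every query is a valid NO query in the NO case. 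Your route is more modular but re-derives uniqueness from scratch, whereas the paper's exploits that the input protocol already has a gapped top eigenvalue. For the direction $\UQCMA \subseteq \PGQCMA$, you amplify first (so that $s$ is a small inverse polynomial) and then graft an $\epsilon$-bonus onto a distinguished basis witness $z^\star = 0^w$; the paper avoids amplification by adding a fresh flag qubit to the witness register and accepting with a fixed tuned probability $s + (c-s)/\poly$ only when the flag is $1$ and the remaining witness qubits are $1^w$. Both are deterministic constructions that manufacture a NO-case spectral gap by planting one distinguished witness whose accept probability sits strictly between the existing top eigenvalue and the rest of the spectrum; your calibration is correct (you need $s < \epsilon/(1-\epsilon) < c - s$, which preliminary amplification makes easy to satisfy), while the paper's flag-qubit version sidesteps the calibration, and any worry about whether $\UQCMA$ amplifies cleanly, at the cost of one extra witness qubit.
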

Since $\PGQCMA \subseteq \EGQCMA \subseteq \QCMA$, the equivalence of $\EGQCMA$ with $\QCMA$ follows.

To show \cref{lem_uqcma}, we observe that the proof of $\PGQMA =_R \UQMA$ in Ref.~\cite{Aharonov2008} works for classical witnesses.
For completeness, we give a self-contained proof here.

\begin{proof}[Proof of \cref{lem_uqcma}]
First, we show the direction $\UQCMA \subseteq \PGQCMA$.
We observe that in a YES instance of $\UQCMA[c,s]$, $\lambda_1 \geq c$ and $\lambda_2 \leq s$. 
Thus, a YES instance already has a spectral gap of $g_1 \geq c-s$ and is a YES instance of $\PGQCMA$.
In the NO case, we modify the verifier's strategy so that it creates a spectral gap.
The verifier expects an additional qubit we call the ``flag qubit'' from the prover, which is measured in the beginning just like the other qubits of any $\QCMA$ proof.
The associated accept operator now has twice as many eigenvalues because it acts on a space with one larger qubit.

The verifier's protocol is as follows.
If the state of the flag qubit is $\ket{0}$, the verifier continues with the original protocol.
This gives the same eigenvalues for the accept operator as the original protocol.
If the state of the flag qubit is $\ket{1}$, the verifier accepts with probability $s + (c-s)/\poly$ if the state of the rest of the witness qubits is $\ket{1}^{\otimes w}$.
If the state of the rest of the witness register is anything else, the verifier rejects.
In the latter case (when the state of the flag qubit is $\ket{1}$), the accept operator has one eigenvalue at $s + (c-s)/\poly$ and $2^w - 1$ eigenvalues with eigenvalue 0, each case corresponding to some state in the witness.
The modified verifier is a $\PGQCMA$ instance with completeness $c$, soundness $s + (c-s)/\poly$ and spectral gaps $g_1 \geq c-s$ and $g_2 \geq (c-s)(1-1/\poly)$.
Therefore $\UQCMA \subseteq \PGQCMA$.

For the other direction, we give a randomized reduction $\PGQCMA \subseteq_R \UQCMA$.
Consider a YES instance of $\PGQCMA[c,s,g_1,g_2]$.
We know $\lambda_1 \geq c$ and $\lambda_2 \leq \lambda_1 - g_1$, but we do not know if $\lambda_2 \leq s$, as is required for the instance to be a $\UQCMA$ instance.
The idea in Ref.\ \cite{Aharonov2008} is to make a query to a $\UQCMA[c_j,s_j]$ oracle with completeness $c_j = c + (j+1)g_1/2$ and soundness $s_j = c + jg_1/2$, for $j$ chosen randomly from $ \{0,1,\ldots \lfloor{\frac{2}{(1-c)g_1}}\rfloor \}$.
In the NO case, all the queries are valid queries to a $\UQCMA$ oracle and return the correct answer (NO).
In the YES case, since the completeness and soundness in each query differ by $g_1/2$, there is at least one $j$ where $\lambda_1 \geq c_j$ and $\lambda_2 \leq s_j$\footnote{In the YES case, there could be some queries that are not valid $\UQCMA$ instances, and the oracle can answer arbitrarily for such ill-formed queries. This does not, however, hamper the proof, since at single valid query is enough to give a nonzero probability of saying YES.}.
Therefore, this is a randomized reduction to $\UQCMA$.
\end{proof}

Therefore, we obtain
\begin{corll} \label{lem_egqcma}
$\PGQCMA =_R \QCMA$. \\
$\EGQCMA =_R \QCMA$.
\end{corll}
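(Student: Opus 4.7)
The plan is to deduce both equivalences from Lemma~\ref{lem_uqcma} combined with the randomized equivalence $\UQCMA =_R \QCMA$ established by Aharonov et al.~\cite{Aharonov2008}, together with the containments $\PGQCMA \subseteq \EGQCMA \subseteq \QCMA$ that follow immediately from the definitions: a $1/\poly$ spectral gap is a strictly stronger promise than a $1/\exp$ gap, which in turn is stronger than no promise at all, so the class of admissible languages grows as the promise is weakened.

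For the first statement $\PGQCMA =_R \QCMA$, I would simply compose the two randomized equivalences. Lemma~\ref{lem_uqcma} supplies randomized reductions in both directions between $\PGQCMA$ and $\UQCMA$, and the Valiant--Vazirani-style isolation argument of Ref.~\cite{Aharonov2008} supplies the analogous reductions between $\UQCMA$ and $\QCMA$ (the containment $\UQCMA \subseteq \QCMA$ being trivial from the definitions). The one bookkeeping item to check is that randomized reductions compose cleanly: if each individual reduction produces polynomially many instances with $\Pr \geq 1/\poly$ on YES and $\Pr = 0$ on NO, then running the two in sequence still yields polynomially many instances, preserves the zero NO-probability exactly, and preserves a $1/\poly$ YES-probability because a product of two $1/\poly$ quantities is again $1/\poly$.

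For the second statement $\EGQCMA =_R \QCMA$, I would sandwich $\EGQCMA$ between $\PGQCMA$ and $\QCMA$. Since deterministic containments are in particular randomized reductions, combining $\PGQCMA \subseteq \EGQCMA \subseteq \QCMA$ with the first part of the corollary gives the chain $\QCMA =_R \PGQCMA \subseteq \EGQCMA \subseteq \QCMA$, which forces $\EGQCMA =_R \QCMA$. Since all the substantive content lives in Lemma~\ref{lem_uqcma} and in Ref.~\cite{Aharonov2008}, I do not expect any real obstacle here; the corollary is essentially a formal consequence, and the only thing worth verifying carefully is that the notion of random reducibility used in Lemma~\ref{lem_uqcma} matches that of Ref.~\cite{Aharonov2008}, which it does because both require $\Pr \geq 1/\poly$ on YES instances and $\Pr = 0$ on NO instances.
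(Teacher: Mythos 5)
Your proposal matches the paper's argument exactly: the corollary is obtained by composing Lemma~\ref{lem_uqcma} ($\PGQCMA =_R \UQCMA$) with the Aharonov et al.\ result $\UQCMA =_R \QCMA$, and then sandwiching $\EGQCMA$ via the containments $\PGQCMA \subseteq \EGQCMA \subseteq \QCMA$. Your explicit checks---that randomized reductions compose (zero NO-probability is preserved, a product of $1/\poly$ YES-probabilities remains $1/\poly$) and that the two sources use the same notion of $=_R$---are sound and spell out what the paper leaves implicit under ``Therefore, we obtain.''
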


Our final result concerns the class $\pEGQCMA$.
Just like we have $\pEGQMA=\pQMA$, we can show that exponentially small spectral gaps are no less complex in the case of classical witnesses.
We show
\begin{lemma} \label{lem_pegqcma}
$\pEGQCMA = \pQCMA$.
\end{lemma}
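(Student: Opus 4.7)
The plan is to establish $\pEGQCMA = \pQCMA$ by proving both inclusions. The containment $\pEGQCMA \subseteq \pQCMA$ is immediate: any $\pEGQCMA$ instance automatically satisfies the weaker $\pQCMA$ promise, since the spectral-gap condition is an additional requirement. The substantive direction is $\pQCMA \subseteq \pEGQCMA$, for which I plan a direct perturbative transformation of the verifier. Given an arbitrary $\pQCMA[c,s]$ protocol with classical witness of length $w$, I will construct a modified verifier whose accept operator (diagonal in the computational basis because the witness is classical) has its top two eigenvalues separated by an inverse-exponential amount, while keeping the completeness--soundness gap inverse-exponential as well.

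The construction uses a dyadicity observation. Fixing a universal gate set with dyadic amplitudes (e.g.\ Hadamard $+$ Toffoli), the original accept probability $\lambda_x$ on classical input $x$ is a dyadic rational $k_x/2^N$ for some integer $k_x$ and polynomial $N=\poly(n)$ determined by the circuit size. Hence any two distinct values satisfy $|\lambda_x - \lambda_{x'}| \geq 2^{-N}$. Let $M=\poly(n)$ satisfy $c-s \geq 2^{-M}$, and pick $T = \max(N,M,w)+10$. The new verifier $V'$ on witness $x$ flips a coin biased to $1$ with probability $2^{-T}$: on outcome $0$ it runs $V$ on $x$; on outcome $1$ it samples $w$ uniform random bits $R$ and accepts iff $n(R) < n(x)$, where $n(\cdot)$ denotes integer interpretation. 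Each sub-operation is a polynomial-size quantum subroutine, so $V'$ is a valid $\QCMA$-style verifier. The resulting accept probability is
\[
p(x) \;=\; (1-2^{-T})\,\lambda_x \;+\; 2^{-T}\cdot \frac{n(x)}{2^w}.
\]

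The analysis splits into the promise gap and the spectral gap. For the promise gap, $p^{\mathrm{YES}}_{\max} \geq c-2^{-T}$ and $p^{\mathrm{NO}}_{\max} \leq s+2^{-T}$, giving a new gap of at least $(c-s)-2^{-T+1} \geq 2^{-M-1}$, still inverse-exponential. For the spectral gap, consider distinct witnesses $x\neq x'$. If $\lambda_x = \lambda_{x'}$, the perturbation alone gives $|p(x)-p(x')| = 2^{-T}\,|n(x)-n(x')|/2^w \geq 2^{-T-w}$. If $\lambda_x \neq \lambda_{x'}$, the dyadic bound yields
\[
|p(x)-p(x')| \;\geq\; (1-2^{-T})\cdot 2^{-N} \;-\; 2^{-T} \;\geq\; 2^{-N-1}
\]
for $T \geq N+2$. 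Since for any sorted list the minimum pairwise gap equals the minimum consecutive gap, we conclude $\lambda_1(Q') - \lambda_2(Q') \geq \min(2^{-T-w},\,2^{-N-1}) = 2^{-\poly(n)}$ in both YES and NO cases, verifying the $\pEGQCMA$ promise.

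The main obstacle I foresee is purely bookkeeping: choosing $T$ large enough that the perturbation preserves the inverse-exponential promise gap and breaks same-$\lambda$ ties, yet small enough not to overwhelm the $2^{-N}$ dyadic separation between distinct $\lambda$-values (so the $\lambda$-induced ordering is preserved in the ``different-$\lambda$'' regime). All these are polynomial constraints and therefore jointly satisfiable. A minor technical point is that non-dyadic gate sets (e.g.\ Clifford $+$ $T$) do not give exactly dyadic accept probabilities; this is handled by first simulating the original circuit to inverse-exponential precision using the dyadic gate set, which only adjusts constants in the analysis.
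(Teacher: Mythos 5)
Your proof is correct and follows essentially the same strategy as the paper's: fix a dyadic gate set so the original accept probabilities are integer multiples of $2^{-N}$ (the paper cites Jordan et al.\ for this), then add a tiny witness-dependent perturbation to break degeneracies while preserving both the inverse-exponential promise gap and the ordering of distinct dyadic values. The only implementation difference is the form of the perturbation: the paper has the modified verifier first reject outright with probability $y_b/2^{\poly}$ and otherwise run the original circuit, giving a multiplicative perturbation $p_y = \lambda_y\bigl(1 - y_b/2^{\poly}\bigr)$, whereas you take a convex combination $p(x) = (1-2^{-T})\lambda_x + 2^{-T}\,n(x)/2^w$. Your additive form is marginally cleaner since it breaks ties even among witnesses with $\lambda_x = 0$, where a multiplicative perturbation leaves the degeneracy intact (this does not affect the paper's conclusion, since a degeneracy at the bottom of the spectrum does not disturb $\lambda_1 - \lambda_2$, but your version avoids the caveat). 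The bookkeeping you anticipate — choosing $T$ large enough to preserve the promise gap yet small enough not to scramble the $2^{-N}$ dyadic separation — is exactly what the paper handles with its choice $\poly \geq l(n) + w(n) + \log_2\frac{1}{c-s}$, and your parameter choices are consistent with it.
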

\begin{proof}
The direction $\pEGQCMA \subseteq \pQCMA$ is trivial.
For the other direction, we take a $\pQCMA[c,s]$ instance and give a $\pEGQCMA[c,s,g_1,g_2]$ instance with an exponentially small spectral gap.
This is done by modifying the verifier so that no two witnesses $y_i$ and $y_j$ are accepted with the same probability.
First, we choose the verifier's gate set so that the accept probability of any witness $y$ is given by $k_{x,y}/2^{l(n)}$, for $k_{x,y} \in [2^{l(n)}]$, where $l(n)$ is the size of the verifier's circuit \cite{Jordan2012}.
The modified verifier rejects the instance straightaway with probability $y_b/2^\poly$, where $y_b$ is a number in $[2^w-1]$ when interpreting the witness $y$ in binary and the polynomial is at least $l(n) + w(n) + \log_2(\frac{1}{c-s})$. 
If the verifier does not reject at this step, they run the original verification protocol.
The overall accept probability when given $y$ is given by $p_y = \frac{k_{x,y}}{2^w} \left(1-\frac{y_b}{2^\poly}\right)$.
Since the polynomial satisfies $\poly \geq l(n)+w(n)+\log_2(\frac{1}{c-s})$, the completeness and soundness are given by $c' \geq c - 2^{-w(n)}(c-s)$ and $s' = s$, which are still separated by $2^{-\poly}$.

We now claim that the resulting accept probabilities are distinct for distinct witnesses, and hence separated by an amount $\Omega(2^{-\poly})$.
This is easily seen for two distinct $y_i$ and $y_j$ such that $k_{x,y_i} = k_{x,y_j}$.
If $k_{x,y_i} \neq k_{x,y_j}$, then for $p_{y_i} = p_{y_j}$, we need
\begin{align}
k_{x,y_i} - k_{x,y_j} = \frac{2^w}{2^{l + w + \poly}}(y_{j_b}-y_{i_b}),
\end{align}
which cannot be satisfied by integers $y_{j_b}$ and $y_{i_b}$ in $[2^l]$.
\end{proof}

The same technique also works to give a more direct proof of $\EGQCMA=\QCMA$.

\appendix

\section{The Schrieffer-Wolff transformation} \label{sec_schwolff}
In this section, we give a brief introduction to the Schrieffer-Wolff transformation \cite{Schrieffer1966}, which is an important tool in some of our subsequent proofs.
We follow the exposition in Ref.~\cite{Bravyi2011}, specialized to our context.

In the context relevant for us, we usually have an ``unperturbed'' Hamiltonian $H_0$ and a ``perturbation'' $H_1$, together forming the full Hamiltonian $H = H_0 + H_1$.
The (possibly degenerate) ground-state subspace of $H_0$, denoted $\mathcal{S}_0$, has energy $\lambda_0$ and is separated from the rest of the spectrum by a gap $\Delta$.
We are interested in cases when the Hamiltonian $H_1$ has small strength relative to the gap $\Delta$, in the sense $\norm{H_1} =: \epsilon < \Delta/2$.
This ensures that all eigenvalues of $H_0$ are shifted by an amount smaller than $\Delta/2$ under the perturbation.
Therefore, the low-energy subspace of $H$, given by
\begin{align}
 \mathcal{S} = \left\{\ket{\psi} : \bra{\psi} H \ket{\psi} \in \left[\lambda_0 - \frac{\Delta}{2}, \lambda_0 + \frac{\Delta}{2}\right]\right\},
\end{align}
has the same dimension as that of $H_0$.
We denote the the projectors on to $\mathcal{S}_0$ and $\mathcal{S}$ by $P_0$ and $P$, respectively.
As long as $\epsilon < \Delta/2$, we have $\norm{P-P_0} < 1$, which captures the fact that the dimension of the two subspaces is the same.

Since the dimension of the two subspaces is the same, there exists a unitary $U$ that maps the subspace $\mathcal{S}_0$ to $\mathcal{S}$:
\begin{align}
UPU^\dag &= P_0, \ \mathrm{with \ } \\
U &= \sqrt{(2P_0 - \mathds{1})(2P -\mathds{1})}.
\end{align}
We are interested in the effective Hamiltonian in the subspace $\mathcal{S}_0$, given by
\begin{align}
 H_\mathrm{eff}= P_0 U(H_0 + H_1)U^\dag P_0.
\end{align}
The Schrieffer-Wolff transformation allows one to express the generator $V = \log(U)$, and consequently, $H_\mathrm{eff}$, as a convergent series in the perturbation $H_1$.
We first write $H_1$ as $H_1^d + H_1^o$, where $H_1^d$ is block-diagonal in the subspace $\mathcal{S}_0$ and $H_1^o$ is block-off-diagonal.
Let the eigenstates of $H_0$ be given by $\{\ket{i}\}$, with corresponding energies $\{E_i\}$.
We denote $\mathcal{I}_0 = \{i: E_i = \lambda_0 \}$, which is the set of indices corresponding to the ground-state space.
The first few terms of the Schrieffer-Wolff expansion are given by
\begin{align}
 &H_\mathrm{eff} = H_0 P_0 + P_0 H_1 P_0\ + \nonumber
 \\&\frac{1}{2} P_0 \sum_{i \in \mathcal{I}_0,j \notin \mathcal{I}_0} \left(\frac{\bra{i} H_1 \ket{j}}{E_i - E_j} \ketbra{i}{j} H_1 + \frac{\bra{j} H_1 \ket{i}}{E_i - E_j} H_1 \ketbra{j}{i} \right)P_0 \nonumber
 \\&+ O(\norm{H_1}^3).
\end{align}

In our work, we use the first-order expansion of the Schrieffer-Wolff series.
The series converges absolutely as long as $\norm{H_1} \leq \Delta/16$ \cite{Bravyi2011}.
We can upper bound the error caused by truncating the formal series to first order:
\begin{align}
& \norm{H_\mathrm{eff} - H_0P_0 - P_0 H_1 P_0} \leq O(1) \times \nonumber
\\& \Biggl\lVert P_0 \sum_{i \in \mathcal{I}_0,j \notin \mathcal{I}_0} \left(\frac{\bra{i} H_1 \ket{j}}{E_i - E_j} \ketbra{i}{j} H_1\ + \right.\nonumber \left. \frac{\bra{j} H_1 \ket{i}}{E_i - E_j} H_1 \ketbra{j}{i} \right)P_0 \Biggr\rVert  \\
&\leq O(1) \Biggl\lVert \sum_{i \in \mathcal{I}_0,j \notin \mathcal{I}_0, k \in \mathcal{I}_0}\frac{1}{E_i - E_j}  \left( \bra{i}H_1\ket{j} \bra{j}H_1\ket{k} \ketbra{i}{k}\ + \right. \nonumber
\\& \left. \bra{j} H_1 \ket{i} \bra{k}H_1\ket{j} \ketbra{k}{i}\right) \Biggr\rVert \\
&\leq O\left(\frac{1}{\Delta}\right) \norm{\sum_{i \in \mathcal{I}_0, k \in \mathcal{I}_0}  \left( \bra{i}H_1^2\ket{k} \ketbra{i}{k} + \bra{k} H_1^2 \ket{i} \ketbra{k}{i}\right)} \\
&= O\left(\frac{1}{\Delta}\right) \norm{2P_0 H_1^2 P_0} \\
&\leq O\left( \frac{\epsilon^2}{\Delta} \right),
\end{align}
where we have used $\abs{E_i-E_j} > \Delta$ for states ${i} \in \mathcal{I}_0, {j} \notin \mathcal{I}_0$.

\section{Modified clock constructions with spectral gaps} \label{sec_gapreds}
In this section, we present the small-penalty clock construction and use it to prove the main hardness results in this work.
We first illustrate the technique by proving the following lemma.
\begin{lemma}
\label{lem_ham_preciseegqma}
\textsc{$(1/\exp,1/\exp)$-LocalHamiltonian} is $\pEGQMA$-hard.
\end{lemma}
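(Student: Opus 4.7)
The plan is to adapt Kitaev's clock construction to a $\pEGQMA[c,s,g_1,g_2]$ instance (with $c-s, g_1, g_2 \geq 1/\exp$) in such a way that the spectral gap between the lowest two eigenvalues of the resulting Hamiltonian faithfully tracks the gap between the top two eigenvalues $\lambda_1(Q)-\lambda_2(Q)$ of the accept operator. Starting from the verifier circuit $U_x = U_T \cdots U_1$ of size $T = \poly(n)$, I would build $H = H_{\mathrm{in}} + H_{\mathrm{prop}} + H_{\mathrm{clock}} + \eta\,H_{\mathrm{out}}$, where the first three terms are the standard Kitaev penalty terms (whose kernel is the space of history states) and $H_{\mathrm{out}} = |0\rangle\langle 0|_{\mathrm{out}} \otimes |T\rangle\langle T|_{\mathrm{clock}}$ is the usual output penalty, but scaled by a small parameter $\eta$ which I will take to be an inverse exponential. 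The whole point of making $\eta$ small is to put the problem in the regime where degenerate perturbation theory gives tight control.

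The analysis proceeds via the Schrieffer-Wolff transformation recalled in Appendix A. The unperturbed Hamiltonian $H_0 = H_{\mathrm{in}} + H_{\mathrm{prop}} + H_{\mathrm{clock}}$ has zero-energy ground space $\mathcal{S}_0$ spanned by the history states
\begin{equation}
|\eta_\psi\rangle = \frac{1}{\sqrt{T+1}}\sum_{t=0}^T \bigl(U_t \cdots U_1 |\psi\rangle|0^m\rangle\bigr) \otimes |t\rangle_{\mathrm{clock}},
\end{equation}
one for each witness $|\psi\rangle$, and excited states separated by a gap $\Delta_0 = \Omega(1/T^3)$. Choosing $\eta < \Delta_0/16$ ensures convergence of the Schrieffer-Wolff series, and truncating at first order gives an effective Hamiltonian $H_{\mathrm{eff}} = \eta P_0 H_{\mathrm{out}} P_0$ on $\mathcal{S}_0$ with error $O(\eta^2/\Delta_0)$ in operator norm. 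A direct calculation (already used implicitly in Kitaev's original argument) shows that $\langle \eta_\psi | H_{\mathrm{out}} | \eta_\phi\rangle = \tfrac{1}{T+1}\bigl(\langle\psi|\phi\rangle - \langle\psi|Q|\phi\rangle\bigr)$, so that in the basis $\{|\eta_\psi\rangle\}$ the restriction $P_0 H_{\mathrm{out}} P_0$ is unitarily equivalent to $\tfrac{1}{T+1}(\mathds{1} - Q)$ on the witness register. Hence the $k$-th lowest eigenvalue of $H$ equals $\tfrac{\eta}{T+1}\bigl(1 - \lambda_k(Q)\bigr) \pm O(\eta^2/\Delta_0)$.

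From this the YES/NO promise is immediate: in a YES instance, $E_1 \leq \tfrac{\eta(1-c)}{T+1} + O(\eta^2/\Delta_0)$ and $E_2-E_1 \geq \tfrac{\eta g_1}{T+1} - O(\eta^2/\Delta_0)$, while in a NO instance $E_1 \geq \tfrac{\eta(1-s)}{T+1} - O(\eta^2/\Delta_0)$ and $E_2-E_1 \geq \tfrac{\eta g_2}{T+1} - O(\eta^2/\Delta_0)$. Picking $\eta = 1/\exp(n)$ small enough that $\eta^2/\Delta_0 \ll \eta \cdot \min(c-s, g_1, g_2)/(T+1)$ guarantees both a promise gap of $\Omega(\eta(c-s)/T) = \Omega(1/\exp)$ and a spectral gap of $\Omega(\eta \min(g_1,g_2)/T) = \Omega(1/\exp)$ in the resulting local Hamiltonian, completing the reduction. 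The main obstacle is precisely this parameter balance: since $c-s$, $g_1$, $g_2$ are all only promised to be $\Omega(1/\exp)$, the quadratic Schrieffer-Wolff error must be suppressed aggressively relative to these already-tiny first-order separations, so one has to verify that there is still a simultaneous choice of $\eta$ (necessarily a very small inverse exponential) making both the numerator $\eta \cdot \Omega(1/\exp)$ dominate $\eta^2/\Delta_0$ \emph{and} keeping the final gaps of inverse-exponential size, which is the bookkeeping that the small-penalty clock construction is designed to handle.
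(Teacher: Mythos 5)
Your proposal is correct and follows essentially the same approach as the paper: the small-penalty clock construction with an exponentially small output penalty $\eta$, the first-order Schrieffer-Wolff analysis restricting $H_{\mathrm{out}}$ to the history-state subspace to recover $\tfrac{\eta}{T+1}(\mathds{1}-Q)$ with $O(\eta^2/\Delta_0)$ error, and the choice $\eta = \Theta(\min[g_1,g_2,c-s]/(nT^4))$ to make both the promise gap and spectral gap inverse-exponential. The only cosmetic difference is that the paper obtains the YES-case upper bound $E_1 \le \eta(1-c)/(T+1)$ directly from the variational principle with the accepting history state (no error term), whereas you carry the Schrieffer-Wolff error symmetrically on both sides; this is harmless.
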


\begin{proof}
Consider a $\GQMA[c,s,g_1,g_2]$ instance $x$, where the verifier's circuit $U_x$ acts on $m = \poly(n)$ qubits apart from the proof state.
We assume that the circuit has $T = \poly(n)$ gates.
The idea behind the technique is valid generally, but for concreteness we focus on the clock construction of Kempe et al.\ \cite{Kempe2003}, which proves $\QMA$-hardness of \textsc{$k$-LocalHamiltonian} for $k\geq 3$.
The clock Hamiltonian takes the form
\begin{align}
H = H_\mathrm{input} + H_\mathrm{prop} + H_\mathrm{output} + H_\mathrm{clock}. \label{eq_clockham}
\end{align}
The first term $H_\mathrm{input}$ ensures that the ground state of $H_\mathrm{input}$ coincides with input state to the circuit.
The term on the proof register is identity, allowing for any witness state given by the prover to be input into the verifier's circuit.
It is given by 
\begin{align}
&H_\mathrm{input} = \sum_{i=1}^m \ketbra{1}_i \otimes \mathds{1}_\mathrm{proof} \otimes H_\mathrm{clockinit}\label{eq_clockhaminput}.
\end{align}
In the above, the term $H_\mathrm{clockinit}$ ensures that the clock is properly initialized to the $\ket{1}_\mathrm{clock}$ state.
Next, $H_\mathrm{prop}$ is a Hamiltonian that ensures the ground state is ``propagated'' correctly with each gate applied by the verifier:
\begin{align}
& H_\mathrm{prop} = \sum_{i=0}^T - U_{i+1} \otimes \ketbra{i+1}{i}_\mathrm{clock} -U_{i+1}^\dag \otimes \ketbra{i}{i+1}_\mathrm{clock} \nonumber
\\& + \mathds{1} \otimes (\ketbra{i}_\mathrm{clock} + \ketbra{i+1}_\mathrm{clock}). \label{eq_clockhamprop}
\end{align}
The ground-state subspace of $H_\mathrm{prop}$ contains valid ``partial'' computations until step $i \leq T$, namely $U_i \ldots U_2 U_1 \ket{\psi_0}$ on any initial state $\ket{\psi_0}$ $\forall\ i$.
The term $H_\mathrm{output}$ penalizes states that have any nonzero probability of saying ``NO'' at the output qubit $o$ of the circuit:
\begin{align}
 &H_\mathrm{output} = \epsilon \ketbra{0}_o \otimes \ketbra{T}_\mathrm{clock} \label{eq_clockhamoutput}.
\end{align}
Lastly, $H_\mathrm{clock}$ ensures that states in the clock register that do not encode a valid time step are penalized.
The Hamiltonians $H_\mathrm{clock}$ and $H_\mathrm{clockinit}$ both depend on the details of the particular clock construction.
Our analysis does not depend on these details is largely independent of the way the clock register encodes the time.
We refer the reader to Ref.~\cite{Kempe2003} for an explanation of their construction.

First consider just the Hamiltonian $H_0 = H_\mathrm{input} + H_\mathrm{prop} + H_\mathrm{clock} $, which is the clock Hamiltonian without a penalty term at the output.
The ground-state space of $H_0$ is exactly given by the subspace $\mathcal{S}_0$ of history states:
\begin{align}
&\mathcal{S}_0 = \mathrm{span}\{\ket{\phi_h}: \ket{\phi} \mathrm{arbitrary} \}, \text{where} \nonumber
\\& \ket{\phi_h} := \frac{1}{\sqrt{T+1}} \sum_{i=0}^T U_i \ldots U_0 \ket{0^m}\otimes \ket{\phi} \otimes \ket{i}_\mathrm{clock}.
\end{align}
where $U_0= \mathds{1}$.
Any state having zero support on $\mathcal{S}_0$ has an energy at least $\Omega(1/T^3)$ \cite{Aharonov2007}, implying that the gap above the zero energy subspace is $\Delta  = \Omega(1/T^3)$.

Now, let us add in the term $H_1 = H_\mathrm{output}$, with $\norm{H_\mathrm{output}} = \epsilon$.
We choose $\epsilon < \Delta/16$, unlike the regular clock construction where $\epsilon$ is usually taken to be $\Theta(1)$.
As long as $\epsilon < \Delta/2$, we can restrict our attention to the zero energy space of $H_0$, since $H_1$ can change eigenvalues by at most $\epsilon$.
We use the tool of Schrieffer-Wolff transformation as described in \cref{sec_schwolff} to obtain a description of the Hamiltonian in the low-energy subspace.
The subspace $\mathcal{S}_0$ is the ground-state space of states with energy $0$.
Since $\norm{H_1} = \epsilon$, the associated low-energy subspace of $H = H_0 + H_1$ is 
\begin{align}
 \mathcal{S} = \mathrm{span} \{ \ket{\Phi} : \bra{\Phi}H\ket{\Phi} \in [-\epsilon,\epsilon] \},
\end{align}
the subspace with energies in $[-\epsilon,\epsilon]$.
In our case $H_0 P_0 = 0$ in the ground subspace spanned by history states $\ket{\phi_h}$, and the matrix elements of $P_0 H_1 P_0$ are given by
\begin{align}
& \bra{\phi_h} P_0 H_1 P_0 \ket{\psi_h} = \bra{\phi_h} H_1 \ket{\psi_h} \\
&= \frac{1}{T+1} \left(\sum_{i=0}^T \bra{0}^m \otimes \bra{\phi} \otimes \bra{i}_\mathrm{clock} U_0^\dag  \ldots U_{i}^\dag \right) H_1 \times \nonumber
\\& \left(\sum_{j=0}^{T} U_j \ldots U_0 \ket{0^m}\otimes \ket{\psi} \otimes \ket{j}_\mathrm{clock}\right)
\\ &= \frac{1}{T+1} \left(\sum_{i=0}^T \bra{0}^m \otimes \bra{\phi} \otimes \bra{i}_\mathrm{clock} U_0^\dag  \ldots U_{i}^\dag \right) \times\nonumber
\\& \epsilon \ketbra{0}_o \otimes \ketbra{T}_\mathrm{clock} \left(\sum_{j=0}^{T} U_j \ldots U_0 \ket{0^m}\otimes \ket{\psi} \otimes \ket{j}_\mathrm{clock}\right)
\\ &= \frac{1}{T+1} \bra{0}^m \otimes \bra{\phi} \otimes \bra{T} U^\dag \epsilon \ketbra{0}_o \otimes \nonumber
\\&\ketbra{T}_\mathrm{clock} U \ket{0^m}\otimes \ket{\psi} \ket{T}_\mathrm{clock} \\
&= \frac{\epsilon}{T+1} \bra{0}^m \otimes \bra{\phi} U^\dag \ketbra{0}_o U \ket{0^m}\otimes \ket{\psi} \\
&= \frac{\epsilon}{T+1} \bra{0}^m \otimes \bra{\phi} U^\dag (\mathds{1} - \Pi_\mathrm{out}) U \ket{0^m}\otimes \ket{\psi},
\end{align}
where $\Pi_\mathrm{out}$ is the projector onto the accepting state $\ket{1}_o$.
Continuing, we have
\begin{align}
 \bra{\phi_h} P_0 H_1 P_0 \ket{\psi_h} = \frac{\epsilon}{T+1} (\braket{\phi}{\psi} - \bra{\phi}Q\ket{\psi}),
\end{align}
meaning that the first order correction $P_0 H_1 P_0 $ is simply related to the accept operator $Q$, which was defined as $Q(U) = \bra{0}^{\otimes m} U^\dag\Pi_\mathrm{out}U \ket{0}^{\otimes m}$.
Let the eigenstates of $Q$ be $\ket{\phi_1}, \ket{\phi_2}, \ldots \ket{\phi_{2^w}}$ with eigenvalues $\lambda_1 \geq \lambda_2 \geq \ldots \lambda_{2^w}$. 
We use the associated history states $\ket{\phi_{i_h}}$ as a basis for the subspace $\mathcal{S}_0$.
In this basis, the first order correction $P_0 H_1 P_0$ is diagonal:
\begin{align}
 P_0 H_1 P_0 = \frac{\epsilon}{T+1} \sum_i (1-\lambda_i) \ketbra{\phi_{i_h}}.
\end{align}

We conclude that in the ground space of the original Hamiltonian $H_0$, the full Hamiltonian $H$ has eigenvalues $\epsilon (1-\lambda_i)/(T+1) \pm O(\epsilon^2/\Delta)$, where the quantity $\lambda_i$ is the accept probability of the verifier's circuit given $\ket{\phi_i}$ as witness.
This is the same conclusion we would obtain by applying degenerate perturbation theory, except that the error bound is rigorous.
We now analyze the YES and NO cases to obtain a lower bound on the promise gap.
In each case, we also lower bound the spectral gaps in the resulting Hamiltonian.

In the YES case the ground-state energy is $E_1 \leq \epsilon (1-c)/(T+1)$, as can be seen from the fact that the history state $\ket{\phi_h}$ corresponding to an accepting witness $\ket{\phi}$ would have energy ${\epsilon}(1-\bra{\phi}Q\ket{\phi})/{(T+1)} \leq \epsilon (1-c)/(T+1)$.
Our small-penalty clock construction and the Schrieffer-Wolff transformation comes in handy for the NO case.
We see in the NO case that the ground-state energy is at least $E_1 \geq \epsilon (1-s)/(T+1) - O(\epsilon^2/\Delta)$.
Therefore, the promise gap is at least $\epsilon (c-s)/(T+1) - O(\epsilon^2/\Delta) = \Omega(1/\exp)$ as long as $\epsilon/\Delta = o((c-s)/(T+1))$.

In the above, if we had chosen $\epsilon = \Theta(1)$ instead of $\epsilon < \Delta/16$, the NO case would have given us a bound $E_1 \geq \Omega(1-s)/T^3$.
This would mean that one would have to amplify the completeness and soundness $c,s$ to near unity in order to get a nontrivial promise gap.
However, such an amplification can, in general, shrink the spectral gap of the accept operator.
Independently, a large penalty term $\epsilon = \Theta(1)$ could also reorder some eigenvalues, meaning that the spectral properties of the resulting clock Hamiltonian would not faithfully track those of the original accept operator.

The spectral gap in the YES/NO case is $E_2-E_1 \geq \frac{\epsilon}{T+1}(\lambda_1(Q)-\lambda_2(Q)) - O(\frac{\epsilon^2}{\Delta})$.
We take $\epsilon = o(\Delta (c-s)/(T+1)) = o((c-s)/T^4)$, which is exponentially small if $c-s$ is.
As long as $\epsilon/\Delta = o(\min[g_1,g_2]/(T+1))$, both the YES and NO cases will have an exponentially small spectral gap.
In summary the choice
\begin{align}
 \epsilon = \frac{\min{[g_1,g_2,(c-s)]}}{nT^4} = \Theta(1/\exp)
\end{align}
suffices to have a promise gap and spectral gaps bounded below by $\Omega(1/\exp)$.
This proves \pEGQMA-hardness of \textsc{$(1/\exp,1/\exp)$-LocalHamiltonian} and one half of \cref{thm_pegqma_complete}.
\end{proof}

We generalize the above proof technique to the case of $\GQCMA$-hardness of \textsc{GS-Description-LocalHamiltonian}.
In addition to showing a promise gap and a spectral gap, we should show that the resulting Hamiltonian has a classical description of a circuit to prepare a low-energy state.
We show the following general lemma.

\begin{lemma} \label{lem_pqcma_hard} \raggedright
\textsc{$(\delta,\Delta)$-GS-Description-LocalHamiltonian} is $\GQCMA[c,s,g_1,g_2]$-hard for any $\delta,\Delta$ satisfying both the following conditions. 
\begin{itemize}
 \item[i.] $\delta = O((c-s)^2/\poly(n))$ for some polynomial. \\
 \item[ii.] If $c-s = o(\min[g_1, g_2])$, then any $\Delta$ satisfying $\Delta = O((c-s)\min[g_1,g_2]/\poly(n))$.
 Else, $\Delta = 0$.
\end{itemize}
\end{lemma}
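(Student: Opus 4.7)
The plan is to extend the small-penalty clock construction of \cref{lem_ham_preciseegqma} from the $\GQMA$ setting to the $\GQCMA$ setting, with the extra requirement of furnishing a polynomial-size classical description of a circuit that prepares a low-energy state. Given a $\GQCMA[c,s,g_1,g_2]$ verifier with circuit $U_x$, form the augmented circuit $\tilde U_x := U_x U_c$, where $U_c$ is the copy operation from the definition of $\GQCMA$. Because $U_c$ effectively classicalizes the witness, the accept operator $Q(\tilde U_x)$ is diagonal in the computational basis of the witness register, so its eigenvalues $\lambda_1 \ge \lambda_2 \ge \cdots$ correspond to classical witness strings $y_1, y_2, \ldots$. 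Apply the Kempe-Kitaev-Regev clock construction (\cref{eq_clockhaminput,eq_clockhamprop,eq_clockhamoutput}) to $\tilde U_x$ with a small output-penalty strength $\epsilon$ to be fixed, producing $H = H_0 + H_\mathrm{output}$.

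As in the proof of \cref{lem_ham_preciseegqma}, the ground-state subspace $\mathcal{S}_0$ of $H_0$ is spanned by history states $\ket{\phi_h(\psi)}$ of arbitrary inputs $\psi$, with a spectral gap $\Delta_0 = \Omega(1/T^3)$ above it. Taking $\epsilon < \Delta_0/16$, the Schrieffer-Wolff transformation gives the low-energy eigenvalues of $H$ as $\epsilon(1-\lambda_i)/(T+1) \pm O(\epsilon^2/\Delta_0)$, with eigenvectors being history states $\ket{\phi_h(y_i)}$ of \emph{classical} strings $y_i$. Choosing $\epsilon = \Theta(\min[c-s, g_1, g_2]/\poly(n,T))$ for a sufficiently large polynomial yields a Hamiltonian with promise gap $\delta = \Theta(\epsilon (c-s)/T)$ compatible with condition (i), and, when $c-s = o(\min[g_1,g_2])$, spectral gap $\Delta = \Theta(\epsilon \min[g_1,g_2]/T)$ compatible with condition (ii); in the complementary case condition (ii) reduces to $\Delta = 0$, which is automatic. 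The classical description required by \textsc{GS-Description-LocalHamiltonian} is then the optimal classical witness $y_1$ itself, packaged into a polynomial-size circuit $V$ that writes $y_1$ into the witness register and coherently builds up $\ket{\phi_h(y_1)}$ by simulating the clock dynamics; since $V$ depends on $y_1$ purely through this classical prefix, its description is a bona fide classical string.

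The main obstacle is verifying the energy inequality $\bra{\phi_h(y_1)}H\ket{\phi_h(y_1)} - E_1 \le \delta^3/f(n)^2$ for some polynomial $f(n) \ge \norm{H} = O(\poly)$. A direct calculation gives $\bra{\phi_h(y_1)}H\ket{\phi_h(y_1)} = \epsilon(1-\lambda_1)/(T+1)$, while first-order Schrieffer-Wolff yields $E_1 \ge \epsilon(1-\lambda_1)/(T+1) - O(\epsilon^2/\Delta_0)$, producing an SW energy error of $O(\epsilon^2/\Delta_0)$. Reconciling this error with the cubic slack $\delta^3/f^2$ is the delicate point of the proof, requiring careful bookkeeping of the hidden polynomial in $\epsilon$, and if needed either higher-order Schrieffer-Wolff corrections (whose $k$-th order eigenvalue error $O(\epsilon(\epsilon/\Delta_0)^k)$ decays geometrically once $\epsilon < \Delta_0/16$) or a perturbation-corrected ansatz beyond the bare history state. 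Once this bookkeeping is established, combining it with the promise- and spectral-gap arguments from \cref{lem_ham_preciseegqma} yields the stated $\GQCMA[c,s,g_1,g_2]$-hardness.
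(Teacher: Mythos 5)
Your approach mirrors the paper's proof of this lemma: apply the (classicalized) Kempe--Kitaev--Regev clock construction with a small output penalty of strength $\epsilon$, analyze the low-energy subspace via the Schrieffer--Wolff transformation, and propose the history state $\ket{y_{1_h}}$ of the optimal classical witness as the preparation circuit $V$. However, the step you flag as ``delicate bookkeeping'' is a genuine obstruction that your proposal does not close, and the paper's own one-line consistency assertion at exactly this point does not survive the arithmetic either. Schrieffer--Wolff gives $\bra{y_{1_h}}H\ket{y_{1_h}} - E_1 = O(\epsilon^2/\Delta_0) = O(\epsilon^2 T^3)$, and this is not an artifact of a loose bound: the second-order correction to $E_1$ is strictly negative whenever $1-\lambda_1 > 0$, and the off-block component of $H_1\ket{y_{1_h}}$ has squared norm $\Theta(\epsilon^2(1-\lambda_1)/T)$, so in the NO case (where $1-\lambda_1 = \Omega(1)$) the ansatz really does sit a nontrivial distance above $E_1$. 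Against this, the promise of \textsc{GS-Description-LocalHamiltonian} demands that the ansatz energy be within $\delta^3/f(n)^2$ of $E_1$ with $f(n)\geq\norm{H}$ a polynomial. With $\epsilon = \Theta((c-s)/T^4)$ and the resulting promise gap $\delta = \Theta((c-s)^2/T^5)$, one has $\epsilon^2 T^3 = \Theta((c-s)^2/T^5)$ while $\delta^3/f^2 = \Theta((c-s)^6/(T^{15}f^2))$; the ansatz error exceeds the allowed slack by a factor of order $T^{10}f^2/(c-s)^4 \gg 1$. No retuning of $\epsilon$ repairs this: the constraints $\delta \lesssim \epsilon(c-s)/T$ and $\epsilon^2/\Delta_0 \lesssim \delta^3/f^2$ together force $\epsilon \gtrsim T^6 f^2/(c-s)^3$, which contradicts the small-penalty regime $\epsilon < \Delta_0/16 = O(1/T^3)$.

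Your two candidate remedies also do not obviously repair the argument. Higher-order Schrieffer--Wolff sharpens the \emph{estimate} of $E_1$, but it cannot change the \emph{actual} energy of the state $\ket{y_{1_h}}$ that $V$ prepares --- it is that state's energy, not your estimate of $E_1$, that must lie within $\delta^3/f^2$ of $E_1$. A perturbation-corrected ansatz would carry amplitudes on excited eigenstates of $H_0$ weighted by inverse unperturbed-energy differences, and it is not evident that such a state admits a uniformly generated polynomial-size preparation circuit; exhibiting one is precisely the missing ingredient. Closing the gap appears to require either a modified circuit-to-Hamiltonian construction in which the classical-witness history state lies exponentially close to (or exactly at) the bottom of the spectrum of $H_0 + H_1$, or a relaxation of the $\delta^3/f^2$ slack in the definition of \textsc{GS-Description-LocalHamiltonian} together with a correspondingly tightened phase-estimation containment argument.
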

\begin{proof}
To prove $\GQCMA$-hardness, we give a reduction from $\GQCMA[c,s,g_1,g_2]$ to \textsc{GS-Description-LocalHamiltonian[$a,b,g_1',g_2'$]}.
We are promised that the input witnesses are computational basis states (this can be assumed without loss of generality), corresponding to the classical witness sent by the prover.
We would like to show that there exists a circuit $V$ to prepare a state $\delta$-close in energy to the ground state of the clock Hamiltonian both the YES and NO cases.

Consider again the small-penalty clock construction, with the clock Hamiltonian \cref{eq_clockham}.
Let the norm of the penalty term be $\norm{H_\mathrm{output}} = \epsilon$.
When $\epsilon=0$, the ground-state space is given by valid history state computations corresponding to computational basis witness states.
The spectral gap above this subspace is at least $\Omega(1/T^3)$.
The addition of the penalty term changes the energies to $ \frac{\epsilon}{T+1}(1 - \lambda_k) + O(\epsilon^2 T^3)$, where $\lambda_k$ is the accept probability upon input computational basis state $\ket{y_k}$ as witness.
Consider the history state associated with witness $\ket{y_k}$:
\begin{align}
\ket{y_{k_h}} := \frac{1}{\sqrt{T+1}} \sum_{i=0}^T U_i \ldots U_0 \ket{0^m}\otimes \ket{y_k} \otimes \ket{i}_\mathrm{clock}.
\end{align}
This state has energy $\bra{y_{k_h}} H \ket{y_{k_h}} = \frac{\epsilon}{T+1}(1 - \lambda_k)$ and is therefore $O(\epsilon^2T^3)$-close in energy to the true ground state.
Therefore, as long as $\epsilon^2 T^3 < O\left(\frac{(b-a)^3}{f(n)^2}\right)$, a classical description of a circuit that prepares $\ket{y_{k_h}}$ is a valid ground-state description.
The circuit may be described by specifying $y_k$ and a circuit that prepares the history state $\ket{\phi_h}$ upon any quantum input $\ket{\phi}$.
This latter circuit first prepares the state $\frac{1}{\sqrt{T+1}} \sum_{i=0}^T \ket{0^m}\ket{i}_\mathrm{clock}$ and then applies the unitaries $U_j\ldots U_0$ controlled on the clock being in time-step $j$ \cite{Wocjan2003}.

The same promise gap and spectral gap analyses as in the proof of \cref{lem_ham_preciseegqma} hold.
In the YES case, the Hamiltonian has ground-state energy $\leq \frac{\epsilon}{T+1}(1 - \lambda_1) \leq \frac{\epsilon}{T+1}(1 - c)$.
In the NO case, the ground-state energy is at least $\frac{\epsilon}{T+1}(1 - \lambda_1) - O(\epsilon^2 T^3) \geq \frac{\epsilon}{T+1}(1 - s) - O(\epsilon^2 T^3)$.
The promise gap between the ground-state energy for YES and NO cases is $\delta \geq \frac{\epsilon}{T+1}(c-s) - O(\epsilon^2T^3)$.
We make the choice $\epsilon = \Theta(\frac{c-s}{T^4})$ to ensure the promise gap is $\Omega((c-s)^2/T^5)$.
This choice is consistent with the choice $\epsilon^2 T^3 \leq O(\frac{(b-a)^3}{f(n)^2})$ made above.

Let us now analyze the spectral gap of the resulting Hamiltonian.
Using the Schrieffer-Wolff expansion to obtain the eigenvalues of the Hamiltonian for small $\epsilon$, we have $\Delta \geq \frac{\epsilon}{T+1}(\lambda_1 - \lambda_2) - O(\epsilon^2 T^3)$.
The spectral gap is at least $\frac{\epsilon}{T+1} \min[g_1, g_2]$ as long as $\epsilon^2 T^3 = o(\frac{\epsilon}{T+1} \min[g_1, g_2])$.
Using the choice of $\epsilon$ above, this means the spectral gap is $\Omega((c-s)/T^5 \min[g_1,g_2])$ as long as $c-s = o(\min[g_1, g_2])$.
Otherwise, the best bound on the spectral gap is $\Delta \geq 0$.
Observing that $T=\poly(n)$ by assumption, we obtain the lemma.
\end{proof}

The lemma allows us to show the following: 
\begin{corll}[Second half of \cref{thm_qcma_complete,thm_pqcma_complete,thm_pgqcma_complete,thm_pegqcma_complete}] \label{corll_gsdescription_gqcma}
\textsc{$(1/\poly,0)$-GS-Description-LocalHamiltonian} is $\QCMA$-hard. \\
\textsc{$(1/\exp,0)$-GS-Description-LocalHamiltonian} is $\pQCMA$-hard. \\
\textsc{$(1/\poly,1/\poly)$-GS-Description-LocalHamiltonian} is $\PGQCMA$-hard. \\
\textsc{$(1/\exp,1/\exp)$-GS-Description-LocalHamiltonian} is $\pEGQCMA$-hard.
\end{corll}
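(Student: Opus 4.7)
The plan is to deduce all four hardness claims directly from Lemma~\ref{lem_pqcma_hard} by instantiating the verifier parameters $c, s, g_1, g_2$ so that conditions (i) and (ii) yield the asymptotic scalings demanded by each target class. For $\QCMA$ and $\pQCMA$, we exploit that $\GQCMA[c,s,0,0] = \QCMA[c,s]$: setting $g_1 = g_2 = 0$ automatically satisfies condition (ii) with $\Delta = 0$, while condition (i) gives $\delta = O((c-s)^2/\poly(n))$, which is $1/\poly$ when $c-s \geq 1/\poly$ and $1/\exp$ when $c-s \geq 1/\exp$. These account for the first two lines of the corollary.

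The gapped cases $\PGQCMA$ and $\pEGQCMA$ require a preprocessing step, because condition (ii) provides a nonzero spectral gap $\Delta$ only when $c-s = o(\min[g_1, g_2])$, whereas a generic verifier could have $c-s$ of the same order as $\min[g_1, g_2]$ or larger. To circumvent this, I would use the set containment $\GQCMA[c, s, g_1, g_2] \subseteq \GQCMA[c, s', g_1, g_2]$ for any $s' \geq s$: enlarging the soundness parameter only weakens the NO-case condition $\lambda_1(Q) \leq s$, while the completeness and spectral-gap conditions are unaffected. Given a $\PGQCMA$ verifier with $c-s, g_1, g_2 \geq 1/n^k$, I would replace $s$ by $s' := c - 1/n^{k+1}$, yielding an equivalent $\GQCMA[c, s', g_1, g_2]$ formulation with $c - s' = 1/n^{k+1} = o(\min[g_1, g_2])$ while $c - s' \geq 1/\poly$ is preserved. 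Applying Lemma~\ref{lem_pqcma_hard} then produces a \textsc{GS-Description-LocalHamiltonian} instance with $\delta, \Delta = \Omega(1/\poly)$, establishing $\PGQCMA$-hardness of \textsc{$(1/\poly, 1/\poly)$-GS-Description-LocalHamiltonian}. The $\pEGQCMA$ case is identical modulo scale: starting from $c-s, g_1, g_2 \geq 2^{-q(n)}$, pick $s' := c - 2^{-(q(n)+1)}$ so that $c - s' = o(\min[g_1, g_2])$ stays in the exponential regime, and Lemma~\ref{lem_pqcma_hard} then supplies $\delta, \Delta = \Omega(2^{-\poly})$ as needed for \textsc{$(1/\exp, 1/\exp)$-GS-Description-LocalHamiltonian}.

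The main obstacle I anticipate is exactly this interplay between the promise gap and the spectral gap dictated by condition (ii) of Lemma~\ref{lem_pqcma_hard}: without the soundness-weakening trick, a $\PGQCMA$ or $\pEGQCMA$ instance can lie outside the regime $c - s = o(\min[g_1, g_2])$, forcing $\Delta = 0$ and breaking the reduction to the spectrally gapped target problem. Once this preprocessing is justified, the rest of the argument is a straightforward verification that the resulting asymptotic orders of $\delta$ and $\Delta$ sit inside the $1/\poly$ or $1/\exp$ ranges defining the target problem, requiring no new clock-construction or perturbative-analysis machinery beyond what Lemma~\ref{lem_pqcma_hard} already supplies.
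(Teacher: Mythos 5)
Your overall strategy is correct and, to my reading, is essentially what the paper has in mind: the first two lines drop out of \cref{lem_pqcma_hard} with $g_1=g_2=0$, and for the two gapped cases you correctly identify that condition~(ii) does not directly apply because a generic $\PGQCMA$ or $\pEGQCMA$ verifier need not satisfy $c-s=o(\min[g_1,g_2])$, so some preprocessing is needed. The soundness-weakening trick ($\GQCMA[c,s,g_1,g_2]\subseteq\GQCMA[c,s',g_1,g_2]$ for $s\leq s'<c$, since the NO condition $\lambda_1(Q)\leq s$ only gets weaker) is the right fix.

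However, your quantitative choice in the exponential case does not actually achieve $c-s'=o(\min[g_1,g_2])$. With $g_1,g_2\geq 2^{-q(n)}$ and $s':=c-2^{-(q(n)+1)}$, the ratio $(c-s')/\min[g_1,g_2]$ can be as large as $2^{-(q(n)+1)}/2^{-q(n)}=1/2$, a constant, not $o(1)$. Contrast this with your polynomial choice $s'=c-1/n^{k+1}$, which gives ratio $1/n\to 0$ and hence does verify the $o(\cdot)$ hypothesis. The analogue of that polynomial choice in the exponential regime is a \emph{super-constant factor} of suppression, not just an extra bit: take, e.g., $s':=c-\min\bigl[c-s,\,2^{-(q(n)+n)}\bigr]$. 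Then $c-s'\leq 2^{-(q(n)+n)}$ so the ratio is at most $2^{-n}\to 0$, while $s'\geq s$ still holds and $c-s'=\Omega(2^{-\poly})$, so $\delta=\Omega((c-s')^2/\poly)$ and $\Delta=\Omega((c-s')\min[g_1,g_2]/\poly)$ remain $\Omega(2^{-\poly})$ as needed. With this correction the derivation goes through.
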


For the problem with $\delta=1/\exp, \Delta=1/\poly$, we do not give a direct reduction from a $\pPGQCMA$ instance.
Instead, we show $\PP$-hardness through the characterization of $\PP$ in terms of the class $\pBQP$.
From the $\PP$ upper bound to $\pPGQCMA$, we obtain \pPGQCMA-completeness of the problem \textsc{$(1/\exp,1/\poly)$-GS-Description-LocalHamiltonian}.
The argument is similar for $\pPGQMA$-hardness of \textsc{$(1/\exp,1/\poly)$-LocalHamiltonian}.

\begin{lemma}[\cref{thm_gappedham_pphard,thm_pqcma_gap_pphard} restated] \label{thm_gappedham_pphard_apx} {\color{white}.}\\
\textsc{$(1/\exp,1/\poly)$-GS-Description-LocalHamiltonian} is $\PP$-hard. \\
\textsc{$(1/\exp,1/\poly)$-LocalHamiltonian} is $\PP$-hard.
\end{lemma}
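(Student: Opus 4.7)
The plan is to give a unified reduction from $\pBQP$ (which equals $\PP$) to both problems using the small-penalty clock construction of \cref{lem_ham_preciseegqma}, but adapted to the $\BQP$-style setting of no witness, as in the adiabatic $\BQP$-universality construction of Aharonov et al.\ \cite{Aharonov2007}. The key observation is that in a $\pBQP$ verification, the input register is fixed to $\ket{0^m}$, so there is no prover-controlled ``witness register''; consequently, the ground space of $H_0 = H_{\mathrm{input}} + H_{\mathrm{prop}} + H_{\mathrm{clock}}$ is spanned by a single history state, and the gap above it in $H_0$ is already $\Omega(1/T^3) = \Omega(1/\poly)$. This is exactly where the proof gets a polynomial spectral gap ``for free,'' whereas in the $\pQMA$-to-\textsc{LocalHamiltonian} case one had to pay an exponentially small price to preserve the accept-operator gap.

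Concretely, given a $\pBQP[c,s]$ instance with $c-s \geq 2^{-\poly(n)}$ implemented by a circuit $U = U_T\cdots U_1$ of size $T=\poly(n)$, I would form the clock Hamiltonian \cref{eq_clockham} with $H_{\mathrm{output}} = \epsilon\,\ketbra{0}_o\otimes\ketbra{T}_{\mathrm{clock}}$ and choose $\epsilon=\Theta(2^{-\poly(n)})$ to be exponentially small but satisfying $\epsilon \ll \Delta_0/16$, where $\Delta_0 = \Omega(1/T^3)$ is the $H_0$-gap. Then I would apply the Schrieffer--Wolff analysis as in \cref{sec_schwolff}: the effective Hamiltonian on the one-dimensional ground space $\mathcal{S}_0 = \mathrm{span}\{\ket{\phi_h}\}$ is the scalar $\tfrac{\epsilon}{T+1}(1-p) + O(\epsilon^2/\Delta_0)$, where $p=\bra{0^m}U^\dagger\Pi_{\mathrm{out}}U\ket{0^m}$ is the accept probability. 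Thus in the YES case $E_1 \leq \tfrac{\epsilon(1-c)}{T+1}$ and in the NO case $E_1 \geq \tfrac{\epsilon(1-s)}{T+1} - O(\epsilon^2 T^3)$, giving a promise gap $\delta = \Omega(\epsilon(c-s)/T) = \Omega(2^{-\poly(n)})$ provided $\epsilon$ is chosen, e.g., as $\epsilon = \Theta((c-s)/T^5)$.

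For the spectral gap, note that the second-lowest eigenvalue of $H = H_0 + H_1$ sits near the second eigenvalue of $H_0$, which is at least $\Delta_0 = \Omega(1/T^3)$, shifted by at most $\norm{H_1} = \epsilon = o(\Delta_0)$. Since the $H_0$-ground space is one-dimensional, the spectral gap of $H$ is $\Delta_0 - 2\epsilon - O(\epsilon^2/\Delta_0) = \Omega(1/T^3) = \Omega(1/\poly(n))$, which is exactly the $\Delta = \Omega(1/\poly)$ demanded by \textsc{$(1/\exp,1/\poly)$-LocalHamiltonian}. This establishes the first half.

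For \textsc{$(1/\exp,1/\poly)$-GS-Description-LocalHamiltonian}, I would additionally exhibit a polynomial-size circuit $V$ preparing a state whose energy is within $\delta^3/f(n)^2$ of the ground state. The natural candidate is the history state itself: $\ket{\phi_h}=\tfrac{1}{\sqrt{T+1}}\sum_i U_i\cdots U_1\ket{0^m}\ket{i}_{\mathrm{clock}}$, which is prepared by first creating a uniform clock superposition over $\{\ket{i}\}_{i=0}^T$ and then applying $U_j\cdots U_1$ controlled on clock value $j$, as in Ref.~\cite{Wocjan2003}. Its energy satisfies $\bra{\phi_h}H\ket{\phi_h} = \tfrac{\epsilon(1-p)}{T+1}$, which is within $O(\epsilon^2/\Delta_0) = O(\epsilon^2 T^3)$ of $E_1$; our choice of $\epsilon$ makes this at most $\delta^3/f(n)^2$ once $f(n)=\poly(n)$ is large enough to dominate $\norm{H}$. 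Since the description of $V$ depends only on $\{U_i\}$ (not on any prover input) and $T=\poly(n)$, it constitutes a valid classical ground-state description, completing the reduction.

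The main obstacle — and also the main payoff — is the spectral-gap bookkeeping: one must simultaneously (i) make $\epsilon$ small enough that the Schrieffer--Wolff error $O(\epsilon^2/\Delta_0)$ does not swallow either the $\Omega(\epsilon(c-s)/T)$ promise gap or erode the $\Omega(1/T^3)$ intrinsic clock gap, and (ii) make $\epsilon$ large enough to leave the promise gap still $\Omega(2^{-\poly})$. The absence of a witness register is what makes this simultaneous window exist here, in contrast to the general $\pPGQMA$ hardness setting where one cannot hope for a genuinely polynomial spectral gap directly from a clock construction.
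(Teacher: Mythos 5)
Your proposal matches the paper's proof essentially step for step: both reduce from $\pBQP$ via the small-penalty clock construction with no witness register, use the $\Omega(1/T^3)$ gap above the unique history state of $H_0$ from Ref.~\cite{Aharonov2007}, apply the Schrieffer--Wolff analysis to bound the ground-state energy in the YES and NO cases, observe that an exponentially small $\epsilon$ leaves the intrinsic $\Omega(1/T^3)$ spectral gap intact, and exhibit the history-state circuit as the ground-state description. The only cosmetic difference is the precise choice of $\epsilon$ ($\Theta((c-s)/T^5)$ versus the paper's $(c-s)/(nT^4)$), which is immaterial.
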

\begin{proof}
We give a reduction from any problem in $\pBQP$ to \textsc{$(1/\exp,1/\poly)$-GS-Description-LocalHamiltonian}, which is also an instance of \textsc{$(1/\exp,1/\poly)$-LocalHamiltonian}.
Since $\pBQP$ is the class of problems that can be decided by quantum circuits with a promise gap $c-s = \Omega(1/\exp)$, it can also be thought of as ``$\pQMA$ without an input witness''.
The Hamiltonian is constructed out of the $\pBQP$ computation as $H = H_\mathrm{input} + H_\mathrm{prop} + H_\mathrm{output} + H_\mathrm{clock}$, where the terms are now
\begin{align}
& H_\mathrm{input} = \sum_{i=1}^m \ketbra{0}_i \otimes H_\mathrm{clockinit}, \\
& H_\mathrm{prop} = \sum_{i=0}^T - U_{i+1} \otimes \ketbra{i+1}{i}_\mathrm{clock} -U_{i+1}^\dag \otimes \ketbra{i}{i+1}_\mathrm{clock} \nonumber
\\& + \mathds{1} \otimes (\ketbra{i}_\mathrm{clock} + \ketbra{i+1}_\mathrm{clock}), \mathrm{\ and} \\
& H_\mathrm{output} = \epsilon \ketbra{0}_o \otimes \ketbra{T}_\mathrm{clock}.
\end{align}
The only difference from \cref{eq_clockhaminput,eq_clockhamprop,eq_clockhamoutput} is that $H_\mathrm{input}$ does not have support on an unpenalized proof register, since $\pBQP$ does not rely on a proof state given as input.
This is analogous to the clock construction of Ref.~\cite{Aharonov2007}, which was instrumental in the proof that adiabatic quantum computation is universal for $\BQP$.

We again let the Hamiltonian $H_0$ be $H_\mathrm{input} + H_\mathrm{prop} + H_\mathrm{clock}$  and $H_1 = H_\mathrm{output}$.
The ground state of $H_0$ is now nondegenerate (unique) and given by the history state
\begin{align}
\ket{0_h} := \frac{1}{\sqrt{T+1}} \sum_{i=0}^T U_i \ldots U_0 \ket{0^m} \otimes \ket{i}_\mathrm{clock}.
\end{align}
Let us denote the ground-state space of $H_0$ and the projector onto it by $\Pi_0$.
As for $H_1$, the ground space $\Pi_1$ is spanned by states belonging to subspaces $\mathcal{L}$ and $\mathcal{L}'$, with
\begin{align}
\mathcal{L} &= \ket{1}_o \otimes \ket{T}_\mathrm{clock} \\
\mathcal{L}' &= \mathrm{span}\ \{\ket{\psi}\} \otimes \mathrm{span}\ \{\ket{0}_\mathrm{clock}, \ket{1}_\mathrm{clock}, \ldots \ket{T-1}_\mathrm{clock}\},
\end{align}
with $\ket{\psi}$ arbitrary.

We observe that when $\epsilon = 0$, the Hamiltonian exactly corresponds to Aharonov et al.'s $H_\mathrm{final}$ \cite{Aharonov2007}.
Aharonov et al.\ \cite{Aharonov2007} showed that this Hamiltonian $H_0$ has a spectral gap of $\Delta = \Omega(1/T^3)$ in the full Hilbert space.
Further, the ground state of $H_0$ corresponds to the history state of the $\BQP$ computation ($\pBQP$ in this case), which starts off in a fixed, known state $\ket{0^m}$.

In the YES case, the ground-state energy of $H = H_0 + H_1$ can be bounded above by $\frac{\epsilon}{T+1}(1-c)$.
For the NO case, we again use the expression for the perturbed energies in the ground-state space coming from the Schrieffer-Wolff transformation.
Specifically, in the NO case, we have $E_1 \geq \epsilon(1-s)/(T+1) - O(\epsilon^2/\Delta)$, where $\Delta$ is the spectral gap above the ground state, just as in the proof of \cref{lem_ham_preciseegqma}.
The promise gap is lower-bounded by
\begin{align}
 \epsilon \frac{1-s}{T+1} - \epsilon \frac{1-c}{T+1} - O\left(\frac{\epsilon^2}{\Delta}\right).
\end{align}
Therefore, as long as $\epsilon/\Delta = o((c-s)/(T+1))$ and $\epsilon = \Omega(2^{-\poly})$, the promise gap is at least $\Omega(\epsilon (c-s)/(T+1)) = \Omega(2^{-\poly})$.
The spectral gap for the unperturbed Hamiltonian $H_0$, which is the same as the final Hamiltonian in Ref.~\cite{Aharonov2008}, is at least $\Omega(1/T^3)$.
Therefore, we pick $\epsilon = (c-s)/(nT^4)$, which ensures that the conditions above are satisfied.

Coming to the spectral gap of the full Hamiltonian, we observe that since the original Hamiltonian had a spectral gap of $\Omega(1/T^3)$ and the perturbation $H_1$ is exponentially small, the eigenvalues can change at most by $\norm{H_1} = \epsilon$, preserving the spectral gap.
So far, we have a reduction from any $\pBQP$ instance to an instance of \textsc{$(1/\exp,1/\poly)$-LocalHamiltonian}.

It remains for us to see that there is an efficient circuit that can prepare a state close in energy to the ground state.
By the justification in the proof of \cref{lem_pqcma_hard}, we know that choosing the output penalty term to be exponentially small causes the history state of the computation $\ket{0_h}$ to be exponentially close to the ground state in energy.
We have also seen the existence of a polynomial size circuit that prepares the history state given a description of the input (which here is $\ket{0^m}$ for \pBQP).
Note that when $\epsilon=0$, the ground state is unique and has a $\Omega(1/\poly)$ spectral gap above and therefore taking $\epsilon$ exponentially small does not pose a problem with spectral gaps.
\end{proof}
The difference between the proof of \cref{thm_gappedham_pphard_apx} and the proof of \cref{lem_ham_preciseegqma} is that it is the perturbation $\epsilon$ that creates the spectral gap in the proof of \cref{thm_gappedham_pphard_apx}, while in the proof of \cref{lem_ham_preciseegqma}, the spectral gap already exists in the unperturbed Hamiltonian.
This is why we can afford to take $\epsilon$ exponentially small here, which is needed to obtain an instance with a promise gap.

Thus, we have seen $\PP$-hardness of \textsc{$(1/\exp,1/\poly)$-LocalHamiltonian}.
$\pPGQMA$-hardness of the problem follows from the fact that $\pPGQMA \subseteq \PP$ (\cref{thm_ppgqma_inpp}).
\begin{corll} \label{corll_ppgqma_cook}
\textsc{$(1/\exp,1/\poly)$-LocalHamiltonian} is $\pPGQMA$-hard.
\end{corll}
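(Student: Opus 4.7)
The plan is to deduce this corollary immediately from the two preceding results, without constructing a new reduction. By Lemma \ref{thm_gappedham_pphard_apx}, \textsc{$(1/\exp,1/\poly)$-LocalHamiltonian} is $\PP$-hard under polynomial-time many-one reductions, via the small-penalty clock construction applied to a $\pBQP$ circuit. By Theorem \ref{thm_ppgqma_inpp}, $\pPGQMA \subseteq \PP$. So the strategy is just to chain these: any language $L \in \pPGQMA$ lies in $\PP$, and hence reduces in polynomial time to any $\PP$-hard problem, in particular to \textsc{$(1/\exp,1/\poly)$-LocalHamiltonian}.

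Concretely, I would structure the proof in two lines. First, invoke Theorem \ref{thm_ppgqma_inpp} to fix, for a given $L \in \pPGQMA$, a $\PP$ machine deciding $L$. Second, invoke the $\PP$-hardness from Lemma \ref{thm_gappedham_pphard_apx}: since \textsc{$(1/\exp,1/\poly)$-LocalHamiltonian} is hard for all of $\PP$ (not merely for $\pBQP$ in its original formulation; $\pBQP = \PP$, so any $\PP$ language reduces to the canonical $\PP$-complete problem and thence, by the reduction of Lemma \ref{thm_gappedham_pphard_apx}, to our Hamiltonian problem), composing yields a polynomial-time many-one reduction from $L$ to \textsc{$(1/\exp,1/\poly)$-LocalHamiltonian}. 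This establishes $\pPGQMA$-hardness.

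Since Theorem \ref{thm_gappedham_in_ppgqma} already places \textsc{$(1/\exp,1/\poly)$-LocalHamiltonian} in $\pPGQMA$, the corollary together with that upper bound yields Theorem \ref{thm_ppgqma_complete} ($\pPGQMA$-completeness), and combined with Theorem \ref{thm_ppgqma_inpp} it also yields Theorem \ref{thm_ppgqma_pp} ($\pPGQMA = \PP$). There is essentially no obstacle: the only thing to be careful about is that the reduction from $\PP$ to our problem is a standard polynomial-time many-one reduction (not, e.g., a Turing reduction), so that composition with the inclusion $\pPGQMA \subseteq \PP$ preserves the notion of hardness used throughout the paper. This is manifest in the proof of Lemma \ref{thm_gappedham_pphard_apx}, which produces the target instance by a single polynomial-time transformation of the input $\pBQP$ circuit, so no additional care is required.
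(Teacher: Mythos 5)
Your proof is correct and matches the paper's argument exactly: the paper also deduces $\pPGQMA$-hardness of \textsc{$(1/\exp,1/\poly)$-LocalHamiltonian} by combining its $\PP$-hardness (established via the small-penalty clock construction applied to a $\pBQP$ circuit) with the inclusion $\pPGQMA \subseteq \PP$ from \cref{thm_ppgqma_inpp}. Your observation that the underlying reduction is polynomial-time many-one, so that the composition is license-free, is a valid and worthwhile sanity check.
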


Similarly, the $\PP$-hardness of \textsc{$(1/\exp,1/\poly)$-GS-Description-LocalHamiltonian} and the result $\pPGQCMA \subseteq \pPGQMA = \PP$ together imply the following result.
\begin{corll} \label{corll_ppgqcma_cook}
\textsc{$(1/\exp,1/\poly)$-GS-Description-LocalHamiltonian} is $\pPGQCMA$-hard.
\end{corll}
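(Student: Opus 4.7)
The plan is to obtain this corollary as a straightforward consequence of two ingredients already assembled in the paper: the $\PP$-hardness of \textsc{$(1/\exp,1/\poly)$-GS-Description-LocalHamiltonian} established in \cref{thm_gappedham_pphard_apx}, and an upper bound $\pPGQCMA \subseteq \PP$. Once both are in hand, $\pPGQCMA$-hardness follows by transitivity: every language in $\pPGQCMA$ lies in $\PP$, and every $\PP$-language polynomial-time reduces to \textsc{$(1/\exp,1/\poly)$-GS-Description-LocalHamiltonian}, so every $\pPGQCMA$-language does too.

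The key step I would carry out is therefore the inclusion $\pPGQCMA \subseteq \PP$. My plan is to factor it as $\pPGQCMA \subseteq \pPGQMA \subseteq \PP$, where the second inclusion is \cref{thm_ppgqma_inpp}. For the first inclusion, I would argue directly from the definitions: a $\pPGQCMA[c,s,g_1,g_2]$ protocol with verifier circuit $U_x$ is formally defined via the accept operator $Q=Q(U_x U_c)$ of the circuit augmented by the copy operation $U_c$, which leaves eigenvectors in the computational basis and preserves accept probabilities on classical witnesses. Viewing the same circuit $U_x U_c$ as a $\pPGQMA$ verifier with an arbitrary quantum witness yields exactly the same accept operator $Q$, hence the same $\lambda_1,\lambda_2$, and so the same completeness, soundness, and spectral-gap bounds. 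Thus the instance is a valid $\pPGQMA[c,s,g_1,g_2]$ instance, giving $\pPGQCMA \subseteq \pPGQMA$.

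Having established $\pPGQCMA \subseteq \PP$, I would finish as follows. Let $L \in \pPGQCMA$. Then $L \in \PP$, and by \cref{thm_gappedham_pphard_apx} there is a polynomial-time (Karp) reduction from any $\PP$-language to \textsc{$(1/\exp,1/\poly)$-GS-Description-LocalHamiltonian}. Composing, $L$ reduces to \textsc{$(1/\exp,1/\poly)$-GS-Description-LocalHamiltonian}, establishing $\pPGQCMA$-hardness, and therefore, together with \cref{lem_in_ppgqcma}, $\pPGQCMA$-completeness (\cref{thm_ppgqcma_complete}).

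There is essentially no genuine obstacle here; the corollary is a bookkeeping step that packages results already proved. The only point that deserves care is verifying that the trivial-looking inclusion $\pPGQCMA \subseteq \pPGQMA$ really does carry the spectral-gap promise on the accept operator across verbatim, which it does because the $\pPGQCMA$ definition already bakes in the copy operation $U_c$ so that the two classes share a common accept operator on the same Hilbert space.
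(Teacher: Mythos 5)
Your argument is exactly the one the paper uses: it explicitly derives \cref{corll_ppgqcma_cook} from the $\PP$-hardness established in \cref{thm_gappedham_pphard_apx} together with the inclusion $\pPGQCMA \subseteq \pPGQMA = \PP$ (\cref{thm_ppgqma_inpp}). Your additional unpacking of why $\pPGQCMA \subseteq \pPGQMA$ holds (the shared accept operator $Q(U_x U_c)$) is correct and matches the paper's definitional setup, though the paper treats that inclusion as immediate.
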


Lastly, the remaining case is \textsc{$(1/\poly,1/\exp)$-GS-Description-LocalHamiltonian} with $\delta=1/\poly,\ \Delta=1/\exp$, for which we argue that an instance with spectral gap $\Delta=\Omega(1/\poly)$ is also an instance with $\Delta=\Omega(1/\exp)$.
Therefore, \textsc{$(1/\poly,1/\exp)$-GS-Description-LocalHamiltonian} is $\PGQCMA$-hard, and, since $\PGQCMA=_R \EGQCMA$, $\EGQCMA$-hard under randomized reductions.
For the case of \textsc{$(1/\poly,1/\exp)$-LocalHamiltonian}, we do not currently have a hardness result.
This is because, in performing a reduction from $\EGQMA$, we get an instance of \textsc{$(1/\poly,0)$-LocalHamiltonian} and do not get any promise on the spectral gap that results.

\section{Precise phase estimation of gapped Hamiltonians} \label{sec_ingqma}
In this section, we show that the \textsc{$(1/\exp,\Delta)$-LocalHamiltonian} problems with either $1/\poly$ or $1/\exp$ spectral gaps defined in \cref{sec_completeprobs}  are in the corresponding $\pGQMA$ class.
Together with the results of the previous section, this proves \cref{thm_ppgqma_complete,thm_pegqma_complete}.
\begin{lemma}
\label{lem_gappedham_in_precisepgqma}
\textsc{$(1/\exp,1/\poly)$-LocalHamiltonian} $\in \pPGQMA$.
\end{lemma}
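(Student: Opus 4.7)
The plan is to adapt the precise phase-estimation protocol of Fefferman and Lin~\cite{Fefferman2018} that places \textsc{Precise-$k$-LocalHamiltonian} in $\pQMA$, replacing its sharp accept/reject threshold by a smooth, monotone, Lipschitz function of the measured energy. This way the $\Omega(1/\poly)$ spectral gap promised on $H$ translates into an $\Omega(1/\poly)$ spectral gap of the accept operator $Q$, while the $\Omega(1/\exp)$ promise gap on the ground-state energy still translates into an $\Omega(1/\exp)$ completeness--soundness gap.

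Concretely, on a witness $\ket{\Psi}$ the verifier first performs precise phase estimation of $H$ to $p = \poly(n)$ bits, producing an estimate $\tilde E$ with resolution $\|H\|/2^p = 2^{-\poly}$; she then accepts with probability $g(\tilde E) := 1 - \tilde E/(\|H\|+1)$, implemented by a standard quantum-arithmetic subroutine that writes $\arccos\!\sqrt{g(\tilde E)}$ into an ancilla, applies a controlled $R_y$ rotation on a fresh decision qubit, and uncomputes the arithmetic. Because each gate in this procedure commutes with $H$ on the witness register, the accept operator $Q$ is simultaneously diagonalizable with $H$, and its eigenvalue on $\ket{E_i}$ equals $g(E_i)+\epsilon_i$ with $|\epsilon_i| = O(2^{-\poly})$ coming from the finite-precision phase estimation.

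Since $g$ is monotone decreasing and linear with slope $1/(\|H\|+1) = \Omega(1/\poly)$ (using $\|H\| \leq \poly(n)$), both required bounds fall out at once. The top eigenvalue of $Q$ corresponds to the ground state of $H$ and equals $g(E_1)+\epsilon_1$, which is at least $g(a) - O(2^{-\poly})$ in YES instances and at most $g(b) + O(2^{-\poly})$ in NO instances, so
\begin{align}
c - s \geq g(a) - g(b) - O(2^{-\poly}) = \frac{b-a}{\|H\|+1} - O(2^{-\poly}) = \Omega(1/\exp).
\end{align}
The top two eigenvalues of $Q$ correspond to $E_1$ and $E_2$, so
\begin{align}
\lambda_1(Q) - \lambda_2(Q) = g(E_1) - g(E_2) \pm O(2^{-\poly}) = \frac{E_2 - E_1}{\|H\|+1} - O(2^{-\poly}) = \Omega(1/\poly^2),
\end{align}
which is still $\Omega(1/\poly)$ in both YES and NO cases, as required for $\pPGQMA$.

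The main obstacle I anticipate is bookkeeping: one must keep the phase-estimation error $\|Q - g(H)\|$ comfortably below both the target promise gap $\Omega(1/\exp)$ and the (smaller) target spectral gap $\Omega(1/\poly^2)$. This is handled by taking the number of phase-estimation bits $p$ to be a sufficiently large fixed polynomial in $n$, which drives all error terms to be inverse-exponentially small and hence negligible for the eigenvalue perturbations of interest. Together with the $\pPGQMA$-hardness of the problem established in \cref{corll_ppgqma_cook}, this completes the proof of \cref{thm_ppgqma_complete}.
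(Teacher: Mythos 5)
Your proposal is correct but takes a genuinely different route from the paper. The paper's proof (\cref{lem_gappedham_in_gappedqma}) sticks with the \emph{one-bit} phase-estimation circuit of Fefferman and Lin: the accept probability on $\ket{E_i}$ is exactly $\cos^2(E_it/2)$, and the completeness--soundness and spectral gaps of $Q$ are extracted from a Taylor expansion of $\cos(E_0t)-\cos(E_1t)$, which is \emph{quadratic} in the energy differences near $E\approx 0$. This is why the paper obtains $c-s = \Omega((b-a)^2/\poly)$ and $\min[g_1',g_2'] = \Omega(\Delta^2/\poly)$. The only precision parameter the paper has to tune is the Hamiltonian-simulation error $\|U_x - e^{-iHt}\|\leq\epsilon$, which is taken to $2^{-\poly}$. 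You instead run $p$-bit phase estimation with $p=\poly(n)$, postprocess the energy estimate through the affine function $g(\tilde E)=1-\tilde E/(\|H\|+1)$, and exploit the fact that $Q$ and $H$ share an eigenbasis in the exact case. This replaces the $\cos^2$ curve by a linear one and eliminates the squaring: you get $c-s=\Omega((b-a)/\poly)$ and spectral gap $\Omega(\Delta/\poly)$. That is a quantitatively cleaner bound, at the cost of the additional arithmetic/$R_y$ subroutine.

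One step you should make explicit, as it is the only place where your route has a nontrivial technical point the paper's does not: you need that the eigenvalue of $Q$ on $\ket{E_i}$ is $g(E_i) + O(2^{-\poly})$, i.e.\ that $\mathbb{E}[g(\tilde E)\,|\,E_i]$ is exponentially close to $g(E_i)$. Because $p$-bit phase estimation has heavy tails (the probability of an estimate $m$ grid points away from the truth decays only as $\Theta(1/m^2)$), this is \emph{not} automatic for an arbitrary Lipschitz $g$. What saves you is precisely the linearity: $\mathbb{E}[g(\tilde E)] = g(\mathbb{E}[\tilde E])$, and one can check that the bias $\mathbb{E}[\tilde E|E_i] - E_i$ of the standard phase-estimation circuit is $O(2^{-p}\|H\|)$ (the principal-value sum of the $\propto 1/d$ terms is bounded, so multiplied by the grid spacing it is exponentially small). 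You should state the linearity of $g$ as a load-bearing choice rather than a convenience, and either cite or reproduce the bias estimate; otherwise this reads as an unjustified appeal to ``phase estimation has resolution $2^{-p}$'', which by itself does not rule out $\Theta(\poly/2^p)$ drift in the mean. You should also fix the sign/wrap-around conventions for mapping the estimated phase back to an energy in $[0,\|H\|]$, e.g.\ by taking $t=\pi/(2\|H\|)$ so the phase stays in a fixed quarter-circle, but this is minor. With these two points filled in, the argument is sound and yields the lemma with slightly sharper parameters than the paper's version.
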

\begin{lemma}
\label{lem_gappedham_in_preciseegqma}
\textsc{$(1/\exp,1/\exp)$-LocalHamiltonian} $\in \pEGQMA$.
\end{lemma}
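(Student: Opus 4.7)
The plan is to take the phase-estimation protocol of Ref.~\cite{Fefferman2018} that already witnesses \textsc{Precise-$k$-LocalHamiltonian} $\in \pQMA$, and then observe that when $H$ additionally has an inverse-exponential spectral gap, the resulting accept operator $Q$ automatically inherits one. No change is needed to the verifier; the content of the lemma is the observation that the protocol is structurally gap-preserving.

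Concretely, I would describe the verifier as follows: the prover sends an $n$-qubit witness $\ket{\psi}$, and the verifier performs the Fefferman--Lin precise phase-estimation subroutine using controlled applications of $e^{-iH\tau}$ for some $\tau = \Theta(1/\poly(n))$ (implemented by standard Hamiltonian-simulation methods on a polynomial-size circuit), comparing the outcome against the threshold $(a+b)/2$. The crucial structural fact is that the phase-estimation circuit commutes with $H$ on the witness register, so the accept operator $Q$ is simultaneously diagonalizable with $H$. If $\{\ket{E_i}\}$ is an orthonormal eigenbasis of $H$ with eigenvalues $E_1 \le E_2 \le \cdots$, then
\begin{equation}
Q \ket{E_i} = f(E_i)\ket{E_i}
\end{equation}
for some universal function $f : \mathbb{R} \to [0,1]$ determined by the protocol. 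Thus the eigenvalues of $Q$ are exactly $\{f(E_i)\}$, and the ordering of the eigenvalues of $Q$ mirrors that of $H$ (up to monotonicity of $f$).

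The Fefferman--Lin analysis already gives $f(E_1^{\mathrm{YES}}) - f(E_1^{\mathrm{NO}}) \geq 1/\exp$ whenever $b - a \geq 1/\exp$, so the promise gap of the $\pEGQMA$ protocol is automatic. For the spectral-gap requirement, I would show that $f$ is monotonically decreasing over the relevant energy window of $H$ with derivative bounded below as $|f'(E)| = \Omega(\tau)$, after first shifting $H$ by an $O(1)$ constant so that its spectrum lies in a range where $f$ is not near a stationary point. Then, in the YES case,
\begin{equation}
\lambda_1(Q) - \lambda_2(Q) = f(E_1) - f(E_2) \geq \Omega(\tau) \cdot (E_2 - E_1) \geq \Omega(\tau \cdot g_1) = \Omega(1/\exp),
\end{equation}
and analogously in the NO case with $g_2$. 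Combined with the promise-gap statement and the polynomial circuit size, this places the problem in $\pEGQMA$.

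The main obstacle will be the regularity analysis of $f$: for a Hadamard-test realization one obtains $f(E) = (1 + \cos(E\tau))/2$, which has vanishing slope near energies where $E\tau \in \pi\mathbb{Z}$, and more generally any phase-estimation scheme has a nontrivial transfer function whose slope must be controlled uniformly on the spectrum of $H$. I would handle this by first using a cheap rough estimate of $\norm{H}$ to pick $\tau$ small enough that the spectrum of $\tau H$ lies safely within a single monotonic branch of $f$, and then shifting $H \to H + c\mathds{1}$ so that the relevant window sits where $|f'|$ is $\Theta(\tau)$; this rescaling only changes $a$ and $b$ additively and preserves both the promise gap and the spectral gap of $H$, so it does not affect the reduction.
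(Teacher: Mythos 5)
Your overall strategy matches the paper's: run the Fefferman--Lin one-bit phase-estimation verifier and argue that, because the controlled-$e^{-iHt}$ operation commutes with $H$, the accept operator $Q$ is (approximately) diagonal in the $H$-eigenbasis with transfer function $f(E) = (1+\cos(Et))/2$, so a spectral gap of $H$ induces a spectral gap of $Q$. Where you diverge is in how the slope of $f$ is controlled. The paper makes no shift: it picks $t \leq \pi/(2\norm{H})$ so the spectrum of $tH$ lies in $[0,\pi/2]$, Taylor-expands $\cos(E_2 t)$ about $E_1 t$, and obtains $\lambda_1(Q) - \lambda_2(Q) = \Omega\bigl(t^2 (E_2-E_1)^2\bigr)$, a bound that is quadratic in the energy gap but still $\Omega(1/\exp)$, which is all that $\pEGQMA$ requires. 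You instead propose shifting $H \to H + c\mathds{1}$ so that $\abs{f'} = \Theta(t)$ uniformly on the spectrum, giving the linear bound $\Omega\bigl(t(E_2-E_1)\bigr)$. That bound is tighter, and the shift is harmless: it is $0$-local, preserves $h_i \succeq 0$ for $c > 0$, translates $a$ and $b$ additively, and leaves the spectral gap of $H$ unchanged. So your route is valid and in fact somewhat cleaner, at the cost of one extra preprocessing step. The one thing you leave implicit is the Hamiltonian-simulation error: the verifier applies a unitary only $\epsilon$-close to $e^{-iHt}$, so the eigenvalues of $Q$ differ from the ideal $f(E_i)$ by $O(\epsilon)$. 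To preserve the claimed spectral gap of $Q$ you must take $\epsilon$ an additional exponential factor below that gap (as the paper does, choosing, e.g., $\epsilon = \Theta(t^2 (E_2-E_1)^2/2^n)$), which is affordable since sparse Hamiltonian simulation costs $\poly(n)\log(1/\epsilon)$. Add that accounting and your argument is complete.
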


The proof relies on phase estimation to infer energies of a local Hamiltonian.
The standard phase estimation circuit requires $\exp(n)$ many gates in order to infer the eigenvalues to $1/\exp$ precision.
However, since we want to show containment in a $\mathsf{Precise}$- class, we can use the power of being able to distinguish between two cases with exponentially close accept probabilities.
It turns out that phase estimation with a single ancillary qubit is enough to distinguish between the YES and NO cases, as shown in Ref.~\cite{Fefferman2018}.
Moreover, we show that the circuit preserves spectral gaps of the Hamiltonian: if two eigenstates have energies separated by some amount, then the phase estimation circuit also has a gap in the accept probabilities corresponding to these input states.

\begin{figure}
\mbox
{\Qcircuit @C=1em @R=.7em {
& \qw&\multigate{2}{e^{-iHt}} & \qw &\qw \\
\lstick{\ket{\psi}} & \qw&\ghost{e^{-iHt}} & \qw &\qw & \rstick{\ket{\psi}} \\
& \qw&\ghost{e^{-iHt}} & \qw & \qw \\
\lstick{\ket{0}}& \gate{\mathscr{H}} & \ctrl{-1} & \gate{\mathscr{H}} & \meter 
}
}
\caption{\raggedright One-qubit phase-estimation circuit.
The symbol $\mathscr{H}$ denotes the Hadamard gate.}
\end{figure}
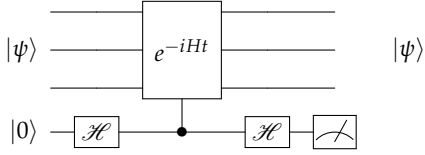

Below we give a unified proof of \cref{lem_sparseHam_in_pEGQMA,lem_gappedham_in_precisepgqma,lem_gappedham_in_preciseegqma}.
Specifically, we show $\pPGQMA$ ($\pEGQMA$) containment of the problem \textsc{$(1/\exp,\Delta)$-GappedSparseHamiltonian} with $\Delta= 1/\poly$ ($\Delta = 1/\exp$).
\begin{lemma} \label{lem_gappedham_in_gappedqma}
\textsc{GappedSparseHamiltonian[$a,b,g_1,g_2$]} has a $\GQMA[c,s,g_1',g_2']$ protocol with spectral gaps $g_1' = \Omega(g_1^2/\poly)$ and $g_2' = \Omega(g_2^2/\poly)$ and promise gap $c-s = (b-a)^2/\poly$.
\end{lemma}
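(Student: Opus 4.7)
The plan is to have Arthur implement the one-qubit phase estimation circuit of Kitaev (as depicted in the figure preceding the lemma) on Merlin's witness $\ket{\psi}$, using the controlled sparse Hamiltonian simulation algorithm of Berry et al.~\cite{Berry2014} to implement $\mathrm{ctrl}$-$e^{-iHt}$ to high precision in polynomial time. Arthur accepts upon measuring $\ket{0}$ on the ancilla. Before running the circuit, we rescale and shift the Hamiltonian so that all eigenvalues lie in a convenient range $[e_{\min},e_{\max}]\subset (0,\pi/(2t))$; since $\norm{H}\le d\cdot k = \poly(n)$, this rescaling changes $a,b,g_1,g_2$ by at most a polynomial factor. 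The time $t$ will be chosen as a constant (say $t=1$) so that the circuit has polynomial size.

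Next I would compute the accept operator $Q$ acting on the witness register. A direct calculation of the circuit shows that for any energy eigenstate $\ket{E_i}$ of $H$, the probability of obtaining outcome $\ket{0}$ on the ancilla equals $\cos^2(E_i t/2)$. Because the controlled evolution commutes with $H$ in the witness register, $Q$ is diagonal in the eigenbasis of $H$ with eigenvalues $\cos^2(E_i t/2)$. Since the rescaling ensures $E_i t/2\in[0,\pi/2]$, the map $E\mapsto \cos^2(Et/2)$ is strictly monotone decreasing, so the ordering of the eigenvalues of $Q$ is inverse to the ordering of eigenvalues of $H$: in particular $\lambda_1(Q)=\cos^2(E_1 t/2)$ and $\lambda_2(Q)=\cos^2(E_2 t/2)$, where $E_1<E_2$ are the two smallest eigenvalues of $H$.

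To extract the promise gap and spectral gap bounds, I would use the identity
\begin{equation}
\cos^2(\alpha)-\cos^2(\beta)=\sin(\beta-\alpha)\sin(\alpha+\beta),
\end{equation}
applied once with $\alpha=at/2,\beta=bt/2$ and once with $\alpha=E_1 t/2,\beta=E_2 t/2$. In the YES (resp.\ NO) case, $\lambda_1(Q)\ge\cos^2(at/2)=:c$ (resp.\ $\le\cos^2(bt/2)=:s$), so
\begin{equation}
c-s=\sin\!\bigl(\tfrac{(b-a)t}{2}\bigr)\sin\!\bigl(\tfrac{(a+b)t}{2}\bigr)=\Omega\!\bigl((b-a)^2/\poly\bigr),
\end{equation}
where the last estimate uses $(b-a)t,(a+b)t=O(1)$ together with $\sin x\ge 2x/\pi$ for $x\in[0,\pi/2]$ and a lower bound $a+b\ge\Omega(1/\poly)$ obtained by shifting $H$ by a small constant so that $a$ is bounded below. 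An identical argument with $E_1,E_2$ in place of $a,b$ yields $\lambda_1(Q)-\lambda_2(Q)=\Omega(g^2/\poly)$ in the corresponding case, giving $g_1'$ and $g_2'$ as claimed.

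The main obstacle I foresee is bookkeeping the accumulated polynomial factors from rescaling, shifting, the small-angle approximation to $\sin$, and the simulation error of Berry et al.'s algorithm (which must be made smaller than a polynomially small fraction of $(b-a)^2$, hence requires only $\polylog$ overhead). One also needs to verify that the shift does not spoil the monotonicity regime $E_i t/2\in[0,\pi/2]$, which is straightforward because both the upper bound on $\norm{H}$ and the shift are polynomially bounded. Once these are tracked, the inequalities above assemble into the stated $\GQMA[c,s,g_1',g_2']$ protocol, proving the lemma and simultaneously establishing the $\pPGQMA$, $\pEGQMA$ upper bounds (Lemmas \ref{lem_gappedham_in_precisepgqma}, \ref{lem_gappedham_in_preciseegqma}) and the sparse-Hamiltonian containment (Lemma \ref{lem_sparseHam_in_pEGQMA}) used in Theorem \ref{thm_pegqma_pspace}.
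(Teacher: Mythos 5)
Your protocol and key decomposition are the same as the paper's: one-ancilla phase estimation, Berry et al.\ sparse simulation to inverse-exponential precision, the observation that the accept operator is (approximately) diagonal in the $H$-eigenbasis with eigenvalues $\cos^2(E_i t/2)$, and the two gap bounds extracted by lower-bounding a difference of cosines. The only place you diverge is the elementary step that produces the quadratic lower bound: you use the exact identity $\cos^2\alpha-\cos^2\beta=\sin(\beta-\alpha)\sin(\alpha+\beta)$, while the paper invokes a third-order Taylor expansion with Lagrange remainder; your version is cleaner, and your shift of $H$ to force $a+b\geq\Omega(1/\poly)$ is sound but not even necessary, since $a\geq 0$ already gives $\sin\bigl(\tfrac{(a+b)t}{2}\bigr)\geq\sin\bigl(\tfrac{(b-a)t}{2}\bigr)$ and hence a $\sin^2$ lower bound. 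The one thing you wave at but should make explicit is that the approximate unitary $U_x\approx e^{-iHt}$ perturbs the \emph{eigenvectors} of $Q$ away from the $H$-eigenbasis, so the spectral gap of the actual accept operator must be recovered by a Weyl-type perturbation bound with the simulation error taken small compared to both $(b-a)^2 t^2$ and $g^2 t^2$ (not just the promise gap), exactly as the paper does.
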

\begin{proof}
The strategy is to ask the prover for the ground state of the sparse Hamiltonian.
The verifier then performs phase estimation on the witness state with a single ancillary qubit and uses the power to decide between two cases with exponentially close accept probabilities.
This power effectively enables computation of the phase of $e^{-iHt}$ to exponential precision, despite having a single ancilla qubit in the phase estimation circuit (see Ref.~\cite{Fefferman2018} for more details).
If $t \leq \frac{\pi}{2\norm{H}}$, all eigenstates of $H$ would correspond to a unique phase and a unique accept probability for the circuit.
We know an upper bound $dk$ on $\norm{H}$ through the Gershgorin circle theorem because we are assured that the magnitude of the entries is $\leq k$ and the sparsity is $d$.
Therefore, it suffices to choose $t \leq \frac{\pi}{2dk}$.

In order to perform phase estimation to exponentially small error, we need to apply a controlled-$e^{-iHt}$ rotation to error $\epsilon = 1/\exp$.
This is possible due to Hamiltonian simulation algorithms for sparse Hamiltonians, whose circuit size scales as $\poly(n)\log(\frac{1}{\epsilon})$ \cite{Berry2014}, which is polynomial in $n$, as desired.
The accept probability of the circuit upon input an eigenstate $\ket{E_i}$ of the Hamiltonian is $\frac{1 + \cos (E_it)}{2} $.
The promise gap can be lower bounded by an inverse exponential, as has been analyzed previously \cite{Fefferman2018}.

We can also show a spectral gap in the accept operator, or equivalently, a gap in the accept probabilities of the circuit for the optimal state and any state orthogonal to it.
Since the phase estimation circuit does not apply the exact controlled-$e^{-iHt}$ unitary but a unitary $U_x$ exponentially close to it, the eigenstates of $Q = \bra{0} \Pi_\mathrm{in} U_x^\dag \Pi_\mathrm{out} U_x \Pi_\mathrm{in} \ket{0}$ are not exactly the eigenstates of $e^{-iHt}$ (or of $H$).
However, since $\norm{e^{-iHt}-U_x} \leq \epsilon$, the eigenvalues of $Q$ are exponentially close to the accept probabilities of the eigenstates $\ket{E_i}$ of $H$.
The difference in accept probabilities can be bounded by $\epsilon$.

The difference in the ideal accept probabilities of the ground state and the first excited state is $\frac{\cos (E_0t) - \cos (E_1t)}{2}$.
Applying Taylor's theorem to $\cos(E_1t)$ around the point $E_0t$, we get
\begin{align}
\cos(E_1t) &= \cos(E_0t) -\sin(E_0t)t(E_1-E_0) -\nonumber
\\&\cos(E_0t) \frac{t^2(E_1-E_0)^2}{2} + \sin(E_0t) \frac{h^3}{6}
\end{align}
for some $h \in [0,(E_1-E_0)t] $.
Therefore,
\begin{align}
\cos(E_0t) &- \cos(E_1t) = \sin(E_0t)t(E_1-E_0) +\nonumber
\\& \cos(E_0t) \frac{t^2(E_1-E_0)^2}{2} - \sin(E_0t) \frac{h^3}{6} \\
& \geq t^2 (E_1-E_0)^2/2 - \frac{t^3(E_1-E_0)^3}{6}. \\
& \geq \Omega(t^2 (E_1-E_0)^2), \label{eq_phase_estimation_gap}
\end{align}
where in the second line we use the fact that $E_0t, E_1t < \pi/2$ and $(E_1-E_0)^3t^3 = O(t (E_1-E_0))$ in the third.
Therefore, the ideal accept probabilities also have a gap of $\Omega((E_1-E_0)^2/\norm{H}^2) = \Omega(\Delta^2/\poly)$ as long as $\epsilon \leq O(t^2 \Delta^2/n) = O(\Delta^2/\poly)$.
Now, when the applied unitary differs from the ideal one by $\epsilon$ in operator norm distance, the gap in the accept probabilities differs from the ideal accept probabilities by $2\epsilon$.
We therefore choose $\epsilon$ sufficiently small, i.e. we choose, say, $\epsilon = \Theta(t^2 (E_1-E_0)^2/2^n)$, which is still $\Omega(2^{-\poly})$, as needed.

To see the existence of a promise gap, notice that $E_0 \leq a$ in the YES case and $E_1 \geq b$ in the NO case, giving $c -s = \Omega(t^2(b-a)^2 - 2\epsilon) = \Omega((b-a)^2/\poly)$.
This proves the lemma.
\end{proof}
As corollaries, we obtain \cref{lem_sparseHam_in_pEGQMA,lem_gappedham_in_precisepgqma,lem_gappedham_in_preciseegqma}, since a local Hamiltonian is also a sparse Hamiltonian.

\section{Phase estimation in the presence of efficient circuit descriptions} \label{sec_pqcmadetails}

In this section, we show the problem \textsc{$(\delta,\Delta)$-GS-Description-LocalHamiltonian} is in $\GQCMA$ with appropriate bounds on the promise and spectral gaps (\cref{thm_qcma_complete,thm_pqcma_complete,thm_pgqcma_complete,thm_pegqcma_complete,thm_ppgqcma_complete}).

We first deal with the case of zero spectral gap:
\begin{lemma}[One half of \cref{thm_qcma_complete,thm_pqcma_complete}] \label{lem_in_qcma_general}
 \textsc{$(1/\poly,0)$-GS-Description-LocalHamiltonian} $\in \QCMA$. \\
 \textsc{$(1/\exp,0)$-GS-Description-LocalHamiltonian} $\in \pQCMA$.
\end{lemma}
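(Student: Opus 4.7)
The classical witness is a description of the circuit $V$ itself. The verifier first applies $V$ to $\ket{0^m}$ to prepare the promised low-energy state $\ket{\psi}$, and then performs phase estimation on $\ket{\psi}$ with respect to $e^{-iHt}$ for $t=\Theta(1/f(n))$, using the Hamiltonian-simulation algorithm for sparse (hence local) Hamiltonians \cite{Berry2014} to implement the controlled-$e^{-iHt}$ to any desired precision. The verifier accepts when the estimated energy is below $a+\delta/2$. For $\delta=1/\poly$, I would use the $O(\log(1/\delta))=O(\log n)$ ancillary qubits required for standard phase estimation, which fits inside a $\QCMA$ verifier. For $\delta=1/\exp$ this becomes infeasible, so I would instead use the single-ancilla phase-estimation trick of \cite{Fefferman2018} (applied in \cref{lem_gappedham_in_gappedqma}): the accept probability on an eigenstate $\ket{E_i}$ is $(1+\cos(E_i t))/2$, and a $\pBQP$-style verifier can distinguish exponentially close values of this probability.

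For the analysis, decompose $\ket{\psi}=\sum_i c_i\ket{E_i}$ in the eigenbasis of $H$. The promise $\bra{\psi}H\ket{\psi}\leq E_1+\delta^3/f(n)^2$ combined with $E_i\geq E_1$ and Markov's inequality gives
\begin{equation}
\sum_{i:\,E_i>E_1+\delta/2}|c_i|^2 \;\leq\; \frac{2\delta^2}{f(n)^2}.
\end{equation}
In the YES case ($E_1\leq a$), phase estimation returns an energy $\leq a+\delta/2$ with probability at least $1-2\delta^2/f(n)^2$ (after choosing the phase-estimation circuit error to be smaller than $\delta/10$). In the NO case ($E_1\geq a+\delta$), the spectrum satisfies $E_i\geq a+\delta$ for every $i$ with $c_i\neq 0$, so phase estimation always returns an estimate strictly above $a+\delta/2$. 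The resulting completeness--soundness gap is $\Omega(1)-O(\delta^2/f(n)^2)=\Omega(1)$ in the multi-qubit variant. In the single-ancilla variant, a sum-to-product identity gives
\begin{equation}
\cos\!\left((a+\tfrac{\delta}{2})t\right)-\cos\!\left((a+\delta)t\right) \;=\; 2\sin\!\left((a+\tfrac{3\delta}{4})t\right)\sin\!\left(\tfrac{\delta t}{4}\right) \;=\; \Omega(\delta t),
\end{equation}
so the gap becomes $\Omega(\delta/f(n))-O(\delta^2/f(n)^2)=\Omega(2^{-\poly})$, which suffices for $\pQCMA$. To keep the slope of the cosine bounded away from zero I would first shift $H$ by a known constant so that $at=\Theta(1)$, noting that such a shift changes $a$ and $b$ by the same amount and hence preserves both $\delta$ and the energy promise on $\ket{\psi}$.

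The main obstacle is that $\ket{\psi}$ is in general \emph{not} an eigenstate of $H$, only an approximate ground state, so phase estimation samples from the energy distribution $\{|c_i|^2\}$ rather than deterministically returning $E_1$. The carefully chosen strength $\delta^3/f(n)^2$ of the energy promise in the definition of \textsc{GS-Description-LocalHamiltonian} is precisely what makes the Markov tail $O(\delta^2/f(n)^2)$ negligible relative to the $\Omega(\delta/f(n))$ gap produced by phase estimation, in both the $1/\poly$ and $1/\exp$ precision regimes simultaneously; the rest of the argument is a routine combination of Hamiltonian simulation error, phase-estimation precision, and the Markov tail.
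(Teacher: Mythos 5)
Your overall strategy matches the paper's: the classical witness is the description of the circuit $V$, the verifier prepares $\ket{\psi}=V\ket{0^m}$ and runs (one-ancilla) phase estimation against $e^{-iHt}$, and the real work is showing that an $\emph{approximate}$ ground state — which is all the problem promises — still gives a usable completeness--soundness gap. Where you diverge is in the quantitative argument for that last step, and both routes are valid. The paper proves a convexity lemma (\cref{lem_lpbound}): since $u\mapsto -\cos^2(ut/2)$ is convex on the relevant range, among all spectral distributions with a fixed mean energy the accept probability is minimized by concentrating weight at the two extremes $E_1$ and $\|H\|$; this gives a tight bound with no threshold to choose, and the ``flat slope near $E=0$'' issue is avoided by the preliminary check ``accept if $\Tr(H)/2^n\le b$,'' which after the $h_i\succeq 0$ normalization lets one assume $\|H\|-E_1\ge b-a$ and work with a lower bound on $\cos(at)-\cos(bt)$ via a Taylor expansion. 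You instead apply Markov's inequality to split $\ket{\psi}$ into a bulk with $E_i\le E_1+\delta/2$ and a tail of weight $O(\delta^2/f(n)^2)$, use the sum-to-product identity to lower-bound $\cos((a+\delta/2)t)-\cos((a+\delta)t)$, and handle the flat-slope issue by shifting $H\to H+c_0\mathds 1$ so that $at=\Theta(1)$; the verifier is free to do this since the shift is just a phase on the controlled unitary and the thresholds $a,b$ shift by the same known amount. Your route is more elementary and makes it very explicit how the cubic factor $\delta^3/f(n)^2$ in the promise dominates the Markov tail relative to the $\Omega(\delta/f(n))$ phase-estimation gap; the paper's convexity argument is slightly tighter (and avoids both the threshold choice and the shift). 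One detail worth tightening: in the $1/\poly$ branch with $O(\log n)$-ancilla phase estimation, it is not literally true that ``phase estimation always returns an estimate strictly above $a+\delta/2$'' in the NO case, since the PE outcome distribution has a constant-probability tail beyond a few bins; you should either amplify (majority of several PE runs) or, more simply, run the one-ancilla circuit in both precision regimes, as the paper does, in which case only the acceptance probability appears and there is no ``estimate'' to worry about.
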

\begin{proof}
For the upper bound, we describe a $\QCMA$ or $\pQCMA$ protocol.
We are promised that in both the YES and NO cases, there exists a classical description of a circuit $V$ of polynomial size that will create a state with energy close to the ground-state energy.
Specifically, the energy of this state is $\epsilon$-close to the ground-state energy, for $\epsilon < \frac{(b-a)^3}{f(n)^2}$ for a polynomial $f(n) \geq \norm{H}$.
For $\QCMA$, we have $b-a \geq \Omega(1/\poly)$, while for $\pQCMA$, $b-a\geq \Omega(1/\exp)$.
The verifier asks the prover to give this description (which is promised to exist).
The verifier then creates a state $\ket{\psi}$ with low energy by applying $V$ to $\ket{0^m}$.
The verifier measures the energy of this state via the one-bit phase-estimation protocol outlined in \cref{sec_ingqma}, which involves applying a controlled-$e^{-iHt}$ for time $t \leq \frac{\pi}{2\norm{H}}$.

The proof that this verification protocol works is slightly more involved than the $\QMA[c,s]$ case.
This is because, in the case of $\QMA$ a verifier can assume without loss of generality that the prover sends the optimal eigenstate as a witness.
However, in the case of \textsc{GS-Description-LocalHamiltonian}, we are only promised the existence of an efficient circuit to prepare a state close in energy to the ground state, and not the ground state itself\footnote{The weaker promise is more natural since it is more robust.}.
Despite this complication, we can still show that a state close in energy to the ground state behaves similarly with respect to the accept probabilities of the $\QCMA[c,s]$ verifier.

In the YES case, there is a description $V$ that produces a state $\ket{\psi}$ with energy close to the ground-state energy (i.e.\ with energy $\leq E_1 + \epsilon < a +  \frac{(b-a)^3}{\poly(n)}$).
We show in \cref{lem_phaseestacceptprob} that the accept probability of the verifier upon performing one-bit phase estimation on the state $\ket{\psi}$ is at least $\cos^2\left( \frac{bt}{2} \right) + \Omega((b-a)^2/\poly)$.
In the NO case, the optimal strategy for the prover is to send the description of a circuit that makes a state as close as possible to the ground state, since the accept probabilities are monotonic in energy and there exists no other state with smaller energy, by definition.
Even if the prover sends the verifier a circuit that exactly prepares the ground state $\ket{E_1}$, its energy in the NO case is already $\geq b$.
This means that the verifier will accept with probability at most ${(1+\cos E_1t)}/{2} \leq {(1+\cos bt)}/{2}$.
Therefore there is a separation in the accept probabilities in the YES and NO cases of $c-s = \Omega((b-a)^2/\poly)$, which is $\Omega(1/\poly)$ for $b-a = \Omega(1/\poly)$ and $\Omega(1/\exp)$ for $b-a = \Omega(1/\exp)$.
\end{proof}

\begin{lemma}
If a state $\ket{\psi}$ has energy $\expectationvalue{H}{\psi} = \expval{E} \leq E_1 + \frac{5(b-a)^3}{24f(n)^2}$ for some polynomial $f(n) \geq \norm{H}$, then in the YES case, the accept probability of the state upon phase estimation with one bit of precision is $\expval{p} \geq \cos^2\left( \frac{bt}{2} \right) + \delta$, where $\delta = \Omega\left(\frac{5(b-a)^2}{24 f(n)^2}\right)$. \label{lem_phaseestacceptprob}
\end{lemma}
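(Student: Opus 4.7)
The plan is to expand the low-energy state in the eigenbasis of $H$, write $\ket{\psi} = \sum_i c_i \ket{E_i}$, and use the fact that the one-bit phase-estimation circuit accepts an eigenstate of energy $E_i$ with probability $\tfrac{1}{2}(1+\cos(E_i t))$. Linearity then gives
$$\expval{p} = \tfrac{1}{2} + \tfrac{1}{2}\sum_i |c_i|^2 \cos(E_i t),$$
so the target $\expval{p} \geq \cos^2(bt/2) + \delta$ becomes the equivalent inequality $\sum_i |c_i|^2 \cos(E_i t) \geq \cos(bt) + 2\delta$. The proof reduces to producing this lower bound from the two pieces of information we have: the hypothesis $\expval{E} \leq E_1 + \eta$ with $\eta := 5(b-a)^3/(24 f(n)^2)$, and the YES-case guarantee $E_1 \leq a < b$.

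I would then attack the sum in two steps. First, to exploit the low-energy assumption, I use the Lipschitz bound $\cos(E_1 t) - \cos(E_i t) \leq t(E_i - E_1)$, which follows from the mean value theorem together with $E_i \geq E_1$ and $|\sin|\leq 1$, to pull the weighted sum back to the ground-state value:
$$\sum_i |c_i|^2 \cos(E_i t) \geq \cos(E_1 t) - t \sum_i |c_i|^2 (E_i - E_1) = \cos(E_1 t) - t(\expval{E}-E_1) \geq \cos(E_1 t) - t\eta.$$
Second, I would reuse the Taylor-expansion estimate already established in the proof of \cref{lem_gappedham_in_gappedqma}: since $E_1 \leq a$ and $E_1 t, bt \in [0, \pi/2]$ (which is guaranteed by the choice $t \leq \pi/(2\norm{H})$),
$$\cos(E_1 t) - \cos(bt) \geq \tfrac{1}{2} t^2 (b-E_1)^2 - \tfrac{1}{6} t^3 (b-E_1)^3 = \Omega\!\left(t^2(b-a)^2\right).$$

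Combining, with $t = \pi/(2\norm{H}) = \Theta(1/f(n))$, yields
$$\expval{p} - \cos^2(bt/2) \geq \tfrac{1}{2}\bigl[\cos(E_1 t) - \cos(bt) - t\eta\bigr] = \Omega\!\left(\frac{(b-a)^2}{f(n)^2}\right) - O\!\left(\frac{(b-a)^3}{f(n)^3}\right),$$
and since $b - a \leq \norm{H} \leq f(n)$ the second term is strictly subdominant, giving $\delta = \Omega(5(b-a)^2/(24 f(n)^2))$ as claimed. The main delicate point is balancing the linear-in-$\eta$ Lipschitz error $t\eta$ against the quadratic Taylor gain $\Omega(t^2(b-a)^2)$; the specific coefficient $5/24$ in the energy-deviation hypothesis is calibrated precisely so that $t\eta$ is strictly subdominant to $\cos(E_1 t) - \cos(bt)$, which is essentially the only place the exact constants enter.
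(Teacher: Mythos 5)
Your proposal takes a genuinely different route from the paper's. The paper proves the lemma via the separate convexity result \cref{lem_lpbound}: since $-\cos^2(xt/2)$ is convex on the relevant interval, the weighted average $\sum_j p_j \cos^2(E_j t/2)$ is bounded below by the secant line through the two extreme eigenvalues $E_1$ and $E_{2^n}$, evaluated at the constrained mean $\expval{E}$; it then derives and verifies a parametrized sufficient condition on $x = (\expval{E}-E_1)/(E_{2^n}-E_1)$. You replace both steps with a single Lipschitz estimate $\cos(E_i t) \geq \cos(E_1 t) - t(E_i - E_1)$, which transports the entire weighted average back to $\cos(E_1 t)$ at the cost of an additive error $t\eta$, and then you reuse the Taylor estimate from the proof of \cref{lem_gappedham_in_gappedqma} for the gain $\cos(E_1 t)-\cos(bt)$. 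Since the Lipschitz constant $t$ upper-bounds the secant slope that the convexity lemma implicitly uses, your approach gives a slightly weaker constant but the same asymptotic bound, and is shorter and self-contained (no auxiliary lemma).

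There is one point you must tighten. The claim that the loss $t\eta$ is ``strictly subdominant'' to the gain $\cos(E_1 t)-\cos(bt)$ is not correct as an asymptotic statement. With $t=1/f(n)$ (the paper's choice; note that $t=\pi/(2\norm{H})$ is not $\Theta(1/f(n))$ in general, since $f(n)$ is only promised to be an \emph{upper} bound on $\norm{H}$), the loss is $\frac{5(b-a)^3}{24f(n)^3}$ and the Taylor gain is $\geq \frac{(b-a)^2 t^2}{2}-\frac{(b-a)^3 t^3}{6}\geq \frac{(b-a)^2}{3f(n)^2}$. Because $b-a$ can be as large as $f(n)$, both quantities can be $\Theta(1)$ simultaneously, so there is no vanishing ratio. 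The argument succeeds only because the explicit constants are favorable: $t\eta\leq\frac{5(b-a)^2}{24f(n)^2}$ and $\tfrac{1}{3}=\tfrac{8}{24}>\tfrac{5}{24}$, giving $\cos(E_1 t)-\cos(bt)-t\eta\geq\frac{(b-a)^2}{8f(n)^2}>0$, hence $\delta\geq\frac{(b-a)^2}{16f(n)^2}$. Replace ``strictly subdominant'' with ``dominated by a constant factor strictly less than one'' and carry out this arithmetic; it is precisely the calibration you flag as the delicate point, but it must be shown rather than asserted, since the positivity of $\delta$ hinges on it.
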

\begin{proof}
We are given a state $\ket{\psi}$ with energy $\expval{E}$.
Let $p_j = \abs{\braket{E_j}{\psi}}^2$ be the weight of the energy eigenstate $E_j$.
Then we know $p_1 E_1 + p_2E_2 + \ldots p_{2^n} E_{2^n} = \expval{E}$.
The probability of accepting $\ket{\psi}$ in the one-bit phase estimation circuit is given by
$\expval{p} = p_1 \cos^2\left( \frac{E_1t}{2}\right) + p_2 \cos^2\left( \frac{E_2t}{2}\right) + \ldots p_{2^n} \cos^2\left( \frac{\norm{H}t}{2}\right)$, where $\norm{H} = E_{2^n}$.
Given the constraint on the energy $\expval{E}$, we show in \cref{lem_lpbound} that $\expval{p} \geq \cos^2\left( \frac{E_1t}{2}\right) \left( 1-x \right) + \cos^2\left( \frac{\norm{H}t}{2}\right) x$, where $x:=\frac{\expval{E}-E_1}{\norm{H}-E_1}$.
Now in order to have $\expval{p} \geq \cos^2\left( \frac{bt}{2}\right) + \delta$, it suffices to have
\begin{align}
 x &\leq \frac{\cos^2\left( \frac{E_1t}{2}\right) -\cos^2\left(\frac{bt}{2}\right) - \delta}{\cos^2\left( \frac{E_1t}{2}\right)- \cos^2\left(\frac{\norm{H}t}{2}\right)}
 \\ &= \frac{\cos\left({E_1t}\right) -\cos\left({bt}\right) - 2\delta}{\cos\left( {E_1t}\right)- \cos\left({\norm{H}t}\right)}.
\end{align}
It is therefore sufficient if
\begin{align}
x \leq & \frac{(b-a)t}{2}\left( bt - \frac{b^3t^3}{6} \right)- \delta, \label{eq_conditiononx} \mathrm{since}
\\&\frac{(b-a)t}{2}\left( bt - \frac{b^3t^3}{6} \right) - \delta \leq \frac{(b-a)t \sin (bt) - 2\delta}{2}
\\ &\leq \frac{(b-a){t}{}\sin(bt) - 2\delta}{\cos (at) - \cos (\norm{H}t)}
\\&\leq \frac{\cos(at) - \cos(bt) -2\delta}{\cos (at) - \cos (\norm{H}t)} \nonumber
\\&\leq \frac{\cos(at) - \cos(bt) -2\delta}{\cos (E_1t) - \cos (\norm{H}t)},
\end{align}
where we use the inequalities $\sin (bt) \geq bt - \frac{b^3t^3}{6}$, $E_1 \leq a$, $\cos(at) - \cos(bt) \geq (b-a)t \sin (bt)$, and $2 \leq \cos (at) - \cos (\norm{H}t)$.
We now require $\delta \geq (b-a)t \sin(bt)/4 $
, so that the condition \cref{eq_conditiononx} translates to $x \leq \frac{(b-a)t}{4}\left( bt - \frac{b^3t^3}{6} \right)$.

Let us choose $t = \min[1/f(n), 1/b] = 1/f(n)$, since otherwise $b \geq f(n) \geq \norm{H}$ and the instance is trivial.
We thus know $\norm{H}t \leq 1 \leq \pi/2$, $t \geq 1/f(n)$, and $bt< 1$.
We also assume that in the YES case, $\norm{H} - E_1 \geq b-a$.
This is because otherwise a verifier can compute $\frac{\Tr(H)}{2^n}$ efficiently given the Hamiltonian and accept straightaway if $\frac{\Tr(H)}{2^n} \leq b$.
This works since $E_1 \leq \frac{\Tr(H)}{2^n}$, and by the promise, $E_1 \leq b \implies E_1 \leq a$.
Therefore, without loss of generality, one can assume that the nontrivial instances satisfy $b \leq \frac{\Tr(H)}{2^n} \leq \norm{H}$, or $\norm{H}-E_1 \geq b-a$.

Therefore, since $\expval{E} \leq E_1 + \frac{5(b-a)^3}{24f(n)^2}$, we have
\begin{align}
\expval{E}  &\leq E_1 + \frac{5(b-a)^2(\norm{H}-E_1)}{24f(n)^2}
\\ \implies x &\leq \frac{(b-a)^2}{4 f(n)^2} \frac{5}{6} = \frac{(b-a)^2}{4 f(n)^2} \left(1-\frac{1}{6}\right)
\\ &\leq \frac{(b-a)^2}{4 f(n)^2}\left( 1-\frac{b^2t^2}{6} \right)
\\ & \leq \frac{(b-a)}{4 f(n)} \frac{b}{f(n)}\left( 1-\frac{b^2t^2}{6} \right)
\\ &\leq \frac{(b-a)t}{4}\left( bt - \frac{b^3t^3}{6} \right),
\end{align}
as required.
To sum up, we have shown that $\expval{E} \leq E_1 + \frac{5(b-a)^3}{24f(n)^2}$ implies $\delta \geq (b-a)t \sin(bt)/4 \geq \frac{(b-a)^2(1-b^2t^2/6)}{4f(n)^2} \geq \frac{5(b-a)^2}{24f(n)^2}$.
\end{proof}

\begin{lemma}
For probabilities $p_j: j\in [2^n]$ satisfying $\sum_j p_j E_j \leq \expval{E}$ and numbers $E_1 \leq E_2 \leq \ldots E_{2^n}$ satisfying $E_{j}t \in [0,\pi/2]$, the quantity $\sum_j p_j \cos^2\left( \frac{E_jt}{2} \right)$ is bounded below by $\cos^2\left( \frac{E_1t}{2} \right)(1-x) + \cos^2\left( \frac{E_{2^n}t}{2} \right)x$, where $x$ is given by $\frac{\expval{E}-E_1}{E_{2^n}-E_1}$. \label{lem_lpbound}
\end{lemma}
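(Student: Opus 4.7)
The plan is to recognize the claim as a standard Jensen-type chord bound combined with a monotonicity step. Set $f(E) := \cos^2(Et/2) = (1+\cos(Et))/2$. The first step is to verify the two analytic properties of $f$ on the interval of interest. Differentiating, $f'(E) = -(t/2)\sin(Et)$ and $f''(E) = -(t^2/2)\cos(Et)$; both $\sin(Et) \geq 0$ and $\cos(Et) \geq 0$ hold whenever $Et \in [0,\pi/2]$, so $f$ is non-increasing and concave throughout $[E_1, E_{2^n}]$ under the hypothesis $E_j t \in [0,\pi/2]$.

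Next I would introduce the secant line of $f$ across the endpoints,
\[
L(E) \;:=\; f(E_1)\,\frac{E_{2^n}-E}{E_{2^n}-E_1} \;+\; f(E_{2^n})\,\frac{E-E_1}{E_{2^n}-E_1},
\]
which by concavity satisfies $L(E) \leq f(E)$ for every $E \in [E_1, E_{2^n}]$. Evaluating at each $E_j$, multiplying by $p_j$, summing, and using both that $L$ is affine and that the $p_j$ form a probability distribution ($\sum_j p_j = 1$), one obtains
\[
\sum_j p_j\, f(E_j) \;\geq\; \sum_j p_j\, L(E_j) \;=\; L\!\left(\sum_j p_j E_j\right).
\]

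For the final step, note that $L$ is a non-increasing affine function of $E$ because $f(E_1) \geq f(E_{2^n})$ (monotonicity of $f$). Combined with the hypothesis $\sum_j p_j E_j \leq \expval{E}$, this yields $L(\sum_j p_j E_j) \geq L(\expval{E})$. Unpacking $L(\expval{E})$ with $x = (\expval{E}-E_1)/(E_{2^n}-E_1)$ produces exactly $(1-x)\cos^2(E_1 t/2) + x \cos^2(E_{2^n} t/2)$, the claimed lower bound. I do not anticipate any real obstacle: the whole argument reduces to concavity plus monotonicity of $f$ on $[0,\pi/(2t)]$, both of which are immediate from a single differentiation, and a careful bookkeeping of the direction of the inequality when the energy constraint is $\leq$ rather than $=$ (handled by $L$ being decreasing).
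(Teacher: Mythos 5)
Your proof is correct and follows essentially the same route as the paper's: both arguments are the Jensen-type secant bound, using that $\cos^2(Et/2)$ is concave on the interval (the paper phrases this as $-\cos^2(Et/2)$ being convex), evaluating the chord at each $E_j$, and averaging. One small point in your favor: the paper passes directly from the summed inequality at $\sum_j p_j E_j$ to the same expression with $\expval{E}$ substituted, leaving the needed monotonicity step implicit, whereas you make it explicit that the affine secant $L$ is non-increasing (because $f(E_1)\geq f(E_{2^n})$) so that the hypothesis $\sum_j p_j E_j \leq \expval{E}$ gives $L(\sum_j p_j E_j) \geq L(\expval{E})$; this is a genuine clarification of the argument rather than a different proof.
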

\begin{proof}
Since the function $f(x) = -\cos^2(xt/2)$ is convex for $xt/2 \in [0,\pi/2)$, we have
\begin{align}
 \frac{f(E_1)(E_{2^n} - E_j) + f(E_{2^n})(E_j - E_1)}{E_{2^n}-E_1} &\geq f(E_j).
\end{align}
Therefore,
\begin{align}
 & p_j \frac{f(E_1)(E_{2^n} - E_j) + f(E_{2^n})(E_j - E_1)}{E_{2^n}-E_1} \geq p_jf(E_j)
 \\\implies & \frac{f(E_1)(E_{2^n} - \expval{E}) + f(E_{2^n})(\expval{E} - E_1)}{E_{2^n}-E_1} \geq  \sum_j p_j f(E_j)
\\ \implies & \sum_j p_j \cos^2\left( \frac{E_jt}{2} \right) \geq \cos^2\left( \frac{E_1t}{2} \right) \frac{E_{2^n}-\expval{E}}{E_{2^n}-E_1} + \nonumber \\
& \cos^2\left( \frac{E_{2^n}t}{2} \right) \frac{\expval{E}-E_1}{E_{2^n}-E_1},
\end{align}
which completes the proof.

\end{proof}

We now turn to the cases where in addition to the promise of an efficient circuit to prepare a low-energy state, the Hamiltonian is promised to have a spectral gap $\Delta$.
For this case, we can show the following:
\begin{lemma}
\raggedright
\textsc{GS-Description-LocalHamiltonian[$a,b,g_1,g_2$]} $\in \GQCMA[c,s,g_1',g_2']$ for $c-s = \Omega\left( \frac{(b-a)^2}{f(n)^2} \right)$ and $\min[g_1',g_2'] \geq \frac{5\Delta^2}{36 f(n)}$, where $f(n)$ is a polynomial upper bound to $\norm{H}$, and $\Delta= \min[g_1,g_2] \geq (b-a)^3/f(n)^2$.
\end{lemma}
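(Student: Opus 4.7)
The plan is to adapt the phase-estimation-based $\QCMA$ verifier of \cref{lem_in_qcma_general} to this spectral-gap setting. The verifier receives a classical description $y$ of a size-$T(n)$ circuit $V_y$, prepares $\ket{\psi_y} = V_y\ket{0^m}$, and runs one-bit phase estimation with the controlled-$e^{-iHt}$ unitary for $t = \Theta(1/f(n))$, implemented to inverse-exponential error via sparse Hamiltonian simulation. Up to exponentially small simulation error, the accept probability is $p_y = \sum_j |\braket{E_j}{\psi_y}|^2 \cos^2(E_j t/2)$, a function that is strictly monotone decreasing in the energy on $[0, \pi/(2t)]$. The promise gap $c - s = \Omega((b-a)^2/f(n)^2)$ then follows directly from \cref{lem_phaseestacceptprob}: in the YES case the promised low-energy-preparing circuit yields accept probability at least $\cos^2(bt/2) + \Omega((b-a)^2/f(n)^2)$, while in the NO case even the optimal strategy of preparing the ground state $\ket{E_1}$ has energy at least $b$, capping the accept probability at $\cos^2(bt/2)$.

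For the spectral gap on $Q$, the key leverage is the hypothesis $\Delta \geq (b-a)^3/f(n)^2$, which forces any state of energy at most $E_1 + (b-a)^3/f(n)^2$ to have overlap at least $1 - (b-a)^3/(f(n)^2 \Delta)$ with the unique ground state $\ket{E_1}$. Consequently, any classical witness whose prepared state has ground-state overlap below a fixed constant (say $1/2$) must have expected energy at least $E_1 + \Delta/2$, and therefore accept probability at most $\cos^2((E_1+\Delta/2) t/2) \leq \cos^2(E_1 t/2) - \Omega(\Delta^2/f(n))$ by the same Taylor-expansion cosine-difference estimate that underlies \cref{eq_phase_estimation_gap}. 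The classical witnesses thus partition into a tightly clustered set of ``good'' strings that prepare states very close to $\ket{E_1}$ (with accept probabilities within a small window of $\cos^2(E_1 t/2)$) and a ``bad'' complement whose accept probabilities are below the top by at least $\Omega(\Delta^2/f(n))$. The argument applies verbatim in the NO case about its own (unique) ground state, yielding the matching bound on $g_2'$.

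The main obstacle is the remaining degeneracy inside the good cluster: two distinct classical descriptions $y\neq y'$ can prepare the same (or nearly the same) state $\ket{\psi_y}\approx\ket{E_1}$, giving $p_y = p_{y'}$ and collapsing $\lambda_1(Q)-\lambda_2(Q)$ to zero. I would handle this by folding in the $y$-dependent tie-breaking trick of \cref{lem_pegqcma}: choose the verifier's gate set so that every ideal $p_y$ is a dyadic rational $k_{x,y}/2^{l(n)}$, and additionally reject with probability proportional to the binary encoding $y_b$ of the witness, scaled by a sufficiently large polynomial factor so as to leave both the $\Omega((b-a)^2/f(n)^2)$ promise gap and the $\Omega(\Delta^2/f(n))$ energy-based separation between good and bad witnesses intact while strictly ordering the accept probabilities of distinct witnesses. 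Putting these ingredients together yields the claimed $\GQCMA[c,s,g_1',g_2']$ protocol with $c-s = \Omega((b-a)^2/f(n)^2)$ and $\min[g_1',g_2'] \geq 5\Delta^2/(36 f(n))$.
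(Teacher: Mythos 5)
Your overall structure tracks the paper: one-bit phase estimation with simulation time $t=\Theta(1/f(n))$, the promise gap $\Omega((b-a)^2/f(n)^2)$ via \cref{lem_phaseestacceptprob}, and a ground-state-overlap argument that separates witnesses preparing states with high overlap with $\ket{E_1}$ from those with low overlap. Your concern about the degeneracy inside the ``good'' cluster is in fact sharper than the paper's own treatment: the paper's proof bounds the accept probability of physical states $\ket{\phi}$ orthogonal to the optimal low-energy state $\ket{\psi}$ and then asserts that this lower-bounds $\lambda_1(Q)-\lambda_2(Q)$, but the eigenvectors of the $\GQCMA$ accept operator $Q$ are computational-basis witnesses $y$, and distinct strings $y\neq y'$ can encode circuits $V_y, V_{y'}$ whose prepared states $V_y\ket{0^m}$ and $V_{y'}\ket{0^m}$ are identical or arbitrarily close (add an identity gate, permute commuting gates, etc.). You are right that the naive argument does not control $\lambda_1(Q)-\lambda_2(Q)$ directly.

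The problem is that the tie-breaking patch you lift from \cref{lem_pegqcma} cannot deliver the claimed bound. That trick rejects with an extra probability $\Theta(y_b/2^{\poly})$, and this perturbation must be kept far smaller than the promise gap $\Omega((b-a)^2/f(n)^2)$ so as not to corrupt completeness/soundness; consequently it separates two ``good'' witnesses (both with accept probability within $o(\Delta^2/f(n))$ of the top) only by $\Omega(2^{-\poly})$. So with your verifier, $\lambda_1(Q)-\lambda_2(Q)$ is lower-bounded only by $\Omega(2^{-\poly})$, not by the claimed $5\Delta^2/(36 f(n))$. This is adequate for \cref{thm_pegqcma_complete}, where $\Delta = \Omega(1/\exp)$ and an exponentially small spectral gap suffices, but it falls short of the $\Omega(1/\poly)$ spectral gap needed for the $\Delta = \Omega(1/\poly)$ cases in \cref{thm_pgqcma_complete} and \cref{thm_ppgqcma_complete}. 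As written, your argument therefore proves a strictly weaker statement than the lemma.
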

\begin{proof}
We analyze the same algorithm as the non-gapped case and show that the verification protocol, with slight modifications, preserves the spectral gap.
In particular, in the first step of the original protocol, the verifier straightaway accepts if $\Tr(H)/2^n \leq b$ or if the upper bound to the norm of the Hamiltonian, $f(n)$ satisfies $f(n) \leq b$.
We modify this to requiring the verifier to accept only if, in addition to the previous conditions, measurement of the witness register yields the all zeroes string $0^w$ (where $w$ is the size of the witness register).
This has the effect of creating a spectral gap, since in this case only the all-zeroes state is accepted and all other computational-basis states are rejected.

If the first step does not cause the verifier to accept, the verifier assumes that the witness state is a description of the circuit $V$ to prepare a low-energy state $\ket{\psi}$.
The verifier then proceeds to prepare this state and measure its energy using the one-bit phase estimation protocol.
As shown in the proof of \cref{lem_in_qcma_general}, the protocol has a promise gap $c-s = \Omega\left(\frac{(b-a)^2}{f(n)^2} \right)$.

We now analyze the spectral gap.
Let us denote by $y$ the quantity $\frac{\expval{E}- E_1}{E_2 - E_1}$ and by $x$ the quantity $\frac{\expval{E}-E_1}{E_{2^n}-E_1} \leq y$.
Any state with energy $\expval{E}:= \expval{H}{\psi} \leq E_1 + \frac{(b-a)^3}{f(n)^2} \leq E_1 + {\Delta}$ has a large overlap with the ground state:
\begin{align}
\abs{\braket{\psi}{E_1}}^2 &\geq 1 - \frac{\expval{E}- E_1}{E_2 - E_1} = 1-y.
\end{align}
Therefore, any state $\ket{\phi}$ orthogonal to $\ket{\psi}$ must have an overlap with the ground state that satisfies $\abs{\braket{\phi}{E_1}}^2 \leq y$.
This means that the accept probability for any witness orthogonal to the one corresponding to the ground-state description is
\begin{align}
 \expval{p_{\phi}} &= \sum_j p_j \cos^2\left( \frac{E_jt}{2} \right)
 \\ &\leq y \cos^2\left( \frac{E_1t}{2} \right) + \left( 1- y \right) \cos^2\left( \frac{E_2t}{2}\right).
\end{align}
On the other hand, the accept probability of the optimal witness is at least (\cref{lem_phaseestacceptprob})
\begin{align}
 \expval{p_\psi} &\geq \left( 1-x \right) \cos^2\left( \frac{E_1t}{2} \right) +  x \cos^2\left( \frac{E_2t}{2}\right).
\end{align}
The difference in these two is a lower bound for the spectral gap of the accept operator:
\begin{align}
 g_1,g_2 \geq & \expval{p_\psi} - \expval{p_\phi} \geq \cos^2\left( \frac{E_1t}{2} \right) \left(1 - x- y \right) +\nonumber
\\& \cos^2\left( \frac{E_2t}{2} \right) (x+y-1)
 \\ &= \frac{(1-x-y)}{2} (\cos (E_1t) - \cos(E_2t))
 \\ &\geq \frac{(1-2y)}{2} (E_2-E_1) \sin(E_2t).
\end{align}
Now, we know from the promise that $y = \frac{\expval{E}- E_1}{E_2 - E_1} \leq \frac{(b-a)^3}{f(n)^2 \Delta} \leq \frac{1}{3}$, and $E_2 \geq E_1 + \Delta \geq \Delta$.
Also, we have chosen $t \geq 1/f(n)$ for a polynomial $f(n) \geq \norm{H}$.
Therefore,
\begin{align}
 \min[g_1',g_2'] &\geq \frac{\Delta}{6} \sin(\Delta t)
 \\ &\geq \frac{\Delta}{6} \left( \Delta t - \frac{\Delta^3t^3}{6} \right)
 \\ &\geq \frac{\Delta}{6} \left( \frac{\Delta}{f(n)} - \frac{\Delta^3}{6f(n)^3} \right)
 \\ &= \frac{\Delta^2}{6f(n)} \left( 1 - \frac{\Delta^2}{6f(n)^2} \right)
 \\ &\geq \frac{5\Delta^2}{36f(n)},
\end{align}
since $\Delta \leq \norm{H} \leq f(n)$.
\end{proof}
This proves the following results:
\begin{corll}[One half of \cref{thm_pgqcma_complete,thm_pegqcma_complete,thm_ppgqcma_complete}] 
{\color{white}{.}} \newline
\textsc{$(1/\poly,1/\poly)$-GS-Description-LocalHamiltonian} $\in \PGQCMA$. \\
\textsc{$(1/\exp,1/\exp)$-GS-Description-LocalHamiltonian}  $\in \pEGQCMA$. \\
\textsc{$(1/\exp,1/\poly)$-GS-Description-LocalHamiltonian} $\in \pPGQCMA$.
\end{corll}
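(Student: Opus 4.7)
Each of the three containments in the corollary is obtained by specializing the general parametrized lemma immediately preceding it, which says that \textsc{GS-Description-LocalHamiltonian$[a,b,g_1,g_2]$} has a $\GQCMA[c,s,g_1',g_2']$ protocol with $c-s = \Omega((b-a)^2/f(n)^2)$ and $\min[g_1',g_2'] \geq 5\Delta^2/(36 f(n))$, where $\Delta = \min[g_1,g_2]$ and $f(n) \geq \lVert H\rVert$ is a polynomial. Because both output quantities are polynomially related to the two input quantities $(b-a)$ and $\Delta$, the scaling class of the promise gap and the spectral gap are preserved from the input to the protocol, so the plan is to substitute the regime-specific scalings and check that the resulting parameters land in the class we claim.

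First, I will verify that the precondition $\Delta \geq (b-a)^3/f(n)^2$ of the preceding lemma holds (or can be arranged to hold) in each of the three regimes. Since $f(n)$ is a polynomial, the quantity $(b-a)^3/f(n)^2$ is dominated by $(b-a)^3$ in the polynomial hierarchy, so in the $(1/\poly,1/\poly)$ regime we have $(b-a)^3/f(n)^2 = 1/\poly$ and $\Delta = 1/\poly$, and the precondition is met for a suitable choice of polynomial $f(n)$; in the $(1/\exp,1/\exp)$ and $(1/\exp,1/\poly)$ regimes the quantity $(b-a)^3/f(n)^2$ is exponentially small, while $\Delta$ is at least $1/\exp$ (or $1/\poly$), so the precondition is trivially satisfied.

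Next, I will read off the parameters of the resulting $\GQCMA$ protocol in each case. For \textsc{$(1/\poly,1/\poly)$-GS-Description-LocalHamiltonian}, $(b-a)^2/f(n)^2 = \Omega(1/\poly)$ and $\Delta^2/f(n) = \Omega(1/\poly)$, placing the protocol in $\PGQCMA$. For \textsc{$(1/\exp,1/\exp)$-GS-Description-LocalHamiltonian}, both quantities are $\Omega(1/\exp)$, so the protocol is in $\pEGQCMA$. For \textsc{$(1/\exp,1/\poly)$-GS-Description-LocalHamiltonian}, $(b-a)^2/f(n)^2 = \Omega(1/\exp)$ while $\Delta^2/f(n) = \Omega(1/\poly)$, placing the protocol in $\pPGQCMA$.

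The only step that requires any care (the ``hard part,'' though it is not especially hard) is justifying the precondition check in the $(1/\poly,1/\poly)$ regime, where both $\Delta$ and $(b-a)^3/f(n)^2$ are inverse polynomials and one has to confirm that the promised state is close enough to the ground state (by a margin smaller than the spectral gap) to make the weight analysis in the lemma's proof go through. Once this is in place, the corollary is immediate by substitution.
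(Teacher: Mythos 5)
Your plan is exactly the paper's approach: the corollary is stated immediately after the parametrized $\GQCMA$ lemma with no separate proof, so it is indeed intended to follow by reading off the parameters in each regime. Your reading of the output parameters $(c-s = \Omega((b-a)^2/f(n)^2)$ and $\min[g_1',g_2'] = \Omega(\Delta^2/f(n)))$ and the resulting class in each row is correct.

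One small inaccuracy worth flagging: you call the precondition $\Delta \geq (b-a)^3/f(n)^2$ ``trivially satisfied'' in the $(1/\exp,1/\exp)$ regime because both sides are exponentially small. That inference does not go through on its own — two quantities both being $1/\exp$ does not fix their relative order (for instance $(b-a) = \Theta(1)$ with $\Delta = 2^{-n}$ would violate it). This case deserves the same caveat you correctly raise for the $(1/\poly,1/\poly)$ regime: the precondition is a genuine constraint on the instance, inherited from the lemma's hypothesis, and is part of the setup of the \textsc{GS-Description-LocalHamiltonian} promise rather than something derivable from the $\delta,\Delta$ scaling classes alone. The paper itself elides this, so this is a refinement rather than a departure, but you should not present the $(1/\exp,1/\exp)$ case as being qualitatively easier than the $(1/\poly,1/\poly)$ one. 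The genuinely automatic case is $(1/\exp,1/\poly)$, where $(b-a)^3/f(n)^2$ is $1/\exp$ and $\Delta$ is $1/\poly$, so the inequality holds for large $n$.
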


\section{Details of $\PP$ algorithm}\label{sec_ppdetails}
In this section we complete the proof of \cref{thm_ppgqma_inpp} by expanding upon the $\PP$ algorithm.
We also prove \cref{lem_gslo_ppp} by giving a $\P^\PP$ algorithm to precisely compute ground-state local observables of $\Omega(1/\poly)$-spectral-gapped Hamiltonians.
\begin{lemma} \label{lem_cooling_othermatrices}
A $\PP$ algorithm can decide whether $\Tr [Q^q A] \leq a'$ or $\geq b'$ when input thresholds $a'$ and $b'$, for matrices $Q$ and $A$ of size $2^{\poly(n)} \times 2^{\poly(n)}$ satisfying the following properties (we use the symbol $R$ to denote both matrices $Q$ and $A$ in the following):
\begin{enumerate}
 \item The norm of the matrix $R$ is upper bounded by a polynomial in $n$.
 \item The matrix $R$ may be written as a polynomial of degree $d =\poly(n)$ in terms of matrices $R_i, i\in [m]$ in the computational basis for $m=\poly(n)$, such that:
 \begin{enumerate}
 \item The matrix elements of each matrix $R_i$ are computable to precision $\delta$ in time polynomial in $n$ and $\log (1/\delta)$.
 \end{enumerate}
\end{enumerate}
\end{lemma}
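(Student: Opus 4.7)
The plan is to express $\Tr[Q^q A]$ as a sum of $2^{\poly(n)}$ polynomial-time-computable terms via a Feynman sum-over-paths, and then invoke the standard characterization of $\PP$ as the class that can threshold such sums. The argument proceeds in three stages, whose order I outline below.

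First, I would expand $Q^q A$ into monomials. Since $Q$ and $A$ are each non-commutative polynomials of degree at most $d = \poly(n)$ in at most $m = \poly(n)$ matrices $R_i$, the product $Q^q A$ is a non-commutative polynomial of degree $(q+1)d = \poly(n)$ and expands into at most $m^{(q+1)d} = 2^{\poly(n)}$ monomials of the form $c_\alpha R_{i^\alpha_1} R_{i^\alpha_2} \cdots R_{i^\alpha_{L_\alpha}}$, with $L_\alpha = O(\poly(n))$ and with coefficients $c_\alpha$ readable directly off the given expansions. Inserting a resolution of the identity in the computational basis between each pair of factors then gives the Feynman sum-over-paths representation
\begin{align}
\Tr[R_{i_1} R_{i_2} \cdots R_{i_L}] = \sum_{x_1, x_2, \ldots, x_L} \prod_{j=1}^{L} \bra{x_j} R_{i_j} \ket{x_{j+1 \bmod L}},
\end{align}
so that $\Tr[Q^q A]$ altogether becomes a single sum of $2^{\poly(n)}$ terms, each of which is an efficiently-computable product of $\poly(n)$-many quantities $\bra{x} R_i \ket{y}$ together with the efficiently-computable scalar coefficients $c_\alpha$.

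Second, I would invoke the standard fact that deciding whether a sum of $2^{\poly(n)}$ polynomial-time-computable signed rationals exceeds a given threshold lies in $\PP$. After rescaling each summand by a sufficiently large power of two to make everything integer-valued --- which is justified since each $\bra{x} R_i \ket{y}$ is efficiently computable to any polynomial precision --- the task reduces to comparing a $\mathsf{GapP}$ function (a signed count of accepting paths of a nondeterministic polynomial-time machine) to an integer threshold, which is the canonical definition of a $\PP$ computation.

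The main obstacle, and the place where the ``routine'' part of the argument actually requires care, is a systematic accounting of precision. Each individual term has magnitude at most $2^{\poly(n)}$ because $\norm{R_i} \leq \poly(n)$, $|c_\alpha| \leq 2^{\poly(n)}$, and $L_\alpha = \poly(n)$; there are at most $2^{\poly(n)}$ such terms; and we need to distinguish sums differing by the promise gap $b' - a' \geq 2^{-\poly(n)}$. A union-bound-style error-propagation argument shows that computing each matrix entry to polynomial precision keeps the cumulative additive error well below the promise gap, after which the $\PP$ algorithm follows directly from the sum-over-paths structure.
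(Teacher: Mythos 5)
Your proposal is correct and follows essentially the same route as the paper's proof: insert resolutions of the identity to obtain a Feynman sum-over-paths expression for $\Tr[Q^q A]$ with $2^{\poly(n)}$ efficiently-computable terms, control the cumulative precision error so it stays well below the promise gap $b'-a' = \Omega(2^{-\poly(n)})$, and then threshold the sum in $\PP$. The only cosmetic difference is that you invoke the $\mathsf{GapP}$ characterization of $\PP$ to do the thresholding, whereas the paper instead writes down an explicit probabilistic algorithm that picks a random trajectory and accepts with probability shifted by the (normalized) term value; both are standard and equivalent instantiations of the same idea.
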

\begin{proof}
The quantity $\Tr(Q^q A)$ may be expressed as
\begin{align}
 \sum_x \expval{Q^q A}{x} =& \sum_x \sum_{x_1, x_2, \ldots x_q}\bra{x}Q \ketbra{x_1} Q \ket{x_2} \ldots \nonumber
 \\& \bra{x_{q-1}} Q \ketbra{x_{q}} A \ket{x}.
\end{align}

If $Q$ is a polynomial of degree $d$ in terms of matrices $R_1, \ldots R_m$ for $m=\poly(n)$, we can write it as 
\begin{align}
Q = \sum_{\substack{i_1,i_2,\ldots i_m \in [d] \\ i_1 + i_2 + \ldots i_m \leq d}} p_{i_1 i_2 \ldots i_m} R_1^{i_1} R_2^{i_2} \ldots R_m^{i_m}, \label{eq_pppaths}
\end{align}
where each tuple $(i_1,\ldots i_m)$ specifies a monomial.
The number of terms in the polynomial is bounded above by $(d+1)^m = \exp[m \ \log(d+1)] = O(\exp[\poly(n)])$.
We write a term of \cref{eq_pppaths}, $\bra{x_j} Q \ket{x_{j+1}}$, as
\begin{align}
\bra{x_j} Q \ket{x_{j+1}} &= \sum_{\substack{i_1,i_2,\ldots i_m \in [d] \\ i_1 + i_2 + \ldots i_m \leq d}}p_{i_1 i_2 \ldots i_m} \bra{x_j} R_1^{i_1}\ket{z_{j,1}}\times \nonumber
\\& \bra{z_{j,1}} R_2^{i_2}\ketbra{z_{j,2}} \ldots \bra{z_{j,m-1}} R_m^{i_m} \ket{x_{j+1}}. \label{eq_generalham_polynomial_paths}
\end{align}
We can further insert resolutions of the identity in \cref{eq_generalham_polynomial_paths} to get a sum over yet more terms.
Each term in the resulting sum is a product over polynomially many quantities of the form $\bra{w_1}R_s\ket{w_2}$ for some computational basis states $\ket{w_1},\ket{w_2}$ and an index $s \in [m]$.
Each of these can be computed in polynomial time.
The number of terms in the final sum of the form in \cref{eq_pppaths} is still bounded above by $2^\poly$.

From the assumption, the matrix elements of the matrices $R_i$ can be computed to additive error $2^{-g(n)}$ in time scaling as $O(g(n))$.
We therefore choose $g(n)$ to be such that the total additive error resulting from the $2^\poly$ many paths in \cref{eq_pppaths} is negligible compared to $(b'-a')\times 2^{\poly}$, where the second term ($2^\poly$) corresponds to the number of terms in the sum.
This can be ensured by taking $g(n)$ to be a sufficiently large polynomial.

\Cref{eq_pppaths} is a sum over $T = O(2^{\poly})$ many terms $f_i$, each of which may be computed in polynomial time.
Each term of \cref{eq_pppaths} may be interpreted as a path in a Turing machine.
Therefore, a $\PP$ machine can decide whether $\sum_{i=1}^T f_i$ is $\leq a'$ or $\geq b'$ for some thresholds $a', b' \geq a' + \Omega(2^{-\poly})$ input to the $\PP$ machine.
This is seen as follows.
Each term $f_i$ is an efficiently computable real-valued function of the trajectory $x_0^i, x_1^i,\ldots x_K^i$.
Let $a_\mathrm{max}$ be an upper bound to the norm of $A$.
The $\PP$ machine selects a uniformly random trajectory and computes $f_i$.
It accepts with probability $\frac{1}{2} - \frac{f_i}{2^{n+1}a_\mathrm{max}} > 0$ and rejects otherwise.
The overall acceptance probability is $\frac{1}{T}\sum_i (\frac{1}{2} - \frac{f_i}{2^{n+1}a_\mathrm{max}})$.
In the YES case, this is at least $\frac{1}{2} - \frac{a'}{2^{n+1}Ta_\mathrm{max}}$, while in the NO case, it is at most $\frac{1}{2} - \frac{b'}{2^{n+1}Ta_\mathrm{max}}$. 
Since we at least have a separation of $2^{-n-1}/T \times \Omega(b'-a') = \Omega(2^{-\poly(n)})$ between the YES and NO instances, this is a valid $\PP$ algorithm.
\end{proof}
\cref{lem_cooling_othermatrices} applies to the proof of \cref{thm_ppgqma_inpp} because the accept operator $Q$ in that proof is a degree $2T+3$-polynomial in matrices with efficiently computable entries. 

For the proof of \cref{lem_gslo_ppp}, we show in \cref{lem_thermal} that beginning from the maximally mixed initial state, imaginary time evolution for ``time'' $-i\beta$ produces a thermal state with high enough overlap with the ground state for a suitable $\beta$.
Computing local observables in the obtained thermal state then suffices to get exponentially good estimates of ground-state local observables for gapped systems.
We make the choice of a maximally mixed initial state in the above because it is guaranteed to have at least overlap $2^{-n}$ with the ground state.

\begin{lemma}\label{lem_thermal}
For a Hamiltonian $H$ with spectral gap at least $\Delta$, let $\rho_\beta$ be the thermal state at temperature $1/(2\beta)$.
Also let $\ket{E_1}$ be the ground state of $H$ and let $A$ be any local observable satisfying $\norm{A}\leq \poly(n)$.
Then for $\beta = \Omega(n\Delta^{-1})$, the thermal expectation value satisfies $\abs{\Tr [\rho_\beta A] - \bra{E_1} A \ket{E_1}} \leq 2^{-\poly}$.
\end{lemma}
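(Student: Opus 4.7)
The plan is to diagonalize $H$ and exploit the fact that, once $\beta$ is sufficiently large compared with $1/\Delta$, the Gibbs weight concentrates almost entirely on the ground state, so that any bounded observable's thermal expectation tracks its ground-state expectation to inverse-exponential accuracy. Writing $H = \sum_i E_i \ket{E_i}\bra{E_i}$, I take $\rho_\beta = \sum_i p_i \ket{E_i}\bra{E_i}$ with $p_i = e^{-2\beta E_i}/Z$ and $Z = \sum_j e^{-2\beta E_j}$ (interpreting ``temperature $1/(2\beta)$'' as inverse temperature $2\beta$). Since both $\rho_\beta$ and $\Tr[\rho_\beta A]$ are invariant under a global energy shift, I will shift so that $E_1 = 0$; then $E_i \geq \Delta$ for all $i \geq 2$.

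Next I will bound the total weight on excited states. Because $Z \geq e^{-2\beta E_1} = 1$, we have $p_i \leq e^{-2\beta E_i}$ for every $i$, so
\begin{align}
\sum_{i \geq 2} p_i \;\leq\; \sum_{i\geq 2} e^{-2\beta E_i} \;\leq\; (2^n - 1)\, e^{-2\beta \Delta} \;\leq\; 2^n e^{-2\beta \Delta}.
\end{align}
Writing the difference between the thermal and ground-state expectation values as
\begin{align}
\Tr[\rho_\beta A] - \bra{E_1}A\ket{E_1} \;=\; \sum_{i \geq 2} p_i \left( \bra{E_i}A\ket{E_i} - \bra{E_1}A\ket{E_1} \right),
\end{align}
the triangle inequality, together with $\abs{\bra{E_i}A\ket{E_i} - \bra{E_1}A\ket{E_1}} \leq 2\norm{A}$, yields
\begin{align}
\abs{\Tr[\rho_\beta A] - \bra{E_1}A\ket{E_1}} \;\leq\; 2\norm{A}\cdot 2^n e^{-2\beta \Delta}.
\end{align}

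Since $\norm{A} \leq \poly(n)$ by assumption, choosing the implicit constant in $\beta = \Omega(n\Delta^{-1})$ sufficiently large makes $2\norm{A}\cdot 2^n e^{-2\beta \Delta} = 2^{-\Omega(n)} = 2^{-\poly(n)}$, completing the argument. There is no real obstacle here; the lemma is essentially a Gibbs-measure concentration estimate, and the only minor subtlety is to ensure that the exponential suppression factor $e^{-2\beta\Delta}$ dominates both the Hilbert-space dimension $2^n$ and the polynomial prefactor $\norm{A}$, which is precisely what the hypothesis on $\beta$ is designed to guarantee.
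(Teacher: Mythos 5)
Your proof is correct and follows essentially the same route as the paper: shift so $E_1 = 0$, observe that $Z \geq 1$, bound the total Gibbs weight on excited states by $2^n e^{-2\beta\Delta}$, and conclude via the triangle inequality. The only cosmetic difference is that the paper packages the same excited-state-weight bound as a trace-distance estimate $\norm{\rho_\beta - \ketbra{E_1}{E_1}}_1 = O(2^n e^{-2\beta\Delta})$ before applying it to the observable $A$, whereas you bound $\Tr[\rho_\beta A] - \bra{E_1}A\ket{E_1}$ directly, which is a touch cleaner.
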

\begin{proof}
Let the eigenstates of the Hamiltonian be given by $\ket{E_i}$, $i \in [2^n]$, with the eigenvalues $E_i$ arranged in nondecreasing order.
Consider the initial state $\rho = \mathds{1}/2^n$ and apply the linear operation $\exp(-\beta H)$, which performs imaginary time evolution for ``time'' $-i\beta$:
\begin{align}
 \rho \rightarrow \rho' = \exp(-\beta H) \rho \ \exp(-\beta H),
\end{align}
up to normalization.
The maximally mixed initial state $\rho = \frac{\mathds{1}}{2^n} = \sum_i \frac{1}{2^n} \ketbra{E_i}$ transforms to the state ${\rho_\beta}$, given by
\begin{align}
 \rho_\beta = \frac{\rho'}{\mathcal{N}} &= \frac{1}{\mathcal{N}} e^{-\beta H} \sum_i \frac{1}{2^n} \ketbra{E_i} e^{-\beta H} \\&
 = \frac{1}{2^n \mathcal{N}} \sum_i e^{-2\beta E_i} \ketbra{E_i}.
\end{align}
This state is the same as the thermal state $e^{-2\beta H}$ at temperature $1/(2\beta)$ up to normalization.
The normalization factor $\mathcal{N} = \Tr \rho'$ is given by $\sum_i {e^{-2\beta E_i}}/{2^n}$.
The overlap of the normalized state with the ground state is thus 
\begin{align}
\Tr[\rho_\beta \ketbra{E_1}] &= \frac{e^{-2\beta E_1}}{2^n \mathcal{N}} \\
&= \frac{e^{-2\beta E_1}}{\sum_i e^{-2\beta E_i}} \\
&= \left( 1 + \sum_{i\neq 1} {e^{-2\beta (E_i - E_1)}} \right)^{-1}.
\end{align}
Since $E_i - E_1 = \Delta = \Omega(1/n^c)$, if $\beta$ is taken to be $\Omega(n^d)$ with $d\geq c+1$, we have that $e^{-2\beta (E_i - E_1)} \leq \exp[-2n^{d-c}]$.
This means that the overlap is at least $1/(1 + \exp[n\ \log\ 2 - 2n^{d-c}]) \geq 1 - \exp[n\ \log\ 2 - 2n^{d-c}]$, which means that the trace distance between the normalized states is $\varepsilon = O(\exp[-n^{d-c}])$.
Therefore, the choice $\beta = \Theta(n\Delta^{-1})$ suffices to ensure that the resulting (normalized) state $\rho_\beta$ is exponentially close to the ground state.
Therefore, the thermal expectation value of any local observable $A$ with polynomially bounded spectral norm is also exponentially close to the ground-state expectation value $\bra{E_1}A\ket{E_1}$.
\end{proof}

We now show the following lemma about computing unnormalized thermal expectation values, which is the core subroutine of our $\P^\PP$ algorithm.
\begin{lemma}
A $\PP$ algorithm can decide whether $\Tr [e^{-2\beta H} A] \leq a'$ or $\geq b'$ when input a sparse Hamiltonian $H$, a number $\beta \geq 0$, a local observable $A$, and thresholds $a'$ and $b'$. \label{lem_pppalg}
\end{lemma}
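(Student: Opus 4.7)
My plan is to reduce to \cref{lem_cooling_othermatrices} by approximating $e^{-2\beta H}$ by a Taylor series truncated at polynomial degree. Concretely, I would set $P_d(H) := \sum_{k=0}^d \frac{(-2\beta)^k}{k!} H^k$. The operator-norm tail is bounded by $\sum_{k=d+1}^\infty (2\beta \norm{H})^k / k!$, which by a standard Stirling estimate drops below $\epsilon$ once $d = O(\beta \norm{H} + \log(1/\epsilon))$. Propagating this into the trace gives $\abs{\Tr[(e^{-2\beta H} - P_d(H))\,A]} \leq 2^n \norm{A}\,\epsilon$, so the choice $\epsilon = (b'-a')/(8\cdot 2^n \norm{A})$ yields a truncation degree $d = \poly(n)$, using that $\beta$, $\norm{H}$, and $\norm{A}$ are all polynomially bounded (the first as it arises from \cref{lem_thermal} with $\Delta = \Omega(1/\poly)$; the latter two by the sparse and local promises, together with Gershgorin) and that $b'-a' \geq 2^{-\poly}$.

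The next step is to observe that $P_d(H)$ is literally a polynomial of degree $d = \poly(n)$ in the single sparse matrix $H$, whose entries are efficiently computable to inverse-exponential precision. The scalar coefficients $(-2\beta)^k/k!$ can likewise be computed to inverse-exponential precision classically in polynomial time (they have $\poly(n)$ significant bits). Thus $P_d(H)$ meets the hypotheses imposed on the input matrix $Q$ in \cref{lem_cooling_othermatrices}, which I would then apply with $q=1$ to obtain a $\PP$ procedure that decides whether $\Tr[P_d(H)\,A] \leq a' + (b'-a')/4$ or $\Tr[P_d(H)\,A] \geq b' - (b'-a')/4$; combining with the trace-level truncation bound distinguishes $\Tr[e^{-2\beta H} A] \leq a'$ from $\Tr[e^{-2\beta H} A] \geq b'$ as required.

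The main technical subtlety, rather than a hard obstacle, is that \cref{lem_cooling_othermatrices} is stated for matrices of polynomial operator norm, whereas $\norm{P_d(H)}$ can be as large as $\exp(O(\beta \norm{H}))$, i.e.\ exponential in $n$. To accommodate this I would revisit the $\PP$-machine construction in the proof of \cref{lem_cooling_othermatrices}: the algorithm normalizes each per-path contribution by an upper bound $a_{\max}$ on $\norm{Q}\,\norm{A}$, and nothing in the argument actually requires $a_{\max}$ to be polynomial, only that it be efficiently computable. Taking $a_{\max} = \mathsf{exp}(O(\beta\norm{H}))\,\norm{A}$ is fine, and still leaves a separation between YES and NO acceptance probabilities of order $\Omega((b'-a')/(T\,a_{\max})) = \Omega(2^{-\poly})$, exactly the resolution $\PP$ provides. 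The remaining bookkeeping—choosing enough bits of precision for matrix entries and Taylor coefficients so that the accumulated numerical error over the $2^{\poly}$ paths stays well below $b'-a'$—follows the same template as in the proof of \cref{lem_cooling_othermatrices}.
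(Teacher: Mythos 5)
Your proof is correct and takes essentially the same route as the paper: Taylor-truncate $e^{-2\beta H}$ at polynomial degree (legitimate since $\beta\norm{H}=\poly(n)$) and evaluate the resulting polynomial trace as a Feynman path sum in $\PP$. The paper inlines this path-sum argument rather than invoking \cref{lem_cooling_othermatrices} explicitly, and your observation that the polynomial norm hypothesis there can be relaxed to an efficiently computable $\exp(\poly(n))$ bound is sound---it matches what the path-sum machine actually requires, namely a normalization that keeps per-path acceptance probabilities in $[0,1]$ while preserving a $2^{-\poly}$ gap.
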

\begin{proof}
We express the unnormalized thermal expectation value as a sum over several paths as follows:
\begin{align}
\mathcal{A}_\beta &= \Tr[e^{-2\beta H} A] \label{eq_thermalexpectation} \\
=& \sum_{x,y} \bra{x} e^{-2\beta H}\ketbra{y} A\ket{x} \\
\approx & \sum_{x,y} \bra{x}\left( \mathds{1} - 2\beta H + 2(\beta H)^2 + \ldots \frac{(-2\beta H)^K}{K!} \right) \ket{y} \times\nonumber
\\& \bra{y} A\ket{x} =: \mathcal{A'}_\beta \label{eq_taylor}\\
=& \sum_{k=0}^K \frac{1}{k!} \sum_{\substack{x_0, x_1, \ldots x_k}} \bra{x_0} -2\beta H \ketbra{x_1} -2\beta H \ket{x_2} \ldots \nonumber
\\&\bra{x_{k-1}}-2\beta  H \ketbra{x_k} A \ket{x_0}. \label{eq_pppaths} \\
=& \sum_{i=1}^T f_i.
\end{align}
This expression is reminiscent of a Euclidean path integral, although there are some differences.
In a Euclidean path integral, one Trotterizes the map $\exp(-\beta H) \approx \left(\prod_i\exp(-\beta H_i/r)\right)^r$ and use the fact that each term of the Hamiltonian $H_i$ is local in order to compute terms in the series.
In contrast, here we have used the Taylor expansion for $\exp(-\beta H)$ and have inserted resolutions of the identity in order to compute the terms $\bra{x}H^k\ket{y}$.
Using the Taylor series allows us to get exponentially small additive error, which is not guaranteed by Trotterization.

Before we move on, let us analyze the additive error in \cref{eq_taylor}.
It is given by:
\begin{align}
 \epsilon \leq  \frac{(2\beta \norm{H})^{K+1}}{(K+1)!} \norm{A} \times O(1).
\end{align}
By choosing $K > 2\beta e \norm{H} + f(n)$ for some polynomial $f(n) = O(\beta \norm{H}/n)$ and $f(n) = \Omega(n)$, we can ensure that the error is bounded above by $\norm{A}\exp[-f(n)]$:
\begin{align}
& K+ 1 \geq 2\beta e \norm{H} + f(n) \\
\implies & (K+ 1)\log (K+1) \geq (K+1) \log (2\beta e \norm{H}) + \nonumber
\\&(K+1) \log\left(1 + \frac{f(n)}{2\beta e \norm{H}}\right) \\
\geq & (K+1) \log (2\beta e \norm{H}) + (K+1) \frac{f(n)}{2\beta e \norm{H}} -\nonumber
\\& \frac{K+1}{2} \left(\frac{f(n)}{2\beta e \norm{H}}\right)^2 \\
\geq & (K+1) \log (2\beta e \norm{H}) + \Omega(f(n)),
\end{align}
where we have used the fact that $\log (1+x) \geq x - \frac{x^2}{2}$ for small $x$ and that $f(n)= o(\beta \norm{H})$.
Therefore,
\begin{align}
\log \left( \frac{(2\beta \norm{H})^{k+1}}{(K+1)!} \right) & \leq -\Omega(f(n)), \ \mathrm{giving} \\
\epsilon & \leq O\left({\norm{A}} \exp[-f(n)]\right).
\end{align}
\end{proof}

\begin{proof}[Proof of \cref{lem_gslo_ppp}]
From \cref{lem_thermal}, we know that the normalized state is exponentially close to the true ground state.
Therefore, deciding whether the ground state has $\Tr[\ketbra{\Psi} A] \leq a$ or $\geq b$ is equivalent to deciding whether the unnormalized state has expectation value $\Tr[\rho' A] \leq a' = \mathcal{N}_\mathrm{est} (a + \norm{A} \varepsilon)$ or $\geq b' = \mathcal{N}_\mathrm{est} (b - \norm{A} \varepsilon)$, where $\mathcal{N}_\mathrm{est}$ is an estimate of the normalization of the state and $\varepsilon$ the trace distance between the ground state and the thermal state.
To maintain a gap between the YES and NO cases, we need $\varepsilon < 2^{-u(n)}/\norm{A}$ for some polynomial $u$, which can be satisfied by taking $n^{d- c}$ in \cref{lem_thermal} to be $\geq u(n) + \log \norm{A}$.
The norm of $A$ is bounded above by a polynomial in $n$ and therefore is a subleading term.

Since the thresholds $a'$ and $b'$ depend on the normalization, we should compute the normalization
$\mathcal{N}$ beforehand.
Since the normalization is a special case of \cref{eq_thermalexpectation} with $A = \mathds{1}$, we can use the $\PP$ procedure to decide if $\mathcal{N} \leq a_1$ or $\mathcal{N} \geq a_2$ for some $a_1,a_2$ with $a_2-a_1 = \Omega(1/\exp)$.
Performing binary search over the interval $(0,1]$ with polynomially many queries to the $\PP$ oracle, we can estimate the normalization to exponentially small additive error, giving an estimate $\mathcal{N}_\mathrm{est}$.

Therefore, we have shown that a $\P^\PP$ machine can do all the above: compute the normalization and then compute the thermal expectation value for a low-temperature state.
Since we have also shown that setting $\beta = (n/\Delta)$ suffices to get exponentially small error, we have shown that the problem is in $\P^\PP$.
\end{proof}

This technique is also applicable to Hamiltonians or Hermitian operators that are not necessarily local, or even sparse.
For example, it can apply to Hermitian operators of the kind in \cref{lem_cooling_othermatrices}.

\section{Turing machine construction for \PSPACE-hardness}
In this section, we complete the proof of \cref{lem_sparseHamPspaceHard}.
\begin{lemma}[Lower bound on spectral gap for \PSPACE-hard construction] \label{lem_expsmallgapNOcase}
In the NO case, the construction in the proof of \cref{lem_sparseHamPspaceHard} has a spectral gap of $\Theta(\ell_\mathrm{max}^{-2})$, where $\ell_\mathrm{max}$ is the number of vertices in the largest subgraph of $G_x'$.
\end{lemma}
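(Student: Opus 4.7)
The plan is to exploit the block-diagonal structure of ${A_x'}^\dag A_x'$ induced by reversibility, diagonalize each block explicitly, and locate the unique source of the zero eigenvalue in the NO case. Since the Turing machine is reversible, each configuration of $G_x$ has in-degree and out-degree at most one, so $G_x$ is a disjoint union of directed simple paths and simple cycles. Appending the chain $t_x \to 1 \to \cdots \to t(n) \to s_x$ and adding self-loops to every vertex outside $\{s_x,t_x,1,\ldots,t(n)\}$ preserves this decomposition into weakly connected components, and since no edges cross between components, ${A_x'}^\dag A_x'$ is block diagonal. I can therefore analyze each block in isolation.

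Next, I would isolate the component $\mathcal{C}^\ast$ containing $s_x$. In the NO case, since the computation rejects without reaching $t_x$, this component is the directed path $t_x \to 1 \to \cdots \to t(n) \to s_x \to c_1 \to \cdots \to c_k = r$, where the first $p:=t(n)+2$ vertices carry no self-loops and the remaining $q := k+1$ vertices (terminating at the sink $r$ with its self-loop) all do. A direct computation shows that $({A_x'}^\dag A_x')|_{\mathcal{C}^\ast}$ splits into $p-1$ trivial $1\times 1$ blocks, each contributing eigenvalue $1$, plus a $q\times q$ tridiagonal block $M$ with diagonal $(1,2,2,\ldots,2,1)$ and unit off-diagonals. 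The similarity $v_i \mapsto (-1)^i v_i$ conjugates $M$ into the Neumann path Laplacian on $q$ vertices, whose spectrum is $\{\,2-2\cos(j\pi/q) : j=0,1,\ldots,q-1\,\}$. Hence $M$ contributes one zero eigenvalue (with eigenvector supported alternately on the self-looped tail) and a next-smallest eigenvalue $4\sin^2(\pi/(2q)) = \Theta(q^{-2}) = \Omega(\ell_{\max}^{-2})$.

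It remains to bound from below the smallest eigenvalue contributed by each remaining component. Because every non-self-looped vertex sits inside $\mathcal{C}^\ast$, the other components of $G_x'$ are either pure directed paths or pure directed cycles whose vertices all bear self-loops. For a pure path of length $c$, the block is tridiagonal with diagonal $(2,2,\ldots,2,1)$, and the same sign-flip trick gives eigenvalues $\{4\sin^2((2j-1)\pi/(4c+2))\}_{j=1}^{c}$, so its smallest eigenvalue is $\Theta(c^{-2})$; for a pure cycle of length $c$, the block is $2I+S+S^\dag$ with eigenvalues $\{2+2\cos(2\pi j/c)\}_{j=0}^{c-1}$, so odd cycles contribute $\Theta(c^{-2})$ while even cycles would introduce an additional zero eigenvalue and spoil the claim.

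The main obstacle will be ruling out the even-cycle case cleanly. I would handle it by specifying the reversible Turing machine from \cite{Bennett1989,Lange2000} with a built-in step counter, which forces every orphan component of the configuration graph to be a strict path rather than a cycle. Once this technical point is handled, combining the per-component bounds gives second-smallest eigenvalue of ${A_x'}^\dag A_x'$ at least $\Omega(\ell_{\max}^{-2})$, and hence $\Omega(\ell_{\max}^{-4})$ as stated; the extra power of two in the exponent is slack, convenient only for aligning this promise with the $\Omega(1/\exp)$ spectral-gap promise in the definition of \textsc{$(1/\exp,1/\exp)$-SparseHamiltonian}.
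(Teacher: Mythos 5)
Your proof is correct and in fact sharpens the paper's. You and the paper both decompose ${A_x'}^\dag A_x'$ restricted to the component of $s_x$ into tridiagonal blocks by counting common out-neighbours, and both locate the zero eigenvalue in the block with diagonal $(1,2,\ldots,2,1)$ (the paper's $\mathcal{G}_3$). Where you differ is in how you bound the next eigenvalue of that block. The paper passes to the characteristic polynomial $r_n(\lambda)=f_n(\theta)$ with $\lambda = 2-2\cos\theta$, and runs a concavity/mean-value argument around $\theta=0$ to show the next root satisfies $\theta_1 = \Omega(\ell_3^{-2})$, hence $\lambda_1 = \Omega(\ell_3^{-4})$. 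You instead observe that conjugation by $\mathrm{diag}(1,-1,1,\ldots)$ turns $\mathcal{G}_3$ into the Neumann path Laplacian, whose spectrum $\{4\sin^2(j\pi/(2\ell_3))\}_{j=0}^{\ell_3-1}$ is exact, so the next eigenvalue is $4\sin^2(\pi/(2\ell_3)) = \Theta(\ell_3^{-2})$. This is shorter, more elementary, and quantitatively tighter; indeed one can check $f_n(j\pi/n)=0$ for every $j$, so the paper's derivative bound is loose by a square and the ``$\Theta(\ell_{\max}^{-4})$'' in the lemma statement should really be read as a lower bound.

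Two further remarks. First, a small structural oversight: your $\mathcal{C}^\ast$ begins at $t_x$, but the chain of configurations feeding \emph{into} $t_x$ (the paper's $\mathcal{G}_1$ block, tridiagonal with constant diagonal $2$) lies in the same weakly connected component once the edge $t_x \to 1$ is added; its smallest eigenvalue is $4\sin^2(\pi/(2(\ell_1+1))) = \Omega(\ell_1^{-2})$, so your conclusion survives, but the block needs to be accounted for. Second, your flag on even cycles is not a side issue: a fully self-looped cycle of length $c$ gives the block $2\mathds{1}+S+S^\dag$ with spectrum $\{2+2\cos(2\pi j/c)\}$, which contains zero exactly when $c$ is even, so an unreachable even cycle would add a second zero eigenvalue and collapse the spectral gap. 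The paper dispatches the other components with a single unjustified sentence, so you have correctly identified a real loose end. Your step-counter remedy (making the counter a strictly increasing potential along every transition, with halting before overflow, so the configuration graph is acyclic) is the right kind of fix, though pinning down that the chosen reversible simulation has exactly this structure is left at the same level of sketch in both your write-up and the paper.
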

\begin{proof}
\begin{figure*} \centering
\begin{subfigure}{0.49\textwidth}
 \includegraphics[width=\linewidth]{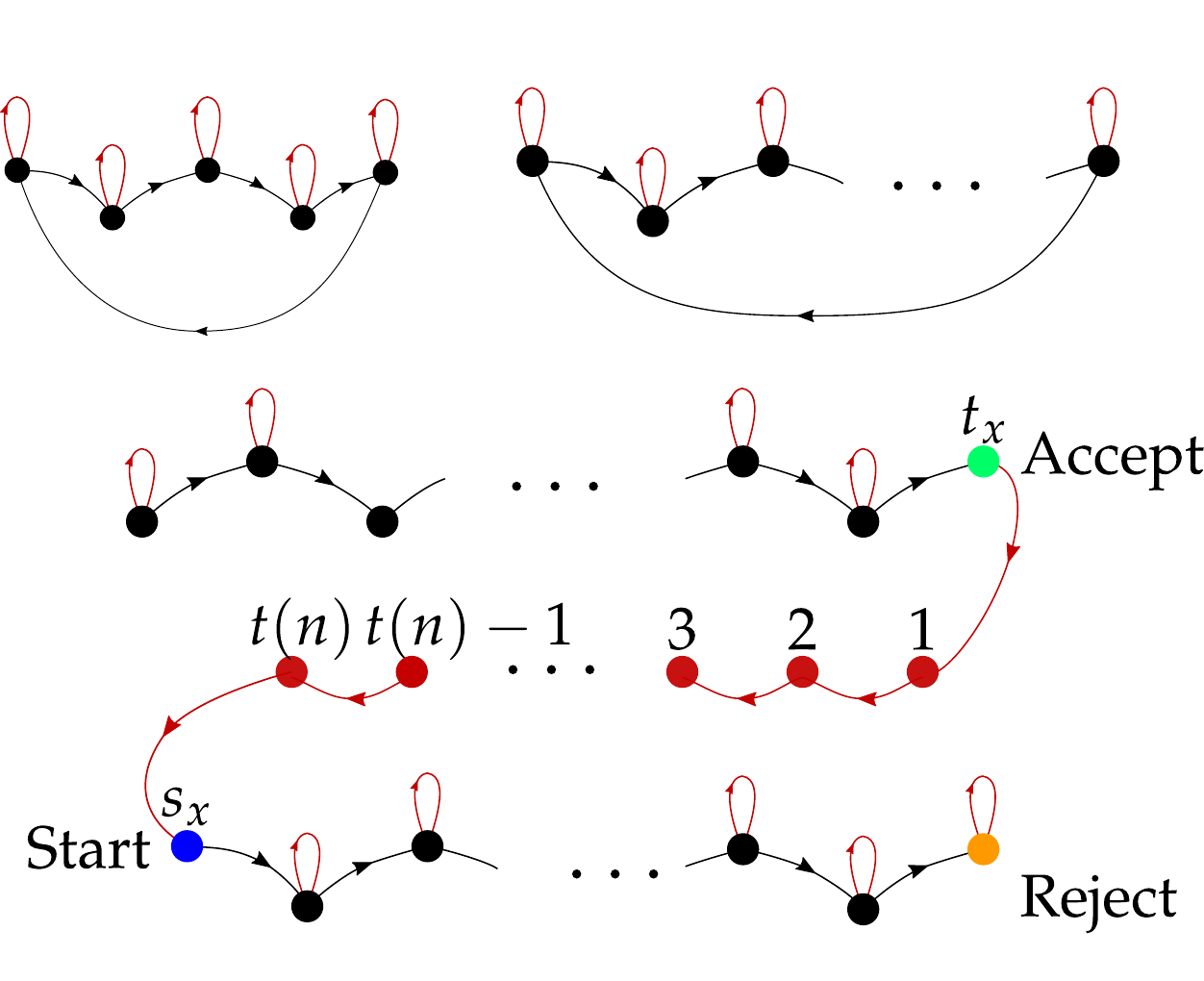}
 \caption{$A_x'$} \label{fig_pspacegraph_mno_apx}
\end{subfigure} \hfill
\begin{subfigure}{0.49\textwidth}
 \includegraphics[width=\linewidth]{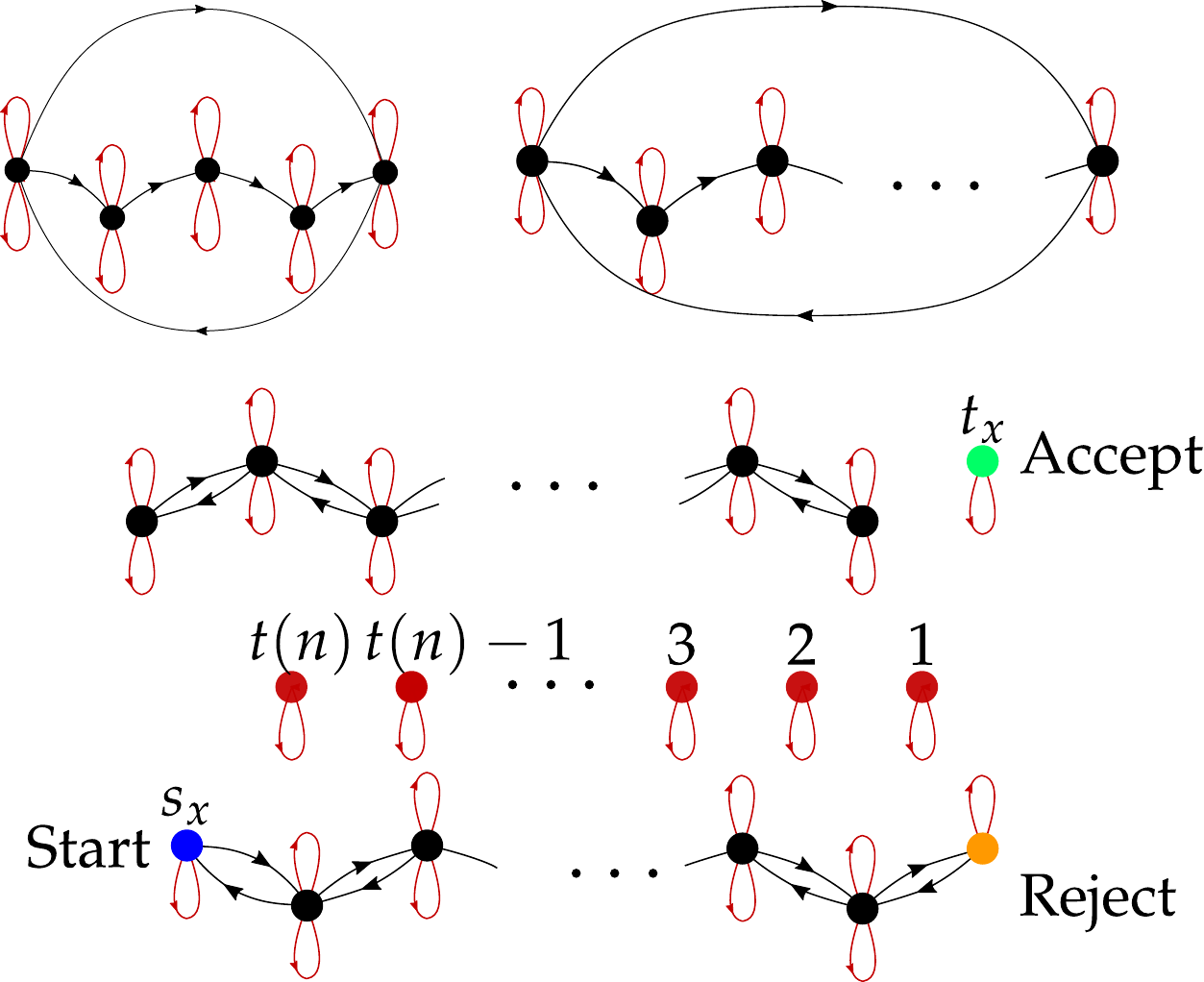} 
 \caption{${A_x^\dag}'A_x$.} \label{fig_pspacegraph_mno_AtA}
\end{subfigure}
\caption{(a) Graph $G_x'$ with adjacency matrix $A_x'$, adapted from \cref{fig_pspacegraph_mno}.
(b) Graph with (weighted, directed) adjacency matrix ${A_x^\dag}'A_x$.
Vertices with two self-loops can be thought of as a single self-loop with weight 2.}
\end{figure*}

Recall the form of the graph $G_x'$ in the NO case, reproduced here in \cref{fig_pspacegraph_mno_apx}.
We first restrict our attention to the subgraph of $G_x'$ containing the start and accept configurations.
The matrix ${A_x^\dag}' A_x'$, when restricted to this subspace, is further composed of three subspaces, each corresponding to a subgraph, as shown in \cref{fig_pspacegraph_mno_AtA}.
We write ${A_x^\dag}' A_x' = \mathcal{G}_1 \oplus \mathcal{G}_2 \oplus \mathcal{G}_3$.
The block $\mathcal{G}_1$ corresponds to the vertices leading to $t_x$ (not including $t_x$).
The block $\mathcal{G}_2$ corresponds to the vertices $\{t_x\} \cup \{1, \ldots t(n)\}$.
Lastly, $\mathcal{G}_3$ is the block with the vertices starting from $s_x$ and leading to the reject state, which are the configurations visited by the Turing machine.
We have
\begin{align}
 \mathcal{G}_1 &= 
\begin{pmatrix}
  2  & 1 & \\
  1 & 2 & 1 \\
    & 1 & 2 & \ddots \\
    &   & \ddots& \ddots & 1 \\
    &   &       & 1 & 2 & \\
\end{pmatrix}_{\ell_1 \times \ell_1},\ 
\mathcal{G}_2 = \mathds{1}_{\ell_2 \times \ell_2},\ \mathrm{and\ }\nonumber
\\ \mathcal{G}_3 &=
\begin{pmatrix}
1 & 1 \\
       1 & 2 & 1\\
         & 1 & 2 & \ddots\\
         &   & \ddots & \ddots & 1\\
         &   &     & 1    & 2 & 1 \\
         &   &     &      & 1 & 1
\end{pmatrix}_{\ell_3 \times \ell_3}.
\end{align}
It may be seen that there is a zero eigenvector $(0,0,\ldots,0,1,-1,1,\ldots (-1)^{\ell_3})^T$, with the zeros corresponding to the subspaces $\mathcal{G}_1$ and $\mathcal{G}_2$.
We now lower-bound the next-smallest eigenvalue.
Let
\begin{align}
r_n(\lambda) &:= \det [\mathcal{G}_3 - \lambda \mathds{1}_{n}] \nonumber
\\&=\det \begin{pmatrix}
1 -\lambda & 1 \\
       1 & 2 -\lambda & 1\\
         & 1 & 2-\lambda & \ddots\\
         &   & \ddots & \ddots & 1\\
         &   &     & 1    & 2-\lambda & 1 \\
         &   &     &      & 1 & 1 -\lambda
\end{pmatrix}_{n \times n} \\
p_n(\lambda) &:= \det
\begin{pmatrix}
2 -\lambda & 1 \\
       1 & 2 -\lambda & 1\\
         & 1 & 2-\lambda & \ddots\\
         &   & \ddots & \ddots & 1\\
         &   &     & 1    & 2-\lambda & 1 \\
         &   &     &      & 1 & 1 -\lambda
\end{pmatrix}_{n\times n} \label{eq_characteristic}.
\end{align}
The polynomial $p_n(\lambda)$ can be computed exactly \cite{Fefferman2016c}, and is given by $p_n(2-2\cos \theta) = \frac{\sin((n+1)\theta)-\sin(n\theta)}{\sin \theta} = \frac{\cos((n+\frac{1}{2})\theta)}{\cos(\frac{\theta}{2})}$.
We can obtain $r_n(\lambda)$ in terms of $p_n(\lambda)$: $r_n(\lambda) = (1-\lambda)p_{n-1}(\lambda)-p_{n-2}(\lambda)$, giving us
\begin{align}
 r_n(\lambda) &= f_n(\theta) \nonumber
\\&= (2\cos \theta -1)\frac{\cos((n-\frac{1}{2})\theta)}{\cos(\frac{\theta}{2})} - \frac{\cos((n-\frac{3}{2})\theta)}{\cos(\frac{\theta}{2})},
\end{align}
where $\theta = \cos^{-1}(1-\frac{\lambda}{2})$, or $\lambda = 2 - 2\cos \theta$.
The eigenvalues of $\mathcal{G}_3$ are related to the roots of the characteristic polynomial $f_n(\theta)=0$.
We can see that $\theta= 0$ is always a root of the polynomial, giving us the zero eigenvalue ($\lambda = 2 - 2\cos \theta = 0$) for the NO case.

Now, it remains to be shown that the next smallest eigenvalue is bounded away from zero.
First consider $\mathcal{G}_1$, whose eigenvalues are the roots of the characteristic equation $\det [\mathcal{G}_1 -\lambda \mathds{1}_{\ell_1}]$.
The eigenvalues of $\mathcal{G}_1$ can be computed in a similar fashion to those of $\mathcal{G}_3$ and are given by $4\sin^2\left( \frac{k\pi}{2(\ell_1+1)} \right), k \in [n]$.
The smallest eigenvalue of $\mathcal{G}_1$ is therefore at least $\Omega\left(1/{\ell_1}^2\right)$.
It is also easily seen that $\mathcal{G}_2 \succ 0$.

We now come to $\mathcal{G}_3$.
As we have seen, $\mathcal{G}_3$ has a zero eigenvalue.
In order to show a spectral gap for $\mathcal{G}_3$, we show that the next root of the polynomial $f_{\ell_3}(\theta)$ must occur at least a distance $\Omega({\ell_3}^{-2})$ away.
The roots of $\mathcal{G}_3$ are given by \cite{Dooley2020}
\begin{align}
\lambda_j = 2 + 2\cos\left( \frac{\pi j}{\ell_3} \right), \quad j \in [\ell_3].
\end{align}
Setting $j=\ell_3$ gives the zero eigenvalue and $j=\ell_3-1$ the first nonzero eigenvalue.
The spectral gap of $\mathcal{G}_3$ is therefore
\begin{align}
 \lambda_{\ell_3-1} &= 2 - 2\cos\left( \frac{\pi}{\ell_3} \right)
 \\ &= 4 \sin^2 \left( \frac{\pi}{2\ell_3} \right)
 \\ & \geq \frac{\pi^2}{\ell_3^2} - O\left( \frac{\pi^4}{\ell_3^4} \right)
 \\ &= \Omega\left({1}/{\ell_3^2} \right).
\end{align}

Finally, we consider other subgraphs that do not contain the start vertex.
Just like the analysis of the YES case, the eigenvalues for these are bounded away from 0 by $\ell^{-2}$, where $\ell$ is the number of vertices in the subgraph.
We have therefore lower bounded the value of the nonzero eigenvalue in each case, showing that the spectral gap is $\Omega(\ell_\mathrm{max}^{-2}) = \Omega(2^{-\poly})$.
\end{proof}
\section{Complexity of $\pPGQMA$ and $\pEGQMA$ with asymmetric spectral gaps} \label{sec_asymcomplexity}
We show here that the promise of asymmetric spectral gaps does not change the complexity class for both $\pPGQMA$ and $\pEGQMA$, proving \cref{thm_asym_symgaps}.

\begin{proof}[Proof of \cref{thm_asym_symgaps}]
It is easy to see that $\GQMA[c,s,g_1,g_2] \subseteq \GQMA[c,s,g_1,0]$ simply by ignoring the promise on the NO instance.
It remains to show that the same upper bounds as the symmetric case hold for the asymmetric case too.
For the case of $c-s=\Omega(1/\exp)$, $g_2 = \Omega(1/\exp)$, we observe that one can also ignore the promise on the YES instance and obtain containment in $\pQMA=\PSPACE$, which equals $\pEGQMA$.

It remains to give an upper bound for the class $\cup_{\substack{c-s \geq \Omega(1/\exp) \\ g_1 \geq \Omega(1/\poly)}} \GQMA[c,s,g_1,0]$.
We give a $\PP$ algorithm for any instance from this class, which implies equivalence of the two classes.

We are given a description of a circuit, with the promise that the YES case has $\Omega(1/\poly)$ spectral gap for the accept operator $Q$.
We want to decide if $\lambda_1(Q)$ is $\geq c$ (YES) or $\leq s$ (NO).
The overall $\PP$ algorithm is as follows.
\begin{enumerate}
\item Use the $\P^{\QMA{[\log]}}$ algorithm of Ambainis \cite{Ambainis2014} to determine whether an instance has spectral gap $\Delta$ $\geq g_1$ (YES) or $\leq g_1/2$ (NO), for $g_1 = \Omega(1/\poly)$.
\item If the spectral gap is $g_1$ or larger, run the algorithm in \cref{lem_cooling_othermatrices} with Hamiltonian $\mathds{1}-Q$ and accept or reject according to the answer returned by the algorithm.
\item Otherwise reject.
\end{enumerate}
We claim that the algorithm of Ambainis works not just for local Hamiltonians, but also for accept operators like $Q$.
This is because the $\QMA$ queries in Ambainis's algorithm pertain to whether the ground-state energy (or the minimum eigenvalue $1-\lambda_1$ in this case) is smaller or larger than a threshold.
A $\QMA$ verifier can compute the eigenvalue of the accept operator given an eigenstate, using phase estimation.
Therefore, all queries to the oracle about $1-\lambda_1$ are still valid $\QMA$ queries.
Also, the final query in Ambainis's algorithm is for the operator $(\mathds{1}-Q)\otimes \mathds{1} +\mathds{1} \otimes (\mathds{1}-Q)$ on two registers, restricted to the antisymmetric subspace.
Since a $\QMA$ verifier can also perform a projection onto the antisymmetric subspace, Ambainis's algorithm (i.e.\ the first step) works to estimate the spectral gap of $Q$ in $\P^{\QMA{[\log]}}$.

Now, since $\P^{\QMA{[\log]}} \subseteq \PP$ \cite{Gharibian2019b}, the overall algorithm is a valid $\PP$ algorithm, since the two queries can be made in parallel.
To see the correctness, we see that if the instance has a YES answer, then it has a spectral gap of at least $g_1$ by virtue of the promise.
In this case the spectral gap algorithm would return YES.
This ensures that the $\PP$ algorithm in \cref{lem_cooling_othermatrices} works correctly and returns the correct answer $E_1 \leq a$ (YES) or $E_1 \geq b$ (NO).
The algorithm outputs YES since the instance has low energy.

In the NO case, there may or may not be a spectral gap.
If the spectral gap $\Delta \leq g_1/2$ is not large enough, the spectral gap algorithm returns NO.
We reject in this case.
If the spectral gap algorithm returns YES, then the spectral gap is at least $\Delta \geq g_1/2$ (this includes the cases when the spectral gap is in the window $[g_1/2,g_1]$, which is outside of the promise in the spectral gap algorithm).
This means that the algorithm in \cref{sec_ppdetails} will work, and return the correct output (NO).
Therefore, we see that $\cup_{\substack{c-s \geq \Omega(1/\exp) \\ g_1 \geq \Omega(1/\poly)}} \GQMA[c,s,g_1,0] = \pPGQMA$.
\end{proof}

We remark that it can be seen that \textsc{LocalHamiltonian}[$a,b,g_1,0$] with $b-a= \Theta(1/\exp)$ is $\pPGQMA$-complete when the spectral gap $g_1$ is $1/\poly$ and $\pEGQMA$-complete when $g_1$ is $1/\exp$.

\begin{acknowledgments}
We thank Sevag Gharibian, Alex Grilo, Hosho Katsura, Cedric Lin, Yupan Liu, and Zachary Remscrim for useful feedback on the manuscript.
We thank an anonymous referee of an earlier version of this manuscript for pointing out the similarity of imaginary-time evolution with the power method and Cedric Lin for showing that the power method can give a $\PP$ algorithm as opposed to a $\P^\PP$ algorithm.
We also thank Zachary Remscrim for suggesting the proof of \cref{lem_pegqcma}, and Peter Love, Yuan Su, Minh Tran, and Seth Whitsitt for helpful discussions.
A.\,D.\ and A.\,V.\,G.\ were supported in part by the DoE ASCR Quantum Testbed Pathfinder program (award No. DE-SC0019040), DoE QSA, NSF QLCI, U.S. Department of Energy Award No. DE-SC0019449, DoE ASCR Accelerated Research in Quantum Computing program (award No. DE-SC0020312), NSF PFCQC program, AFOSR, ARO MURI, AFOSR MURI, and DARPA SAVaNT ADVENT.
A.\,D.\ also acknowledges support from NSF RAISE/TAQS 1839204.
B.\,F.\ acknowledges support from AFOSR (YIP number FA9550-18-1-0148 and FA9550-21-1-0008). This material is based upon work partially supported by the National Science Foundation under Grant CCF- 2044923 (CAREER) and by the U.S. Department of Energy, Office of Science, National Quantum Information Science Research Centers.
The Institute for Quantum Information and Matter is an NSF Physics Frontiers Center PHY-1733907.
\end{acknowledgments}

\end{document}